\newif\ifdraft\drafttrue  
\newif\iffull\fulltrue   
\newif\ifmuchlater\muchlaterfalse 
\newif\ifever\everfalse   
\newif\ifbackref\backreffalse 
\newif\ifapp\appfalse     
\newif\ifscr\scrfalse     
\newif\ifpump\pumptrue    
\newif\iftechrep\techrepfalse  

\makeatletter \@input{texdirectives} \makeatother

\iftechrep
\documentclass[a4paper]{report}
\usepackage[
  margin=3cm,
  includefoot,
  footskip=30pt,
]{geometry}
\usepackage{inconsolata}
\else
\documentclass[jcs]{iosart1c}
\fi


\usepackage[T1]{fontenc}
\usepackage{times}
\iftechrep\else\fi
\usepackage[sort,numbers]{natbib}
\let\cite=\citep

\usepackage{amsmath}
\usepackage{url}
\usepackage{xspace}
\usepackage{multirow}
\usepackage{array}
\usepackage{amsmath,amssymb,amsthm}
\usepackage{stmaryrd}
\usepackage{xcolor}
\usepackage{listings}
\usepackage{paralist}
\usepackage{fancyvrb}
\usepackage{latexsym}
\usepackage{bm}
\usepackage{morefloats}


\usepackage{tikz}
\usetikzlibrary{matrix,arrows,decorations.pathmorphing}

\newcommand{\ottdrule}[4][]{{\displaystyle\frac{\begin{array}{l}#2\end{array}}{#3}\quad\ottdrulename{#4}}}
\newcommand{\ottusedrule}[1]{\[#1\]}
\newcommand{\ottpremise}[1]{ #1 \\}
\newenvironment{ottdefnblock}[3][]{ \framebox{\mbox{#2}} \quad #3 \\[0pt]}{}
\newenvironment{ottfundefnblock}[3][]{ \framebox{\mbox{#2}} \quad #3 \\[0pt]\begin{displaymath}\begin{array}{l}}{\end{array}\end{displaymath}}

\newcommand{\ottnt}[1]{\mathit{#1}}

\newcommand{\ottkw}[1]{\mathbf{#1}}
\newcommand{\ottsym}[1]{#1}
\newcommand{\ottcom}[1]{\text{#1}}
\newcommand{\ottdrulename}[1]{\textsc{#1}}

\newcommand{\ottgrammartabular}[1]{\begin{supertabular}{llcllllll}#1\end{supertabular}}

\newcommand{\ottrulehead}[3]{$#1$ & & $#2$ & & & \multicolumn{2}{l}{#3}}
\newcommand{\ottprodline}[6]{& & $#1$ & $#2$ & $#3 #4$ & $#5$ & $#6$}
\newcommand{\ottfirstprodline}[6]{\ottprodline{#1}{#2}{#3}{#4}{#5}{#6}}

\newcommand{\ottprodnewline}{\\}
\newcommand{\ottinterrule}{\\[5.0mm]}

\newcommand{\ottinstr}{
\ottrulehead{\ottnt{instr}}{::=}{\ottcom{Basic instruction set}}\ottprodnewline
\ottfirstprodline{|}{\text{\sf Add}}{}{}{}{\ottcom{addition}}\ottprodnewline
\ottprodline{|}{\text{\sf Output}}{}{}{}{\ottcom{output top of stack}}\ottprodnewline
\ottprodline{|}{\text{\sf Push} \, \ottnt{n}}{}{}{}{\ottcom{push integer constant}}\ottprodnewline
\ottprodline{|}{\text{\sf Load}}{}{}{}{\ottcom{indirect load from data memory}}\ottprodnewline
\ottprodline{|}{\text{\sf Store}}{}{}{}{\ottcom{indirect store to data memory}}\ottprodnewline
\ottprodline{|}{\text{\sf Jump}}{}{}{}{\ottcom{unconditional indirect jump}}\ottprodnewline
\ottprodline{|}{\text{\sf Bnz} \, \ottnt{n}}{}{}{}{\ottcom{conditional relative jump}}\ottprodnewline
\ottprodline{|}{\text{\sf Call}}{}{}{}{\ottcom{indirect call}}\ottprodnewline
\ottprodline{|}{\text{\sf Ret}}{}{}{}{\ottcom{return}}}

\newcommand{\ottnewinstr}{
\ottrulehead{\mathit{instr}}{::=}{\ottcom{extensions to instruction set}}\ottprodnewline
\ottfirstprodline{|}{ \ldots }{}{}{}{}\ottprodnewline
\ottprodline{|}{\ottkw{Alloc}}{}{}{}{\ottcom{allocate a new frame}}\ottprodnewline
\ottprodline{|}{\ottkw{SizeOf}}{}{}{}{\ottcom{fetch frame size}}\ottprodnewline
\ottprodline{|}{\ottkw{Eq}}{}{}{}{\ottcom{value equality}}\ottprodnewline
\ottprodline{|}{\text{\sf SysCall} \, \ottnt{id}}{}{}{}{\ottcom{system call}}\ottprodnewline
\ottprodline{|}{\ottkw{GetOff}}{}{}{}{\ottcom{extract pointer offset}}\ottprodnewline
\ottprodline{|}{\ottkw{Pack}}{}{}{}{\ottcom{atom from payload and tag}}\ottprodnewline
\ottprodline{|}{\ottkw{Unpack}}{}{}{}{\ottcom{atom into payload and tag}}\ottprodnewline
\ottprodline{|}{\ottkw{PushCachePtr}}{}{}{}{\ottcom{push cache address on stack}}\ottprodnewline
\ottprodline{|}{\text{\sf Dup} \, \ottnt{n}}{}{}{}{\ottcom{duplicate atom on stack}}\ottprodnewline
\ottprodline{|}{\text{\sf Swap} \, \ottnt{n}}{}{}{}{\ottcom{swap two data atoms on stack}}}

\newcommand{\ottLE}{
\ottrulehead{\ottnt{LE}  ,\ e_{r}  ,\ e_{rpc}}{::=}{}\ottprodnewline
\ottfirstprodline{|}{ \mathtt{BOT} }{}{}{}{}\ottprodnewline
\ottprodline{|}{ \mathtt{LAB}_1 }{}{}{}{}\ottprodnewline
\ottprodline{|}{ \mathtt{LAB}_2 }{}{}{}{}\ottprodnewline
\ottprodline{|}{ \mathtt{LAB}_3 }{}{}{}{}\ottprodnewline
\ottprodline{|}{ \mathtt{LAB}_\mathit{pc} }{}{}{}{}\ottprodnewline
\ottprodline{|}{ \ottnt{LE_{{\mathrm{1}}}} ~ \sqcup ~ \ottnt{LE_{{\mathrm{2}}}} }{}{}{}{}\ottprodnewline
\ottprodline{|}{\ottsym{\_\_}}{}{}{}{}\ottprodnewline
\ottprodline{|}{\ottsym{(}  \ottnt{LE}  \ottsym{)}}{}{}{}{}}

\newcommand{\ottBE}{
\ottrulehead{\ottnt{BE}  ,\ \ottnt{allow}}{::=}{}\ottprodnewline
\ottfirstprodline{|}{\mathtt{TRUE}}{}{}{}{}\ottprodnewline
\ottprodline{|}{ \ottnt{LE_{{\mathrm{1}}}} \sqsubseteq \ottnt{LE_{{\mathrm{2}}}} }{}{}{}{}\ottprodnewline
\ottprodline{|}{ \mathtt{AND} ~ \ottnt{BE_{{\mathrm{1}}}} ~ \ottnt{BE_{{\mathrm{2}}}} }{}{}{}{}\ottprodnewline
\ottprodline{|}{ \mathtt{OR} ~ \ottnt{BE_{{\mathrm{1}}}} ~ \ottnt{BE_{{\mathrm{2}}}} }{}{}{}{}\ottprodnewline
\ottprodline{|}{\ottsym{(}  \ottnt{BE}  \ottsym{)}}{}{}{}{}}

\newcommand{\ottdruleStepXXOutput}[1]{\ottdrule[#1]{%
\ottpremise{\iota  \ottsym{(}  \ottnt{n}  \ottsym{)}  \ottsym{=}  \text{\sf Output}}%
}{
 \aStep{ \mu }{[  \ottnt{m}  \mathord{\scriptstyle @}  \ottnt{L_{{\mathrm{1}}}}  \mathord{,}\,  \sigma  ]}{ \ottnt{n}  \mathord{\scriptstyle @}  L_{pc} }{ \ottnt{m}  \mathord{\scriptstyle @}  \ottsym{(}  \ottnt{L_{{\mathrm{1}}}}  \mathord{\vee}  L_{pc}  \ottsym{)} }{ \mu }{[  \sigma  ]}{ \ottsym{(}  \ottnt{n}  \mathord{+}  \ottsym{1}  \ottsym{)}  \mathord{\scriptstyle @}  L_{pc} }{ \ottsym{\mbox{$\backslash{}$}\mbox{$\backslash{}$}} } }{%
{\ottdrulename{Step\_Output}}{}%
}}

\newcommand{\ottdruleStepXXAdd}[1]{\ottdrule[#1]{%
\ottpremise{\iota  \ottsym{(}  \ottnt{n}  \ottsym{)}  \ottsym{=}  \text{\sf Add}}%
}{
 \aStep{ \mu }{[  \ottnt{n_{{\mathrm{1}}}}  \mathord{\scriptstyle @}  \ottnt{L_{{\mathrm{1}}}}  \mathord{,}\,  \ottnt{n_{{\mathrm{2}}}}  \mathord{\scriptstyle @}  \ottnt{L_{{\mathrm{2}}}}  \mathord{,}\,  \sigma  ]}{ \ottnt{n}  \mathord{\scriptstyle @}  L_{pc} }{ \tau }{ \mu }{[  \ottsym{(}  \ottnt{n_{{\mathrm{1}}}}  \mathord{+}  \ottnt{n_{{\mathrm{2}}}}  \ottsym{)}  \mathord{\scriptstyle @}  \ottsym{(}  \ottnt{L_{{\mathrm{1}}}}  \mathord{\vee}  \ottnt{L_{{\mathrm{2}}}}  \ottsym{)}  \mathord{,}\,  \sigma  ]}{ \ottsym{(}  \ottnt{n}  \mathord{+}  \ottsym{1}  \ottsym{)}  \mathord{\scriptstyle @}  L_{pc} }{ \ottsym{\mbox{$\backslash{}$}\mbox{$\backslash{}$}} } }{%
{\ottdrulename{Step\_Add}}{}%
}}

\newcommand{\ottdruleStepXXPush}[1]{\ottdrule[#1]{%
\ottpremise{\iota  \ottsym{(}  \ottnt{n}  \ottsym{)}  \ottsym{=}  \text{\sf Push} \, \ottnt{m}}%
}{
 \aStep{ \mu }{[  \sigma  ]}{ \ottnt{n}  \mathord{\scriptstyle @}  L_{pc} }{ \tau }{ \mu }{[  \ottnt{m}  \mathord{\scriptstyle @}  \bot  \mathord{,}\,  \sigma  ]}{ \ottsym{(}  \ottnt{n}  \mathord{+}  \ottsym{1}  \ottsym{)}  \mathord{\scriptstyle @}  L_{pc} }{  } }{%
{\ottdrulename{Step\_Push}}{}%
}}

\newcommand{\ottdruleStepXXLoad}[1]{\ottdrule[#1]{%
\ottpremise{\iota  \ottsym{(}  \ottnt{n}  \ottsym{)}  \ottsym{=}  \text{\sf Load} \, \qquad \,  \mu ( \ottnt{p} )   \ottsym{=}  \ottnt{m}  \mathord{\scriptstyle @}  \ottnt{L_{{\mathrm{2}}}}}%
}{
 \aStep{ \mu }{[  \ottnt{p}  \mathord{\scriptstyle @}  \ottnt{L_{{\mathrm{1}}}}  \mathord{,}\,  \sigma  ]}{ \ottnt{n}  \mathord{\scriptstyle @}  L_{pc} }{ \tau }{ \mu }{[  \ottnt{m}  \mathord{\scriptstyle @}  \ottsym{(}  \ottnt{L_{{\mathrm{1}}}}  \mathord{\vee}  \ottnt{L_{{\mathrm{2}}}}  \ottsym{)}  \mathord{,}\,  \sigma  ]}{ \ottsym{(}  \ottnt{n}  \mathord{+}  \ottsym{1}  \ottsym{)}  \mathord{\scriptstyle @}  L_{pc} }{ \ottsym{\mbox{$\backslash{}$}\mbox{$\backslash{}$}} } }{%
{\ottdrulename{Step\_Load}}{}%
}}

\newcommand{\ottdruleStepXXStore}[1]{\ottdrule[#1]{%
\ottpremise{\iota  \ottsym{(}  \ottnt{n}  \ottsym{)}  \ottsym{=}  \text{\sf Store} \, \qquad \,  \mu ( \ottnt{p} )   \ottsym{=}  \ottnt{k}  \mathord{\scriptstyle @}  \ottnt{L_{{\mathrm{3}}}}}%
\ottpremise{\ottnt{L_{{\mathrm{1}}}}  \mathord{\vee}  L_{pc}  \le  \ottnt{L_{{\mathrm{3}}}} \, \qquad \,  \mu ( \ottnt{p} ) \leftarrow  \ottsym{(}  \ottnt{m}  \mathord{\scriptstyle @}  \ottnt{L_{{\mathrm{1}}}}  \mathord{\vee}  \ottnt{L_{{\mathrm{2}}}}  \mathord{\vee}  L_{pc}  \ottsym{)}   \ottsym{=}  \mu'}%
}{
 \aStep{ \mu }{[  \ottnt{p}  \mathord{\scriptstyle @}  \ottnt{L_{{\mathrm{1}}}}  \mathord{,}\,  \ottnt{m}  \mathord{\scriptstyle @}  \ottnt{L_{{\mathrm{2}}}}  \mathord{,}\,  \sigma  ]}{ \ottnt{n}  \mathord{\scriptstyle @}  L_{pc} }{ \tau }{ \mu' }{[  \sigma  ]}{ \ottsym{(}  \ottnt{n}  \mathord{+}  \ottsym{1}  \ottsym{)}  \mathord{\scriptstyle @}  L_{pc} }{ \ottsym{\mbox{$\backslash{}$}\mbox{$\backslash{}$}} } }{%
{\ottdrulename{Step\_Store}}{}%
}}

\newcommand{\ottdruleStepXXJump}[1]{\ottdrule[#1]{%
\ottpremise{\iota  \ottsym{(}  \ottnt{n}  \ottsym{)}  \ottsym{=}  \text{\sf Jump}}%
}{
 \aStep{ \mu }{[  \ottnt{n'}  \mathord{\scriptstyle @}  \ottnt{L_{{\mathrm{1}}}}  \mathord{,}\,  \sigma  ]}{ \ottnt{n}  \mathord{\scriptstyle @}  L_{pc} }{ \tau }{ \mu }{[  \sigma  ]}{ \ottnt{n'}  \mathord{\scriptstyle @}  \ottsym{(}  \ottnt{L_{{\mathrm{1}}}}  \mathord{\vee}  L_{pc}  \ottsym{)} }{  } }{%
{\ottdrulename{Step\_Jump}}{}%
}}

\newcommand{\ottdruleStepXXBnz}[1]{\ottdrule[#1]{%
\ottpremise{\iota  \ottsym{(}  \ottnt{n}  \ottsym{)}  \ottsym{=}  \text{\sf Bnz} \, \ottnt{k} \, \qquad \, \ottnt{n'}  \ottsym{=}  \ottnt{n}  \mathord{+}  \ottsym{(}  \ottsym{(}  \ottnt{m}  \ottsym{=}  \ottsym{0}  \ottsym{)}  \ottsym{\mbox{?}}  \ottsym{1}  \ottsym{:}  \ottnt{k}  \ottsym{)}}%
}{
 \aStep{ \mu }{[  \ottnt{m}  \mathord{\scriptstyle @}  \ottnt{L_{{\mathrm{1}}}}  \mathord{,}\,  \sigma  ]}{ \ottnt{n}  \mathord{\scriptstyle @}  L_{pc} }{ \tau }{ \mu }{[  \sigma  ]}{ \ottnt{n'}  \mathord{\scriptstyle @}  \ottsym{(}  \ottnt{L_{{\mathrm{1}}}}  \mathord{\vee}  L_{pc}  \ottsym{)} }{  } }{%
{\ottdrulename{Step\_Bnz}}{}%
}}

\newcommand{\ottdruleStepXXCall}[1]{\ottdrule[#1]{%
\ottpremise{\iota  \ottsym{(}  \ottnt{n}  \ottsym{)}  \ottsym{=}  \text{\sf Call}}%
}{
 \aStep{ \mu }{[  \ottnt{n'}  \mathord{\scriptstyle @}  \ottnt{L_{{\mathrm{1}}}}  \mathord{,}\,  \ottnt{a}  \mathord{,}\,  \sigma  ]}{ \ottnt{n}  \mathord{\scriptstyle @}  L_{pc} }{ \tau }{ \mu }{[  \ottnt{a}  \mathord{,}\,  \ottsym{(}  \ottnt{n}  \mathord{+}  \ottsym{1}  \ottsym{)}  \mathord{\scriptstyle @}  L_{pc}  \mathord{;}\,  \sigma  ]}{ \ottnt{n'}  \mathord{\scriptstyle @}  \ottsym{(}  \ottnt{L_{{\mathrm{1}}}}  \mathord{\vee}  L_{pc}  \ottsym{)} }{ \ottsym{\mbox{$\backslash{}$}\mbox{$\backslash{}$}} } }{%
{\ottdrulename{Step\_Call}}{}%
}}

\newcommand{\ottdruleStepXXRet}[1]{\ottdrule[#1]{%
\ottpremise{\iota  \ottsym{(}  \ottnt{n}  \ottsym{)}  \ottsym{=}  \text{\sf Ret}}%
}{
 \aStep{ \mu }{[  \ottnt{n'}  \mathord{\scriptstyle @}  \ottnt{L_{{\mathrm{1}}}}  \mathord{;}\,  \sigma  ]}{ \ottnt{n}  \mathord{\scriptstyle @}  L_{pc} }{ \tau }{ \mu }{[  \sigma  ]}{ \ottnt{n'}  \mathord{\scriptstyle @}  \ottnt{L_{{\mathrm{1}}}} }{  } }{%
{\ottdrulename{Step\_Ret}}{}%
}}


\newcommand{\ottdruleEStepAlloc}[1]{\ottdrule[#1]{%
\ottpremise{\iota  \ottsym{(}  \ottnt{n}  \ottsym{)}  \ottsym{=}  \ottkw{Alloc} \, \qquad \, \text{\sf alloc} \, \ottnt{k} \, \ottsym{(}  \ottnt{L}  \mathord{\vee}  L_{pc}  \ottsym{)} \, a \, \mu  \ottsym{=}  \ottsym{(}  \ottnt{id}  \mathord{,}\,  \mu'  \ottsym{)}}%
}{
 \aStep{ \mu }{[  \ottsym{(}  \ottkw{Int} \, \ottnt{k}  \ottsym{)}  \mathord{\scriptstyle @}  \ottnt{L}  \mathord{,}\,  a  \mathord{,}\,  \sigma  ]}{ \ottnt{n}  \mathord{\scriptstyle @}  L_{pc} }{ \tau }{ \mu' }{[  \ottsym{(}  \ottkw{Ptr} \, \ottsym{(}  \ottnt{id}  \mathord{,}\,  \ottsym{0}  \ottsym{)}  \ottsym{)}  \mathord{\scriptstyle @}  \ottnt{L}  \mathord{,}\,  \sigma  ]}{ \ottsym{(}  \ottnt{n}  \mathord{+}  \ottsym{1}  \ottsym{)}  \mathord{\scriptstyle @}  L_{pc} }{ \ottsym{\mbox{$\backslash{}$}\mbox{$\backslash{}$}} } }{%
{\ottdrulename{EStepAlloc}}{}%
}}

\newcommand{\ottdruleEStepSysCall}[1]{\ottdrule[#1]{%
\ottpremise{\iota  \ottsym{(}  \ottnt{n}  \ottsym{)}  \ottsym{=}  \text{\sf SysCall} \, \ottnt{id} \, \qquad \, T  \ottsym{(}  \ottnt{id}  \ottsym{)}  \ottsym{=}  \ottsym{(}  \ottnt{k}  \mathord{,}\,  f  \ottsym{)}}%
\ottpremise{f  \ottsym{(}  \sigma_{{\mathrm{1}}}  \ottsym{)}  \ottsym{=}  v  \mathord{\scriptstyle @}  \ottnt{L} \, \qquad \, \ottkw{length} \, \ottsym{(}  \sigma_{{\mathrm{1}}}  \ottsym{)}  \ottsym{=}  \ottnt{k}}%
}{
 \aStep{ \mu }{[  \sigma_{{\mathrm{1}}}  {\scriptstyle \mathord{+\!+} }  \sigma_{{\mathrm{2}}}  ]}{ \ottnt{n}  \mathord{\scriptstyle @}  L_{pc} }{ \tau }{ \mu }{[  v  \mathord{\scriptstyle @}  \ottnt{L}  \mathord{,}\,  \sigma_{{\mathrm{2}}}  ]}{ \ottsym{(}  \ottnt{n}  \mathord{+}  \ottsym{1}  \ottsym{)}  \mathord{\scriptstyle @}  L_{pc} }{  } }{%
{\ottdrulename{EStepSysCall}}{}%
}}

\newcommand{\ottdruleEStepjoinP}[1]{\ottdrule[#1]{%
\ottpremise{\iota  \ottsym{(}  \ottnt{n}  \ottsym{)}  \ottsym{=}  \text{\sf SysCall} \, \text{\sf joinP}}%
}{
 \aStep{ \mu }{[  v  \mathord{\scriptstyle @}  \ottnt{L_{{\mathrm{1}}}}  \mathord{,}\,  \ottsym{(}  \ottkw{Int} \, \ottnt{m}  \ottsym{)}  \mathord{\scriptstyle @}  \ottnt{L_{{\mathrm{2}}}}  \mathord{,}\,  \sigma  ]}{ \ottnt{n}  \mathord{\scriptstyle @}  L_{pc} }{ \tau }{ \mu }{[  v  \mathord{\scriptstyle @}  \ottsym{(}  \ottnt{L_{{\mathrm{1}}}}  \mathord{\vee}  \ottnt{L_{{\mathrm{2}}}}  \mathord{\vee}  \ottsym{\{}  \ottnt{m}  \ottsym{\}}  \ottsym{)}  \mathord{,}\,  \sigma  ]}{ \ottsym{(}  \ottnt{n}  \mathord{+}  \ottsym{1}  \ottsym{)}  \mathord{\scriptstyle @}  L_{pc} }{  } }{%
{\ottdrulename{EStepjoinP}}{}%
}}

\newcommand{\ottdruleEStepGetOff}[1]{\ottdrule[#1]{%
\ottpremise{\iota  \ottsym{(}  \ottnt{n}  \ottsym{)}  \ottsym{=}  \ottkw{GetOff}}%
}{
 \aStep{ \mu }{[  \ottsym{(}  \ottkw{Ptr} \, \ottsym{(}  \ottnt{id}  \mathord{,}\,  o  \ottsym{)}  \ottsym{)}  \mathord{\scriptstyle @}  \ottnt{L}  \mathord{,}\,  \sigma  ]}{ \ottnt{n}  \mathord{\scriptstyle @}  L_{pc} }{ \tau }{ \mu }{[  \ottsym{(}  \ottkw{Int} \, o  \ottsym{)}  \mathord{\scriptstyle @}  \ottnt{L}  \mathord{,}\,  \sigma  ]}{ \ottsym{(}  \ottnt{n}  \mathord{+}  \ottsym{1}  \ottsym{)}  \mathord{\scriptstyle @}  L_{pc} }{  } }{%
{\ottdrulename{EStepGetOff}}{}%
}}

\newcommand{\ottdruleEStepSizeOf}[1]{\ottdrule[#1]{%
\ottpremise{\iota  \ottsym{(}  \ottnt{n}  \ottsym{)}  \ottsym{=}  \ottkw{SizeOf} \, \qquad \, \ottkw{length} \, \ottsym{(}   \mu ( \ottnt{id} )   \ottsym{)}  \ottsym{=}  \ottnt{k}}%
}{
 \aStep{ \mu }{[  \ottsym{(}  \ottkw{Ptr} \, \ottsym{(}  \ottnt{id}  \mathord{,}\,  o  \ottsym{)}  \ottsym{)}  \mathord{\scriptstyle @}  \ottnt{L}  \mathord{,}\,  \sigma  ]}{ \ottnt{n}  \mathord{\scriptstyle @}  L_{pc} }{ \tau }{ \mu }{[  \ottsym{(}  \ottkw{Int} \, \ottnt{k}  \ottsym{)}  \mathord{\scriptstyle @}  \ottnt{L}  \mathord{,}\,  \sigma  ]}{ \ottsym{(}  \ottnt{n}  \mathord{+}  \ottsym{1}  \ottsym{)}  \mathord{\scriptstyle @}  L_{pc} }{  } }{%
{\ottdrulename{EStepSizeOf}}{}%
}}

\newcommand{\ottdruleEStepEq}[1]{\ottdrule[#1]{%
\ottpremise{\iota  \ottsym{(}  \ottnt{n}  \ottsym{)}  \ottsym{=}  \ottkw{Eq}}%
}{
 \aStep{ \mu }{[  v_{{\mathrm{1}}}  \mathord{\scriptstyle @}  \ottnt{L_{{\mathrm{1}}}}  \mathord{,}\,  v_{{\mathrm{2}}}  \mathord{\scriptstyle @}  \ottnt{L_{{\mathrm{2}}}}  \mathord{,}\,  \sigma  ]}{ \ottnt{n}  \mathord{\scriptstyle @}  L_{pc} }{ \tau }{ \mu }{[  \ottsym{(}  \ottkw{Int} \, \ottsym{(}  v_{{\mathrm{1}}}  \ottsym{==}  v_{{\mathrm{2}}}  \ottsym{)}  \ottsym{)}  \mathord{\scriptstyle @}  \ottsym{(}  \ottnt{L_{{\mathrm{1}}}}  \mathord{\vee}  \ottnt{L_{{\mathrm{2}}}}  \ottsym{)}  \mathord{,}\,  \sigma  ]}{ \ottsym{(}  \ottnt{n}  \mathord{+}  \ottsym{1}  \ottsym{)}  \mathord{\scriptstyle @}  L_{pc} }{ \ottsym{\mbox{$\backslash{}$}\mbox{$\backslash{}$}} } }{%
{\ottdrulename{EStepEq}}{}%
}}

\newcommand{\ottdrulelbot}[1]{\ottdrule[#1]{%
}{
 \rho  \vdash    \mathtt{BOT}   \downarrow  \bot }{%
{\ottdrulename{lbot}}{}%
}}

\newcommand{\ottdrulelpc}[1]{\ottdrule[#1]{%
}{
 \ottsym{(}  L_{pc}  \mathord{,}\,  \ell_{{\mathrm{1}}}  \mathord{,}\,  \ell_{{\mathrm{2}}}  \mathord{,}\,  \ell_{{\mathrm{3}}}  \ottsym{)}  \vdash    \mathtt{LAB}_\mathit{pc}   \downarrow  L_{pc} }{%
{\ottdrulename{lpc}}{}%
}}

\newcommand{\ottdrulevarOne}[1]{\ottdrule[#1]{%
}{
 \ottsym{(}  L_{pc}  \mathord{,}\,  \ottnt{L_{{\mathrm{1}}}}  \mathord{,}\,  \ell_{{\mathrm{2}}}  \mathord{,}\,  \ell_{{\mathrm{3}}}  \ottsym{)}  \vdash    \mathtt{LAB}_1   \downarrow  \ottnt{L_{{\mathrm{1}}}} }{%
{\ottdrulename{var1}}{}%
}}

\newcommand{\ottdrulevarTwo}[1]{\ottdrule[#1]{%
}{
 \ottsym{(}  L_{pc}  \mathord{,}\,  \ell_{{\mathrm{1}}}  \mathord{,}\,  \ottnt{L_{{\mathrm{2}}}}  \mathord{,}\,  \ell_{{\mathrm{3}}}  \ottsym{)}  \vdash    \mathtt{LAB}_2   \downarrow  \ottnt{L_{{\mathrm{2}}}} }{%
{\ottdrulename{var2}}{}%
}}

\newcommand{\ottdrulevarThree}[1]{\ottdrule[#1]{%
}{
 \ottsym{(}  L_{pc}  \mathord{,}\,  \ell_{{\mathrm{1}}}  \mathord{,}\,  \ell_{{\mathrm{2}}}  \mathord{,}\,  \ottnt{L_{{\mathrm{3}}}}  \ottsym{)}  \vdash    \mathtt{LAB}_3   \downarrow  \ottnt{L_{{\mathrm{3}}}} }{%
{\ottdrulename{var3}}{}%
}}

\newcommand{\ottdrulejoin}[1]{\ottdrule[#1]{%
\ottpremise{ \rho  \vdash   \ottnt{LE_{{\mathrm{1}}}}  \downarrow  \ottnt{L_{{\mathrm{1}}}}  \, \qquad \,  \rho  \vdash   \ottnt{LE_{{\mathrm{2}}}}  \downarrow  \ottnt{L_{{\mathrm{2}}}} }%
}{
 \rho  \vdash   \ottsym{(}   \ottnt{LE_{{\mathrm{1}}}} ~ \sqcup ~ \ottnt{LE_{{\mathrm{2}}}}   \ottsym{)}  \downarrow  \ottsym{(}  \ottnt{L_{{\mathrm{1}}}}  \mathord{\vee}  \ottnt{L_{{\mathrm{2}}}}  \ottsym{)} }{%
{\ottdrulename{join}}{}%
}}

\newcommand{\ottdruletrue}[1]{\ottdrule[#1]{%
}{
 \rho  \vdash   \mathtt{TRUE} }{%
{\ottdrulename{true}}{}%
}}

\newcommand{\ottdruleflows}[1]{\ottdrule[#1]{%
\ottpremise{ \rho  \vdash   \ottnt{LE_{{\mathrm{1}}}}  \downarrow  \ottnt{L_{{\mathrm{1}}}}  \, \qquad \,  \rho  \vdash   \ottnt{LE_{{\mathrm{2}}}}  \downarrow  \ottnt{L_{{\mathrm{2}}}}  \, \qquad \, \ottnt{L_{{\mathrm{1}}}}  \le  \ottnt{L_{{\mathrm{2}}}}}%
}{
 \rho  \vdash    \ottnt{LE_{{\mathrm{1}}}} \sqsubseteq \ottnt{LE_{{\mathrm{2}}}}  }{%
{\ottdrulename{flows}}{}%
}}

\newcommand{\ottdruleand}[1]{\ottdrule[#1]{%
\ottpremise{ \rho  \vdash   \ottnt{BE_{{\mathrm{1}}}}  \, \qquad \,  \rho  \vdash   \ottnt{BE_{{\mathrm{2}}}} }%
}{
 \rho  \vdash    \mathtt{AND} ~ \ottnt{BE_{{\mathrm{1}}}} ~ \ottnt{BE_{{\mathrm{2}}}}  }{%
{\ottdrulename{and}}{}%
}}

\newcommand{\ottdruleorOne}[1]{\ottdrule[#1]{%
\ottpremise{ \rho  \vdash   \ottnt{BE_{{\mathrm{1}}}} }%
}{
 \rho  \vdash    \mathtt{OR} ~ \ottnt{BE_{{\mathrm{1}}}} ~ \ottnt{BE_{{\mathrm{2}}}}  }{%
{\ottdrulename{or1}}{}%
}}

\newcommand{\ottdruleorTwo}[1]{\ottdrule[#1]{%
\ottpremise{ \rho  \vdash   \ottnt{BE_{{\mathrm{2}}}} }%
}{
 \rho  \vdash    \mathtt{OR} ~ \ottnt{BE_{{\mathrm{1}}}} ~ \ottnt{BE_{{\mathrm{2}}}}  }{%
{\ottdrulename{or2}}{}%
}}

\newcommand{\ottdruleevalXXrule}[1]{\ottdrule[#1]{%
\ottpremise{ ${\it Rule}$_ \mathcal{R} ( \ottnt{op} )   \ottsym{=}   \langle  \ottnt{allow} ,  e_{rpc} ,  e_{r}  \rangle }%
\ottpremise{ \rho  \vdash   \ottnt{allow}  \, \qquad \,  \rho  \vdash   e_{rpc}  \downarrow  L_{rpc}  \, \qquad \,  \rho  \vdash   e_{r}  \downarrow  L_r }%
}{
 \vdash_{ \mathcal{R} } \ruleeval{ \rho }{ \ottnt{op} }{ L_{rpc} }{ L_r } }{%
{\ottdrulename{eval\_rule}}{}%
}}

\newcommand{\ottdruleevalXXruleXXNoRes}[1]{\ottdrule[#1]{%
\ottpremise{ ${\it Rule}$_ \mathcal{R} ( \ottnt{op} )   \ottsym{=}   \langle  \ottnt{allow} ,  e_{rpc} ,  \ottsym{\_\_}  \rangle }%
\ottpremise{ \rho  \vdash   \ottnt{allow}  \, \qquad \,  \rho  \vdash   e_{rpc}  \downarrow  L_{rpc} }%
}{
 \vdash_{ \mathcal{R} } \ruleeval{ \rho }{ \ottnt{op} }{ L_{rpc} }{  \dummytag  } }{%
{\ottdrulename{eval\_rule\_NoRes}}{}%
}}

\newcommand{\ottdruleOutput}[1]{\ottdrule[#1]{%
\ottpremise{\iota  \ottsym{(}  \ottnt{n}  \ottsym{)}  \ottsym{=}  \text{\sf Output}}%
\ottpremise{ \vdash_{ \mathcal{R} } \ruleeval{ \ottsym{(}  L_{pc}  \mathord{,}\,  \ottnt{L_{{\mathrm{1}}}}  \mathord{,}\,   \dummytag   \mathord{,}\,   \dummytag   \ottsym{)} }{ \ottkw{output} }{ L_{rpc} }{ L_r } }%
}{
 { \aStep{ \mu }{[  \ottnt{m}  \mathord{\scriptstyle @}  \ottnt{L_{{\mathrm{1}}}}  \mathord{,}\,  \sigma  ]}{ \ottnt{n}  \mathord{\scriptstyle @}  L_{pc} }{ \ottnt{m}  \mathord{\scriptstyle @}  L_r }{ \mu }{[  \sigma  ]}{ \ottsym{(}  \ottnt{n}  \mathord{+}  \ottsym{1}  \ottsym{)}  \mathord{\scriptstyle @}  L_{rpc} }{ \ottsym{\mbox{$\backslash{}$}\mbox{$\backslash{}$}} } } }{%
{\ottdrulename{Output}}{}%
}}

\newcommand{\ottdruleSAdd}[1]{\ottdrule[#1]{%
\ottpremise{\iota  \ottsym{(}  \ottnt{n}  \ottsym{)}  \ottsym{=}  \text{\sf Add}}%
\ottpremise{ \vdash_{ \mathcal{R} } \ruleeval{ \ottsym{(}  L_{pc}  \mathord{,}\,  \ottnt{L_{{\mathrm{1}}}}  \mathord{,}\,  \ottnt{L_{{\mathrm{2}}}}  \mathord{,}\,   \dummytag   \ottsym{)} }{ \ottkw{add} }{ L_{rpc} }{ L_r } }%
}{
 { \aStep{ \mu }{[  \ottnt{n_{{\mathrm{1}}}}  \mathord{\scriptstyle @}  \ottnt{L_{{\mathrm{1}}}}  \mathord{,}\,  \ottnt{n_{{\mathrm{2}}}}  \mathord{\scriptstyle @}  \ottnt{L_{{\mathrm{2}}}}  \mathord{,}\,  \sigma  ]}{ \ottnt{n}  \mathord{\scriptstyle @}  L_{pc} }{ \tau }{ \mu }{[  \ottsym{(}  \ottnt{n_{{\mathrm{1}}}}  \mathord{+}  \ottnt{n_{{\mathrm{2}}}}  \ottsym{)}  \mathord{\scriptstyle @}  L_r  \mathord{,}\,  \sigma  ]}{ \ottsym{(}  \ottnt{n}  \mathord{+}  \ottsym{1}  \ottsym{)}  \mathord{\scriptstyle @}  L_{rpc} }{ \ottsym{\mbox{$\backslash{}$}\mbox{$\backslash{}$}} } } }{%
{\ottdrulename{SAdd}}{}%
}}

\newcommand{\ottdrulePush}[1]{\ottdrule[#1]{%
\ottpremise{\iota  \ottsym{(}  \ottnt{n}  \ottsym{)}  \ottsym{=}  \text{\sf Push} \, \ottnt{m}}%
\ottpremise{ \vdash_{ \mathcal{R} } \ruleeval{ \ottsym{(}  L_{pc}  \mathord{,}\,   \dummytag   \mathord{,}\,   \dummytag   \mathord{,}\,   \dummytag   \ottsym{)} }{ \ottkw{push} }{ L_{rpc} }{ L_r } }%
}{
 { \aStep{ \mu }{[  \sigma  ]}{ \ottnt{n}  \mathord{\scriptstyle @}  L_{pc} }{ \tau }{ \mu }{[  \ottnt{m}  \mathord{\scriptstyle @}  L_r  \mathord{,}\,  \sigma  ]}{ \ottsym{(}  \ottnt{n}  \mathord{+}  \ottsym{1}  \ottsym{)}  \mathord{\scriptstyle @}  L_{rpc} }{  } } }{%
{\ottdrulename{Push}}{}%
}}

\newcommand{\ottdruleLoad}[1]{\ottdrule[#1]{%
\ottpremise{\iota  \ottsym{(}  \ottnt{n}  \ottsym{)}  \ottsym{=}  \text{\sf Load} \, \qquad \,  \mu ( \ottnt{p} )   \ottsym{=}  \ottnt{m}  \mathord{\scriptstyle @}  \ottnt{L_{{\mathrm{2}}}}}%
\ottpremise{ \vdash_{ \mathcal{R} } \ruleeval{ \ottsym{(}  L_{pc}  \mathord{,}\,  \ottnt{L_{{\mathrm{1}}}}  \mathord{,}\,  \ottnt{L_{{\mathrm{2}}}}  \mathord{,}\,   \dummytag   \ottsym{)} }{ \ottkw{load} }{ L_{rpc} }{ L_r } }%
}{
 { \aStep{ \mu }{[  \ottnt{p}  \mathord{\scriptstyle @}  \ottnt{L_{{\mathrm{1}}}}  \mathord{,}\,  \sigma  ]}{ \ottnt{n}  \mathord{\scriptstyle @}  L_{pc} }{ \tau }{ \mu }{[  \ottnt{m}  \mathord{\scriptstyle @}  L_r  \mathord{,}\,  \sigma  ]}{ \ottsym{(}  \ottnt{n}  \mathord{+}  \ottsym{1}  \ottsym{)}  \mathord{\scriptstyle @}  L_{rpc} }{ \ottsym{\mbox{$\backslash{}$}\mbox{$\backslash{}$}} } } }{%
{\ottdrulename{Load}}{}%
}}

\newcommand{\ottdruleStore}[1]{\ottdrule[#1]{%
\ottpremise{\iota  \ottsym{(}  \ottnt{n}  \ottsym{)}  \ottsym{=}  \text{\sf Store} \, \qquad \,  \mu ( \ottnt{p} )   \ottsym{=}  \ottnt{k}  \mathord{\scriptstyle @}  \ottnt{L_{{\mathrm{3}}}}}%
\ottpremise{ \vdash_{ \mathcal{R} } \ruleeval{ \ottsym{(}  L_{pc}  \mathord{,}\,  \ottnt{L_{{\mathrm{1}}}}  \mathord{,}\,  \ottnt{L_{{\mathrm{2}}}}  \mathord{,}\,  \ottnt{L_{{\mathrm{3}}}}  \ottsym{)} }{ \ottkw{store} }{ L_{rpc} }{ L_r } }%
\ottpremise{ \mu ( \ottnt{p} ) \leftarrow  \ottnt{m}  \mathord{\scriptstyle @}  L_r   \ottsym{=}  \mu'}%
}{
 { \aStep{ \mu }{[  \ottnt{p}  \mathord{\scriptstyle @}  \ottnt{L_{{\mathrm{1}}}}  \mathord{,}\,  \ottnt{m}  \mathord{\scriptstyle @}  \ottnt{L_{{\mathrm{2}}}}  \mathord{,}\,  \sigma  ]}{ \ottnt{n}  \mathord{\scriptstyle @}  L_{pc} }{ \tau }{ \mu' }{[  \sigma  ]}{ \ottsym{(}  \ottnt{n}  \mathord{+}  \ottsym{1}  \ottsym{)}  \mathord{\scriptstyle @}  L_{rpc} }{  } } }{%
{\ottdrulename{Store}}{}%
}}

\newcommand{\ottdruleJump}[1]{\ottdrule[#1]{%
\ottpremise{\iota  \ottsym{(}  \ottnt{n}  \ottsym{)}  \ottsym{=}  \text{\sf Jump}}%
\ottpremise{ \vdash_{ \mathcal{R} } \ruleeval{ \ottsym{(}  L_{pc}  \mathord{,}\,  \ottnt{L_{{\mathrm{1}}}}  \mathord{,}\,   \dummytag   \mathord{,}\,   \dummytag   \ottsym{)} }{ \ottkw{jump} }{ L_{rpc} }{  \dummytag  } }%
}{
 { \aStep{ \mu }{[  \ottnt{n'}  \mathord{\scriptstyle @}  \ottnt{L_{{\mathrm{1}}}}  \mathord{,}\,  \sigma  ]}{ \ottnt{n}  \mathord{\scriptstyle @}  L_{pc} }{ \tau }{ \mu }{[  \sigma  ]}{ \ottnt{n'}  \mathord{\scriptstyle @}  L_{rpc} }{  } } }{%
{\ottdrulename{Jump}}{}%
}}

\newcommand{\ottdruleBnz}[1]{\ottdrule[#1]{%
\ottpremise{\iota  \ottsym{(}  \ottnt{n}  \ottsym{)}  \ottsym{=}  \text{\sf Bnz} \, \ottnt{k} \, \qquad \, \ottnt{n'}  \ottsym{=}  \ottnt{n}  \mathord{+}  \ottsym{(}  \ottsym{(}  \ottnt{m}  \ottsym{=}  \ottsym{0}  \ottsym{)}  \ottsym{\mbox{?}}  \ottsym{1}  \ottsym{:}  \ottnt{k}  \ottsym{)}}%
\ottpremise{ \vdash_{ \mathcal{R} } \ruleeval{ \ottsym{(}  L_{pc}  \mathord{,}\,  \ottnt{L_{{\mathrm{1}}}}  \mathord{,}\,   \dummytag   \mathord{,}\,   \dummytag   \ottsym{)} }{ \ottkw{bnz} }{ L_{rpc} }{  \dummytag  } }%
}{
 { \aStep{ \mu }{[  \ottnt{m}  \mathord{\scriptstyle @}  \ottnt{L_{{\mathrm{1}}}}  \mathord{,}\,  \sigma  ]}{ \ottnt{n}  \mathord{\scriptstyle @}  L_{pc} }{ \tau }{ \mu }{[  \sigma  ]}{ \ottnt{n'}  \mathord{\scriptstyle @}  L_{rpc} }{  } } }{%
{\ottdrulename{Bnz}}{}%
}}

\newcommand{\ottdruleCall}[1]{\ottdrule[#1]{%
\ottpremise{\iota  \ottsym{(}  \ottnt{n}  \ottsym{)}  \ottsym{=}  \text{\sf Call}}%
\ottpremise{ \vdash_{ \mathcal{R} } \ruleeval{ \ottsym{(}  L_{pc}  \mathord{,}\,  \ottnt{L_{{\mathrm{1}}}}  \mathord{,}\,   \dummytag   \mathord{,}\,   \dummytag   \ottsym{)} }{ \ottkw{call} }{ L_{rpc} }{ L_r } }%
}{
 { \aStep{ \mu }{[  \ottnt{n'}  \mathord{\scriptstyle @}  \ottnt{L_{{\mathrm{1}}}}  \mathord{,}\,  \ottnt{a}  \mathord{,}\,  \sigma  ]}{ \ottnt{n}  \mathord{\scriptstyle @}  L_{pc} }{ \tau }{ \mu }{[  \ottnt{a}  \mathord{,}\,  \ottsym{(}  \ottnt{n}  \mathord{+}  \ottsym{1}  \ottsym{)}  \mathord{\scriptstyle @}  L_r  \mathord{;}\,  \sigma  ]}{ \ottnt{n'}  \mathord{\scriptstyle @}  L_{rpc} }{ \ottsym{\mbox{$\backslash{}$}\mbox{$\backslash{}$}} } } }{%
{\ottdrulename{Call}}{}%
}}

\newcommand{\ottdruleRet}[1]{\ottdrule[#1]{%
\ottpremise{\iota  \ottsym{(}  \ottnt{n}  \ottsym{)}  \ottsym{=}  \text{\sf Ret} \, \qquad \,  \vdash_{ \mathcal{R} } \ruleeval{ \ottsym{(}  L_{pc}  \mathord{,}\,  \ottnt{L_{{\mathrm{1}}}}  \mathord{,}\,   \dummytag   \mathord{,}\,   \dummytag   \ottsym{)} }{ \ottkw{ret} }{ L_{rpc} }{  \dummytag  } }%
}{
 { \aStep{ \mu }{[  \ottnt{n'}  \mathord{\scriptstyle @}  \ottnt{L_{{\mathrm{1}}}}  \mathord{;}\,  \sigma  ]}{ \ottnt{n}  \mathord{\scriptstyle @}  L_{pc} }{ \tau }{ \mu }{[  \sigma  ]}{ \ottnt{n'}  \mathord{\scriptstyle @}  L_{rpc} }{  } } }{%
{\ottdrulename{Ret}}{}%
}}







\newcommand{\ottdruleCOut}[1]{\ottdrule[#1]{%
\ottpremise{\iota  \ottsym{(}  \ottnt{n}  \ottsym{)}  \ottsym{=}  \text{\sf Output}}%
\ottpremise{\kappa  =   \begin{array}{|@{\;}l@{\;}|@{\;}l@{\;}|@{\;}l@{\;}|@{\;}l@{\;}|@{\;}l@{\;}||@{\;}l@{\;}|@{\;}l@{\;}|}
                       \hline
                           \ottkw{output}  &  \mathtt{T}_{pc}  &  \mathtt{T}_{{\mathrm{1}}}  &  \mathtt{T}_\mathtt{D}  &  \mathtt{T}_\mathtt{D}  &  \mathtt{T}_{rpc}  &  \mathtt{T}_{r}  \\
                       \hline
                       \end{array} }%
}{
 \cStep{  \text{\sf u}  }{  \kappa  }{  \concretesymbol{\mu}  }{ [  \ottnt{m}  \mathord{\scriptstyle @}  \mathtt{T}_{{\mathrm{1}}}  \mathord{,}\,  \concretesymbol{\sigma}  ] }{  \ottnt{n}  \mathord{\scriptstyle @}  \mathtt{T}_{pc}  }{  \ottnt{m}  \mathord{\scriptstyle @}  \mathtt{T}_{r}  }{  \text{\sf u}  }{  \kappa  }{  \concretesymbol{\mu}  }{ [  \concretesymbol{\sigma}  ] }{  \ottnt{n}  \mathord{+}  \ottsym{1}  \mathord{\scriptstyle @}  \mathtt{T}_{rpc}  }{  \ottsym{\mbox{$\backslash{}$}\mbox{$\backslash{}$}}  } }{%
{\ottdrulename{COut}}{}%
}}

\newcommand{\ottdruleCOutXXF}[1]{\ottdrule[#1]{%
\ottpremise{\iota  \ottsym{(}  \ottnt{n}  \ottsym{)}  \ottsym{=}  \text{\sf Output}}%
\ottpremise{{\kappa_i}  \not=   \begin{array}{|@{\;}l@{\;}|@{\;}l@{\;}|@{\;}l@{\;}|@{\;}l@{\;}|@{\;}l@{\;}|}
                       \hline
                           \ottkw{output}  &  \mathtt{T}_{pc}  &  \mathtt{T}_{{\mathrm{1}}}  &  \mathtt{T}_\mathtt{D}  &  \mathtt{T}_\mathtt{D}  \\
                       \hline
                       \end{array}   =  {\kappa_j}}%
}{
 \cStep{  \text{\sf u}  }{   [  {\kappa_i}  ,  \kappa_o  ]   }{  \concretesymbol{\mu}  }{ [  \ottnt{m}  \mathord{\scriptstyle @}  \mathtt{T}_{{\mathrm{1}}}  \mathord{,}\,  \concretesymbol{\sigma}  ] }{  \ottnt{n}  \mathord{\scriptstyle @}  \mathtt{T}_{pc}  }{  \tau  }{  \text{\sf k}  }{   [  {\kappa_j}  ,  \kappa_\mathtt{D}  ]   }{  \concretesymbol{\mu}  }{ [  \ottsym{(}  \ottnt{n}  \mathord{\scriptstyle @}  \mathtt{T}_{pc}  \mathord{,}\,  \text{\sf u}  \ottsym{)}  \mathord{;}\,  \ottnt{m}  \mathord{\scriptstyle @}  \mathtt{T}_{{\mathrm{1}}}  \mathord{,}\,  \concretesymbol{\sigma}  ] }{  \ottsym{0}  \mathord{\scriptstyle @}  \mathtt{T}_\mathtt{D}  }{  \ottsym{\mbox{$\backslash{}$}\mbox{$\backslash{}$}}  } }{%
{\ottdrulename{COut\_F}}{}%
}}

\newcommand{\ottdruleCAdd}[1]{\ottdrule[#1]{%
\ottpremise{\iota  \ottsym{(}  \ottnt{n}  \ottsym{)}  \ottsym{=}  \text{\sf Add}}%
\ottpremise{\kappa  =   \begin{array}{|@{\;}l@{\;}|@{\;}l@{\;}|@{\;}l@{\;}|@{\;}l@{\;}|@{\;}l@{\;}||@{\;}l@{\;}|@{\;}l@{\;}|}
                       \hline
                           \ottkw{add}  &  \mathtt{T}_{pc}  &  \mathtt{T}_{{\mathrm{1}}}  &  \mathtt{T}_{{\mathrm{2}}}  &  \mathtt{T}_\mathtt{D}  &  \mathtt{T}_{rpc}  &  \mathtt{T}_{r}  \\
                       \hline
                       \end{array} }%
}{
 \cStep{  \text{\sf u}  }{  \kappa  }{  \concretesymbol{\mu}  }{ [  \ottnt{n_{{\mathrm{1}}}}  \mathord{\scriptstyle @}  \mathtt{T}_{{\mathrm{1}}}  \mathord{,}\,  \ottnt{n_{{\mathrm{2}}}}  \mathord{\scriptstyle @}  \mathtt{T}_{{\mathrm{2}}}  \mathord{,}\,  \concretesymbol{\sigma}  ] }{  \ottnt{n}  \mathord{\scriptstyle @}  \mathtt{T}_{pc}  }{  \tau  }{  \text{\sf u}  }{  \kappa  }{  \concretesymbol{\mu}  }{ [  \ottsym{(}  \ottnt{n_{{\mathrm{1}}}}  \mathord{+}  \ottnt{n_{{\mathrm{2}}}}  \ottsym{)}  \mathord{\scriptstyle @}  \mathtt{T}_{r}  \mathord{,}\,  \concretesymbol{\sigma}  ] }{  \ottnt{n}  \mathord{+}  \ottsym{1}  \mathord{\scriptstyle @}  \mathtt{T}_{rpc}  }{  \ottsym{\mbox{$\backslash{}$}\mbox{$\backslash{}$}}  } }{%
{\ottdrulename{CAdd}}{}%
}}

\newcommand{\ottdruleCAddXXF}[1]{\ottdrule[#1]{%
\ottpremise{\iota  \ottsym{(}  \ottnt{n}  \ottsym{)}  \ottsym{=}  \text{\sf Add}}%
\ottpremise{{\kappa_i}  \not=   \begin{array}{|@{\;}l@{\;}|@{\;}l@{\;}|@{\;}l@{\;}|@{\;}l@{\;}|@{\;}l@{\;}|}
                       \hline
                           \ottkw{add}  &  \mathtt{T}_{pc}  &  \mathtt{T}_{{\mathrm{1}}}  &  \mathtt{T}_{{\mathrm{2}}}  &  \mathtt{T}_\mathtt{D}  \\
                       \hline
                       \end{array}   =  {\kappa_j}}%
}{
 \cStep{  \text{\sf u}  }{   [  {\kappa_i}  ,  \kappa_o  ]   }{  \concretesymbol{\mu}  }{ [  \ottnt{n_{{\mathrm{1}}}}  \mathord{\scriptstyle @}  \mathtt{T}_{{\mathrm{1}}}  \mathord{,}\,  \ottnt{n_{{\mathrm{2}}}}  \mathord{\scriptstyle @}  \mathtt{T}_{{\mathrm{2}}}  \mathord{,}\,  \concretesymbol{\sigma}  ] }{  \ottnt{n}  \mathord{\scriptstyle @}  \mathtt{T}_{pc}  }{  \tau  }{  \text{\sf k}  }{   [  {\kappa_j}  ,  \kappa_\mathtt{D}  ]   }{  \concretesymbol{\mu}  }{ [  \ottsym{(}  \ottnt{n}  \mathord{\scriptstyle @}  \mathtt{T}_{pc}  \mathord{,}\,  \text{\sf u}  \ottsym{)}  \mathord{;}\,  \ottnt{n_{{\mathrm{1}}}}  \mathord{\scriptstyle @}  \mathtt{T}_{{\mathrm{1}}}  \mathord{,}\,  \ottnt{n_{{\mathrm{2}}}}  \mathord{\scriptstyle @}  \mathtt{T}_{{\mathrm{2}}}  \mathord{,}\,  \concretesymbol{\sigma}  ] }{  \ottsym{0}  \mathord{\scriptstyle @}  \mathtt{T}_\mathtt{D}  }{  \ottsym{\mbox{$\backslash{}$}\mbox{$\backslash{}$}}  } }{%
{\ottdrulename{CAdd\_F}}{}%
}}

\newcommand{\ottdruleCAddXXP}[1]{\ottdrule[#1]{%
\ottpremise{\phi  \ottsym{(}  \ottnt{n}  \ottsym{)}  \ottsym{=}  \text{\sf Add}}%
}{
 \cStep{  \text{\sf k}  }{  \kappa  }{  \concretesymbol{\mu}  }{ [  \ottnt{n_{{\mathrm{1}}}}  \mathord{\scriptstyle @}   \dummytag   \mathord{,}\,  \ottnt{n_{{\mathrm{2}}}}  \mathord{\scriptstyle @}   \dummytag   \mathord{,}\,  \concretesymbol{\sigma}  ] }{  \ottnt{n}  \mathord{\scriptstyle @}   \dummytag   }{  \tau  }{  \text{\sf k}  }{  \kappa  }{  \concretesymbol{\mu}  }{ [  \ottsym{(}  \ottnt{n_{{\mathrm{1}}}}  \mathord{+}  \ottnt{n_{{\mathrm{2}}}}  \ottsym{)}  \mathord{\scriptstyle @}  \mathtt{T}_\mathtt{D}  \mathord{,}\,  \concretesymbol{\sigma}  ] }{  \ottnt{n}  \mathord{+}  \ottsym{1}  \mathord{\scriptstyle @}  \mathtt{T}_\mathtt{D}  }{  \ottsym{\mbox{$\backslash{}$}\mbox{$\backslash{}$}}  } }{%
{\ottdrulename{CAdd\_P}}{}%
}}

\newcommand{\ottdruleCPsh}[1]{\ottdrule[#1]{%
\ottpremise{\iota  \ottsym{(}  \ottnt{n}  \ottsym{)}  \ottsym{=}  \text{\sf Push} \, \ottnt{m}}%
\ottpremise{\kappa  =   \begin{array}{|@{\;}l@{\;}|@{\;}l@{\;}|@{\;}l@{\;}|@{\;}l@{\;}|@{\;}l@{\;}||@{\;}l@{\;}|@{\;}l@{\;}|}
                       \hline
                           \ottkw{push}  &  \mathtt{T}_{pc}  &  \mathtt{T}_\mathtt{D}  &  \mathtt{T}_\mathtt{D}  &  \mathtt{T}_\mathtt{D}  &  \mathtt{T}_{rpc}  &  \mathtt{T}_{r}  \\
                       \hline
                       \end{array} }%
}{
 \cStep{  \text{\sf u}  }{  \kappa  }{  \concretesymbol{\mu}  }{ [  \concretesymbol{\sigma}  ] }{  \ottnt{n}  \mathord{\scriptstyle @}  \mathtt{T}_{pc}  }{  \tau  }{  \text{\sf u}  }{  \kappa  }{  \concretesymbol{\mu}  }{ [  \ottnt{m}  \mathord{\scriptstyle @}  \mathtt{T}_{r}  \mathord{,}\,  \concretesymbol{\sigma}  ] }{  \ottnt{n}  \mathord{+}  \ottsym{1}  \mathord{\scriptstyle @}  \mathtt{T}_{rpc}  }{  \ottsym{\mbox{$\backslash{}$}\mbox{$\backslash{}$}}  } }{%
{\ottdrulename{CPsh}}{}%
}}

\newcommand{\ottdruleCPshXXF}[1]{\ottdrule[#1]{%
\ottpremise{\iota  \ottsym{(}  \ottnt{n}  \ottsym{)}  \ottsym{=}  \text{\sf Push} \, \ottnt{m}}%
\ottpremise{{\kappa_i}  \not=   \begin{array}{|@{\;}l@{\;}|@{\;}l@{\;}|@{\;}l@{\;}|@{\;}l@{\;}|@{\;}l@{\;}|}
                       \hline
                           \ottkw{push}  &  \mathtt{T}_{pc}  &  \mathtt{T}_\mathtt{D}  &  \mathtt{T}_\mathtt{D}  &  \mathtt{T}_\mathtt{D}  \\
                       \hline
                       \end{array}   =  {\kappa_j}}%
}{
 \cStep{  \text{\sf u}  }{   [  {\kappa_i}  ,  \kappa_o  ]   }{  \concretesymbol{\mu}  }{ [  \concretesymbol{\sigma}  ] }{  \ottnt{n}  \mathord{\scriptstyle @}  \mathtt{T}_{pc}  }{  \tau  }{  \text{\sf k}  }{   [  {\kappa_j}  ,  \kappa_\mathtt{D}  ]   }{  \concretesymbol{\mu}  }{ [  \ottsym{(}  \ottnt{n}  \mathord{\scriptstyle @}  \mathtt{T}_{pc}  \mathord{,}\,  \text{\sf u}  \ottsym{)}  \mathord{;}\,  \concretesymbol{\sigma}  ] }{  \ottsym{0}  \mathord{\scriptstyle @}  \mathtt{T}_\mathtt{D}  }{  \ottsym{\mbox{$\backslash{}$}\mbox{$\backslash{}$}}  } }{%
{\ottdrulename{CPsh\_F}}{}%
}}

\newcommand{\ottdruleCPshXXP}[1]{\ottdrule[#1]{%
\ottpremise{\phi  \ottsym{(}  \ottnt{n}  \ottsym{)}  \ottsym{=}  \text{\sf Push} \, \ottnt{m}}%
}{
 \cStep{  \text{\sf k}  }{  \kappa  }{  \concretesymbol{\mu}  }{ [  \concretesymbol{\sigma}  ] }{  \ottnt{n}  \mathord{\scriptstyle @}   \dummytag   }{  \tau  }{  \text{\sf k}  }{  \kappa  }{  \concretesymbol{\mu}  }{ [  \ottnt{m}  \mathord{\scriptstyle @}  \mathtt{T}_\mathtt{D}  \mathord{,}\,  \concretesymbol{\sigma}  ] }{  \ottnt{n}  \mathord{+}  \ottsym{1}  \mathord{\scriptstyle @}  \mathtt{T}_\mathtt{D}  }{    } }{%
{\ottdrulename{CPsh\_P}}{}%
}}

\newcommand{\ottdruleCLd}[1]{\ottdrule[#1]{%
\ottpremise{\iota  \ottsym{(}  \ottnt{n}  \ottsym{)}  \ottsym{=}  \text{\sf Load} \, \qquad \,  \concretesymbol{\mu} ( \ottnt{p} )   \ottsym{=}  \ottnt{m}  \mathord{\scriptstyle @}  \mathtt{T}_{{\mathrm{2}}}}%
\ottpremise{\kappa  =   \begin{array}{|@{\;}l@{\;}|@{\;}l@{\;}|@{\;}l@{\;}|@{\;}l@{\;}|@{\;}l@{\;}||@{\;}l@{\;}|@{\;}l@{\;}|}
                       \hline
                           \ottkw{load}  &  \mathtt{T}_{pc}  &  \mathtt{T}_{{\mathrm{1}}}  &  \mathtt{T}_{{\mathrm{2}}}  &  \mathtt{T}_\mathtt{D}  &  \mathtt{T}_{rpc}  &  \mathtt{T}_{r}  \\
                       \hline
                       \end{array} }%
}{
 \cStep{  \text{\sf u}  }{  \kappa  }{  \concretesymbol{\mu}  }{ [  \ottnt{p}  \mathord{\scriptstyle @}  \mathtt{T}_{{\mathrm{1}}}  \mathord{,}\,  \concretesymbol{\sigma}  ] }{  \ottnt{n}  \mathord{\scriptstyle @}  \mathtt{T}_{pc}  }{  \tau  }{  \text{\sf u}  }{  \kappa  }{  \concretesymbol{\mu}  }{ [  \ottnt{m}  \mathord{\scriptstyle @}  \mathtt{T}_{r}  \mathord{,}\,  \concretesymbol{\sigma}  ] }{  \ottnt{n}  \mathord{+}  \ottsym{1}  \mathord{\scriptstyle @}  \mathtt{T}_{rpc}  }{  \ottsym{\mbox{$\backslash{}$}\mbox{$\backslash{}$}}  } }{%
{\ottdrulename{CLd}}{}%
}}

\newcommand{\ottdruleCLdXXF}[1]{\ottdrule[#1]{%
\ottpremise{\iota  \ottsym{(}  \ottnt{n}  \ottsym{)}  \ottsym{=}  \text{\sf Load} \, \qquad \,  \concretesymbol{\mu} ( \ottnt{p} )   \ottsym{=}  \ottnt{m}  \mathord{\scriptstyle @}  \mathtt{T}_{{\mathrm{2}}}}%
\ottpremise{{\kappa_i}  \not=   \begin{array}{|@{\;}l@{\;}|@{\;}l@{\;}|@{\;}l@{\;}|@{\;}l@{\;}|@{\;}l@{\;}|}
                       \hline
                           \ottkw{load}  &  \mathtt{T}_{pc}  &  \mathtt{T}_{{\mathrm{1}}}  &  \mathtt{T}_{{\mathrm{2}}}  &  \mathtt{T}_\mathtt{D}  \\
                       \hline
                       \end{array}   =  {\kappa_j}}%
}{
 \cStep{  \text{\sf u}  }{   [  {\kappa_i}  ,  \kappa_o  ]   }{  \concretesymbol{\mu}  }{ [  \ottnt{p}  \mathord{\scriptstyle @}  \mathtt{T}_{{\mathrm{1}}}  \mathord{,}\,  \concretesymbol{\sigma}  ] }{  \ottnt{n}  \mathord{\scriptstyle @}  \mathtt{T}_{pc}  }{  \tau  }{  \text{\sf k}  }{   [  {\kappa_j}  ,  \kappa_\mathtt{D}  ]   }{  \concretesymbol{\mu}  }{ [  \ottsym{(}  \ottnt{n}  \mathord{\scriptstyle @}  \mathtt{T}_{pc}  \mathord{,}\,  \text{\sf u}  \ottsym{)}  \mathord{;}\,  \ottnt{p}  \mathord{\scriptstyle @}  \mathtt{T}_{{\mathrm{1}}}  \mathord{,}\,  \concretesymbol{\sigma}  ] }{  \ottsym{0}  \mathord{\scriptstyle @}  \mathtt{T}_\mathtt{D}  }{  \ottsym{\mbox{$\backslash{}$}\mbox{$\backslash{}$}}  } }{%
{\ottdrulename{CLd\_F}}{}%
}}

\newcommand{\ottdruleCLdXXP}[1]{\ottdrule[#1]{%
\ottpremise{\phi  \ottsym{(}  \ottnt{n}  \ottsym{)}  \ottsym{=}  \text{\sf Load} \, \qquad \,  \kappa ( \ottnt{p} )   \ottsym{=}  \ottnt{m}  \mathord{\scriptstyle @}  \mathtt{T}_{{\mathrm{1}}}}%
}{
 \cStep{  \text{\sf k}  }{  \kappa  }{  \concretesymbol{\mu}  }{ [  \ottnt{p}  \mathord{\scriptstyle @}   \dummytag   \mathord{,}\,  \concretesymbol{\sigma}  ] }{  \ottnt{n}  \mathord{\scriptstyle @}   \dummytag   }{  \tau  }{  \text{\sf k}  }{  \kappa  }{  \concretesymbol{\mu}  }{ [  \ottnt{m}  \mathord{\scriptstyle @}  \mathtt{T}_{{\mathrm{1}}}  \mathord{,}\,  \concretesymbol{\sigma}  ] }{  \ottnt{n}  \mathord{+}  \ottsym{1}  \mathord{\scriptstyle @}  \mathtt{T}_\mathtt{D}  }{  \ottsym{\mbox{$\backslash{}$}\mbox{$\backslash{}$}}  } }{%
{\ottdrulename{CLd\_P}}{}%
}}

\newcommand{\ottdruleCSt}[1]{\ottdrule[#1]{%
\ottpremise{\iota  \ottsym{(}  \ottnt{n}  \ottsym{)}  \ottsym{=}  \text{\sf Store} \, \qquad \,  \concretesymbol{\mu} ( \ottnt{p} )   \ottsym{=}  \ottnt{k}  \mathord{\scriptstyle @}  \mathtt{T}_{{\mathrm{3}}}}%
\ottpremise{\kappa  =   \begin{array}{|@{\;}l@{\;}|@{\;}l@{\;}|@{\;}l@{\;}|@{\;}l@{\;}|@{\;}l@{\;}||@{\;}l@{\;}|@{\;}l@{\;}|}
                       \hline
                           \ottkw{store}  &  \mathtt{T}_{pc}  &  \mathtt{T}_{{\mathrm{1}}}  &  \mathtt{T}_{{\mathrm{2}}}  &  \mathtt{T}_{{\mathrm{3}}}  &  \mathtt{T}_{rpc}  &  \mathtt{T}_{r}  \\
                       \hline
                       \end{array} }%
\ottpremise{ \concretesymbol{\mu} ( \ottnt{p} ) \leftarrow  \ottsym{(}  \ottnt{m}  \mathord{\scriptstyle @}  \mathtt{T}_{r}  \ottsym{)}   \ottsym{=}  \concretesymbol{\mu}'}%
}{
 \cStep{  \text{\sf u}  }{  \kappa  }{  \concretesymbol{\mu}  }{ [  \ottnt{p}  \mathord{\scriptstyle @}  \mathtt{T}_{{\mathrm{1}}}  \mathord{,}\,  \ottnt{m}  \mathord{\scriptstyle @}  \mathtt{T}_{{\mathrm{2}}}  \mathord{,}\,  \concretesymbol{\sigma}  ] }{  \ottnt{n}  \mathord{\scriptstyle @}  \mathtt{T}_{pc}  }{  \tau  }{  \text{\sf u}  }{  \kappa  }{  \concretesymbol{\mu}'  }{ [  \concretesymbol{\sigma}  ] }{  \ottnt{n}  \mathord{+}  \ottsym{1}  \mathord{\scriptstyle @}  \mathtt{T}_{rpc}  }{  \ottsym{\mbox{$\backslash{}$}\mbox{$\backslash{}$}}  } }{%
{\ottdrulename{CSt}}{}%
}}

\newcommand{\ottdruleCStXXF}[1]{\ottdrule[#1]{%
\ottpremise{\iota  \ottsym{(}  \ottnt{n}  \ottsym{)}  \ottsym{=}  \text{\sf Store} \, \qquad \,  \concretesymbol{\mu} ( \ottnt{p} )   \ottsym{=}  \ottnt{k}  \mathord{\scriptstyle @}  \mathtt{T}_{{\mathrm{3}}}}%
\ottpremise{{\kappa_i}  \not=   \begin{array}{|@{\;}l@{\;}|@{\;}l@{\;}|@{\;}l@{\;}|@{\;}l@{\;}|@{\;}l@{\;}|}
                       \hline
                           \ottkw{store}  &  \mathtt{T}_{pc}  &  \mathtt{T}_{{\mathrm{1}}}  &  \mathtt{T}_{{\mathrm{2}}}  &  \mathtt{T}_{{\mathrm{3}}}  \\
                       \hline
                       \end{array}   =  {\kappa_j}}%
}{
 \cStep{  \text{\sf u}  }{   [  {\kappa_i}  ,  \kappa_o  ]   }{  \concretesymbol{\mu}  }{ [  \ottnt{p}  \mathord{\scriptstyle @}  \mathtt{T}_{{\mathrm{1}}}  \mathord{,}\,  \ottnt{m}  \mathord{\scriptstyle @}  \mathtt{T}_{{\mathrm{2}}}  \mathord{,}\,  \concretesymbol{\sigma}  ] }{  \ottnt{n}  \mathord{\scriptstyle @}  \mathtt{T}_{pc}  }{  \tau  }{  \text{\sf k}  }{   [  {\kappa_j}  ,  \kappa_\mathtt{D}  ]   }{  \concretesymbol{\mu}  }{ [  \ottsym{(}  \ottnt{n}  \mathord{\scriptstyle @}  \mathtt{T}_{pc}  \mathord{,}\,  \text{\sf u}  \ottsym{)}  \mathord{;}\,  \ottnt{p}  \mathord{\scriptstyle @}  \mathtt{T}_{{\mathrm{1}}}  \mathord{,}\,  \ottnt{m}  \mathord{\scriptstyle @}  \mathtt{T}_{{\mathrm{2}}}  \mathord{,}\,  \concretesymbol{\sigma}  ] }{  \ottsym{0}  \mathord{\scriptstyle @}  \mathtt{T}_\mathtt{D}  }{  \ottsym{\mbox{$\backslash{}$}\mbox{$\backslash{}$}}  } }{%
{\ottdrulename{CSt\_F}}{}%
}}

\newcommand{\ottdruleCStXXP}[1]{\ottdrule[#1]{%
\ottpremise{\phi  \ottsym{(}  \ottnt{n}  \ottsym{)}  \ottsym{=}  \text{\sf Store} \, \qquad \, \ottkw{store} \, \kappa \, \ottnt{p} \, \ottsym{(}  \ottnt{m}  \mathord{\scriptstyle @}  \mathtt{T}_{{\mathrm{1}}}  \ottsym{)}  \ottsym{=}  \kappa'}%
}{
 \cStep{  \text{\sf k}  }{  \kappa  }{  \concretesymbol{\mu}  }{ [  \ottnt{p}  \mathord{\scriptstyle @}   \dummytag   \mathord{,}\,  \ottnt{m}  \mathord{\scriptstyle @}  \mathtt{T}_{{\mathrm{1}}}  \mathord{,}\,  \concretesymbol{\sigma}  ] }{  \ottnt{n}  \mathord{\scriptstyle @}   \dummytag   }{  \tau  }{  \text{\sf k}  }{  \kappa'  }{  \concretesymbol{\mu}  }{ [  \concretesymbol{\sigma}  ] }{  \ottnt{n}  \mathord{+}  \ottsym{1}  \mathord{\scriptstyle @}  \mathtt{T}_\mathtt{D}  }{  \ottsym{\mbox{$\backslash{}$}\mbox{$\backslash{}$}}  } }{%
{\ottdrulename{CSt\_P}}{}%
}}

\newcommand{\ottdruleCJmp}[1]{\ottdrule[#1]{%
\ottpremise{\iota  \ottsym{(}  \ottnt{n}  \ottsym{)}  \ottsym{=}  \text{\sf Jump}}%
\ottpremise{\kappa  =   \begin{array}{|@{\;}l@{\;}|@{\;}l@{\;}|@{\;}l@{\;}|@{\;}l@{\;}|@{\;}l@{\;}||@{\;}l@{\;}|@{\;}l@{\;}|}
                       \hline
                           \ottkw{jump}  &  \mathtt{T}_{pc}  &  \mathtt{T}_{{\mathrm{1}}}  &  \mathtt{T}_\mathtt{D}  &  \mathtt{T}_\mathtt{D}  &  \mathtt{T}_{rpc}  &  \mathtt{T}_\mathtt{D}  \\
                       \hline
                       \end{array} }%
}{
 \cStep{  \text{\sf u}  }{  \kappa  }{  \concretesymbol{\mu}  }{ [  \ottnt{n'}  \mathord{\scriptstyle @}  \mathtt{T}_{{\mathrm{1}}}  \mathord{,}\,  \concretesymbol{\sigma}  ] }{  \ottnt{n}  \mathord{\scriptstyle @}  \mathtt{T}_{pc}  }{  \tau  }{  \text{\sf u}  }{  \kappa  }{  \concretesymbol{\mu}  }{ [  \concretesymbol{\sigma}  ] }{  \ottnt{n'}  \mathord{\scriptstyle @}  \mathtt{T}_{rpc}  }{  \ottsym{\mbox{$\backslash{}$}\mbox{$\backslash{}$}}  } }{%
{\ottdrulename{CJmp}}{}%
}}

\newcommand{\ottdruleCJmpXXF}[1]{\ottdrule[#1]{%
\ottpremise{\iota  \ottsym{(}  \ottnt{n}  \ottsym{)}  \ottsym{=}  \text{\sf Jump}}%
\ottpremise{{\kappa_i}  \not=   \begin{array}{|@{\;}l@{\;}|@{\;}l@{\;}|@{\;}l@{\;}|@{\;}l@{\;}|@{\;}l@{\;}|}
                       \hline
                           \ottkw{jump}  &  \mathtt{T}_{pc}  &  \mathtt{T}_{{\mathrm{1}}}  &  \mathtt{T}_\mathtt{D}  &  \mathtt{T}_\mathtt{D}  \\
                       \hline
                       \end{array}   =  {\kappa_j}}%
}{
 \cStep{  \text{\sf u}  }{   [  {\kappa_i}  ,  \kappa_o  ]   }{  \concretesymbol{\mu}  }{ [  \ottnt{n'}  \mathord{\scriptstyle @}  \mathtt{T}_{{\mathrm{1}}}  \mathord{,}\,  \concretesymbol{\sigma}  ] }{  \ottnt{n}  \mathord{\scriptstyle @}  \mathtt{T}_{pc}  }{  \tau  }{  \text{\sf k}  }{   [  {\kappa_j}  ,  \kappa_\mathtt{D}  ]   }{  \concretesymbol{\mu}  }{ [  \ottsym{(}  \ottnt{n}  \mathord{\scriptstyle @}  \mathtt{T}_{pc}  \mathord{,}\,  \text{\sf u}  \ottsym{)}  \mathord{;}\,  \ottnt{n'}  \mathord{\scriptstyle @}  \mathtt{T}_{{\mathrm{1}}}  \mathord{,}\,  \concretesymbol{\sigma}  ] }{  \ottsym{0}  \mathord{\scriptstyle @}  \mathtt{T}_\mathtt{D}  }{  \ottsym{\mbox{$\backslash{}$}\mbox{$\backslash{}$}}  } }{%
{\ottdrulename{CJmp\_F}}{}%
}}

\newcommand{\ottdruleCJmpXXP}[1]{\ottdrule[#1]{%
\ottpremise{\phi  \ottsym{(}  \ottnt{n}  \ottsym{)}  \ottsym{=}  \text{\sf Jump}}%
}{
 \cStep{  \text{\sf k}  }{  \kappa  }{  \concretesymbol{\mu}  }{ [  \ottnt{n'}  \mathord{\scriptstyle @}   \dummytag   \mathord{,}\,  \concretesymbol{\sigma}  ] }{  \ottnt{n}  \mathord{\scriptstyle @}   \dummytag   }{  \tau  }{  \text{\sf k}  }{  \kappa  }{  \concretesymbol{\mu}  }{ [  \concretesymbol{\sigma}  ] }{  \ottnt{n'}  \mathord{\scriptstyle @}  \mathtt{T}_\mathtt{D}  }{    } }{%
{\ottdrulename{CJmp\_P}}{}%
}}

\newcommand{\ottdruleCBnz}[1]{\ottdrule[#1]{%
\ottpremise{\iota  \ottsym{(}  \ottnt{n}  \ottsym{)}  \ottsym{=}  \text{\sf Bnz} \, \ottnt{k}}%
\ottpremise{\kappa  =   \begin{array}{|@{\;}l@{\;}|@{\;}l@{\;}|@{\;}l@{\;}|@{\;}l@{\;}|@{\;}l@{\;}||@{\;}l@{\;}|@{\;}l@{\;}|}
                       \hline
                           \ottkw{bnz}  &  \mathtt{T}_{pc}  &  \mathtt{T}_{{\mathrm{1}}}  &  \mathtt{T}_\mathtt{D}  &  \mathtt{T}_\mathtt{D}  &  \mathtt{T}_{rpc}  &  \mathtt{T}_\mathtt{D}  \\
                       \hline
                       \end{array} }%
\ottpremise{\ottnt{n'}  \ottsym{=}  \ottnt{n}  \mathord{+}  \ottsym{(}  \ottsym{(}  \ottnt{m}  \ottsym{=}  \ottsym{0}  \ottsym{)}  \ottsym{\mbox{?}}  \ottsym{1}  \ottsym{:}  \ottnt{k}  \ottsym{)}}%
}{
 \cStep{  \text{\sf u}  }{  \kappa  }{  \concretesymbol{\mu}  }{ [  \ottnt{m}  \mathord{\scriptstyle @}  \mathtt{T}_{{\mathrm{1}}}  \mathord{,}\,  \concretesymbol{\sigma}  ] }{  \ottnt{n}  \mathord{\scriptstyle @}  \mathtt{T}_{pc}  }{  \tau  }{  \text{\sf u}  }{  \kappa  }{  \concretesymbol{\mu}  }{ [  \concretesymbol{\sigma}  ] }{  \ottnt{n'}  \mathord{\scriptstyle @}  \mathtt{T}_{rpc}  }{  \ottsym{\mbox{$\backslash{}$}\mbox{$\backslash{}$}}  } }{%
{\ottdrulename{CBnz}}{}%
}}

\newcommand{\ottdruleCBnzXXF}[1]{\ottdrule[#1]{%
\ottpremise{\iota  \ottsym{(}  \ottnt{n}  \ottsym{)}  \ottsym{=}  \text{\sf Bnz} \, \ottnt{k}}%
\ottpremise{{\kappa_i}  \not=   \begin{array}{|@{\;}l@{\;}|@{\;}l@{\;}|@{\;}l@{\;}|@{\;}l@{\;}|@{\;}l@{\;}|}
                       \hline
                           \ottkw{bnz}  &  \mathtt{T}_{pc}  &  \mathtt{T}_{{\mathrm{1}}}  &  \mathtt{T}_\mathtt{D}  &  \mathtt{T}_\mathtt{D}  \\
                       \hline
                       \end{array}   =  {\kappa_j}}%
}{
 \cStep{  \text{\sf u}  }{   [  {\kappa_i}  ,  \kappa_o  ]   }{  \concretesymbol{\mu}  }{ [  \ottnt{m}  \mathord{\scriptstyle @}  \mathtt{T}_{{\mathrm{1}}}  \mathord{,}\,  \concretesymbol{\sigma}  ] }{  \ottnt{n}  \mathord{\scriptstyle @}  \mathtt{T}_{pc}  }{  \tau  }{  \text{\sf k}  }{   [  {\kappa_j}  ,  \kappa_\mathtt{D}  ]   }{  \concretesymbol{\mu}  }{ [  \ottsym{(}  \ottnt{n}  \mathord{\scriptstyle @}  \mathtt{T}_{pc}  \mathord{,}\,  \text{\sf u}  \ottsym{)}  \mathord{;}\,  \ottnt{m}  \mathord{\scriptstyle @}  \mathtt{T}_{{\mathrm{1}}}  \mathord{,}\,  \concretesymbol{\sigma}  ] }{  \ottsym{0}  \mathord{\scriptstyle @}  \mathtt{T}_\mathtt{D}  }{  \ottsym{\mbox{$\backslash{}$}\mbox{$\backslash{}$}}  } }{%
{\ottdrulename{CBnz\_F}}{}%
}}

\newcommand{\ottdruleCBnzXXP}[1]{\ottdrule[#1]{%
\ottpremise{\phi  \ottsym{(}  \ottnt{n}  \ottsym{)}  \ottsym{=}  \text{\sf Bnz} \, \ottnt{k} \, \qquad \, \ottnt{n'}  \ottsym{=}  \ottnt{n}  \mathord{+}  \ottsym{(}  \ottsym{(}  \ottnt{m}  \ottsym{=}  \ottsym{0}  \ottsym{)}  \ottsym{\mbox{?}}  \ottsym{1}  \ottsym{:}  \ottnt{k}  \ottsym{)}}%
}{
 \cStep{  \text{\sf k}  }{  \kappa  }{  \concretesymbol{\mu}  }{ [  \ottnt{m}  \mathord{\scriptstyle @}   \dummytag   \mathord{,}\,  \concretesymbol{\sigma}  ] }{  \ottnt{n}  \mathord{\scriptstyle @}   \dummytag   }{  \tau  }{  \text{\sf k}  }{  \kappa  }{  \concretesymbol{\mu}  }{ [  \concretesymbol{\sigma}  ] }{  \ottnt{n'}  \mathord{\scriptstyle @}  \mathtt{T}_\mathtt{D}  }{    } }{%
{\ottdrulename{CBnz\_P}}{}%
}}

\newcommand{\ottdruleCCll}[1]{\ottdrule[#1]{%
\ottpremise{\iota  \ottsym{(}  \ottnt{n}  \ottsym{)}  \ottsym{=}  \text{\sf Call}}%
\ottpremise{\kappa  =   \begin{array}{|@{\;}l@{\;}|@{\;}l@{\;}|@{\;}l@{\;}|@{\;}l@{\;}|@{\;}l@{\;}||@{\;}l@{\;}|@{\;}l@{\;}|}
                       \hline
                           \ottkw{call}  &  \mathtt{T}_{pc}  &  \mathtt{T}_{{\mathrm{1}}}  &  \mathtt{T}_\mathtt{D}  &  \mathtt{T}_\mathtt{D}  &  \mathtt{T}_{rpc}  &  \mathtt{T}_{r}  \\
                       \hline
                       \end{array} }%
}{
 \cStep{  \text{\sf u}  }{  \kappa  }{  \concretesymbol{\mu}  }{ [  \ottnt{n'}  \mathord{\scriptstyle @}  \mathtt{T}_{{\mathrm{1}}}  \mathord{,}\,  \concretefont{a}  \mathord{,}\,  \concretesymbol{\sigma}  ] }{  \ottnt{n}  \mathord{\scriptstyle @}  \mathtt{T}_{pc}  }{  \tau  }{  \text{\sf u}  }{  \kappa  }{  \concretesymbol{\mu}  }{ [  \concretefont{a}  \mathord{,}\,  \ottsym{(}  \ottnt{n}  \mathord{+}  \ottsym{1}  \mathord{\scriptstyle @}  \mathtt{T}_{r}  \mathord{,}\,  \text{\sf u}  \ottsym{)}  \mathord{;}\,  \concretesymbol{\sigma}  ] }{  \ottnt{n'}  \mathord{\scriptstyle @}  \mathtt{T}_{rpc}  }{  \ottsym{\mbox{$\backslash{}$}\mbox{$\backslash{}$}}  } }{%
{\ottdrulename{CCll}}{}%
}}

\newcommand{\ottdruleCCllXXF}[1]{\ottdrule[#1]{%
\ottpremise{\iota  \ottsym{(}  \ottnt{n}  \ottsym{)}  \ottsym{=}  \text{\sf Call}}%
\ottpremise{{\kappa_i}  \not=   \begin{array}{|@{\;}l@{\;}|@{\;}l@{\;}|@{\;}l@{\;}|@{\;}l@{\;}|@{\;}l@{\;}|}
                       \hline
                           \ottkw{call}  &  \mathtt{T}_{pc}  &  \mathtt{T}_{{\mathrm{1}}}  &  \mathtt{T}_\mathtt{D}  &  \mathtt{T}_\mathtt{D}  \\
                       \hline
                       \end{array}   =  {\kappa_j}}%
}{
 \cStep{  \text{\sf u}  }{   [  {\kappa_i}  ,  \kappa_o  ]   }{  \concretesymbol{\mu}  }{ [  \ottnt{n'}  \mathord{\scriptstyle @}  \mathtt{T}_{{\mathrm{1}}}  \mathord{,}\,  \concretefont{a}  \mathord{,}\,  \concretesymbol{\sigma}  ] }{  \ottnt{n}  \mathord{\scriptstyle @}  \mathtt{T}_{pc}  }{  \tau  }{  \text{\sf k}  }{   [  {\kappa_j}  ,  \kappa_\mathtt{D}  ]   }{  \concretesymbol{\mu}  }{ [  \ottsym{(}  \ottnt{n}  \mathord{\scriptstyle @}  \mathtt{T}_{pc}  \mathord{,}\,  \text{\sf u}  \ottsym{)}  \mathord{;}\,  \ottnt{n'}  \mathord{\scriptstyle @}  \mathtt{T}_{{\mathrm{1}}}  \mathord{,}\,  \concretefont{a}  \mathord{,}\,  \concretesymbol{\sigma}  ] }{  \ottsym{0}  \mathord{\scriptstyle @}  \mathtt{T}_\mathtt{D}  }{  \ottsym{\mbox{$\backslash{}$}\mbox{$\backslash{}$}}  } }{%
{\ottdrulename{CCll\_F}}{}%
}}

\newcommand{\ottdruleCCllXXP}[1]{\ottdrule[#1]{%
\ottpremise{\phi  \ottsym{(}  \ottnt{n}  \ottsym{)}  \ottsym{=}  \text{\sf Call}}%
}{
 \cStep{  \text{\sf k}  }{  \kappa  }{  \concretesymbol{\mu}  }{ [  \ottnt{n'}  \mathord{\scriptstyle @}   \dummytag   \mathord{,}\,  \concretefont{a}  \mathord{,}\,  \concretesymbol{\sigma}  ] }{  \ottnt{n}  \mathord{\scriptstyle @}   \dummytag   }{  \tau  }{  \text{\sf k}  }{  \kappa  }{  \concretesymbol{\mu}  }{ [  \concretefont{a}  \mathord{,}\,  \ottsym{(}  \ottnt{n}  \mathord{+}  \ottsym{1}  \mathord{\scriptstyle @}  \mathtt{T}_\mathtt{D}  \mathord{,}\,  \text{\sf k}  \ottsym{)}  \mathord{;}\,  \concretesymbol{\sigma}  ] }{  \ottnt{n'}  \mathord{\scriptstyle @}  \mathtt{T}_\mathtt{D}  }{  \ottsym{\mbox{$\backslash{}$}\mbox{$\backslash{}$}}  } }{%
{\ottdrulename{CCll\_P}}{}%
}}

\newcommand{\ottdruleCRet}[1]{\ottdrule[#1]{%
\ottpremise{\iota  \ottsym{(}  \ottnt{n}  \ottsym{)}  \ottsym{=}  \text{\sf Ret}}%
\ottpremise{\kappa  =   \begin{array}{|@{\;}l@{\;}|@{\;}l@{\;}|@{\;}l@{\;}|@{\;}l@{\;}|@{\;}l@{\;}||@{\;}l@{\;}|@{\;}l@{\;}|}
                       \hline
                           \ottkw{ret}  &  \mathtt{T}_{pc}  &  \mathtt{T}_{{\mathrm{1}}}  &  \mathtt{T}_\mathtt{D}  &  \mathtt{T}_\mathtt{D}  &  \mathtt{T}_{rpc}  &  \mathtt{T}_\mathtt{D}  \\
                       \hline
                       \end{array} }%
}{
 \cStep{  \text{\sf u}  }{  \kappa  }{  \concretesymbol{\mu}  }{ [  \ottsym{(}  \ottnt{n'}  \mathord{\scriptstyle @}  \mathtt{T}_{{\mathrm{1}}}  \mathord{,}\,  \text{\sf u}  \ottsym{)}  \mathord{;}\,  \concretesymbol{\sigma}  ] }{  \ottnt{n}  \mathord{\scriptstyle @}  \mathtt{T}_{pc}  }{  \tau  }{  \text{\sf u}  }{  \kappa  }{  \concretesymbol{\mu}  }{ [  \concretesymbol{\sigma}  ] }{  \ottnt{n'}  \mathord{\scriptstyle @}  \mathtt{T}_{rpc}  }{  \ottsym{\mbox{$\backslash{}$}\mbox{$\backslash{}$}}  } }{%
{\ottdrulename{CRet}}{}%
}}

\newcommand{\ottdruleCRetXXF}[1]{\ottdrule[#1]{%
\ottpremise{\iota  \ottsym{(}  \ottnt{n}  \ottsym{)}  \ottsym{=}  \text{\sf Ret}}%
\ottpremise{{\kappa_i}  \not=   \begin{array}{|@{\;}l@{\;}|@{\;}l@{\;}|@{\;}l@{\;}|@{\;}l@{\;}|@{\;}l@{\;}|}
                       \hline
                           \ottkw{ret}  &  \mathtt{T}_{pc}  &  \mathtt{T}_{{\mathrm{1}}}  &  \mathtt{T}_\mathtt{D}  &  \mathtt{T}_\mathtt{D}  \\
                       \hline
                       \end{array}   =  {\kappa_j}}%
}{
 \cStep{  \text{\sf u}  }{   [  {\kappa_i}  ,  \kappa_o  ]   }{  \concretesymbol{\mu}  }{ [  \ottsym{(}  \ottnt{n'}  \mathord{\scriptstyle @}  \mathtt{T}_{{\mathrm{1}}}  \mathord{,}\,  \pi  \ottsym{)}  \mathord{;}\,  \concretesymbol{\sigma}  ] }{  \ottnt{n}  \mathord{\scriptstyle @}  \mathtt{T}_{pc}  }{  \tau  }{  \text{\sf k}  }{   [  {\kappa_j}  ,  \kappa_\mathtt{D}  ]   }{  \concretesymbol{\mu}  }{ [  \ottsym{(}  \ottnt{n}  \mathord{\scriptstyle @}  \mathtt{T}_{pc}  \mathord{,}\,  \text{\sf u}  \ottsym{)}  \mathord{;}\,  \ottsym{(}  \ottnt{n'}  \mathord{\scriptstyle @}  \mathtt{T}_{{\mathrm{1}}}  \mathord{,}\,  \pi  \ottsym{)}  \mathord{;}\,  \concretesymbol{\sigma}  ] }{  \ottsym{0}  \mathord{\scriptstyle @}  \mathtt{T}_\mathtt{D}  }{  \ottsym{\mbox{$\backslash{}$}\mbox{$\backslash{}$}}  } }{%
{\ottdrulename{CRet\_F}}{}%
}}

\newcommand{\ottdruleCRetXXP}[1]{\ottdrule[#1]{%
\ottpremise{\phi  \ottsym{(}  \ottnt{n}  \ottsym{)}  \ottsym{=}  \text{\sf Ret}}%
}{
 \cStep{  \text{\sf k}  }{  \kappa  }{  \concretesymbol{\mu}  }{ [  \ottsym{(}  \ottnt{n'}  \mathord{\scriptstyle @}  \mathtt{T}_{{\mathrm{1}}}  \mathord{,}\,  \pi  \ottsym{)}  \mathord{;}\,  \concretesymbol{\sigma}  ] }{  \ottnt{n}  \mathord{\scriptstyle @}   \dummytag   }{  \tau  }{  \pi  }{  \kappa  }{  \concretesymbol{\mu}  }{ [  \concretesymbol{\sigma}  ] }{  \ottnt{n'}  \mathord{\scriptstyle @}  \mathtt{T}_{{\mathrm{1}}}  }{    } }{%
{\ottdrulename{CRet\_P}}{}%
}}

\newcommand{\ottdruleCEPushCachePtr}[1]{\ottdrule[#1]{%
\ottpremise{\phi  \ottsym{(}  \ottnt{n}  \ottsym{)}  \ottsym{=}  \ottkw{PushCachePtr}}%
}{
 \ceStep{  \text{\sf k}  }{  \concretesymbol{\mu}  }{ [  \concretesymbol{\sigma}  ] }{  \ottnt{n}  \mathord{\scriptstyle @}   \dummytag   }{  \tau  }{  \text{\sf k}  }{  \concretesymbol{\mu}  }{ [  \ottsym{(}  \ottkw{Ptr} \, \ottsym{(}   cache   \mathord{,}\,  \ottsym{0}  \ottsym{)}  \ottsym{)}  \mathord{\scriptstyle @}  \mathtt{T}_\mathtt{D}  \mathord{,}\,  \concretesymbol{\sigma}  ] }{  \ottsym{(}  \ottnt{n}  \mathord{+}  \ottsym{1}  \ottsym{)}  \mathord{\scriptstyle @}  \mathtt{T}_\mathtt{D}  }{    } }{%
{\ottdrulename{CEPushCachePtr}}{}%
}}

\newcommand{\ottdruleCEAlloc}[1]{\ottdrule[#1]{%
\ottpremise{\iota  \ottsym{(}  \ottnt{n}  \ottsym{)}  \ottsym{=}  \ottkw{Alloc} \, \qquad \, \text{\sf alloc} \, \ottnt{k} \, \text{\sf u} \, \concretefont{a} \, \concretesymbol{\mu}  \ottsym{=}  \ottsym{(}  \ottnt{id}  \mathord{,}\,  \concretesymbol{\mu}'  \ottsym{)}}%
\ottpremise{\concretesymbol{\mu}  \ottsym{(}   cache   \ottsym{)}  =   \begin{array}{|@{\;}l@{\;}|@{\;}l@{\;}|@{\;}l@{\;}|@{\;}l@{\;}|@{\;}l@{\;}||@{\;}l@{\;}|@{\;}l@{\;}|}
                       \hline
                           \text{\sf alloc}  &  \mathtt{T}_{pc}  &  \mathtt{T}_{{\mathrm{1}}}  &  \mathtt{T}_\mathtt{D}  &  \mathtt{T}_\mathtt{D}  &  \mathtt{T}_{rpc}  &  \mathtt{T}_{r}  \\
                       \hline
                       \end{array} }%
}{
 \ceStep{  \text{\sf u}  }{  \concretesymbol{\mu}  }{ [  \ottsym{(}  \ottkw{Int} \, \ottnt{k}  \ottsym{)}  \mathord{\scriptstyle @}  \mathtt{T}_{{\mathrm{1}}}  \mathord{,}\,  \concretefont{a}  \mathord{,}\,  \concretesymbol{\sigma}  ] }{  \ottnt{n}  \mathord{\scriptstyle @}  \mathtt{T}_{pc}  }{  \tau  }{  \text{\sf u}  }{  \concretesymbol{\mu}'  }{ [  \ottsym{(}  \ottkw{Ptr} \, \ottsym{(}  \ottnt{id}  \mathord{,}\,  \ottsym{0}  \ottsym{)}  \ottsym{)}  \mathord{\scriptstyle @}  \mathtt{T}_{r}  \mathord{,}\,  \concretesymbol{\sigma}  ] }{  \ottsym{(}  \ottnt{n}  \mathord{+}  \ottsym{1}  \ottsym{)}  \mathord{\scriptstyle @}  \mathtt{T}_{rpc}  }{  \ottsym{\mbox{$\backslash{}$}\mbox{$\backslash{}$}}  } }{%
{\ottdrulename{CEAlloc}}{}%
}}

\newcommand{\ottdruleCEAllocP}[1]{\ottdrule[#1]{%
\ottpremise{\phi  \ottsym{(}  \ottnt{n}  \ottsym{)}  \ottsym{=}  \ottkw{Alloc} \, \qquad \, \text{\sf alloc} \, \ottnt{k} \, \text{\sf k} \, \concretefont{a} \, \concretesymbol{\mu}  \ottsym{=}  \ottsym{(}  \ottnt{id}  \mathord{,}\,  \concretesymbol{\mu}'  \ottsym{)}}%
}{
 \ceStep{  \text{\sf k}  }{  \concretesymbol{\mu}  }{ [  \ottsym{(}  \ottkw{Int} \, \ottnt{k}  \ottsym{)}  \mathord{\scriptstyle @}   \dummytag   \mathord{,}\,  \concretefont{a}  \mathord{,}\,  \concretesymbol{\sigma}  ] }{  \ottnt{n}  \mathord{\scriptstyle @}   \dummytag   }{  \tau  }{  \text{\sf k}  }{  \concretesymbol{\mu}'  }{ [  \ottsym{(}  \ottkw{Ptr} \, \ottsym{(}  \ottnt{id}  \mathord{,}\,  \ottsym{0}  \ottsym{)}  \ottsym{)}  \mathord{\scriptstyle @}  \mathtt{T}_\mathtt{D}  \mathord{,}\,  \concretesymbol{\sigma}  ] }{  \ottsym{(}  \ottnt{n}  \mathord{+}  \ottsym{1}  \ottsym{)}  \mathord{\scriptstyle @}  \mathtt{T}_\mathtt{D}  }{  \ottsym{\mbox{$\backslash{}$}\mbox{$\backslash{}$}}  } }{%
{\ottdrulename{CEAllocP}}{}%
}}

\newcommand{\ottdruleCEPack}[1]{\ottdrule[#1]{%
\ottpremise{\phi  \ottsym{(}  \ottnt{n}  \ottsym{)}  \ottsym{=}  \ottkw{Pack}}%
}{
 \ceStep{  \text{\sf k}  }{  \concretesymbol{\mu}  }{ [  v_{{\mathrm{2}}}  \mathord{\scriptstyle @}   \dummytag   \mathord{,}\,  v_{{\mathrm{1}}}  \mathord{\scriptstyle @}   \dummytag   \mathord{,}\,  \concretesymbol{\sigma}  ] }{  \ottnt{n}  \mathord{\scriptstyle @}   \dummytag   }{  \tau  }{  \text{\sf k}  }{  \concretesymbol{\mu}  }{ [  v_{{\mathrm{1}}}  \mathord{\scriptstyle @}  v_{{\mathrm{2}}}  \mathord{,}\,  \concretesymbol{\sigma}  ] }{  \ottsym{(}  \ottnt{n}  \mathord{+}  \ottsym{1}  \ottsym{)}  \mathord{\scriptstyle @}  \mathtt{T}_\mathtt{D}  }{    } }{%
{\ottdrulename{CEPack}}{}%
}}

\newcommand{\ottdruleCEUnpack}[1]{\ottdrule[#1]{%
\ottpremise{\phi  \ottsym{(}  \ottnt{n}  \ottsym{)}  \ottsym{=}  \ottkw{Unpack}}%
}{
 \ceStep{  \text{\sf k}  }{  \concretesymbol{\mu}  }{ [  v_{{\mathrm{1}}}  \mathord{\scriptstyle @}  v_{{\mathrm{2}}}  \mathord{,}\,  \concretesymbol{\sigma}  ] }{  \ottnt{n}  \mathord{\scriptstyle @}   \dummytag   }{  \tau  }{  \text{\sf k}  }{  \concretesymbol{\mu}  }{ [  v_{{\mathrm{2}}}  \mathord{\scriptstyle @}  \mathtt{T}_\mathtt{D}  \mathord{,}\,  v_{{\mathrm{1}}}  \mathord{\scriptstyle @}  \mathtt{T}_\mathtt{D}  \mathord{,}\,  \concretesymbol{\sigma}  ] }{  \ottsym{(}  \ottnt{n}  \mathord{+}  \ottsym{1}  \ottsym{)}  \mathord{\scriptstyle @}  \mathtt{T}_\mathtt{D}  }{    } }{%
{\ottdrulename{CEUnpack}}{}%
}}

\newcommand{\ottdruleCESysCall}[1]{\ottdrule[#1]{%
\ottpremise{\iota  \ottsym{(}  \ottnt{n}  \ottsym{)}  \ottsym{=}  \text{\sf SysCall} \, \ottnt{id} \, \qquad \, T  \ottsym{(}  \ottnt{id}  \ottsym{)}  \ottsym{=}  \ottsym{(}  \ottnt{k}  \mathord{,}\,  \ottnt{n'}  \ottsym{)} \, \qquad \, \ottkw{length} \, \ottsym{(}  \concretesymbol{\sigma}_{{\mathrm{1}}}  \ottsym{)}  \ottsym{=}  \ottnt{k}}%
}{
 \ceStep{  \text{\sf u}  }{  \concretesymbol{\mu}  }{ [  \concretesymbol{\sigma}_{{\mathrm{1}}}  {\scriptstyle \mathord{+\!+} }  \concretesymbol{\sigma}_{{\mathrm{2}}}  ] }{  \ottnt{n}  \mathord{\scriptstyle @}  \mathtt{T}  }{  \tau  }{  \text{\sf k}  }{  \concretesymbol{\mu}  }{ [  \concretesymbol{\sigma}_{{\mathrm{1}}}  {\scriptstyle \mathord{+\!+} }  \ottsym{(}  \ottnt{n}  \mathord{+}  \ottsym{1}  \mathord{\scriptstyle @}  \mathtt{T}  \mathord{,}\,  \text{\sf u}  \ottsym{)}  \mathord{;}\,  \concretesymbol{\sigma}_{{\mathrm{2}}}  ] }{  \ottnt{n'}  \mathord{\scriptstyle @}  \mathtt{T}_\mathtt{D}  }{    } }{%
{\ottdrulename{CESysCall}}{}%
}}

\newcommand{\ottdruleUnit}[1]{\ottdrule[#1]{%
}{
 \{  \ottnt{P} \} \  []  \ \{ \ottnt{P} \} }{%
{\ottdrulename{Unit}}{}%
}}

\newcommand{\ottdruleCompose}[1]{\ottdrule[#1]{%
\ottpremise{ \{  \ottnt{P_{{\mathrm{1}}}} \} \  \ottnt{c_{{\mathrm{1}}}}  \ \{ \ottnt{P_{{\mathrm{2}}}} \}  \, \qquad \,  \{  \ottnt{P_{{\mathrm{2}}}} \} \  \ottnt{c_{{\mathrm{2}}}}  \ \{ \ottnt{P_{{\mathrm{3}}}} \} }%
}{
 \{  \ottnt{P_{{\mathrm{1}}}} \} \  \ottnt{c_{{\mathrm{1}}}}  {\scriptstyle \mathord{+\!+} }  \ottnt{c_{{\mathrm{2}}}}  \ \{ \ottnt{P_{{\mathrm{3}}}} \} }{%
{\ottdrulename{Compose}}{}%
}}

\newcommand{\ottdruleWeaken}[1]{\ottdrule[#1]{%
\ottpremise{\forall \, \kappa \, \concretesymbol{\sigma}  .~  \ottnt{P'}  \ottsym{(}  \kappa  \mathord{,}\,  \concretesymbol{\sigma}  \ottsym{)}  \Longrightarrow  \ottnt{P}  \ottsym{(}  \kappa  \mathord{,}\,  \concretesymbol{\sigma}  \ottsym{)}}%
\ottpremise{\forall \, \kappa \, \concretesymbol{\sigma}  .~  \ottnt{Q}  \ottsym{(}  \kappa  \mathord{,}\,  \concretesymbol{\sigma}  \ottsym{)}  \Longrightarrow  \ottnt{Q'}  \ottsym{(}  \kappa  \mathord{,}\,  \concretesymbol{\sigma}  \ottsym{)}}%
\ottpremise{ \{  \ottnt{P} \} \  \ottnt{c}  \ \{ \ottnt{Q} \} }%
}{
 \{  \ottnt{P'} \} \  \ottnt{c}  \ \{ \ottnt{Q'} \} }{%
{\ottdrulename{Weaken}}{}%
}}

\newcommand{\ottdruleAdd}[1]{\ottdrule[#1]{%
\ottpremise{\ottnt{P}  \ottsym{(}  \kappa  \mathord{,}\,  \concretesymbol{\sigma}  \ottsym{)}  :=  \exists \, \ottnt{n_{{\mathrm{1}}}} \, \mathtt{T}_{{\mathrm{1}}} \, \ottnt{n_{{\mathrm{2}}}} \, \mathtt{T}_{{\mathrm{2}}} \, \concretesymbol{\sigma}'  .~  \concretesymbol{\sigma}  \ottsym{=}  \ottnt{n_{{\mathrm{1}}}}  \mathord{\scriptstyle @}  \mathtt{T}_{{\mathrm{1}}}  \mathord{,}\,  \ottnt{n_{{\mathrm{2}}}}  \mathord{\scriptstyle @}  \mathtt{T}_{{\mathrm{2}}}  \mathord{,}\,  \concretesymbol{\sigma}' \, \wedge \, \ottnt{Q}  \ottsym{(}  \kappa  \mathord{,}\,  \ottsym{(}  \ottsym{(}  \ottnt{n_{{\mathrm{1}}}}  \mathord{+}  \ottnt{n_{{\mathrm{2}}}}  \ottsym{)}  \mathord{\scriptstyle @}  \mathtt{T}_\mathtt{D}  \mathord{,}\,  \concretesymbol{\sigma}'  \ottsym{)}  \ottsym{)}}%
}{
 \{  \ottnt{P} \} \  \ottsym{[}  \text{\sf Add}  \ottsym{]}  \ \{ \ottnt{Q} \} }{%
{\ottdrulename{Add}}{}%
}}

\newcommand{\ottdruleIf}[1]{\ottdrule[#1]{%
\ottpremise{\ottnt{P}  \ottsym{(}  \kappa  \mathord{,}\,  \concretesymbol{\sigma}  \ottsym{)}  :=  \exists \, \ottnt{n} \, \mathtt{T} \, \concretesymbol{\sigma}'  .~  \concretesymbol{\sigma}  \ottsym{=}  \ottnt{n}  \mathord{\scriptstyle @}  \mathtt{T}  \mathord{,}\,  \concretesymbol{\sigma}' \, \wedge \, \ottsym{(}  \ottnt{n}  \neq  \ottsym{0}  \Longrightarrow  \ottnt{P_{{\mathrm{1}}}}  \ottsym{(}  \kappa  \mathord{,}\,  \concretesymbol{\sigma}'  \ottsym{)}  \ottsym{)} \, \wedge \, \ottsym{(}  \ottnt{n}  \ottsym{=}  \ottsym{0}  \Longrightarrow  \ottnt{P_{{\mathrm{2}}}}  \ottsym{(}  \kappa  \mathord{,}\,  \concretesymbol{\sigma}'  \ottsym{)}  \ottsym{)}}%
\ottpremise{ \{  \ottnt{P_{{\mathrm{1}}}} \} \  \ottnt{c_{{\mathrm{1}}}}  \ \{ \ottnt{Q} \}  \, \qquad \,  \{  \ottnt{P_{{\mathrm{2}}}} \} \  \ottnt{c_{{\mathrm{2}}}}  \ \{ \ottnt{Q} \} }%
}{
 \{  \ottnt{P} \} \  \ottkw{genIf} \, \ottnt{c_{{\mathrm{1}}}} \, \ottnt{c_{{\mathrm{2}}}}  \ \{ \ottnt{Q} \} }{%
{\ottdrulename{If}}{}%
}}

\newcommand{\ottdruleRepeat}[1]{\ottdrule[#1]{%
\ottpremise{P_n  \ottsym{(}  \kappa  \mathord{,}\,  \concretesymbol{\sigma}  \ottsym{)}  :=  \exists \, \mathtt{T} \, \concretesymbol{\sigma}'  .~  \concretesymbol{\sigma}  \ottsym{=}  \ottnt{n}  \mathord{\scriptstyle @}  \mathtt{T}  \mathord{,}\,  \concretesymbol{\sigma}' \, \wedge \, \ottnt{Inv}  \ottsym{(}  \kappa  \mathord{,}\,  \concretesymbol{\sigma}  \ottsym{)}}%
\ottpremise{Q_n  \ottsym{(}  \kappa  \mathord{,}\,  \concretesymbol{\sigma}  \ottsym{)}  :=  \exists \, \mathtt{T} \, \concretesymbol{\sigma}'  .~  \concretesymbol{\sigma}  \ottsym{=}  \ottnt{n}  \mathord{\scriptstyle @}  \mathtt{T}  \mathord{,}\,  \concretesymbol{\sigma}' \, \wedge \, \forall \, \mathtt{T}'  .~  \ottnt{Inv}  \ottsym{(}  \kappa  \mathord{,}\,  \ottsym{(}  \ottsym{(}  \ottnt{n}  \ottsym{-}  \ottsym{1}  \ottsym{)}  \mathord{\scriptstyle @}  \mathtt{T}'  \mathord{,}\,  \concretesymbol{\sigma}'  \ottsym{)}  \ottsym{)}}%
\ottpremise{\forall \, \ottnt{n}  .~  \ottsym{0}  \ottsym{<}  \ottnt{n}  \Longrightarrow   \{  P_n \} \  \ottnt{c}  \ \{ Q_n \} }%
\ottpremise{\ottnt{P}  \ottsym{(}  \kappa  \mathord{,}\,  \concretesymbol{\sigma}  \ottsym{)}  :=  \exists \, \ottnt{n} \, \mathtt{T} \, \concretesymbol{\sigma}'  .~  \ottsym{0}  \le  \ottnt{n} \, \wedge \, \concretesymbol{\sigma}  \ottsym{=}  \ottnt{n}  \mathord{\scriptstyle @}  \mathtt{T}  \mathord{,}\,  \concretesymbol{\sigma}' \, \wedge \, \ottnt{Inv}  \ottsym{(}  \kappa  \mathord{,}\,  \concretesymbol{\sigma}  \ottsym{)}}%
\ottpremise{\ottnt{Q}  \ottsym{(}  \kappa  \mathord{,}\,  \concretesymbol{\sigma}  \ottsym{)}  :=  \exists \, \mathtt{T} \, \concretesymbol{\sigma}'  .~  \concretesymbol{\sigma}  \ottsym{=}  \ottsym{0}  \mathord{\scriptstyle @}  \mathtt{T}  \mathord{,}\,  \concretesymbol{\sigma}' \, \wedge \, \ottnt{Inv}  \ottsym{(}  \kappa  \mathord{,}\,  \concretesymbol{\sigma}  \ottsym{)}}%
}{
 \{  \ottnt{P} \} \  \ottkw{genFor} \, \ottnt{c}  \ \{ \ottnt{Q} \} }{%
{\ottdrulename{Repeat}}{}%
}}

\newcommand{\ottdruleELab}[1]{\ottdrule[#1]{%
\ottpremise{ \rho  \vdash   \ottnt{LE}  \downarrow  \ottnt{L} }%
\ottpremise{\ottnt{P}  \ottsym{(}  \kappa  \mathord{,}\,  \concretesymbol{\sigma}  \ottsym{)}  :=  \kappa  \ottsym{=}  \kappa_{{\mathrm{0}}} \, \wedge \, \concretesymbol{\sigma}  \ottsym{=}  \concretesymbol{\sigma}_{{\mathrm{0}}}}%
\ottpremise{\ottnt{Q}  \ottsym{(}  \kappa  \mathord{,}\,  \concretesymbol{\sigma}  \ottsym{)}  :=  \kappa  \ottsym{=}  \kappa_{{\mathrm{0}}} \, \wedge \, \concretesymbol{\sigma}  \ottsym{=}  \mathsf{Tag} \, \ottsym{(}  \ottnt{L}  \ottsym{)}  \mathord{\scriptstyle @}  \mathtt{T}_\mathtt{D}  \mathord{,}\,  \concretesymbol{\sigma}_{{\mathrm{0}}}}%
}{
 \{  \ottnt{P} \} \  \ottkw{genELab} \, \ottnt{LE}  \ \{ \ottnt{Q} \} }{%
{\ottdrulename{ELab}}{}%
}}

\newcommand{\ottdruleELabJCS}[1]{\ottdrule[#1]{%
\ottpremise{ \ottsym{(}  L_{pc}  \mathord{,}\,  \ottnt{L_{{\mathrm{1}}}}  \mathord{,}\,  \ottnt{L_{{\mathrm{2}}}}  \mathord{,}\,  \ottnt{L_{{\mathrm{3}}}}  \ottsym{)}  \vdash   \ottnt{LE}  \downarrow  \ottnt{L} }%
\ottpremise{\kappa_{{\mathrm{0}}}  =   \begin{array}{|@{\;}l@{\;}|@{\;}l@{\;}|@{\;}l@{\;}|@{\;}l@{\;}|@{\;}l@{\;}||@{\;}l@{\;}|@{\;}l@{\;}|}
                       \hline
                           \ottnt{op}  &   \mathsf{Tag} ( L_{pc} )   &   \mathsf{Tag} ( \ottnt{L_{{\mathrm{1}}}} )   &   \mathsf{Tag} ( \ottnt{L_{{\mathrm{2}}}} )   &   \mathsf{Tag} ( \ottnt{L_{{\mathrm{3}}}} )   &   \dummytag   &   \dummytag   \\
                       \hline
                       \end{array} }%
\ottpremise{\ottnt{P}  \ottsym{(}  \kappa  \mathord{,}\,  \concretesymbol{\sigma}  \ottsym{)}  :=  \kappa  \ottsym{=}  \kappa_{{\mathrm{0}}} \, \wedge \, \ottnt{Q}  \ottsym{(}  \kappa  \mathord{,}\,  \ottsym{(}  \mathsf{Tag} \, \ottsym{(}  \ottnt{L}  \ottsym{)}  \mathord{\scriptstyle @}  \mathtt{T}_\mathtt{D}  \mathord{,}\,  \concretesymbol{\sigma}  \ottsym{)}  \ottsym{)}}%
}{
 \{  \ottnt{P} \} \  \ottkw{genELab} \, \ottnt{LE}  \ \{ \ottnt{Q} \} }{%
{\ottdrulename{ELabJCS}}{}%
}}

\newcommand{\ottdruleBot}[1]{\ottdrule[#1]{%
\ottpremise{\ottnt{P}  \ottsym{(}  \kappa  \mathord{,}\,  \concretesymbol{\sigma}  \ottsym{)}  :=  \ottnt{Q}  \ottsym{(}  \kappa  \mathord{,}\,  \ottsym{(}  \mathsf{Tag} \, \ottsym{(}  \bot  \ottsym{)}  \mathord{\scriptstyle @}  \mathtt{T}_\mathtt{D}  \mathord{,}\,  \concretesymbol{\sigma}  \ottsym{)}  \ottsym{)}}%
}{
 \{  \ottnt{P} \} \  \ottkw{genBot}  \ \{ \ottnt{Q} \} }{%
{\ottdrulename{Bot}}{}%
}}

\newcommand{\ottdruleJoin}[1]{\ottdrule[#1]{%
\ottpremise{\ottnt{P}  \ottsym{(}  \kappa  \mathord{,}\,  \concretesymbol{\sigma}  \ottsym{)}  :=  \exists \, \ottnt{L} \, \ottnt{L'} \, \concretesymbol{\sigma}'  .~  \concretesymbol{\sigma}  \ottsym{=}  \mathsf{Tag} \, \ottsym{(}  \ottnt{L}  \ottsym{)}  \mathord{\scriptstyle @}  \mathtt{T}_\mathtt{D}  \mathord{,}\,  \mathsf{Tag} \, \ottsym{(}  \ottnt{L'}  \ottsym{)}  \mathord{\scriptstyle @}  \mathtt{T}_\mathtt{D}  \mathord{,}\,  \concretesymbol{\sigma}' \, \wedge \, \ottnt{Q}  \ottsym{(}  \kappa  \mathord{,}\,  \mathsf{Tag} \, \ottsym{(}  \ottnt{L}  \mathord{\vee}  \ottnt{L'}  \ottsym{)}  \mathord{\scriptstyle @}  \mathtt{T}_\mathtt{D}  \mathord{,}\,  \concretesymbol{\sigma}'  \ottsym{)}}%
}{
 \{  \ottnt{P} \} \  \ottkw{genJoin}  \ \{ \ottnt{Q} \} }{%
{\ottdrulename{Join}}{}%
}}

\newcommand{\ottdruleFlows}[1]{\ottdrule[#1]{%
\ottpremise{\ottnt{P}  \ottsym{(}  \kappa  \mathord{,}\,  \concretesymbol{\sigma}  \ottsym{)}  :=  \exists \, \ottnt{L} \, \ottnt{L'} \, \concretesymbol{\sigma}'  .~  \concretesymbol{\sigma}  \ottsym{=}  \mathsf{Tag} \, \ottsym{(}  \ottnt{L}  \ottsym{)}  \mathord{\scriptstyle @}  \mathtt{T}_\mathtt{D}  \mathord{,}\,  \mathsf{Tag} \, \ottsym{(}  \ottnt{L'}  \ottsym{)}  \mathord{\scriptstyle @}  \mathtt{T}_\mathtt{D}  \mathord{,}\,  \concretesymbol{\sigma}' \, \wedge \, \ottnt{Q}  \ottsym{(}  \kappa  \mathord{,}\,  \ottsym{(}   \text{if $ \ottnt{L}  \le  \ottnt{L'} $ then $ \ottsym{1} $ else $ \ottsym{0} $}   \ottsym{)}  \mathord{\scriptstyle @}  \mathtt{T}_\mathtt{D}  \mathord{,}\,  \concretesymbol{\sigma}'  \ottsym{)}}%
}{
 \{  \ottnt{P} \} \  \ottkw{genFlows}  \ \{ \ottnt{Q} \} }{%
{\ottdrulename{Flows}}{}%
}}

\newcommand{\ottdruleAppRXXOKJCS}[1]{\ottdrule[#1]{%
\ottpremise{ ${\it Rule}$_ \mathcal{R} ( \ottnt{op} )   \ottsym{=}   \langle  \ottnt{allow} ,  e_{rpc} ,  e_{r}  \rangle }%
\ottpremise{\kappa_{{\mathrm{0}}}  =   \begin{array}{|@{\;}l@{\;}|@{\;}l@{\;}|@{\;}l@{\;}|@{\;}l@{\;}|@{\;}l@{\;}||@{\;}l@{\;}|@{\;}l@{\;}|}
                       \hline
                           \ottnt{op}  &   \mathsf{Tag} ( L_{pc} )   &   \mathsf{Tag} ( \ottnt{L_{{\mathrm{1}}}} )   &   \mathsf{Tag} ( \ottnt{L_{{\mathrm{2}}}} )   &   \mathsf{Tag} ( \ottnt{L_{{\mathrm{3}}}} )   &   \dummytag   &   \dummytag   \\
                       \hline
                       \end{array} }%
\ottpremise{ \vdash_{ \mathcal{R} } \ruleeval{ \ottsym{(}  L_{pc}  \mathord{,}\,  \ottnt{L_{{\mathrm{1}}}}  \mathord{,}\,  \ottnt{L_{{\mathrm{2}}}}  \mathord{,}\,  \ottnt{L_{{\mathrm{3}}}}  \ottsym{)} }{ \ottnt{op} }{ L_{rpc} }{ L_r } }%
\ottpremise{\ottnt{P}  \ottsym{(}  \kappa  \mathord{,}\,  \concretesymbol{\sigma}  \ottsym{)}  :=  \kappa  \ottsym{=}  \kappa_{{\mathrm{0}}} \, \wedge \, \ottnt{Q}  \ottsym{(}  \kappa  \mathord{,}\,  \ottsym{(}  \ottsym{1}  \mathord{\scriptstyle @}  \mathtt{T}_\mathtt{D}  \mathord{,}\,  \mathsf{Tag} \, \ottsym{(}  L_r  \ottsym{)}  \mathord{\scriptstyle @}  \mathtt{T}_\mathtt{D}  \mathord{,}\,  \mathsf{Tag} \, \ottsym{(}  L_{rpc}  \ottsym{)}  \mathord{\scriptstyle @}  \mathtt{T}_\mathtt{D}  \mathord{,}\,  \concretesymbol{\sigma}  \ottsym{)}  \ottsym{)}}%
}{
 \{  \ottnt{P} \} \  \ottkw{genApplyRule} \,  \langle  \ottnt{allow} ,  e_{rpc} ,  e_{r}  \rangle   \ \{ \ottnt{Q} \} }{%
{\ottdrulename{AppR\_OKJCS}}{}%
}}

\newcommand{\ottdruleAppRXXOK}[1]{\ottdrule[#1]{%
\ottpremise{ \vdash_{ \mathcal{R} } \ruleeval{ \ottsym{(}  L_{pc}  \mathord{,}\,  \ottnt{L_{{\mathrm{1}}}}  \mathord{,}\,  \ottnt{L_{{\mathrm{2}}}}  \mathord{,}\,  \ottnt{L_{{\mathrm{3}}}}  \ottsym{)} }{ \ottnt{op} }{ L_{rpc} }{ L_r } }%
\ottpremise{\kappa_{{\mathrm{0}}}  =   \begin{array}{|@{\;}l@{\;}|@{\;}l@{\;}|@{\;}l@{\;}|@{\;}l@{\;}|@{\;}l@{\;}||@{\;}l@{\;}|@{\;}l@{\;}|}
                       \hline
                           \ottnt{op}  &   \mathsf{Tag} ( L_{pc} )   &   \mathsf{Tag} ( \ottnt{L_{{\mathrm{1}}}} )   &   \mathsf{Tag} ( \ottnt{L_{{\mathrm{2}}}} )   &   \mathsf{Tag} ( \ottnt{L_{{\mathrm{3}}}} )   &   \dummytag   &   \dummytag   \\
                       \hline
                       \end{array} }%
\ottpremise{\ottnt{P}  \ottsym{(}  \kappa  \mathord{,}\,  \concretesymbol{\sigma}  \ottsym{)}  :=  \kappa  \ottsym{=}  \kappa_{{\mathrm{0}}} \, \wedge \, \concretesymbol{\sigma}  \ottsym{=}  \concretesymbol{\sigma}_{{\mathrm{0}}}}%
\ottpremise{\ottnt{Q}  \ottsym{(}  \kappa  \mathord{,}\,  \concretesymbol{\sigma}  \ottsym{)}  :=  \ottsym{(}  \kappa  \ottsym{=}  \kappa_{{\mathrm{0}}}  \ottsym{)} \, \wedge \, \ottsym{(}  \concretesymbol{\sigma}  \ottsym{=}  \ottsym{1}  \mathord{\scriptstyle @}  \mathtt{T}_\mathtt{D}  \mathord{,}\,  \mathsf{Tag} \, \ottsym{(}  L_r  \ottsym{)}  \mathord{\scriptstyle @}  \mathtt{T}_\mathtt{D}  \mathord{,}\,  \mathsf{Tag} \, \ottsym{(}  L_{rpc}  \ottsym{)}  \mathord{\scriptstyle @}  \mathtt{T}_\mathtt{D}  \mathord{,}\,  \concretesymbol{\sigma}_{{\mathrm{0}}}  \ottsym{)}}%
}{
 \{  \ottnt{P} \} \  \ottkw{genApplyRule} \,  \langle  \ottnt{allow} ,  e_{rpc} ,  e_{r}  \rangle   \ \{ \ottnt{Q} \} }{%
{\ottdrulename{AppR\_OK}}{}%
}}

\newcommand{\ottdruleOXXCompose}[1]{\ottdrule[#1]{%
\ottpremise{ \{  \ottnt{P_{{\mathrm{1}}}} \} \  \ottnt{c_{{\mathrm{1}}}}  \ \{ \ottnt{P_{{\mathrm{2}}}} \}  \, \qquad \,  \{  \ottnt{P_{{\mathrm{2}}}}  \} \  \ottnt{c_{{\mathrm{2}}}}   \ \{  \ottnt{P_{{\mathrm{3}}}}  \}^{ O }_{ \concretefont{pc} } }%
}{
 \{  \ottnt{P_{{\mathrm{1}}}}  \} \  \ottnt{c_{{\mathrm{1}}}}  {\scriptstyle \mathord{+\!+} }  \ottnt{c_{{\mathrm{2}}}}   \ \{  \ottnt{P_{{\mathrm{3}}}}  \}^{ O }_{ \concretefont{pc} } }{%
{\ottdrulename{O\_Compose}}{}%
}}

\newcommand{\ottdruleOXXAppend}[1]{\ottdrule[#1]{%
\ottpremise{ \{  \ottnt{P}  \} \  \ottnt{c_{{\mathrm{1}}}}   \ \{  \ottnt{Q}  \}^{ O }_{ \concretefont{pc} } }%
}{
 \{  \ottnt{P}  \} \  \ottnt{c_{{\mathrm{1}}}}  {\scriptstyle \mathord{+\!+} }  \ottnt{c_{{\mathrm{2}}}}   \ \{  \ottnt{Q}  \}^{ O }_{ \concretefont{pc} } }{%
{\ottdrulename{O\_Append}}{}%
}}

\newcommand{\ottdruleOXXGenRet}[1]{\ottdrule[#1]{%
\ottpremise{\ottnt{P}  \ottsym{(}  \kappa  \mathord{,}\,  \concretesymbol{\sigma}  \ottsym{)}  :=  \exists \, \concretesymbol{\sigma}'  .~  \ottnt{Q}  \ottsym{(}  \kappa  \mathord{,}\,  \concretesymbol{\sigma}'  \ottsym{)} \, \wedge \, \concretesymbol{\sigma}  \ottsym{=}  \ottsym{(}  \concretefont{pc}  \mathord{,}\,  \text{\sf u}  \ottsym{)}  \mathord{;}\,  \concretesymbol{\sigma}'}%
\ottpremise{O  \ottsym{(}  \kappa  \mathord{,}\,  \concretesymbol{\sigma}  \ottsym{)}  :=   \mathtt{Success} }%
}{
 \{  \ottnt{P}  \} \  \ottsym{[}  \text{\sf Ret}  \ottsym{]}   \ \{  \ottnt{Q}  \}^{ O }_{ \concretefont{pc} } }{%
{\ottdrulename{O\_GenRet}}{}%
}}

\newcommand{\ottdruleOXXGenJump}[1]{\ottdrule[#1]{%
\ottpremise{\ottnt{P}  \ottsym{(}  \kappa  \mathord{,}\,  \concretesymbol{\sigma}  \ottsym{)}  :=  \exists \, \concretesymbol{\sigma}'  .~  \ottnt{Q}  \ottsym{(}  \kappa  \mathord{,}\,  \concretesymbol{\sigma}'  \ottsym{)} \, \wedge \, \concretesymbol{\sigma}  \ottsym{=}  \ottsym{(}  \ottsym{-}  \ottsym{1}  \ottsym{)}  \mathord{\scriptstyle @}   \dummytag   \mathord{,}\,  \concretesymbol{\sigma}'}%
\ottpremise{O  \ottsym{(}  \kappa  \mathord{,}\,  \concretesymbol{\sigma}  \ottsym{)}  :=   \mathtt{Failure} }%
}{
 \{  \ottnt{P}  \} \  \ottsym{[}  \text{\sf Jump}  \ottsym{]}   \ \{  \ottnt{Q}  \}^{ O }_{ \concretefont{pc} } }{%
{\ottdrulename{O\_GenJump}}{}%
}}

\usepackage{preamble}

\newcommand{\rulesabsifc}{\mathcal{R}^{\sf abs}}
\newcommand{\tiniFH}{\ensuremath \phi_{\rulesabsifc}}

\begin{document}

\iftechrep

\makeatletter
\renewenvironment{titlepage}
    {%
      \if@twocolumn
        \@restonecoltrue\onecolumn
      \else
        \@restonecolfalse\newpage
      \fi
      \thispagestyle{empty}%
    }%
    {\if@restonecol\twocolumn \else \newpage \fi
    }
\makeatother

\title{{\huge A Verified Information-Flow Architecture}}

\newcommand{\authorcontent}{%
  {\large Arthur Azevedo de Amorim}\\
    {\em University of Pennsylvania, Philadelphia, PA, USA}\\[2ex]
  {\large Nathan Collins}\\
    {\em Galois Inc, Portland, OR, USA}\\[2ex]
  {\large Andr\'e DeHon}\\
    {\em University of Pennsylvania, Philadelphia, PA, USA}\\[2ex]
  {\large Delphine Demange}\\
    {\em Universit\'e Rennes 1 / IRISA, Rennes, France}\\[2ex]
  {\large C\u{a}t\u{a}lin Hri\c{t}cu}\\
    {\em INRIA, Paris, France}\\[2ex]
  {\large David Pichardie}\\
    {\em ENS Rennes / IRISA, Rennes, France}\\[2ex]
  {\large Benjamin C. Pierce}\\
    {\em University of Pennsylvania, Philadelphia, PA, USA}\\[2ex]
  {\large Randy Pollack}\\
    {\em Harvard University, Boston, MA, USA}\\[2ex]
  {\large Andrew Tolmach}\\
    {\em Portland State University, Portland, OR, USA}\\[2ex]
}

\author{\authorcontent}

\maketitle

\else

\begin{frontmatter}

\title{A Verified Information-Flow Architecture}
\runningtitle{A Verified Information-Flow Architecture}
\subtitle{Submission to special issue on Verified Information Flow
  Security}

\maketitle

\author[A]{\fnms{Arthur} \snm{Azevedo de Amorim}},
\author[G]{\fnms{Nathan} \snm{Collins}},
\author[A]{\fnms{Andr\'e} \snm{DeHon}},
\author[E]{\fnms{Delphine} \snm{Demange}},
\author[C]{\fnms{C\u{a}t\u{a}lin} \snm{Hri\c{t}cu}},
\author[F]{\fnms{David} \snm{Pichardie}},
\author[A]{\fnms{Benjamin C.} \snm{Pierce}},
\author[D]{\fnms{Randy} \snm{Pollack}} and
\author[B]{\fnms{Andrew} \snm{Tolmach}}
\runningauthor{A. Azevedo de Amorim et al.}
\address[A]{University of Pennsylvania, Philadelphia, PA, USA}
\address[B]{Portland State University, Portland, OR, USA}
\address[C]{INRIA, Paris, France}
\address[D]{Harvard University, Boston, MA, USA}
\address[E]{Universit\'e Rennes 1 / IRISA, Rennes, France}
\address[F]{ENS Rennes / IRISA, Rennes, France}
\address[G]{Galois Inc, Portland, OR, USA}
\fi

\begin{abstract}
SAFE is a clean-slate design for a highly secure computer
system, with pervasive mechanisms for tracking and limiting
information flows.  At the lowest level, the SAFE hardware supports
fine-grained programmable tags, with efficient and flexible
propagation and combination of tags as instructions are executed.
The operating system virtualizes these generic facilities to present
an information-flow abstract machine that allows user programs to
label sensitive data with rich confidentiality policies.  We present
a formal, machine-checked model of the key hardware and software
mechanisms used to dynamically control information flow in SAFE and
an end-to-end proof of noninterference for this model.

We use a refinement proof methodology to propagate the noninterference
property of the abstract machine down to the concrete machine
level. We use an intermediate layer in the refinement chain that
factors out the details of the information-flow control policy and
devise a code generator for compiling such information-flow policies
into low-level monitor code. Finally, we verify the correctness of
this generator using a dedicated Hoare logic that abstracts from
low-level machine instructions into a reusable set of verified
structured code generators.
\end{abstract}

\iftechrep
\renewcommand\thesection{\arabic{section}}
\tableofcontents
\newpage
\else
\begin{keyword}
security \sep
clean-slate design \sep
tagged architecture \sep
information-flow control \sep
formal verification \sep
refinement
\end{keyword}

\end{frontmatter}
\fi

\section{Introduction}

The SAFE design is motivated by the conviction that the insecurity
of present-day computer systems is due in large part to legacy design
decisions left over from an era of scarce hardware resources.  The
time is ripe for a complete rethink of the entire system stack with
security as the central focus.  In particular, designers should be
willing to spend more of the abundant processing power available on
today's chips to improve security.

\ifmuchlater
\bcp{Integrate some of the discussion of specific legacy decisions that we
  don't like... (there's a nice set of keywords on an early slide in the CSF
talk)}\ch{In iffalse below}\ch{Maybe only the subset that's
  relevant to PL people (as opposed to systems people)?}
\fi

A key feature of SAFE is that every piece of data, down to the word
level, is annotated with a {\em tag} representing policies that govern
its use.
While the tagging mechanism is very general~\cite{pump_asplos2015,micropolicies2015},
one particularly interesting use of tags is for representing {\em
  information-flow control (IFC)} policies.
For example, an individual record might be tagged
``This information should only be seen by principals {\tt Alice} or {\tt
  Bob},'' a function pointer might be tagged ``This code is trusted to work
with {\tt Carol}'s secrets,'' or a string might be tagged ``This came from
the network and has not been sanitized yet.''
Such tags representing IFC policies can involve arbitrary
sets of principals, and principals themselves can be dynamically allocated
to represent an unbounded number of entities within and outside the system.

At the programming-language level, rich IFC policies have been
extensively explored, with many proposed designs for
static~\citeEtcShort{sabelfeld03:lang_based_security,HedinS11%
\iffull,%
  myers99:thesis,%
  Myers99,%
  zdancewic02:thesis,%
  pottier03:flowcaml%
\fi} and
dynamic~\citeEtcShort{AustinF09,
RussoS10, StefanRMM11, HedinS12%
\iffull,%
  SabelfeldR09,%
  Guernic07,%
  Pistoia:2007,%
  AustinF:2010,%
  AustinF12,%
AustinFA12\fi}
enforcement
mechanisms and a huge literature on their formal properties%
~\cite[etc.]{sabelfeld03:lang_based_security,HedinS11}. Similarly,
operating systems with information-flow tracking have been a staple of
the OS literature for over a decade%
~\citeEtcShort{KrohnT09%
\iffull,%
  efstathopoulos05:asbestos,%
  ZeldovichBKM11,%
  krohn07:flume,%
  seL4:Oakland2013,
  ZeldovichBKM11%
\fi}.
But progress at the hardware level has been more
limited, with most proposals concentrating on hardware acceleration for
taint-tracking schemes%
~\citeEtcShort{secure_flow_track_asplos2004, raksha_isca2007, flexitaint,
deng_dsn2012%
\iffull,%
  deng_micro2010,%
  CrandallC04,%
  pointer_taintedness_dsn2005%
\fi}.
SAFE extends the state of the art in two significant ways.
First, the SAFE machine offers hardware support for sound and
efficient purely-dynamic tracking of both explicit and implicit flows
(i.e., information leaks through both data and control flow) for
\emph{arbitrary} machine code programs---not just programs accepted by
static analysis, or produced by translation or transformation.
Moreover, rather than using just a few ``taint bits,'' SAFE associates
a word-sized tag to every word of data in the machine---both memory
and registers.
In particular, SAFE tags can be pointers to arbitrary data structures
in memory.
The interpretation of these tags is left entirely to software: the
hardware just propagates tags from operands to results as each
instruction is executed, following software-defined rules.
Second, the SAFE design has been informed from the start by an intensive
effort to formalize critical properties of its key mechanisms
and produce machine-checked proofs,
in parallel with the design and implementation of its hardware
and system software.
Though some prior work (surveyed in~\autoref{sec:relwork}) shares some of
these aims, to the best of our knowledge no project has attempted this
combination of innovations.

Abstractly, the tag propagation rules in SAFE can be viewed as a
partial function from argument tuples of the form ({\em opcode}, {\em
pc tag}, {\em argument$_1$ tag}, {\em argument$_2$ tag}, \ldots) to
result tuples of the form ({\em new pc tag}, {\em result tag}),
meaning ``if the next instruction to be executed is {\em opcode}, the
current tag of the program counter (PC) is {\em pc tag}, and the
arguments expected by this opcode are tagged {\em argument$_1$ tag},
etc., then executing the instruction is allowed and, in the new state
of the machine, the PC should be tagged {\em new pc tag} and any new
data created by the instruction should be tagged {\em result tag}.''
(The individual argument-result pairs in this function's graph are
called {\em rule instances}, to distinguish them from the symbolic
{\em rules} used at the software level.)
In general, the graph of this function {\em in extenso} will be huge;
so, concretely, the hardware maintains a {\em cache} of recently-used
rule instances.
On each instruction dispatch (in parallel with the logic implementing
the usual behavior of the instruction---\EG addition), the hardware
forms an argument tuple as described above and looks it up in the rule
cache.  If the lookup is successful, the result tuple includes a new
tag for the PC and a tag for the result of the instruction (if any);
these are combined with the ordinary results of instruction execution
to yield the next machine state.  Otherwise, if the lookup is
unsuccessful, the hardware invokes a {\em cache fault handler}---a
trusted piece of system software with the job of checking whether the
faulting combination of tags corresponds to a policy violation or
whether it should be allowed.
In the latter case, an appropriate rule
instance specifying tags for the instruction's results is added to the
cache, and the faulting instruction is restarted.  \ifmuchlater\amd{potentially
  draw analogy with TLB for VM page table} \jms{Yes, that's the right
  analogy. The locality of reference behavior so far looks very good.}\fi
Thus, the hardware is generic and the interpretation of policies
(e.g., IFC, memory safety or control flow
integrity~\cite{pump_asplos2015,micropolicies2015})
is programmed in software, with the results cached in hardware for
common-case efficiency.

The first contribution of this paper is to explain and formalize, in
the Coq proof assistant~\cite{coq_manual}, the key ideas in this
design via a simplified model of the SAFE machine, embodying its
tagging mechanisms in a distilled form and focusing on enforcing IFC
using these general mechanisms.
In \autoref{sec:fullSAFE}, we outline the features of the full SAFE
system and enumerate the most significant simplifications in our
model.
In \autoref{sec:abstract}, we present the high-level programming
interface of our model, embodied by an {\em abstract IFC machine} with
a built-in, purely dynamic IFC enforcement mechanism and an abstract
lattice of IFC labels.
We then show, in three steps, how this abstract machine can be implemented
using the low-level mechanisms we propose.
The first step introduces a {\em symbolic IFC rule machine} that
reorganizes the semantics of the abstract machine, splitting out the
IFC enforcement mechanism into a separate judgment parameterized by a
{\em symbolic IFC rule table} (\autoref{sec:quasi}).
The second step defines a generic {\em concrete machine}
(\autoref{sec:conc}) that provides low-level support for efficiently
implementing many different high-level policies (IFC and others) with
a combination of a hardware {\em rule cache} and a software {\em fault
  handler}.
The final step instantiates the concrete machine with a concrete fault
handler enforcing IFC.  We do this using an {\em IFC fault handler
  generator} (\autoref{sec:fault-handler}), which compiles the
symbolic IFC rule table into a sequence of machine instructions
implementing the IFC enforcement judgment.

Our second contribution is a machine-checked proof that this
simplified SAFE system is {\em correct} and {\em secure}, in the sense
that user code running on the concrete machine equipped with the IFC
fault handler behaves the same way as on the abstract machine and
enjoys the standard {\em noninterference} property that ``high inputs
do not influence low outputs.''
\ifmuchlater\bcp{Say something more about why this is hard / unexpected?  Or is
  what comes below enough to make the point?}\amd{could
  we cite some papers on the historical challenges of getting VM fault
  handlers completely correct?}\bcp{Sounds like a good idea.  Is there a
  canonical one?}  \jms{The paper I think addresses this is by Alkassar,
  Schirmer and Starostin at TACAS 08}\fi{}
The interplay of the concrete machine and fault handler is
complex, so some proof abstraction is essential.
\iffull (Previous projects such as the
  \mbox{CompCert} compiler~\cite{Leroy09},
  the seL4~\cite{Klein09sel4:formal,seL4:Oakland2013} and
  CertiKOS~\cite{GuKRSWWZG15, Shao15} microkernels,
  and the RockSalt SFI checker~\cite{MorrisettTTTG12} have demonstrated
  the need for significant attention to organization in similar proofs.) \fi
In our proof architecture, a first abstraction layer is based on {\em
  refinement}.
This allows us to reason in terms of a high-level view of memory, ignoring
the concrete implementation of IFC labels, while setting up the
intricate indistinguishability relation used in the noninterference proof.
A second layer of abstraction is required for reasoning about the
correctness of the fault handler.  Here, we rely on a verified custom
Hoare logic that abstracts from low-level machine instructions into a
reusable set of verified structured code generators.

In \autoref{sec:fault-handler-correct} we prove that the IFC fault
handler generator correctly compiles a symbolic IFC rule table and a
concrete representation of an abstract label lattice into an
appropriate sequence of machine instructions.
We then introduce a standard notion of refinement
(\autoref{sec:refinement}) and show that the concrete machine running
the generated IFC fault handler refines the abstract IFC machine and
vice-versa, using the symbolic IFC rule machine as an intermediate
refinement point in each direction of the proof
(\autoref{sec:qa-c-ref}).
In our deterministic setting, showing refinement in both directions
guarantees that the concrete machine does not diverge or get stuck when
handling a fault.
We next introduce a standard {\em termination-insensitive
  noninterference (TINI)} property (\autoref{sec:ni}) and show that it
holds for the abstract machine.
Since deterministic TINI is preserved by refinement, we conclude that
the concrete machine running the generated IFC fault handler also
satisfies TINI.
In~\autoref{sec:extensions}, we explain how the programming model and
formal development of the first sections can be extended to
accommodate two important features: dynamic memory allocation and tags
representing sets of principals. This extension, carried out after the
development of the basic model, gives us confidence in the robustness
of our methodology.
We close with a survey of related work~(\autoref{sec:relwork}) and a
discussion of future directions~(\autoref{sec:concl}).
Our Coq formalization is available at
\url{https://github.com/micro-policies/verified-ifc}.


\ifmuchlater\bcp{One point about genericity is that we intend to support a wide
  variety of label models, building on the foundation
  in~\cite{GenLabels}...  Work this in someplace...}

\bcp{Maybe incorporate more text from DP and DD...}
\ch{They are in iffalse below}
\fi

\iffull
A preliminary abridged version of this work appeared in the
proceedings of the POPL 2014 conference~\cite{PicoCoq2013}.
This extended and improved version includes:
\begin{itemize}
\item more examples and clarifying explanations in the formal sections;
\item a more detailed technical description of the formalization:
      the semantics of the abstract, symbolic and concrete machines,
      the language for expressing symbolic IFC rules, our verified
      structured code generators, and TINI-preserving refinements;
\item more details of the proofs; \dd{only unwinding conditions imply TINI}
\item a more extensive discussion of related work, including more
  recent work on transplanting the tagging mechanism of SAFE onto a
  mainstream RISC processor~\cite{Dover16} and using it to enforce properties
  beyond IFC~\cite{micropolicies2015, pump_asplos2015}.
\end{itemize}

\section{Overview of SAFE}
\label{sec:fullSAFE}

To establish context, we begin with a brief overview of the full SAFE
system, concentrating on its OS- and hardware-level features.  More
detailed descriptions can be found elsewhere~\cite{SAFEPLOS11,
  interlocks_ahns2012, Exceptional, GenLabels,
  near_assoc_cache_fpga_2013, TestingNI, LowFat2013,pump_asplos2015}.
SAFE's system software performs process scheduling, stream-based
interprocess communication, storage allocation and garbage collection, and
management of the low-level tagging hardware (the focus of this paper).  The
goal is to organize these services as a collection of mutually suspicious
compartments following the principle of least privilege (a {\em
  zero-kernel OS}~\cite{shrobe09:tiara_nicecap_report}), so that an attacker
would need to compromise multiple compartments to gain complete control of
the machine\ifmuchlater\bcp{I'm not 100\% comfortable with claiming that
  this is something that has already been achieved.  Indeed, I remain a
  little skeptical that we will ever get all the way there, though I do
  believe that even a reasonable approximation of a ZKOS would be a good
  thing.  Have tried to weaken the claim to an aspiration, but I'm not sure
  it's weak enough yet.}\fi.  It is programmed in a combination of assembly
and {\em Tempest}, a new low-level systems
programming language.

The SAFE hardware integrates a number of mechanisms for eliminating
common vulnerabilities and supporting higher-level security
primitives.  To begin with, SAFE is (dynamically) typed at the
hardware level: each data word is indelibly marked as a number, an
instruction, a pointer, etc.  Next, the hardware is memory safe: every
pointer consists of a triple of base, bounds, and offset (compactly
encoded into 64 bits~\cite{interlocks_ahns2012, LowFat2013}), and
every pointer operation includes a hardware bounds
check~\cite{LowFat2013}.
Finally, the hardware associates each word in the registers and
memory, as well as the PC, with a large (59-bit) tag.  The hardware
rule cache, enabling software-specified propagation of tags from
operands to result on each machine step, is implemented using a
combination of multiple hash functions to approximate a
fully-associative cache~\cite{near_assoc_cache_fpga_2013}.

An unusual feature of the SAFE design is that formal modeling and
verification of its core mechanisms have played a central role in the design
process since the beginning. The original goal---formally specifying and
verifying the entire set of critical runtime services---proved to be
too ambitious, but key security properties of simplified models have been verified
both at the level of {\em Breeze}~\cite{Exceptional} (a mostly functional,
security-oriented, dynamic language used for user-level programming on SAFE)
and, in the present work, at the hardware and abstract machine level.
We also used random testing of properties like noninterference as
a means to speed the design process~\cite{TestingNI}.

Our goal in this paper is to develop a clear, precise, and mathematically
tractable model of one of the main innovations in the SAFE design:
its scheme for efficiently supporting high-level data use policies
using a combination of hardware and low-level system software.
To make the model
easy to work with, we simplify away many important facets of the real SAFE
system.  In particular,
\begin{inparaenum}[(i)]
\item we focus only on IFC and noninterference, although the tagging
  facilities of the SAFE machine are generic and can be applied to
  other policies (more recent work illustrates this
  point~\cite{pump_asplos2015, PicoCoq2013}; we return to it
  at the end of~\autoref{sec:relwork});
\item we ignore the Breeze and Tempest programming languages and concentrate
on the hardware and runtime services;
\item we use a stack instead of registers, and we distill the instruction
set to just a handful of opcodes;
\item we drop SAFE's fine-grained privilege separation in favor
of a more conventional user-mode / kernel-mode dichotomy;
\item we shrink the rule cache to a single entry (avoiding issues of
replacement and eviction) and maintain it in kernel memory, accessed by
ordinary loads and stores, rather than in specialized cache hardware;
\item we focus on termination-insensitive noninterference and omit a
  large number of more advanced IFC-related concepts that are
  supported by the real SAFE system (dynamic principals, downgrading,
  public labels, integrity, clearance, etc.);
\item we handle exceptional conditions, including potential security
violations, by simply halting the whole machine;
and
\item most importantly, we ignore concurrency, process scheduling, and
  interprocess communication, assuming instead that the whole machine
  has a single, deterministic thread of control.
\end{inparaenum}
We believe that most of these restrictions can be lifted without
fundamentally changing the structure of the model or of the
proofs. For instance, recent follow-on work by some of the
authors~\cite{jfpsubmission2014} discusses a mechanized proof of
noninterference for a similar abstract machine featuring registers and
a richer IFC policy.
The absence of concurrency is a particularly significant
simplification, given that we are talking about an operating
system that offers IFC as a service.
However, we conjecture that it may be possible to add concurrency to our formalization,
while maintaining a high degree of determinism, by adapting the approach
used in the proof of noninterference for the seL4
microkernel~\cite{seL4:Oakland2013, MurrayMBGK12}.
We return to this point in \autoref{sec:concl}.

\section{Abstract IFC Machine}
\label{sec:abstract}

We begin the technical development by defining a very simple
stack-and-pointer machine with ``hard-wired'' dynamic IFC.
This machine concisely embodies the IFC mechanism we want to provide
to higher-level software and serves as a specification for the
symbolic IFC rule machine (\autoref{sec:quasi}) and for the concrete
machine (\autoref{sec:conc}) running our IFC fault handler
(\autoref{sec:fault-handler}).
The three machines share a tiny instruction set (\autoref{fig:instrs})
designed to be a convenient target for compiling the symbolic IFC rule
table\iffull{} into machine instructions\fi{} (the Coq development
formalizes several other instructions\iffull{}, including {\sf Sub},
{\sf Pop}, a variant of {\sf Call} that takes a variable number of
arguments and a variant of {\sf Ret} that allows returning a result on
the stack). 
\else). \fi{}%
All three machines use a fixed {\em instruction memory} $\iota$, a
partial function from \iffull (non-negative) integer \fi addresses to
instructions.

\begin{figure}[tbp!]
\ottgrammartabular{
\ottinstr\ottinterrule
}
\vspace*{-3ex}
\caption{Instruction set}
\label{fig:instrs}
\end{figure}

\newcommand{\secret}{ \top }
\newcommand{\public}{ \bot }

The machine manipulates integers (ranged over by $\ottnt{n}$, $\ottnt{m}$, and
$\ottnt{p}$); unlike the real SAFE machine, we make no distinction between
raw integers and pointers (we re-introduce this distinction in
\autoref{sec:extensions}).  Each integer is marked with an individual
IFC {\em label} (ranged over by $\ottnt{L}$) that denotes its security
level. We call a pair of an integer $n$ and its corresponding label
$L$ an {\em atom}, written $\ottnt{n}  \mathord{\scriptstyle @}  \ottnt{L}$ and ranged over by $\ottnt{a}$.
We assume that IFC labels $L$ form a set \(\mathcal{L}\) equipped with
a partial order ($ \le $), a least upper bound operation ($ \mathord{\vee} $),
and a bottom element ($ \bot $), but do not place further
requirements on them. This generality allows us to model many
different kinds of labels present in existing IFC
systems~\cite{GenLabels}.
For instance we might take $\mathcal{L}$ to be the set of
levels $\{\public,\secret\}$ with $\public \mathrel{ \le } \secret$ and
$\public \mathbin{ \mathord{\vee} } \secret = \secret$.
Alternatively, we could consider a richer set of labels, such as
finite sets of principals ordered by set inclusion, as discussed
in~\autoref{sec:extensions}.

An {\em abstract machine state} $\langle \mu~[ \sigma ]~\ottnt{pc}\rangle$
consists of a data memory $\mu$, a stack $\sigma$, and a program
counter $\ottnt{pc}$.  (We sometimes drop the outer brackets.)
The {\em data memory} $\mu$ is a partial function from integer
addresses to atoms. We write $ \mu ( \ottnt{p} ) \leftarrow  \ottnt{a} $ for the memory that
coincides with $\mu$ everywhere except at $p$, where its value is $a$.
The {\em stack} $\sigma$ is essentially a list of atoms, but we
distinguish stacks beginning with return addresses (written $\ottnt{pc}  \mathord{;}\,  \sigma$) from
ones beginning with regular atoms
(written $\ottnt{a}  \mathord{,}\,  \sigma$).
\iffull Formally, stacks are lists with two ``cons'' constructors,
written ``$,$'' and ``$;$''.  This distinction is needed so that
stack-manipulating instructions treat frame markers specially;
for example, a program that $ \text{\sf Push} $es an integer and then attempts to
return to it is treated as erroneous by the operational semantics.
\fi
The {\em program counter} (PC) $\ottnt{pc}$ is an atom whose label is used to
track implicit flows, as explained below.

\begin{figure}[t]
\[
\begin{array}{lr}
\ottdruleStepXXAdd{} & \ottdruleStepXXPush{} \\[4em]
\ottdruleStepXXLoad{} & \ottdruleStepXXStore{} \\[4em]
\ottdruleStepXXJump{} & \ottdruleStepXXBnz{} \\[4em]
\ottdruleStepXXCall{} & \ottdruleStepXXRet{} \\[4em]
\multicolumn{2}{c}{\ottdruleStepXXOutput{}}
\end{array}
\]
\caption{Semantics of abstract IFC machine}
\label{fig:abstractSteps}
\end{figure}

The step relation of the abstract machine, written $\iota \vdash  { \aStep{ \mu_{{\mathrm{1}}} }{[  \sigma_{{\mathrm{1}}}  ]}{ \ottnt{pc_{{\mathrm{1}}}} }{ \alpha }{ \mu_{{\mathrm{2}}} }{[  \sigma_{{\mathrm{2}}}  ]}{ \ottnt{pc_{{\mathrm{2}}}} }{  } } $, is a partial function
taking a machine state to a machine state plus an output action
$\alpha$, which can be either an atom or the silent action
$ \tau $.  We generally omit the instruction memory
$\iota$ from transitions because it
is fixed.  Throughout the paper we consistently refer to
non-silent actions as \emph{events} (ranged over by $e$).

The stepping rules in \autoref{fig:abstractSteps} adapt a standard
purely dynamic IFC enforcement mechanism~\cite{AustinF09, RussoS10} to
a low-level machine, following recent work by
Hri\c{t}cu~\ETAL\cite{TestingNI}.
(Readers less familiar with the intricacies of dynamic IFC may find
some of these side conditions a bit mysterious.  A longer explanation
can be found in~\cite{TestingNI}, but the details are not critical for
present purposes.)
The rule for $ \text{\sf Add} $ joins ($ \mathord{\vee} $) the labels of the two operands
to produce the label of the result, which ensures that the result is
at least as classified as each of the operands.
\iffull
For example, suppose $\iota = [...,  \text{\sf Add} , ...]$ and $n$ is the index of
this $ \text{\sf Add} $ instruction.  Then
$ { \aStep{ \mu }{[  \ottsym{7}  \mathord{\scriptstyle @}  \bot  \mathord{,}\,  \ottsym{5}  \mathord{\scriptstyle @}  \top  ]}{ \ottnt{n}  \mathord{\scriptstyle @}  \bot }{ \tau }{ \mu }{[  \ottsym{12}  \mathord{\scriptstyle @}  \top  ]}{ \ottsym{(}  \ottnt{n}  \mathord{+}  \ottsym{1}  \ottsym{)}  \mathord{\scriptstyle @}  \bot }{  } } .$
\fi
The rule for $ \text{\sf Push} $ labels the integer constant added to the
stack as public ($ \bot $).
The rule for $ \text{\sf Jump} $ uses join to raise the label of the PC
by the label of the target address of the jump.
Similarly, $ \text{\sf Bnz} $ raises the label of the PC by the label
of the tested integer.
In both cases the value of the PC after the instruction depends
on data that could be secret, and we use the label of the PC to
track the label of data that has influenced control flow.
In order to prevent {\em implicit flows} (leaks exploiting the control
  flow of the program), the $ \text{\sf Store} $ rule joins the PC label
with the original label of the written integer and with the label of
the pointer through which the write happens.
Additionally, since the labels of memory locations are allowed to vary
during execution, we prevent leaking information via labels using a
``no-sensitive-upgrade'' check~\cite{zdancewic02:thesis, AustinF09} (the
$ \le $ precondition in the rule for $ \text{\sf Store} $).\iffull\footnote{
  More recent work further improves precision compared to
  the no-sensitive-upgrades policy~\cite{HedinS12,AustinF:2010,
    TestingNI,BichhawatRGH14b}.
  We adopted no-sensitive-upgrades in this work because it is simpler and
  requires less bookkeeping.}\fi{}
\ifever \ch{We could consider an extension in which we implement
  permissive upgrades and report on how easy/hard it is to change IFC
  mechanisms while adapting all proofs} \fi
This check prevents memory locations labeled public from being
overwritten when either the PC or the pointer through which the store
happens has been influenced by secrets.
The $ \text{\sf Output} $ rule labels the emitted integer with the
join of its original label and the current PC label.%
\footnote{We assume the observer of the events generated by $ \text{\sf Output} $ is
  constrained by the rules of information flow---i.e., cannot freely ``look
  inside'' bare events.  \ifmuchlater\bcp{Maybe add: This
    will also be the case even for our concrete machine, where atoms are
    labeled with integer tags {\em representing} information-flow
    labels.}\fi In the real SAFE machine, atoms being sent to the outside
  world need to be protected cryptographically; we are abstracting this
  away.}
Finally, because of the structured control flow imposed by the stack
discipline, the rule for $ \text{\sf Ret} $ can soundly restore the PC
label to whatever it was at the time of the $ \text{\sf Call} $.
This feature allows programmers to avoid \emph{label creep}---\IE
having the current PC label inadvertently go up when branching on
secrets unknowingly---by making judicious use of $ \text{\sf Call} $ and
$ \text{\sf Ret} $, but may require careful thought to be used correctly. Many
other solutions have been proposed to this problem, each with their
own strengths and weaknesses. Some systems, such as
LIO~\cite{StefanRBLMM12}, prevent label creep by maintaining a
\emph{clearance level} that serves as an upper bound on the PC label;
this, however, may lead to dynamic errors if a computation tries to
inspect a secret above its clearance.

All data in the machine's state are labelled, and this simple machine manages labels
to ensure noninterference as defined and proved
in \autoref{sec:ni}.  There are no instructions that dynamically raise
the label (classification) of an atom.  Such an instruction,
$ \text{\sf joinP} $, is added to the machine in \autoref{sec:extensions}.

\section{Symbolic IFC Rule Machine}
\label{sec:quasi}


In the abstract machine described above, IFC is tightly integrated
into the step relation in the form of side conditions on each
instruction.
In contrast, the concrete machine (\IE the ``hardware'')
described in~\autoref{sec:conc}
is generic, designed to support a wide range of software-defined
policies (IFC and other).
The machine introduced in this section serves as a bridge between these two
models.  It is closer to the abstract machine---indeed, its machine states
and the behavior of the step relation are identical.  The important
difference lies in the {\em definition} of the step relation, where all the
IFC-related aspects are factored out into a separate judgment.
\iffull
We can think
of the IFC mechanism as being implemented in a separate
``IFC rule processor'' distinct from the main ``CPU.''  In the concrete
machine, the CPU part will remain unchanged, but the IFC rule processor will
be implemented mostly in software (by the fault handler), with the hardware
only providing caching of rule instances.
\fi
While factoring out IFC enforcement into a separate reference
monitor\iffull~\cite{Schneider00}\fi{} is
commonplace~\cite{AskarovS09b,RussoS10,SabelfeldR09}, our approach
goes further. We define a small DSL for describing symbolic IFC rules
and obtain actual monitors by interpreting this DSL (in this section)
and by compiling it into machine instructions using verified
structured code generators (in \autoref{sec:fault-handler} and
\autoref{sec:fault-handler-correct}). This architecture makes it
easier to implement other IFC mechanisms (\EG permissive
upgrades~\cite{AustinF:2010}), beyond the simple one
in~\autoref{sec:abstract}.
Since the DSL compilation is verified, we
prove that the concrete machine of~\autoref{sec:conc} is
noninterfering when given \emph{any} correct monitor written in the
DSL. Showing that a monitor is correct, on the other hand, involves a
simple refinement proof (\autoref{lem:refAbsQuasi}), and a
noninterference proof for the abstract machine
(\autoref{thm:abstractIFCtini}), but is independent of the code
generation infrastructure and corresponding proofs.

\ifmuchlater
\bcp{As I'm reading over this text carefully, I'm finding it a bit awkward
  that we make such a big deal (notationally and in the text) about the IFC
  evaluation judgment being parameterized by $\mathcal{R}$ but then we don't
  bother parameterizing the step relation itself by $\mathcal{R}$.  Shouldn't
  we do this?  (If we don't do it, we should at least put a note in this
  section that $\mathcal{R}$ is also implicitly a parameter to the step
  relation.)}
\fi

More formally, each stepping rule of the new machine
(see~\autoref{fig:qa_step}) includes a uniform call to an {\em IFC
  enforcement} relation, which itself is parameterized by a {\em
  symbolic IFC rule table} $\mathcal{R}$.  Given the labels of the values
relevant to an instruction, the IFC enforcement
relation \begin{inparaenum}[(i)] \item checks whether the execution of
  that instruction is allowed in the current configuration, and \item
  if so, yields the labels to put on the resulting PC and on any
  resulting value.
\end{inparaenum}
This judgment has the form $ \vdash_{ \mathcal{R} } \ruleeval{ \ottsym{(}  L_{pc}  \mathord{,}\,  \ell_{{\mathrm{1}}}  \mathord{,}\,  \ell_{{\mathrm{2}}}  \mathord{,}\,  \ell_{{\mathrm{3}}}  \ottsym{)} }{ \ottnt{op} }{ L_{rpc} }{ L_r } $, where the 4-tuple on the left-hand side represents the input
PC label and three additional input labels (more precisely, optional
labels, as the number of relevant labels depends on the opcode but the
tuple is of fixed size), $op$ is an opcode, and $L_{rpc}$ and
$L_r$ are the resulting output labels (of which the second might be
ignored).

\iffull Let us illustrate, for
a few cases, how this new judgment is used in the stepping
relation (\autoref{fig:qa_step}).\fi{}
\iffull{T}\else{For example, t}\fi{}he stepping rule for $ \text{\sf Add} $
\iffull\else\ottusedrule{\ottdruleSAdd{}}\fi
passes three inputs to the
IFC enforcement judgment: $L_{pc}$, the label of
the current PC, and $\ottnt{L_{{\mathrm{1}}}}$ and $\ottnt{L_{{\mathrm{2}}}}$, the labels
of the two operands at the top of the stack. (The fourth element
of the input tuple is written as $\dummytag$ because it
is not needed for $ \text{\sf Add} $.)
The IFC enforcement judgment produces two labels:
$L_{rpc}$ is used to label the next
program counter ($n+1$) and $L_r$ is used to label the result value.
All the other stepping rules follow a similar scheme.  (The one for
$ \text{\sf Store} $ uses all four input labels.%
\iffull{} In this stepping rule the resulting
label $L_r$ is used to label the new value $m$ to be stored at
location $p$.\fi)

\iffull
\begin{figure}[t]
\[
\begin{array}{lr}
\ottdruleSAdd{} &
\ottdrulePush{} \\[4em]
\ottdruleLoad{} &
\ottdruleStore{} \\[4em]
\ottdruleJump{} &
\ottdruleBnz{} \\[4em]
\ottdruleCall{} &
\ottdruleRet{} \\[4em]
\multicolumn{2}{c}{\ottdruleOutput{}}
\end{array}
\]
\caption{Semantics of symbolic rule machine,
  parameterized by $\mathcal{R}$}
\label{fig:qa_step}
\end{figure}
\fi

A symbolic IFC rule table $\mathcal{R}$ describes a particular IFC
enforcement mechanism.
For instance, the rule table $\rulesabsifc$ corresponding to the IFC
mechanism of the abstract machine is shown in
\autoref{fig:rule_table}.  In general, a table $\mathcal{R}$ associates a
{\em symbolic IFC rule} to each instruction opcode (formally,
$\mathcal{R}$ is a total function).
Each of these rules is formed of three symbolic expressions:
\begin{inparaenum}[(i)]
\item a boolean expression indicating whether the execution of the
  instruction is allowed or not (\IE whether it violates the IFC
  enforcement mechanism);
\item a label-valued expression for $L_{rpc}$, the label of the next PC; and
\item a label-valued expression for $L_r$, the label of the result value, if there is one.
\iffull
In cases where $L_r$ is not
used by the corresponding opcode, we write $\ottsym{\_\_}$ to mean
``don't care,'' which is a synonym for $ \mathtt{BOT} $
(the symbolic representation of the $ \bot $ label).
\fi
\end{inparaenum}

\newenvironment{trivarray}{\begin{array}[t]{@{}l@{}}}{\vspace{-2.5ex}\end{array}}

\begin{figure}[tbp!]
{
\[
\setlength\arraycolsep{2pt}
\begin{array}{@{}l| p{100pt} p{60pt} @{\quad\;}l}
opcode     & {\it allow} & $e_{rpc}$ & e_{r} \\
\hline
\ottkw{add}    & $ \mathtt{TRUE} $ & $ \mathtt{LAB}_\mathit{pc} $ &   \mathtt{LAB}_1  ~ \sqcup ~  \mathtt{LAB}_2   \\
\ottkw{output}    & $ \mathtt{TRUE} $ & $ \mathtt{LAB}_\mathit{pc} $ &   \mathtt{LAB}_1  ~ \sqcup ~  \mathtt{LAB}_\mathit{pc}   \\
\ottkw{push}   & $ \mathtt{TRUE} $ & $ \mathtt{LAB}_\mathit{pc} $ &  \mathtt{BOT}  \\
\ottkw{load}   & $ \mathtt{TRUE} $ & $ \mathtt{LAB}_\mathit{pc} $ &   \mathtt{LAB}_1  ~ \sqcup ~  \mathtt{LAB}_2  \\
\ottkw{store}  & $ \mathtt{LAB}_1  \! \sqcup   \mathtt{LAB}_\mathit{pc}   \sqsubseteq  \mathtt{LAB}_3 $
           & $ \mathtt{LAB}_\mathit{pc} $ &
           \begin{trivarray}   \mathtt{LAB}_1  ~ \sqcup ~  \mathtt{LAB}_2    \sqcup ~  \mathtt{LAB}_\mathit{pc} 
           \end{trivarray} \\
\ottkw{jump}   & $ \mathtt{TRUE} $ & $  \mathtt{LAB}_1  ~ \sqcup ~  \mathtt{LAB}_\mathit{pc}  $ & \ottsym{\_\_}  \\
\ottkw{bnz}    & $ \mathtt{TRUE} $ & $  \mathtt{LAB}_1  ~ \sqcup ~  \mathtt{LAB}_\mathit{pc}  $ & \ottsym{\_\_}  \\
\ottkw{call}   & $ \mathtt{TRUE} $ & $  \mathtt{LAB}_1  ~ \sqcup ~  \mathtt{LAB}_\mathit{pc}  $ &  \mathtt{LAB}_\mathit{pc}  \\
\ottkw{ret}    & $ \mathtt{TRUE} $ & $ \mathtt{LAB}_1 $ & \ottsym{\_\_}
\end{array}
\]
\vspace*{-2ex}
}
\caption{Rule table $\rulesabsifc$ corresponding to abstract IFC
  machine}
\label{fig:rule_table}
\end{figure}

These symbolic expressions are written in a simple domain-specific
language (DSL) of operations over an IFC lattice.
The grammar of this DSL\iffull{} (\autoref{fig:rules_syn})\fi{}
includes label variables
$ \mathtt{LAB}_\mathit{pc} ,\ldots, \mathtt{LAB}_3 $, which correspond to the input labels
$L_{pc},\ldots,\ottnt{L_{{\mathrm{3}}}}$; the constant $ \mathtt{BOT} $; and the
lattice operators $ \sqcup $ (join) and $ \sqsubseteq $ (flows).
\iffull
\begin{figure}[tb!]
\ottgrammartabular{\ottLE\ottinterrule}
\ottgrammartabular{\ottBE\ottinterrule}
\caption{Symbolic IFC rule language syntax
}
\label{fig:rules_syn}
\end{figure}
\fi

\label{ruleeval}
The IFC enforcement judgment looks up the corresponding symbolic IFC
rule in the table and directly {\em evaluates} the symbolic
expressions in terms of the corresponding lattice operations.
\iffull\else The definition of this interpreter is completely
straightforward; we omit it for brevity.\fi{}
In contrast, in \autoref{sec:fault-handler} we {\em compile} this rule
table into the IFC fault handler for the concrete machine.
\iffull
Formally, the IFC enforcement judgment is defined by the two following
cases, depending on whether the second output label is relevant or not:
\[
\ottdruleevalXXrule{} \qquad \ottdruleevalXXruleXXNoRes{}
\]
Here $\rho$ is a 4-tuple of labels, {\it Rule}$_\mathcal{R}$ looks up
the relevant opcode in rule table $\mathcal{R}$, and the expression
evaluation judgment $\rho \vdash \dots$ is defined
in~\autoref{fig:rules_eval}.
\fi

\iffull
\begin{figure}[t]
\small
\[ \ottdruletrue \ottinterrule
   \quad \ottdruleflows\ottinterrule
   \quad \ottdruleand\ottinterrule \]
\[ \ottdruleorOne\ottinterrule
   \quad \ottdruleorTwo\ottinterrule
   \quad \ottdrulelbot \ottinterrule
   \quad \ottdrulejoin \ottinterrule  \]
\[  \ottdrulelpc \ottinterrule
   \quad \ottdrulevarTwo \ottinterrule \]
\[ \ottdrulevarOne \ottinterrule
   \quad \ottdrulevarThree \ottinterrule \]
\caption{Symbolic IFC rule language semantics%
}
\label{fig:rules_eval}
\end{figure}
\fi
\section{Concrete Machine}
\label{sec:conc}

\newcommand{\intercolspace}{\hspace{50pt}}

The concrete machine provides low-level support for efficiently
implementing many different high-level policies (IFC and others) with
a combination of a hardware rule cache and a software cache fault
handler.
In this section we focus on the concrete machine's hardware,
which is completely generic, while
in \autoref{sec:fault-handler} we describe a specific fault handler
corresponding to the IFC rules of the symbolic rule machine.

The concrete machine has the same general structure as the more
abstract ones, but differs in several important respects.  One is that
it annotates data values with integer {\em tags} $\mathtt{T}$, rather than
with {\em labels} $\ottnt{L}$ from an abstract lattice; thus, the {\em
  concrete atoms} $\concretefont{a}$ in the data memories and the stack have the
form $\ottnt{n}  \mathord{\scriptstyle @}  \mathtt{T}$.  Similarly, a {\em concrete action} $\alpha$ is
either a concrete atom or the silent action $ \tau $.  \iffull We
consistently use the word {\em label} and variable $\ottnt{L}$ to refer to
the (abstract, lattice-structured) labels of the abstract and symbolic
rule machines and the word {\em tag} and variable $\mathtt{T}$ for concrete
integers representing labels.\fi{}
Using plain integers as tags allows us to delegate their interpretation
entirely to software.
In this paper we focus solely on using tags to implement IFC labels,
although they could also be used for enforcing other policies, such as
type and memory safety or control-flow
integrity\ifpump~\cite{pump_asplos2015,micropolicies2015}\fi.
For instance, to implement the two-point abstract lattice with
$\public \mathrel{ \le } \secret$, we could use $0$ to represent
$\public$ and $1$ to represent $\secret$, making the operations
$ \mathord{\vee} $ and $ \le $ easy to implement
(see \autoref{sec:fault-handler}). For richer abstract lattices, a
more complex concrete representation might be needed; for example, a
label containing an arbitrary set of principals might be represented
concretely by a pointer to an array data structure
(see \autoref{sec:extensions}).  In places where a tag is needed but
its value is irrelevant, the concrete machine uses a specific but
arbitrary {\em default tag} value (e.g., -1), which we write
$ \mathtt{T}_\mathtt{D} $.

A second important difference is that the concrete machine has two
modes: {\em user mode} ($ \text{\sf u} $), for executing the ordinary user
program, and {\em kernel mode} ($ \text{\sf k} $), for handling rule cache
faults.
To support these two modes, the concrete machine's state contains a
\emph{privilege bit} $\pi$, a separate \emph{kernel instruction
  memory} $\phi$, and a separate \emph{kernel data memory} $\kappa$, in
addition to the user instruction memory $\iota$, the user data
memory $\concretesymbol{\mu}$, the stack $\concretesymbol{\sigma}$, and the PC.
When the machine is operating in user mode ($\pi =  \text{\sf u} $), instructions are
looked up using the PC as an index into $\iota$, and loads and stores use $\concretesymbol{\mu}$;
when in kernel mode ($\pi =  \text{\sf k} $), the PC is treated as an index into
$\phi$, and loads and stores use $\kappa$.
\iffull
The {\em concrete step relation} has the form
$ \iota,\phi \vdash \cStep{ \pi_{{\mathrm{1}}} }{ \kappa_{{\mathrm{1}}} }{ \concretesymbol{\mu}_{{\mathrm{1}}} }{ [ \concretesymbol{\sigma}_{{\mathrm{1}}} ] }{ \concretefont{pc}_{{\mathrm{1}}} }{ \alpha
}{ \pi_{{\mathrm{2}}} }{ \kappa_{{\mathrm{2}}} }{ \concretesymbol{\mu}_{{\mathrm{2}}} }{ [ \concretesymbol{\sigma}_{{\mathrm{2}}} ] }{ \concretefont{pc}_{{\mathrm{2}}} }{}$.
\fi
As before, since $\iota$ and $\phi$ are fixed, we normally leave them implicit\iffull{} when
writing down machine transitions\fi.

\iffull
\amd{I would appreciate a diagram for this---but maybe your
  intended audience are verbal rather than visual learners?}\bcp{There
  are lots of diagrams in my CSF talk.  Let's think about which one(s)
  to include, at least in the long version.}  \dd{the one page 43 in
  CSF.pdf seems useful, combined with the example at the end of the
  section. The one page 26 could be used earlier in the paper, to
  explain the big picture.}
\fi

The concrete machine has the same instruction set as the previous
ones, allowing user programs to be run on all three machines
unchanged.  But the tag-related semantics of instructions depends on
the privilege mode, and in user mode the semantics further depends on
the state of the {\em rule cache}.  In the real SAFE machine, the rule
cache may contain thousands of entries and is implemented as a
separate near-associative memory~\cite{near_assoc_cache_fpga_2013}
accessed by special instructions.  Here, for simplicity, we use a
cache with just one entry, located at the start of kernel memory, and
use $ \text{\sf Load} $ and $ \text{\sf Store} $ instructions to manipulate it. When
implementing simple IFC labels such as the two-point lattice defined
above, the rule cache is all that needs to live in $\kappa$. More
complex label models, on the other hand, such as those of
\autoref{sec:extensions}, may require additional memory to store
internal data structures.

The rule cache holds a single rule instance, represented graphically
like this:
\[  \begin{array}{|@{\;}l@{\;}|@{\;}l@{\;}|@{\;}l@{\;}|@{\;}l@{\;}|@{\;}l@{\;}||@{\;}l@{\;}|@{\;}l@{\;}|}
                       \hline
                           \ottnt{opcode}  &  \mathtt{T}_{pc}  &  \mathtt{T}_{{\mathrm{1}}}  &  \mathtt{T}_{{\mathrm{2}}}  &  \mathtt{T}_{{\mathrm{3}}}  &  \mathtt{T}_{rpc}  &  \mathtt{T}_{r}  \\
                       \hline
                       \end{array}  \]
Location 0 holds an integer representing an opcode.  \iffull (Since
the exact choice of representation doesn't matter, we will denote each
opcode with a lowercase identifier---for example, we might define
$\ottkw{add} = 0$, $\ottkw{output} = 1$, etc.)\fi{} Location 1 holds the PC
tag. Locations 2 to 4 hold the tags of any other arguments needed by
this particular opcode.  Location 5 holds the tag that should go on
the PC after this instruction executes, and location 6 holds the tag
for the instruction's result value, if needed.  For example, suppose
the cache contains\iffull{} this:
\[\else{} $\fi
 \begin{array}{|@{\;}l@{\;}|@{\;}l@{\;}|@{\;}l@{\;}|@{\;}l@{\;}|@{\;}l@{\;}||@{\;}l@{\;}|@{\;}l@{\;}|}
                       \hline
                           \ottkw{add}  &  \ottsym{0}  &  \ottsym{1}  &  \ottsym{1}  &   \text{-1}   &  \ottsym{0}  &  \ottsym{1}  \\
                       \hline
                       \end{array} 
\iffull\]\else.$ \fi
(Note that we are showing just the ``payload'' part of these seven atoms; by
convention, the tag part is always $ \mathtt{T}_\mathtt{D} $, and we do not display
it.)
\iffull
This one-line rule cache should be thought of as implementing a (very)
partial function: when the input is $ \begin{array}{|@{\;}l@{\;}|@{\;}l@{\;}|@{\;}l@{\;}|@{\;}l@{\;}|@{\;}l@{\;}|}
                       \hline
                           \ottkw{add}  &  \ottsym{0}  &  \ottsym{1}  &  \ottsym{1}  &   \text{-1}   \\
                       \hline
                       \end{array} $\ , the
output is $ \begin{array}{|@{\;}l@{\;}|@{\;}l@{\;}|}
                    \hline
                           \ottsym{0}  &  \ottsym{1}  \\
                    \hline
                       \end{array} $; otherwise it is undefined.
\fi
If $0$ is the tag representing the label $\public$, $1$ represents $\secret$,
and -1 is the default tag $ \mathtt{T}_\mathtt{D} $,
this can be interpreted abstractly as follows: ``If the next instruction is
$ \text{\sf Add} $, the PC is labeled $\public$, and the two relevant arguments are both
labeled $\secret$, then the instruction should be allowed,
the label on the new PC should be $\public$, and the
label on the result of the operation is $\secret$.''

There are two sets of stepping rules \iffull governing the behavior of
\else for \fi the concrete machine in user mode; which set applies
depends on whether the current machine state matches the current
contents of the rule cache.  In the ``cache hit'' case
\iffull (\autoref{fig:stepuserhit})\else{}\fi, the instruction
executes normally, with the cache's output determining the new PC tag
and result tag (if any).

\iffull
\begin{figure*}[t]
\centering
\[
\begin{array}{lr}
\ottdruleCAdd{} & \ottdruleCPsh{} \\[4em]
\ottdruleCLd{} & \ottdruleCSt{} \\[4em]
\ottdruleCJmp{} & \ottdruleCBnz{} \\[4em]
\ottdruleCCll{} & \ottdruleCRet{} \\[4em]
\multicolumn{2}{c}{\ottdruleCOut{}}
\end{array}
\]
\caption{Concrete step relation: user mode, cache hit case}
\label{fig:stepuserhit}
\end{figure*}
\begin{figure*}[t]
\centering
\[
\begin{array}{lr}
\ottdruleCAddXXF{} & \ottdruleCPshXXF{} \\[4em]
\ottdruleCLdXXF{} & \ottdruleCStXXF{} \\[4em]
\ottdruleCJmpXXF{} & \ottdruleCBnzXXF{} \\[4em]
\ottdruleCCllXXF{} & \ottdruleCRetXXF{} \\[4em]
\multicolumn{2}{c}{\ottdruleCOutXXF{}}
\end{array}
\]
\caption{Concrete step relation: user mode, cache miss case}
\label{fig:stepusermiss}
\end{figure*}
\fi

In the ``cache miss'' case \iffull (\autoref{fig:stepusermiss})\else{}\fi,
the relevant parts of the current state (opcode, PC tag, argument tags) are stored
into the input part of the single cache line and the machine simulates
a $ \text{\sf Call} $ to the fault handler.

To see how this works in more detail, consider the two
user-mode stepping rules for the $ \text{\sf Add} $ instruction.
\[\ottdruleCAdd{}\quad\ottdruleCAddXXF{}\]
In the first rule (cache hit), the side condition demands that the
input part of the current cache contents have the form $ \begin{array}{|@{\;}l@{\;}|@{\;}l@{\;}|@{\;}l@{\;}|@{\;}l@{\;}|@{\;}l@{\;}|}
                       \hline
                           \ottkw{add}  &  \mathtt{T}_{pc}  &  \mathtt{T}_{{\mathrm{1}}}  &  \mathtt{T}_{{\mathrm{2}}}  &  \mathtt{T}_\mathtt{D}  \\
                       \hline
                       \end{array} $, where $\mathtt{T}_{pc}$ is the tag on the current PC,
$\mathtt{T}_{{\mathrm{1}}}$ and $\mathtt{T}_{{\mathrm{2}}}$ are the tags on the top two atoms on the stack,
and the fourth element is the default tag.  In this case, the output
part of the rule, $ \begin{array}{|@{\;}l@{\;}|@{\;}l@{\;}|}
                    \hline
                           \mathtt{T}_{rpc}  &  \mathtt{T}_{r}  \\
                    \hline
                       \end{array} $, determines the tag $\mathtt{T}_{rpc}$ on
the PC and the tag $\mathtt{T}_{r}$ on the new atom pushed onto the stack in
the next machine state.

In the second rule (cache miss), the notation $ [  {\kappa_i}  ,  \kappa_o  ] $ means
``let ${\kappa_i}$ be the input part of the current rule cache and
$\kappa_o$ be the output part.'' The side condition says that the
current input part ${\kappa_i}$ does {\em} not have the desired form $ \begin{array}{|@{\;}l@{\;}|@{\;}l@{\;}|@{\;}l@{\;}|@{\;}l@{\;}|@{\;}l@{\;}|}
                       \hline
                           \ottkw{add}  &  \mathtt{T}_{pc}  &  \mathtt{T}_{{\mathrm{1}}}  &  \mathtt{T}_{{\mathrm{2}}}  &  \mathtt{T}_\mathtt{D}  \\
                       \hline
                       \end{array} $, so the machine needs to enter the fault
handler.  The next machine state is formed as follows:
\begin{inparaenum}[(i)]
\item the input part of the cache
is set to the desired form ${\kappa_j}$ and the output part is set to
$\kappa_\mathtt{D} \triangleq  \begin{array}{|@{\;}l@{\;}|@{\;}l@{\;}|}
                   \hline
                           \mathtt{T}_\mathtt{D}  &  \mathtt{T}_\mathtt{D}  \\
                   \hline
                    \end{array} $;
\item
a new return frame is pushed on top of the stack to remember the current PC and
privilege bit ($ \text{\sf u} $);
\item
the privilege bit is set to $ \text{\sf k} $ (which will cause
the next instruction to be read from the kernel instruction memory);
and \item
the PC is set to $0$, the location in the kernel
instruction memory where the fault handler routine begins.
\end{inparaenum}

What happens next is up to the fault handler code. Its job is to
examine the contents of the first five kernel memory locations and
either \begin{inparaenum}[(i)] \item write appropriate tags for the
result and new PC into the sixth and seventh kernel memory locations
and then perform a $ \text{\sf Ret} $ to go back to user mode and restart the
faulting instruction, or
\item stop the machine by jumping to an invalid PC (-1) to signal that
  the attempted combination of opcode and argument tags is illegal.%
\iffull
  \footnote{As explained in \autoref{sec:fullSAFE},
    in this work we assume for simplicity that policy violations are
    fatal. Recent work~\cite{Exceptional} has shown that it is possible to
    recover from IFC violations while preserving noninterference.}
\fi
\end{inparaenum}{}
This mechanism is general and can be used to implement many different
high-level policies (IFC and others).

In kernel mode\iffull{} (\autoref{fig:steppriv})\fi{}, the treatment
of tags is almost completely degenerate: to avoid infinite regress,
the concrete machine does not consult the rule cache while in kernel
mode.  For most instructions, tags read from the current machine state
are ignored (indicated by~$ \dummytag $) and tags written to the new state
are set to $ \mathtt{T}_\mathtt{D} $. This can be seen for instance in the
kernel-mode step rule for addition
{\renewcommand{\backslash}{}\ottusedrule{\ottdruleCAddXXP{}}}
The only significant exception\iffull s\fi{} to this pattern\iffull{} are $ \text{\sf Load} $ and $ \text{\sf Store} $, which preserve
the tag of the datum being read from or written to memory, and \else{} is \fi $ \text{\sf Ret} $, which takes both
the privilege bit and the new PC (including its tag!) from the return frame
at the top of the stack.  This is critical, since a $ \text{\sf Ret} $ instruction is
used to return from kernel to user mode when the fault handler has finished
executing.
\ottusedrule{\ottdruleCRetXXP{}}

\iffull
\begin{figure}[tbp!]
\centering
\[
\begin{array}{lr}
\ottdruleCAddXXP{} & \ottdruleCPshXXP{} \\[4em]
\ottdruleCLdXXP{} & \ottdruleCStXXP{} \\[4em]
\ottdruleCJmpXXP{} & \ottdruleCBnzXXP{} \\[4em]
\ottdruleCCllXXP{} & \ottdruleCRetXXP{}
\end{array}
\]
\caption{Concrete step relation (kernel mode)
\bcp{When the Coq has completely stabilized, we need to check once
  more, very carefully, that these rules exactly match the Coq (and
  the rules in the other step functions).  This is someplace that a
  bug could very easily creep in.} }
\label{fig:steppriv}
\end{figure}
\fi

A final point is that $ \text{\sf Output} $ is not permitted in kernel mode, which
guarantees that kernel actions are always the silent action $\tau$.
\ifmuchlater\apt{Worth saying here rather than introducing (parenthetically) in the proof
of Lemma 9.6 ?}\fi

\bcp{I think this section is the right place to discuss the fact that the
  Output instruction sends out ``bare pairs of tags'' (which may even
  include a full memory, in the extension), but that these should not be
  regarded as going straight to the outside world.  But since the Output
  instruction isn't really discussed in the first place, I'm not sure where
  this discussion actually belongs.  Although it is mentioned that the
  output traces from this machine are concrete atoms, the consequences of
  this fact are not explored.  Maybe we should take that sentence from the
  second paragraph and turn it into its own paragraph?}

\iffull As an illustration of how all this works, suppose again that $ \iota = [...,
 \text{\sf Add} , ...]$, and that the concrete integer tag $0$ represents the
abstract label $ \bot $, $1$ represents $ \top $, and -1 is
$ \mathtt{T}_\mathtt{D} $.  Then, in a cache-hit configuration, we have (omitting the
silent $\tau$ label on transitions):
\[
\begin{array}{l@{\;\;}l@{\;\;}l@{\;\;}l@{\;\;}l@{\;\;}l}
 \text{\sf u}  &  \begin{array}{|@{\;}l@{\;}|@{\;}l@{\;}|@{\;}l@{\;}|@{\;}l@{\;}|@{\;}l@{\;}||@{\;}l@{\;}|@{\;}l@{\;}|}
                       \hline
                           \ottkw{add}  &  \ottsym{0}  &  \ottsym{0}  &  \ottsym{1}  &   \text{-1}   &  \ottsym{0}  &  \ottsym{1}  \\
                       \hline
                       \end{array} 
          & \mu & [ \ottsym{7}  \mathord{\scriptstyle @}  \ottsym{0}  \mathord{,}\,  \ottsym{5}  \mathord{\scriptstyle @}  \ottsym{1} ] & \ottnt{n}  \mathord{\scriptstyle @}  \ottsym{0}
   & \longrightarrow \\[.8ex]
 \text{\sf u}  &  \begin{array}{|@{\;}l@{\;}|@{\;}l@{\;}|@{\;}l@{\;}|@{\;}l@{\;}|@{\;}l@{\;}||@{\;}l@{\;}|@{\;}l@{\;}|}
                       \hline
                           \ottkw{add}  &  \ottsym{0}  &  \ottsym{0}  &  \ottsym{1}  &   \text{-1}   &  \ottsym{0}  &  \ottsym{1}  \\
                       \hline
                       \end{array} 
          & \mu & [ \ottsym{12}  \mathord{\scriptstyle @}  \ottsym{1} ] & \ottsym{(}  \ottnt{n}  \mathord{+}  \ottsym{1}  \ottsym{)}  \mathord{\scriptstyle @}  \ottsym{0}
\end{array}
\]
On the other hand, if the tags on both operands are $1$ (i.e., $ \top $),
then the first step will miss in the cache and reduction will proceed as follows:
\[
\begin{array}{@{}l@{\;\;}l@{\;\;}l@{\;\;}l@{\;\;}l@{\;\;}ll}
 \text{\sf u}  &  \begin{array}{|@{\;}l@{\;}|@{\;}l@{\;}|@{\;}l@{\;}|@{\;}l@{\;}|@{\;}l@{\;}||@{\;}l@{\;}|@{\;}l@{\;}|}
                       \hline
                           \ottkw{add}  &  \ottsym{0}  &  \ottsym{0}  &  \ottsym{1}  &   \text{-1}   &  \ottsym{0}  &  \ottsym{1}  \\
                       \hline
                       \end{array} 
          & \mu & [ \ottsym{7}  \mathord{\scriptstyle @}  \ottsym{1}  \mathord{,}\,  \ottsym{5}  \mathord{\scriptstyle @}  \ottsym{1} ] & \ottnt{n}  \mathord{\scriptstyle @}  \ottsym{0}
   & \longrightarrow & \mbox{\em (cache miss)} \\[.8ex]
 \text{\sf k}  &  \begin{array}{|@{\;}l@{\;}|@{\;}l@{\;}|@{\;}l@{\;}|@{\;}l@{\;}|@{\;}l@{\;}||@{\;}l@{\;}|@{\;}l@{\;}|}
                       \hline
                           \ottkw{add}  &  \ottsym{0}  &  \ottsym{1}  &  \ottsym{1}  &   \text{-1}   &   \text{-1}   &   \text{-1}   \\
                       \hline
                       \end{array} 
          & \mu & [ \ottsym{(}  \ottnt{n}  \mathord{\scriptstyle @}  \ottsym{0}  \mathord{,}\,  \text{\sf u}  \ottsym{)}  \mathord{;}\,  \ottsym{7}  \mathord{\scriptstyle @}  \ottsym{1}  \mathord{,}\,  \ottsym{5}  \mathord{\scriptstyle @}  \ottsym{1} ] & \ottsym{0}  \mathord{\scriptstyle @}   \text{-1} 
   & \longrightarrow & \mbox{\em (call fault handler, kernel mode)} \\[.8ex]
\multicolumn{4}{l}{\mbox{\em \quad ... fault handler runs ...}} \\[.8ex]
 \text{\sf k}  &  \begin{array}{|@{\;}l@{\;}|@{\;}l@{\;}|@{\;}l@{\;}|@{\;}l@{\;}|@{\;}l@{\;}||@{\;}l@{\;}|@{\;}l@{\;}|}
                       \hline
                           \ottkw{add}  &  \ottsym{0}  &  \ottsym{1}  &  \ottsym{1}  &   \text{-1}   &  \ottsym{0}  &  \ottsym{1}  \\
                       \hline
                       \end{array} 
          & \mu & [ \ottsym{(}  \ottnt{n}  \mathord{\scriptstyle @}  \ottsym{0}  \mathord{,}\,  \text{\sf u}  \ottsym{)}  \mathord{;}\,  \ottsym{7}  \mathord{\scriptstyle @}  \ottsym{1}  \mathord{,}\,  \ottsym{5}  \mathord{\scriptstyle @}  \ottsym{1} ] & \ottnt{k}  \mathord{\scriptstyle @}   \text{-1} 
   & \longrightarrow & \mbox{\em (fault handler returns to user mode)} \\[.8ex]
 \text{\sf u}  &  \begin{array}{|@{\;}l@{\;}|@{\;}l@{\;}|@{\;}l@{\;}|@{\;}l@{\;}|@{\;}l@{\;}||@{\;}l@{\;}|@{\;}l@{\;}|}
                       \hline
                           \ottkw{add}  &  \ottsym{0}  &  \ottsym{1}  &  \ottsym{1}  &   \text{-1}   &  \ottsym{0}  &  \ottsym{1}  \\
                       \hline
                       \end{array} 
          & \mu & [ \ottsym{7}  \mathord{\scriptstyle @}  \ottsym{1}  \mathord{,}\,  \ottsym{5}  \mathord{\scriptstyle @}  \ottsym{1} ] & \ottnt{n}  \mathord{\scriptstyle @}  \ottsym{0}
   & \longrightarrow & \mbox{\em (restarts instruction, cache now hits)} \\[.8ex]
 \text{\sf u}  &  \begin{array}{|@{\;}l@{\;}|@{\;}l@{\;}|@{\;}l@{\;}|@{\;}l@{\;}|@{\;}l@{\;}||@{\;}l@{\;}|@{\;}l@{\;}|}
                       \hline
                           \ottkw{add}  &  \ottsym{0}  &  \ottsym{0}  &  \ottsym{1}  &   \text{-1}   &  \ottsym{0}  &  \ottsym{1}  \\
                       \hline
                       \end{array} 
          & \mu & [ \ottsym{12}  \mathord{\scriptstyle @}  \ottsym{1} ] & \ottsym{(}  \ottnt{n}  \mathord{+}  \ottsym{1}  \ottsym{)}  \mathord{\scriptstyle @}  \ottsym{0}\!\!\!\!\!\!
\end{array}
\]
\fi

\section{Fault Handler for IFC}
\label{sec:fault-handler}

Now we assemble the pieces.
A concrete IFC machine implementing the symbolic rule machine defined
in
\autoref{sec:quasi} can be obtained by installing appropriate fault
handler code in the kernel instruction memory of the concrete
machine presented in \autoref{sec:conc}.  In essence, this handler must
emulate how the symbolic rule machine looks up
and evaluates the DSL expressions in a given IFC rule table.
We choose to generate the handler code by compiling the lookup and DSL
evaluation relations directly into machine code.  (An alternative would be
to represent the rule table as abstract syntax in the kernel memory and write an
interpreter in machine code for the DSL, but the compilation
approach seems to lead to simpler code and proofs.)

The handler compilation scheme is given\iffull\else{} (in part)\fi{}
in~\autoref{fig:rule_comp}.  Each $\ottkw{gen*}$ function generates a
list of concrete machine instructions; the sequence generated by the
top-level $\ottkw{genFaultHandler}$ is intended to be installed
starting at location 0 in the concrete machine's kernel instruction
memory.  The implicit $\ottkw{addr*}$ parameters are symbolic names
for the locations of the opcode and various tags in the concrete
machine's rule cache, as described in \autoref{sec:conc}.  The entire
generator is parameterized by an arbitrary rule table $\mathcal{R}$.  We
make heavy use of the (obvious) encoding of booleans where false is
represented by 0 and true by any non-zero value.  \iffull\else We omit
the straightforward definitions of some of the leaf
generators.\ifmuchlater\bcp{Double-checking: they are not omitted in
  the full version, right?}\fi\fi

The top-level handler works in three phases.  The first phase,
$\ottkw{genComputeResults}$, does most of the work: it consists of a
large nested if-then-else chain, built using
$\ottkw{genIndexedCases}$, that compares the opcode of the faulting
instruction against each possible opcode and, on a match, executes the
code generated for the corresponding symbolic IFC rule.  The code
generated for each symbolic IFC rule (by $\ottkw{genApplyRule}$)
pushes its results onto the stack: a flag indicating whether the
instruction is allowed and, if so, the result-PC and result-value
tags.  This first phase never writes to memory or transfers control
outside the handler; this makes it fairly easy to prove correct.

The second phase\iffull{} of the top-level handler\fi,
$\ottkw{genStoreResults}$, reads the computed results off the stack
and updates the rule cache appropriately.  If the result indicates
that the instruction is allowed, the result PC and value tags are
written to the cache, and true is pushed on the stack; otherwise,
nothing is written to the cache, and false is pushed on the stack.

The third and final phase of the top-level handler tests the boolean just
pushed onto the stack and either returns to user code (instruction is
allowed) or jumps to address -1 (disallowed).

The code for symbolic rule compilation is built by straightforward
recursive traversal of the rule DSL syntax for label-valued
expressions ($\ottkw{genELab}$) and boolean-valued expressions
($\ottkw{genBool}$).  These functions are (implicitly) parameterized
by \iffull the definitions of \fi lattice-specific generators
$\ottkw{genBot}$, $\ottkw{genJoin}$, and $\ottkw{genFlows}$.  To
implement these generators for a particular lattice, we first need to
choose how to represent abstract labels as integer tags, and then
determine a sequence of instructions that encodes each operation.  We
call such an encoding scheme a \emph{concrete lattice}.  For example,
the abstract labels in the two-point lattice can be encoded like
booleans, representing $\public$ by 0, $\secret$ by non-0, and
instantiating $\ottkw{genBot}$, $\ottkw{genJoin}$, and
$\ottkw{genFlows}$ with code for computing false, disjunction, and
implication, respectively.  A simple concrete lattice like this can be
formalized as a tuple $\ottnt{CL} = \left(  \mathsf{Tag} ,
   \mathsf{Lab} ,\ottkw{genBot}, \ottkw{genJoin}, \ottkw{genFlows}
\right)$,
where the encoding and decoding functions $ \mathsf{Lab} $ and
$ \mathsf{Tag} $ satisfy $ \mathsf{Lab} \circ  \mathsf{Tag}  = id$; to
streamline the exposition, we assume this form of concrete lattice for
most of the paper.  The more realistic encoding in
\autoref{sec:extensions} will require a more complex treatment.

To raise the level of abstraction of the handler code, we make heavy
use of structured code generators; this makes it easier both to
understand the code and to prove it correct using a custom Hoare logic
that follows the structure of the generators (see
\autoref{para:fault-handler-correct}).  For example, the
$\ottkw{genIf}$ function takes two code sequences, representing the
``then'' and ``else'' branches of a conditional, and generates code to
test the top of the stack and dispatch control appropriately.  The
higher-order generator $\ottkw{genIndexedCases}$ takes a list of
integer indices (e.g., opcodes) and functions for generating guards
and branch bodies from an index, and generates code that will run the
guards in order until one of them computes true, at which point the
corresponding branch body is run.

\begin{figure}[tbp]%
\centering
\hspace*{-6pt}\begin{minipage}{0.6\linewidth}  
\setlength{\parindent}{0pt}

{\small
$\begin{array}[t]{l@{\ }l}
\ottkw{genFaultHandler}\ \mathcal{R}  =
  & \ottkw{genComputeResults}\ \mathcal{R} \  {\scriptstyle \mathord{+\!+} } \ \\
  & \ottkw{genStoreResults}\  {\scriptstyle \mathord{+\!+} } \\
  & \ottkw{genIf}\ [  \text{\sf Ret}  ]\ [  \text{\sf Push} \ (\mbox{-1});\  \text{\sf Jump}  ]
\end{array}$

\vspace{0.1in}

$\begin{array}{l@{\ }l}
\ottkw{genComputeResults}\ \mathcal{R} = \\
  \hphantom{x}\ottkw{genIndexedCases}\ []\ \ottkw{genMatchOp}\ (\ottkw{genApplyRule} \circ \mbox{\it Rule}_\mathcal{R})\ \ottkw{opcodes}\\[0.1in]
\end{array}$

$\begin{array}{l@{\ }l@{\ }ll}
\ottkw{genMatchOp}\ op \  = & \\
 \hphantom{x} [  \text{\sf Push} \ op ]\  {\scriptstyle \mathord{+\!+} } \ \ottkw{genLoadFrom}\ \ottkw{addrOpLabel}\  {\scriptstyle \mathord{+\!+} } \ \ottkw{genEqual}\\
\iffull\ottkw{genEqual} = [ \ottkw{Sub} ]\  {\scriptstyle \mathord{+\!+} } \ \ottkw{genNot}\\\fi
\end{array}
$

\vspace{0.1in}

$\begin{array}{l@{\ }l@{\ }l@{\ }l}
\ottkw{genApplyRule} &  \langle  \ottnt{allow} ,  e_{rpc} ,  e_{r}  \rangle  & = & \ottkw{genBool}\ \ottnt{allow}\  {\scriptstyle \mathord{+\!+} } \\
   \multicolumn{4}{l}{\hphantom{x}
     \ottkw{genIf}\ (\ottkw{genSome}\ (\ottkw{genELab}\ e_{rpc}\  {\scriptstyle \mathord{+\!+} } \ \ottkw{genELab}\ e_{r}))\ \ottkw{genNone}} \\
\end{array}
$

\vspace{0.1in}

$\begin{array}{ll@{\ \ }l@{\ \ }l}
\ottkw{genELab} &  \mathtt{BOT}  & = & \ottkw{genBot} \\
              & \mathtt{LAB}_i  & = & \ottkw{genLoadFrom}\ \ottkw{addrTag}_i \\
              &  \ottnt{LE_{{\mathrm{1}}}} ~ \sqcup ~ \ottnt{LE_{{\mathrm{2}}}}  & = & \ottkw{genELab}\  \ottnt{LE_{{\mathrm{2}}}}\  {\scriptstyle \mathord{+\!+} } \ \ottkw{genELab}\  \ottnt{LE_{{\mathrm{1}}}}\  {\scriptstyle \mathord{+\!+} } \ \ottkw{genJoin}\\
\end{array}$

\vspace{0.1in}

$\begin{array}{ll@{\ \ }l@{\ \ }l}
\ottkw{genBool} &  \mathtt{TRUE}  & = & \ottkw{genTrue} \\
              &  \ottnt{LE_{{\mathrm{1}}}} \sqsubseteq \ottnt{LE_{{\mathrm{2}}}}  & = & \ottkw{genELab}\  \ottnt{LE_{{\mathrm{2}}}}\  {\scriptstyle \mathord{+\!+} } \ \ottkw{genELab}\  \ottnt{LE_{{\mathrm{1}}}}\  {\scriptstyle \mathord{+\!+} } \ \ottkw{genFlows}  \\
\end{array}$
\vspace{0.1in}

$\begin{array}{l@{\ }l}
\ottkw{genStoreResults} \ = \\
  \hphantom{x}\ottkw{genIf}\begin{array}[t]{@{\ }l}(\ottkw{genStoreAt}\ \ottkw{addrTag}_r\  {\scriptstyle \mathord{+\!+} } \ \ottkw{genStoreAt}\ \ottkw{addrTag}_{rpc}\  {\scriptstyle \mathord{+\!+} } \ \ottkw{genTrue}) \\
                                               \ottkw{genFalse}\end{array}
\end{array}$

\vspace{0.1in}

\iffull
$\begin{array}{lll}
\ottkw{genFalse} & = & [ \text{\sf Push} \, \ottsym{0} ]\\
\ottkw{genTrue} & = & [ \text{\sf Push} \, \ottsym{1} ]\\
\ottkw{genAnd} & = & \ottkw{genIf}\ [\,]\ (\ottkw{genPop}\  {\scriptstyle \mathord{+\!+} } \ \ottkw{genFalse}) \\
\ottkw{genOr} & = & \ottkw{genIf}\ (\ottkw{genPop}\  {\scriptstyle \mathord{+\!+} } \ \ottkw{genTrue})\ [\,]\\
\ottkw{genNot} & = & \ottkw{genIf}\ {\ottkw{genFalse}}\ {\ottkw{genTrue}} \\
\ottkw{genImpl} & = & \ottkw{genNot}\  {\scriptstyle \mathord{+\!+} } \ \ottkw{genOr} \\
\ottkw{genSome}\ c & = & c\  {\scriptstyle \mathord{+\!+} } \ \ottkw{genTrue}\\
\ottkw{genNone} & = & \ottkw{genFalse}\\
\end{array}
$

\vspace{0.1in}
\fi

$
\begin{array}{ll@{\ \ }l@{\ \ }l}
\multicolumn{4}{l}{\ottkw{genIndexedCases}\ \ottnt{genDefault}\ \ottnt{genGuard}\ \ottnt{genBody} =  g} \\
\ \ \ \ \ \mbox{\rm where} & g\ [] & = & \ottnt{genDefault}\\
        & g\ (n::ns) & = & \ottnt{genGuard}\ n\  {\scriptstyle \mathord{+\!+} } \ \ottkw{genIf}\ (\ottnt{genBody}\ n)\ (g\ ns)\\
\end{array}
$

\vspace{0.1in}

$
\begin{array}{lll}
\ottkw{genIf}\ t\ f & = & \ottkw{genSkipIf}\ (\ottkw{length}\ f')\  {\scriptstyle \mathord{+\!+} } \ f'\  {\scriptstyle \mathord{+\!+} } \ t \\
& & \ \ \mbox{\rm where}\ f' = f\  {\scriptstyle \mathord{+\!+} } \ \ottkw{genSkip}(\ottkw{length}\ t)\\
\ottkw{genSkip}\ n & = & \ottkw{genTrue}\  {\scriptstyle \mathord{+\!+} } \ \ottkw{genSkipIf}\ n\\
\ottkw{genSkipIf}\ n & = & [ \text{\sf Bnz} \, \ottsym{(}  \ottnt{n}  \mathord{+}  \ottsym{1}  \ottsym{)} ]\\
\end{array}
$

\iffull
\vspace{0.1in}
$
\begin{array}{lll}
\ottkw{genStoreAt}\ p & = & [ \text{\sf Push} \, \ottnt{p}; \text{\sf Store}  ]\\
\ottkw{genLoadFrom}\ p & = & [ \text{\sf Push} \, \ottnt{p}; \text{\sf Load}  ]\\
\ottkw{genPop} & = & [ \text{\sf Bnz} \, \ottsym{1} ]\\
\end{array}$
\fi

\vspace{0.1in}

$\begin{array}{l@{\ \ }l@{\ \ }l}
\ottkw{opcodes} & = &  
 [ \ottkw{add} ; \ottkw{output} ; \ldots ; \ottkw{ret} ]
\end{array}$
}
\end{minipage}
\caption{Generation of fault handler from IFC rule table.}
\label{fig:rule_comp}
\end{figure}

\section{Correctness of the Fault Handler Generator}
\label{sec:fault-handler-correct}

We now turn our attention to verification, beginning with the fault
handler.  We must show that the generated fault handler emulates the
IFC enforcement judgment $ \vdash_{ \mathcal{R} } \ruleeval{ \ottsym{(}  L_{pc}  \mathord{,}\,  \ell_{{\mathrm{1}}}  \mathord{,}\,  \ell_{{\mathrm{2}}}  \mathord{,}\,  \ell_{{\mathrm{3}}}  \ottsym{)} }{ \ottnt{opcode} }{ L_{rpc} }{ L_r } $ of the symbolic rule machine.
The statement and proof of correctness are parametric over the
symbolic IFC rule table $\mathcal{R}$ and concrete lattice, and hence
over correctness lemmas for the lattice operations.

\paragraph*{Correctness statement}
\label{para:fault-handler-correct}

Let $\mathcal{R}$ be an arbitrary rule table and $\phi_{\mathcal{R}}
\triangleq \ottkw{genFaultHandler}~\mathcal{R}$
be the corresponding generated fault handler.  We specify how
$\phi_{\mathcal{R}}$ behaves {as a whole}---as a
relation between initial state on entry and final state on completion---%
using the relation $\phi \vdash cs_1 \privsteps
cs_2$,
  defined as the reflexive transitive closure of the
  concrete step relation, with the constraints that the
  fault handler code is $\phi$ and all intermediate states
  (\IE strictly preceding $cs_2$) have privilege bit $ \text{\sf k} $.

The correctness statement is captured by the following two lemmas.
Intuitively, if the symbolic IFC enforcement judgment allows some given user
instruction, then executing $\phi_{\mathcal{R}}$ (stored at kernel mode
location 0) updates the cache to contain the tag encoding of the appropriate
result labels and returns to user-mode; otherwise, $\phi_{\mathcal{R}}$ halts
the machine ($\ottnt{pc}$ = -1).

\begin{lemma}[Fault handler correctness, allowed case]~\\
Suppose that $ \vdash_{ \mathcal{R} } \ruleeval{ \ottsym{(}  L_{pc}  \mathord{,}\,  \ell_{{\mathrm{1}}}  \mathord{,}\,  \ell_{{\mathrm{2}}}  \mathord{,}\,  \ell_{{\mathrm{3}}}  \ottsym{)} }{ \ottnt{opcode} }{ L_{rpc} }{ L_r } $ and
$${\kappa_i} =  \begin{array}{|@{\;}l@{\;}|@{\;}l@{\;}|@{\;}l@{\;}|@{\;}l@{\;}|@{\;}l@{\;}|}
                       \hline
                           \ottnt{opcode}  &   \mathsf{Tag} ( L_{pc} )   &   \mathsf{Tag} ( \ell_{{\mathrm{1}}} )   &   \mathsf{Tag} ( \ell_{{\mathrm{2}}} )   &   \mathsf{Tag} ( \ell_{{\mathrm{3}}} )   \\
                       \hline
                       \end{array} \;.$$
Then $$\phi_{\mathcal{R}}\vdash \runsToEscape{\langle  \text{\sf k}  }{  [  {\kappa_i}  ,  \kappa_o  ]  }{
\concretesymbol{\mu} }{ [ \ottsym{(}  \concretefont{pc}  \mathord{,}\,  \text{\sf u}  \ottsym{)}  \mathord{;}\,  \concretesymbol{\sigma} ] }{ \ottsym{0}  \mathord{\scriptstyle @}  \mathtt{T}_\mathtt{D}\rangle }{\langle  \text{\sf u}  }{
 [  {\kappa_i}  ,  \kappa_o'  ]  }{ \concretesymbol{\mu} }{ [ \concretesymbol{\sigma} ] }{ \concretefont{pc} \rangle}{\ottnt{NEWLINE}}$$
with output cache $\kappa_o' = \ottsym{(}  \mathsf{Tag} \, \ottsym{(}  L_{rpc}  \ottsym{)}  \mathord{,}\,  \mathsf{Tag} \, \ottsym{(}  L_r  \ottsym{)}  \ottsym{)}$\,.
\label{lem:fhdl-succ}
\end{lemma}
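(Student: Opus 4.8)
The plan is to verify $\phi_{\mathcal{R}}$ compositionally using the custom Hoare logic for structured code generators, \emph{i.e.}, the triples $\{P\}\,c\,\{Q\}$ and the escape triples $\{P\}\,c\,\{Q\}^{O}_{\mathit{pc}}$, the latter being exactly the run-to-escape statement appearing in the conclusion. Since $\phi_{\mathcal{R}} = \ottkw{genComputeResults}~\mathcal{R}\mathbin{{\scriptstyle\mathord{+\!+}}}\ottkw{genStoreResults}\mathbin{{\scriptstyle\mathord{+\!+}}}\ottkw{genIf}~[\text{\sf Ret}]~[\text{\sf Push}~(-1);\text{\sf Jump}]$, I would prove one specification per phase and glue them with \textsc{Compose}, \textsc{O\_Compose}, \textsc{O\_Append} (and \textsc{Weaken}/\textsc{O\_Weaken} to align pre- and post-conditions): (a) a plain triple $\{P_0\}~\ottkw{genComputeResults}~\mathcal{R}~\{P_1\}$ where $P_0$ fixes the cache line to the $\kappa_i$ of the hypothesis and $P_1$ says the stack has gained, on top, the "allowed" flag $1@\mathtt{T}_\mathtt{D}$ followed by $\mathsf{Tag}(L_r)@\mathtt{T}_\mathtt{D}$ and $\mathsf{Tag}(L_{rpc})@\mathtt{T}_\mathtt{D}$, with the cache untouched; (b) a plain triple $\{P_1\}~\ottkw{genStoreResults}~\{P_2\}$ where $P_2$ says the output part of the cache is now $(\mathsf{Tag}(L_{rpc}),\mathsf{Tag}(L_r))$ and only the flag $1@\mathtt{T}_\mathtt{D}$ remains on top of the original stack; (c) an escape triple $\{P_2\}~\ottkw{genIf}~[\text{\sf Ret}]~[\ldots]~\{Q\}^{O}_{\mathit{pc}}$ with $O$ the constant $\mathtt{Success}$, obtained from \textsc{O\_GenRet}. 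Composing these and unfolding the escape triple at the given entry state yields exactly the lemma's conclusion, with final PC $\mathit{pc}$ and $\kappa_o' = (\mathsf{Tag}(L_{rpc}),\mathsf{Tag}(L_r))$.

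The real work is in (a). Here $\ottkw{genComputeResults}~\mathcal{R} = \ottkw{genIndexedCases}~[\,]~\ottkw{genMatchOp}~(\ottkw{genApplyRule}\circ\mathcal{R})~\ottkw{opcodes}$, so I would first prove a generic Hoare specification for $\ottkw{genIndexedCases}$ by induction on the index list: if the guard codes are pairwise mutually exclusive and the $k$-th guard evaluates to true, the generated code runs exactly the $k$-th body. Instantiating the guards with $\ottkw{genMatchOp}$ needs the fact that $\ottkw{genMatchOp}~op$ loads the cached opcode at $\ottkw{addrOpLabel}$ and pushes the boolean $op = \mathit{opcode}$; mutual exclusivity is then just distinctness of $\ottkw{opcodes}$. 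The matching body, $\ottkw{genApplyRule}~(\mathcal{R}~\mathit{opcode})$, is specified precisely by rule \textsc{AppR\_OK} (equivalently \textsc{AppR\_OKJCS}): from the cache line $\kappa_0$ encoding $L_{pc},\ell_1,\ell_2,\ell_3$ and the hypothesis $\vdash_{\mathcal{R}}\cdots$ producing $L_{rpc},L_r$, the generated code pushes $1@\mathtt{T}_\mathtt{D},\ \mathsf{Tag}(L_r)@\mathtt{T}_\mathtt{D},\ \mathsf{Tag}(L_{rpc})@\mathtt{T}_\mathtt{D}$. Proving \textsc{AppR\_OK} itself reduces to two ingredients. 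First, correctness of the DSL compiler, by structural induction on label- and boolean-valued expressions: $\ottkw{genELab}~e$, started from a stack whose cache line is $\kappa_0$, pushes $\mathsf{Tag}(L)@\mathtt{T}_\mathtt{D}$ whenever $\rho\vdash e\downarrow L$, using rules \textsc{Bot}, \textsc{Join}, \textsc{ELab} and the leaf fact that $\ottkw{genLoadFrom}~\ottkw{addrTag}_i$ returns the cached tag for $\mathtt{LAB}_i$; and $\ottkw{genBool}~\mathit{allow}$ pushes $1@\mathtt{T}_\mathtt{D}$ exactly when $\rho\vdash\mathit{allow}$, using \textsc{Flows} and the structured boolean generators. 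Second, the lattice-correctness lemmas the whole development is parametric over: $\ottkw{genBot}$ yields $\mathsf{Tag}(\bot)$, $\ottkw{genJoin}$ computes $\mathsf{Tag}(L_1\vee L_2)$ from $\mathsf{Tag}(L_1),\mathsf{Tag}(L_2)$, and $\ottkw{genFlows}$ computes the boolean $L_1\le L_2$, together with $\mathsf{Lab}\circ\mathsf{Tag}=\mathit{id}$. Because we are in the allowed case, $\rho\vdash\mathit{allow}$ holds, so the inner $\ottkw{genIf}$ inside $\ottkw{genApplyRule}$ takes its "then" branch and the flag comes out true.

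Phases (b) and (c) are comparatively routine. For (b), $\ottkw{genStoreResults} = \ottkw{genIf}~(\ottkw{genStoreAt}~\ottkw{addrTag}_r\mathbin{{\scriptstyle\mathord{+\!+}}}\ottkw{genStoreAt}~\ottkw{addrTag}_{rpc}\mathbin{{\scriptstyle\mathord{+\!+}}}\ottkw{genTrue})~\ottkw{genFalse}$; with the flag true on top, the "then" branch fires, consuming it, then storing $\mathsf{Tag}(L_r)$ and $\mathsf{Tag}(L_{rpc})$ into the cache output slots via two uses of a $\ottkw{genStoreAt}$ specification, and pushing a fresh true. This uses a generic $\ottkw{genIf}$ specification, which I would also prove once by unfolding $\ottkw{genSkip}/\ottkw{genSkipIf}$ and the kernel-mode $\text{\sf Bnz}$ rule, observing that all internal control flow stays in kernel mode. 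For (c), \textsc{O\_GenRet} applies directly: $P_2$ entails that the user return frame $(\mathit{pc},\text{\sf u})$ pushed by the cache-miss transition sits just below the flag, the flag being true makes $\ottkw{genIf}$ run $[\text{\sf Ret}]$, and $O \equiv \mathtt{Success}$.

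I expect the main obstacle to be part (a): carrying out the $\ottkw{genIndexedCases}$ induction while tracking the stack shape precisely through the cascade of structured generators, and in particular stating and proving the DSL-compiler correctness so that it is genuinely generic over the concrete lattice — the crucial invariant being that the cache line stays equal to $\kappa_0$ throughout the evaluation of every subexpression, so the repeated $\ottkw{genLoadFrom}~\ottkw{addrTag}_i$ steps keep returning the intended input tags. A secondary bookkeeping burden is verifying that every intermediate state retains privilege bit $\text{\sf k}$ (all internal jumps are $\text{\sf Bnz}$/$\text{\sf Jump}$ taken in kernel mode, and the only $\text{\sf Ret}$ is the final instruction) and that the handler never executes $\text{\sf Output}$, so no events are produced — both needed to match the shape of the run-to-escape conclusion.
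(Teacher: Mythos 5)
Your proposal is correct and follows essentially the same route as the paper: decompose $\phi_{\mathcal{R}}$ into its three phases, verify the first two with self-contained total-correctness triples (\textsc{Compose}/\textsc{Weaken}), handle $\ottkw{genComputeResults}$ via a generic $\ottkw{genIndexedCases}$ specification whose matching branch is discharged by \textsc{AppR\_OK}, itself reduced by structural induction on the DSL to the parametric lattice specifications \textsc{Bot}/\textsc{Join}/\textsc{Flows} and the $\ottkw{genELab}$/$\ottkw{genBool}$ lemmas, and finally connect to the escaping $\ottkw{genIf}~[\text{\sf Ret}]~[\ldots]$ via \textsc{O\_Compose}/\textsc{O\_Append} and \textsc{O\_GenRet}. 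The only cosmetic difference is that the paper's $\ottkw{genIndexedCases}$ specification is phrased as ``first guard that computes true wins'' rather than requiring pairwise mutual exclusivity, but for distinct opcodes these coincide.
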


\begin{lemma}[Fault handler correctness, disallowed case]
Suppose that $ \vdash_\mathcal{R} ( L_{pc} , \ell_{{\mathrm{1}}} , \ell_{{\mathrm{2}}} ,
\ell_{{\mathrm{3}}}) \not\leadsto_{opcode}$, and
$${\kappa_i} =  \begin{array}{|@{\;}l@{\;}|@{\;}l@{\;}|@{\;}l@{\;}|@{\;}l@{\;}|@{\;}l@{\;}|}
                       \hline
                           \ottnt{opcode}  &   \mathsf{Tag} ( L_{pc} )   &   \mathsf{Tag} ( \ell_{{\mathrm{1}}} )   &   \mathsf{Tag} ( \ell_{{\mathrm{2}}} )   &   \mathsf{Tag} ( \ell_{{\mathrm{3}}} )   \\
                       \hline
                       \end{array} \;.$$
Then, for some final stack $\concretesymbol{\sigma}'$,
$$\phi_{\mathcal{R}}\vdash \runsToEscape{ \langle  \text{\sf k}  }{  [  {\kappa_i}  ,  \kappa_o  ]  }{ \concretesymbol{\mu} }
{ [ \ottsym{(}  \concretefont{pc}  \mathord{,}\,  \text{\sf u}  \ottsym{)}  \mathord{;}\,  \concretesymbol{\sigma} ] }{ \ottsym{0}  \mathord{\scriptstyle @}  \mathtt{T}_\mathtt{D}\rangle }{ \langle  \text{\sf k}  }
{  [  {\kappa_i}  ,  \kappa_o  ]  }{ \concretesymbol{\mu} }{ [ \concretesymbol{\sigma}' ] }{ \mbox{-}\ottsym{1}  \mathord{\scriptstyle @}  \mathtt{T}_\mathtt{D}\rangle .}{NEWLINE}$$
\label{lem:fhdl-fail}
\end{lemma}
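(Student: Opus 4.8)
The plan is to reduce both lemmas to a single derivation in the verified kernel-code Hoare logic (the $\{P\}\,c\,\{Q\}$ and $\{P\}\,c\,\{Q\}^{O}_{\concretefont{pc}}$ judgments) and then appeal to its soundness with respect to the kernel-mode fragment of the concrete step relation, i.e.\ the relation $\phi \vdash cs_1 \privsteps cs_2$ appearing in the statement. Indeed, the two lemmas are the two cases of one ``the fault handler runs to completion'' statement, the case split being on whether $\rho \vdash \ottnt{allow}$ holds, where $\rho = \ottsym{(}  L_{pc}  \mathord{,}\,  \ell_{{\mathrm{1}}}  \mathord{,}\,  \ell_{{\mathrm{2}}}  \mathord{,}\,  \ell_{{\mathrm{3}}}  \ottsym{)}$ and $\ottnt{allow}$ is the guard of $\mathit{Rule}_\mathcal{R}(\ottnt{opcode})$; since, for each opcode, the DSL expressions of $\mathit{Rule}_\mathcal{R}(\ottnt{opcode})$ only mention label slots that a well-formed $\rho$ actually fills, the only way $ \vdash_{ \mathcal{R} } \ruleeval{ \rho }{ \ottnt{opcode} }{ L_{rpc} }{ L_r } $ can fail to hold is for $\rho \vdash \ottnt{allow}$ to fail. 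So it suffices to derive one triple for $\ottkw{genFaultHandler}~\mathcal{R}$ whose postcondition records, in the allowed case, that the output cache became $\ottsym{(}  \mathsf{Tag} \, \ottsym{(}  L_{rpc}  \ottsym{)}  \mathord{,}\,  \mathsf{Tag} \, \ottsym{(}  L_r  \ottsym{)}  \ottsym{)}$ and control escaped via $ \text{\sf Ret} $ to user mode, and in the disallowed case that the cache is untouched and control escaped via a jump to $\mbox{-}1$.

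That triple is built bottom-up over the structured generators of \autoref{fig:rule_comp}. First, $\ottkw{genELab}~\ottnt{LE}$ gets a triple by induction on $\ottnt{LE}$: the $ \mathtt{BOT} $ and $ \ottnt{LE_{{\mathrm{1}}}} ~ \sqcup ~ \ottnt{LE_{{\mathrm{2}}}} $ cases are exactly \textsc{Bot} and \textsc{Join}, the $\mathtt{LAB}_i$ cases use the (straightforward) triple for $\ottkw{genLoadFrom}$ reading tag $i$ from the cache, and this is where the parametric correctness lemmas for $\ottkw{genBot}$, $\ottkw{genJoin}$, $\ottkw{genFlows}$ and the decoding law $ \mathsf{Lab}  \circ  \mathsf{Tag}  = \mathit{id}$ enter; the derived rules \textsc{ELab}/\textsc{ELabJCS} package this up. Likewise $\ottkw{genBool}~\ottnt{BE}$ gets a triple by induction on $\ottnt{BE}$ (using \textsc{Flows} for $ \ottnt{LE_{{\mathrm{1}}}} \sqsubseteq \ottnt{LE_{{\mathrm{2}}}} $, $\ottkw{genAnd}$/$\ottkw{genOr}$ for the connectives) asserting the pushed integer is nonzero iff $\rho \vdash \ottnt{BE}$. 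Combining these through $\ottkw{genIf}$, $\ottkw{genSome}$ and $\ottkw{genNone}$ yields the triple for $\ottkw{genApplyRule}~ \langle  \ottnt{allow} ,  e_{rpc} ,  e_{r}  \rangle $: when $\rho \vdash \ottnt{allow}$ this is \textsc{AppR\_OK}, leaving $\ottsym{1}  \mathord{\scriptstyle @}  \mathtt{T}_\mathtt{D}  \mathord{,}\,  \mathsf{Tag} \, \ottsym{(}  L_r  \ottsym{)}  \mathord{\scriptstyle @}  \mathtt{T}_\mathtt{D}  \mathord{,}\,  \mathsf{Tag} \, \ottsym{(}  L_{rpc}  \ottsym{)}  \mathord{\scriptstyle @}  \mathtt{T}_\mathtt{D}$ on the stack; otherwise a dual derivation leaves only $\ottsym{0}  \mathord{\scriptstyle @}  \mathtt{T}_\mathtt{D}$. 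Next, $\ottkw{genMatchOp}~\ottnt{op}$ gets a triple (push $\ottnt{op}$, load the cached opcode, test equality) saying its pushed flag is nonzero iff $\ottnt{op}$ equals the opcode stored at $\ottkw{addrOpLabel}$. Then $\ottkw{genIndexedCases}$ is handled by induction on its index list, each unfolding step being an instance of \textsc{If}: with the list $\ottkw{opcodes}$ and the observation that the faulting $\ottnt{opcode}$ (to which $\mathcal{R}$, being total, assigns a rule) occurs in $\ottkw{opcodes}$, exactly the matching branch fires, running $\ottkw{genApplyRule}~(\mathit{Rule}_\mathcal{R}(\ottnt{opcode}))$. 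This gives the triple for $\ottkw{genComputeResults}~\mathcal{R}$. Finally, $\ottkw{genStoreResults}$ is one more \textsc{If}: in the true branch the two $\ottkw{genStoreAt}$ steps overwrite the $\mathtt{T}_{rpc}$ and $\mathtt{T}_r$ cells of the one-line cache with $\mathsf{Tag}(L_{rpc})$ and $\mathsf{Tag}(L_r)$ and push true; in the false branch the cache is unchanged and false is pushed. Chaining $\ottkw{genComputeResults}~\mathcal{R}$ with $\ottkw{genStoreResults}$ via \textsc{Compose} gives the triple for the first two phases.

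For the third phase, $\ottkw{genIf}~[ \text{\sf Ret} ]~[ \text{\sf Push} \,(\mbox{-}1);\, \text{\sf Jump} ]$, we move to the escape triples $\{P\}\,c\,\{Q\}^{O}_{\concretefont{pc}}$. In the nonzero-flag branch (the allowed case) we use \textsc{O\_GenRet} for $[ \text{\sf Ret} ]$, which yields $\mathtt{Success}$ and returns control to user mode with $\ottnt{pc}$ restored from the return frame. In the zero-flag branch (the disallowed case) we compose an ordinary $ \text{\sf Push} $ triple with \textsc{O\_GenJump} for $[ \text{\sf Jump} ]$ via \textsc{O\_Compose}, yielding $\mathtt{Failure}$ with PC set to $\mbox{-}\ottsym{1}  \mathord{\scriptstyle @}  \mathtt{T}_\mathtt{D}$ and the rest of the state untouched; an escape-lifted form of \textsc{If} (obtained from the unfolding $\ottkw{genIf}~t~f = \ottkw{genSkipIf}~(\ldots)  {\scriptstyle \mathord{+\!+} }  f'  {\scriptstyle \mathord{+\!+} }  t$ together with \textsc{O\_Append}/\textsc{O\_Compose}) combines the two branches. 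One last \textsc{O\_Compose} stitches the ordinary prefix (phases one and two) onto this escape suffix, and invoking soundness of the Hoare logic turns the resulting triple into the $\phi_{\mathcal{R}} \vdash \ldots$ conclusions: in the allowed case we read off $\kappa_o' = \ottsym{(}  \mathsf{Tag} \, \ottsym{(}  L_{rpc}  \ottsym{)}  \mathord{,}\,  \mathsf{Tag} \, \ottsym{(}  L_r  \ottsym{)}  \ottsym{)}$ (\autoref{lem:fhdl-succ}), and in the disallowed case the unchanged cache and final PC $\mbox{-}\ottsym{1}  \mathord{\scriptstyle @}  \mathtt{T}_\mathtt{D}$ with an existential final stack $\concretesymbol{\sigma}'$ (\autoref{lem:fhdl-fail}).

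The main obstacle is not any single computation but the plumbing: making the custom Hoare logic precise enough to track the one-line rule cache cell-by-cell through $\ottkw{genLoadFrom}$/$\ottkw{genStoreAt}$; getting the $\ottkw{genIndexedCases}$ induction to state exactly the ``first guard that fires wins, and a genuine opcode always fires'' invariant; and, above all, cleanly bridging the ordinary triples $\{P\}\,c\,\{Q\}$ used for phases one and two with the escape triples $\{P\}\,c\,\{Q\}^{O}_{\concretefont{pc}}$ needed for the terminating $ \text{\sf Ret} $/$ \text{\sf Jump} $ phase---which is what forces an escape-lifted version of the $\ottkw{genIf}$ combinator. Everything downstream (the parametric lattice-operation lemmas, \textsc{Weaken}-style consequence reasoning to massage pre- and postconditions, and the final unfolding into a concrete trace) is routine once that scaffolding is in place.
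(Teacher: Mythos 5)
Your proposal is correct and follows essentially the same route as the paper: the paper likewise reduces both fault-handler lemmas to compositional correctness proofs for the structured generators in its custom Hoare logic, using the self-contained triples (\textsc{Bot}/\textsc{Join}/\textsc{Flows}, \textsc{ELab}, \textsc{AppR\_OK}, \textsc{If}, \textsc{Compose}) for \ottkw{genComputeResults} and \ottkw{genStoreResults}, and the escaping-code triples with \textsc{O\_Compose}/\textsc{O\_Append}, \textsc{O\_GenRet}, and \textsc{O\_GenJump} for the final $\ottsym{[}  \text{\sf Ret}  \ottsym{]}$/$\ottsym{[}  \text{\sf Jump}  \ottsym{]}$ phase. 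Your additional observation that failure of the IFC enforcement judgment must be traced back to failure of the $\ottnt{allow}$ guard is exactly the case split that drives the disallowed branch, so nothing essential is missing.
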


\paragraph*{Proof methodology}
\label{para:hoare-logic}
The fault handler is compiled by composing generators
(\autoref{fig:rule_comp}); accordingly, the proofs of these two lemmas
reduce to correctness proofs for the generators.
We employ a
custom Hoare logic for specifying the generators themselves, which
makes the code generation proof simple, reusable, and scalable.
This is where defining a DSL for IFC rules and a \emph{structured}
compiler proves to be very useful approach, e.g., compared to
symbolic interpretation of hand-written code.

Our logic comprises two kinds of Hoare triples.
The generated code mostly consists of self-contained instruction
sequences that terminate by ``falling off the end''---\IE that never
return or jump outside themselves, although they may contain internal
jumps (e.g., to implement conditionals). The only exception is the
final step of the handler (third line of $\ottkw{genFaultHandler}$
in \autoref{fig:rule_comp}).
We therefore define a standard Hoare triple $ \{  \ottnt{P} \} \  \ottnt{c}  \ \{ \ottnt{Q} \} $, suitable
for reasoning about self-contained code, and use it for the bulk of
the proof.
To specify the final handler step, we define a non-standard triple
$ \{  \ottnt{P}  \} \  \ottnt{c}   \ \{  \ottnt{Q}  \}^{ O }_{ \concretefont{pc} } $
for reasoning about escaping code.

\paragraph{Self-contained-code Hoare triples}
The triple $ \{  \ottnt{P} \} \  \ottnt{c}  \ \{ \ottnt{Q} \} $, where $\ottnt{P}$ and $\ottnt{Q}$ are predicates on $\kappa
\times \concretesymbol{\sigma}$, says that, if the kernel instruction memory $\phi$ contains
the code sequence $\ottnt{c}$ starting at the current PC,
and if the current memory and stack satisfy $\ottnt{P}$, then the machine will
run (in kernel mode) until the PC points to the instruction immediately
following the sequence $\ottnt{c}$, with a resulting memory and stack satisfying
$\ottnt{Q}$.
\iffull
In symbols:
\[
\begin{array}{ll}
 \{  \ottnt{P} \} \  \ottnt{c}  \ \{ \ottnt{Q} \}  \triangleq &
        \ottnt{c} = \phi(n),\ldots,\phi(n'-1) \land \ottnt{P}  \ottsym{(}  \kappa  \mathord{,}\,  \concretesymbol{\sigma}  \ottsym{)}    \Longrightarrow  \\
  & \quad \exists~\kappa'~\concretesymbol{\sigma}'.\;
     \ottnt{Q}  \ottsym{(}  \kappa'  \mathord{,}\,  \concretesymbol{\sigma}'  \ottsym{)}
     \land\;   \phi \vdash \runsToEnd
      { \langle  \text{\sf k}  }{ \kappa  }{ \concretesymbol{\mu} }{ [ \concretesymbol{\sigma} ]  }{ \ottnt{n}  \mathord{\scriptstyle @}  \mathtt{T}_\mathtt{D} \rangle }
      {\langle  \text{\sf k}  }{ \kappa' }{ \concretesymbol{\mu} }{ [ \concretesymbol{\sigma}' ] }{ \ottnt{n'}  \mathord{\scriptstyle @}  \mathtt{T}_\mathtt{D} \rangle }
      {}
\end{array}
\]
\fi%
Note that the instruction memory $\phi$ is unconstrained outside of
$c$, so if $c$ is not self-contained, no triple about it will be
provable; thus, these triples obey the usual composition laws (\EG the
rule of consequence).
\iffull
\[ \ottdruleUnit \ottinterrule
   \qquad\qquad \ottdruleWeaken \ottinterrule
   \qquad\qquad \ottdruleCompose \ottinterrule\]
\fi
Also, because the concrete machine is
deterministic, these triples express total, rather than partial,
correctness, which is essential for proving termination in
\iffull\autoref{lem:fhdl-succ} and \autoref{lem:fhdl-fail}\else \ref{lem:fhdl-succ} and \ref{lem:fhdl-fail}\fi.  To aid automation of
proofs about code sequences, we give triples in weakest-precondition
style.

We build proofs by composing atomic specifications of individual
instructions, such as \ottusedrule{\ottdruleAdd{}~,} with
specifications for structured code generators, such as
\ottusedrule{\ottdruleIf{}~.}  (We emphasize that all such
specifications are \emph{verified}, not \emph{axiomatized} as the
inference rule notation might suggest.)
\iffull
We also prove a specification for the specialized case statement
$\ottkw{genIndexedCases}$. Although this specification is quite
complex when written in full detail (and thus omitted here), it
is intuitively simple: given a list of indices and functions for generating guards
and branches from the indices, $\ottkw{genIndexedCases}$ will run the
guards in order until one of them computes $\mathit{true}$ (more
precisely, its integer encoding $1$), at which point the corresponding
branch is run.
\fi

The concrete implementations of the lattice operations are also
specified using triples in this style.
\label{def:wf-clatt}
\ottusedrule{\ottdruleBot{}}
\ottusedrule{\ottdruleJoin{}}
\ottusedrule{\ottdruleFlows{}}
For the two-point lattice, it is easy to prove that the implemented
operators satisfy these specifications; \autoref{sec:extensions}
describes an analogous result for a lattice of sets of principals.

\iffull Going a bit further towards bridging the gap between the
symbolic rule and concrete machines, we prove specifications for the
generation of label expressions
\iffull
\ottusedrule{\ottdruleELabJCS{}}
\else
\ottusedrule{\ottdruleELab{}}
\fi
and for the code generated to implement the application of a symbolic IFC
symbolic rule. For instance, the case where the instruction is
allowed is described by the \iffull following \fi specification \iffull (the integer $1$ pushed on the output stack encodes the fact that the rule is allowed)\fi:
\iffull
\ottusedrule{\ottdruleAppRXXOKJCS{}}
\else
\ottusedrule{\ottdruleAppRXXOK{}}
\fi
\fi

\paragraph{Escaping-code Hoare triples}
To be able to specify the entire code of the generated fault
handler, we also define a second form of triple, $ \{  \ottnt{P}  \} \  \ottnt{c}   \ \{  \ottnt{Q}  \}^{ O }_{ \concretefont{pc} } $, which specifies mostly self-contained, total code $\ottnt{c}$ that
either makes \emph{exactly one} jump outside of $\ottnt{c}$ or returns out of
kernel mode.
\iffull This non-locality is needed because the fault handler checks
whether an information-flow violation is about to occur, and returns
to the user-mode caller if not, or jumps to an invalid address
otherwise.  \fi
More precisely, if $\ottnt{P}$ and $\ottnt{Q}$ are predicates on
$\kappa \times \concretesymbol{\sigma}$ and $O$ is a function from $\kappa \times
\concretesymbol{\sigma}$ to outcomes (the constants $ \mathtt{Success} $ and $ \mathtt{Failure} $),
then $ \{  \ottnt{P}  \} \  \ottnt{c}   \ \{  \ottnt{Q}  \}^{ O }_{ \concretefont{pc} } $ holds if, whenever the kernel instruction
memory $\phi$ contains the sequence $\ottnt{c}$ starting at the current
PC, the current cache and stack satisfy $\ottnt{P}$, and
\begin{itemize}
\item if $O$ computes $ \mathtt{Success} $ then the machine runs (in
  kernel mode) until it returns to user code at $\concretefont{pc}$,
  and $\ottnt{Q}$ is satisfied.
\item if $O$ computes $ \mathtt{Failure} $ then the machine runs (in
  kernel mode) until it halts ($pc = -1$ in kernel mode), and
  $\ottnt{Q}$ is satisfied.
\end{itemize}
\iffull
Or, in symbols,
  \[
   \{  \ottnt{P}  \} \  \ottnt{c}   \ \{  \ottnt{Q}  \}^{ O }_{ \concretefont{pc} }  \triangleq
  \begin{array}[t]{l}
 \ottnt{c} = \phi(n),\ldots,\phi(n+|c|-1) \land \ottnt{P}  \ottsym{(}  \kappa  \mathord{,}\,  \concretesymbol{\sigma}  \ottsym{)}  \Longrightarrow  \\
 \exists~\kappa'~\concretesymbol{\sigma}'.\;
    \begin{array}[t]{l}
    \ottnt{Q}  \ottsym{(}  \kappa'  \mathord{,}\,  \concretesymbol{\sigma}'  \ottsym{)} \\
    \land  (
    O  \ottsym{(}  \kappa  \mathord{,}\,  \concretesymbol{\sigma}  \ottsym{)}  \ottsym{=}   \mathtt{Success}   \Longrightarrow 
    \phi\vdash \runsToEscape
        { \langle  \text{\sf k}  }{ \kappa  }{ \concretesymbol{\mu} }{ [ \concretesymbol{\sigma} ]  }{ \ottnt{n}  \mathord{\scriptstyle @}  \mathtt{T}_\mathtt{D}\rangle }
        { \langle  \text{\sf u}  }{ \kappa' }{ \concretesymbol{\mu} }{ [ \concretesymbol{\sigma}' ] }{ \concretefont{pc} \rangle}
        {}
        )
        \\
   \land  (
    O  \ottsym{(}  \kappa  \mathord{,}\,  \concretesymbol{\sigma}  \ottsym{)}  \ottsym{=}   \mathtt{Failure}   \Longrightarrow 
    \phi\vdash  \runsToEscape
        { \langle  \text{\sf k}  }{ \kappa  }{ \concretesymbol{\mu} }{ [ \concretesymbol{\sigma} ]  }{ \ottnt{n}  \mathord{\scriptstyle @}  \mathtt{T}_\mathtt{D} \rangle}
        { \langle  \text{\sf k}  }{ \kappa' }{ \concretesymbol{\mu} }{ [ \concretesymbol{\sigma}' ] }{ \ottsym{-}  \ottsym{1}  \mathord{\scriptstyle @}  \mathtt{T}_\mathtt{D} \rangle}
        {}
        )
  \end{array}
  \end{array}
  \]

\fi
To compose self-contained code with escaping code,
we prove two composition laws for these triples, one for
pre-composing with specified self-contained code and another
for post-composing with arbitrary (unreachable) code:
\[ \ottdruleOXXCompose \ottinterrule
   \quad \ottdruleOXXAppend \ottinterrule \] We use these new triples
to specify the $ \text{\sf Ret} $ and $ \text{\sf Jump} $ instructions, which could not
be given useful specifications using the self-contained-code
triples:
\[\ottdruleOXXGenRet{}\quad\ottdruleOXXGenJump{}\]

Everything comes together in verifying the fault handler. We use
contained-code triples to specify everything except for $\ottsym{[}  \text{\sf Ret}  \ottsym{]}$,
$\ottsym{[}  \text{\sf Jump}  \ottsym{]}$, and the final $\ottkw{genIf}$, and then use the
escaping-code triple composition laws to connect the non-returning part of the
fault handler to the final $\ottkw{genIf}$.

\section{Refinement}
\label{sec:refinement}

We have two remaining verification goals.  First, we want to show that
the concrete machine of \autoref{sec:conc} (running the fault handler
of \autoref{sec:fault-handler} compiled from $\rulesabsifc$) enjoys
TINI.  Proving this directly for the concrete machine would be
dauntingly complex, so instead we show that the concrete machine is an
implementation of the abstract machine, for which noninterference will
be much easier to prove (\autoref{sec:ni}). Second, since a trivial
always-diverging machine also has TINI, we want to show that the
concrete machine is a \emph{faithful} implementation of the abstract
machine that emulates all its behaviors.

We phrase these two results using the notion of \emph{machine
  refinement}, which we develop in this section, and which we prove
in \autoref{sec:ni} to be TINI preserving.  In \autoref{sec:qa-c-ref}, we
prove a two-way refinement (one direction for each goal), between
the abstract and concrete machines, via the symbolic rule machine in both
directions.

From here on we sometimes mention different machines (abstract, symbolic
rule, or concrete) in the same statement (\EG when discussing refinement),
and sometimes talk about machines generically (\EG when defining TINI for
all our machines); for these purposes, it is useful to define a generic notion
of machine.

\newcommand{\steprwithcdots}{\cdot\!\!\stepr\!\!\cdot}

\begin{defn}
\label{def:gen-mach}
A \emph{generic machine} (or just \emph{machine}) is a 5-tuple
$M=(S,E,I,\steprwithcdots,\initf)$, where $S$ is a set of \emph{states}
(ranged over by $s$), $E$ is a set of \emph{events} (ranged over by $e$),
$\steprwithcdots\subseteq S \times (E+\{\tau\}) \times S$ is a step
relation, and $I$ is a set of \emph{input data} (ranged over by $i$) that
can be used to build \emph{initial states} of the machine with the function
$\initf\in I \to S$. We call \(E+\{\tau\}\) the set of \emph{actions} of
\(M\) (ranged over by $\action$).
\end{defn}
Conceptually, a machine's program is included in its input data and gets
``loaded'' by the function $\initf$, which also initializes the machine
memory, stack, and PC.  The notion of generic machine abstracts all
these details, allowing uniform definitions of refinement and TINI that
apply to all three of our IFC machines.  To avoid stating
it several times below, we stipulate that when we
instantiate~\autoref{def:gen-mach} to any of our IFC machines, $\initf$ must
produce an initial stack with no return frames.

A generic step \mbox{$\step{s_1}{e}{s_2}$}
or \mbox{$\step{s_1}{\tau}{s_2}$} produces event $e$ or is silent.
The reflexive-transitive closure of such steps, omitting silent steps (written
\mbox{$\multistep{s_1}{t}{s_2}$}) produces \emph{traces}---i.e., lists,
$t$, of events.
\iffull It is defined inductively by
\begin{equation}\label{eqn:traces}
\xrule{}{
\multistep{s}{\tnil}{s}}
\quad\;
\xrule{
\step{s_1}{e}{s_2} \quad
\multistep{s_2}{t}{s_3}}{
\multistep{s_1}{\tcons{e}{t}}{s_3}}
\quad\;
\xrule{
\step{s_1}{\tau}{s_2} \quad
\multistep{s_2}{t}{s_3}}{
\multistep{s_1}{t}{s_3}}
\end{equation}
where we write $\tnil$ for the empty trace and $\tcons{e}{t}$ for
consing $e$ to $t$.
\fi
When the
end state of a step starting in state $s$ is not relevant we write
$\step{s}{e}{}$, and similarly $\ftrace{s}{t}$ for traces.

When relating executions of two different machines through a
refinement, we establish a correspondence between their traces. This
relation is usually derived from an elementary relation on events,
\(\match \subseteq E_1 \times E_2\), which is lifted to actions and
traces: \iffull
\begin{defn}[Matching]\label{def:match-traces}
Given a relation \(\match \subseteq E_1 \times E_2\) between two sets of
events, its lifts to actions and traces are defined:
\fi
\[
\begin{array}{ccl}
\alpha_1\lift{\match}\alpha_2 &\triangleq &(\alpha_1=\tau=\alpha_2\;\vee\;
                     \alpha_1=e_1\match e_2=\alpha_2) \\
\vec{x}\lift{\match}\vec{y} &\triangleq & \length{\vec{x}} = \length{\vec{y}} \wedge
\forall\, i < \length{\vec{x}}.\:x_i\match y_i.
\end{array}
\]
\iffull
\end{defn}
\fi

\iffull We are now ready to define refinement. \fi

\newcommand{\picturerowheight}{\iffull 2em\else .7em\fi}

\begin{defn}[Refinement]
  Let \(M_1 = (S_1,E_1,I_1,\cdot \stepr_1 \cdot,\initf_1)\) and \(M_2 =
  (S_2,E_2,I_2,\cdot \stepr_2 \cdot,\initf_2)\) be two machines. A
  \emph{refinement} of \(M_1\) into \(M_2\) is a pair of relations
  \((\match_i,\match_e)\), where \(\match_i \subseteq I_1 \times I_2\)
  and \(\match_e \subseteq E_1 \times E_2\), such that whenever \(i_1
  \match_i i_2\) and \(\ftrace{\initf_2(i_2)}{t_2}\), there exists a
  trace \(t_1\) such that \(\ftrace{\initf_1(i_1)}{t_1}\) and
  \(t_1 \Matchu{e} t_2\).
  We also say that \(M_2\) \emph{refines} \(M_1\).
Graphically:
\vspace{-\smallskipamount}
 \begin{center}
    \begin{tikzpicture}
      \matrix (m) [matrix of math nodes, row sep=\iffull 2em\else .8em\fi,column sep=2em,text
      height=1.5ex, text depth=0.25ex]
      { i_1 & \initf_1(i_1) & {} \\
        i_2 & \initf_2(i_2) & {} \\};
      \path (m-1-1) edge[-] node[left] {$\match_i$} (m-2-1)
            (m-1-2) edge[dashed, -to*] node[above] (t1) {$t_1$} (m-1-3)
            (m-2-2) edge[-to*]         node[below] (t2) {$t_2$} (m-2-3) ;
      \draw[rounded corners=1pt,dashed]
      (t1)
      -- ($(t1)+(2cm,0)$)
      -- node[right] {$\Matchu{e}$}
      ($(t2)+(2cm,0)$)
      -- (t2) ;
    \end{tikzpicture}
  \end{center}
\vspace{-\smallskipamount}
(Plain lines denote premises, dashed ones conclusions.)
\end{defn}

In order to prove refinement, we need a variant
that considers executions starting at arbitrary related
states.
\begin{defn}[Refinement via states]\label{dfn:refStates}
  Let \(M_1\), \(M_2\) be as above. A \emph{state refinement}
  of \(M_1\) into \(M_2\) is a pair of relations \((\match_s,
  \match_e)\), where \(\match_s \subseteq S_1 \times S_2\) and
  \(\match_e \subseteq E_1 \times E_2\), such that, whenever \(s_1
  \match_s s_2\) and \(\ftrace{s_2}{t_2}\), there exists
  \(t_1\) such that \(\ftrace{s_1}{t_1}\) and \(t_1
  \Matchu{e} t_2\).
\iffull
  \begin{center}
    \begin{tikzpicture}
      \matrix (m) [matrix of math nodes, row sep=\iffull 2em\else .8em\fi,column sep=2em,text
      height=1.5ex, text depth=0.25ex]
      { s_1 & {} \\
        s_2 & {} \\};

      \path (m-1-1) edge[-] node[left] {$\match_s$} (m-2-1)

            (m-1-1) edge[dashed, -to*] node[above] (t1) {$t_1$} (m-1-2)
            (m-2-1) edge[-to*]         node[below] (t2) {$t_2$} (m-2-2) ;

      \draw[rounded corners=1pt,dashed]
      (t1)
      -- ($(t1)+(2cm,0)$)
      -- node[right] {$\Matchu{e}$}
      ($(t2)+(2cm,0)$)
      -- (t2) ;
    \end{tikzpicture}
  \end{center}
\fi
\end{defn}

If the relation on inputs is compatible with the one on states,
we can use state refinement to prove refinement.

\begin{lemma}\label{lem:refStates}
  Suppose \(i_1 \match_i i_2 \Rightarrow \initf_1(i_1) \match_s
  \initf_2(i_2)\), for all \(i_1\) and \(i_2\). If
  \((\match_s,\match_e)\) is a state refinement then
  \((\match_i,\match_e)\) is a refinement.
\end{lemma}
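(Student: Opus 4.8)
The plan is to unfold the two definitions and chain them; this lemma is essentially a bookkeeping step that lets us replace the hypothesis ``executions starting at arbitrary related states'' by the more convenient ``executions starting at related initial states.'' First I would fix arbitrary inputs $i_1$ and $i_2$ with $i_1 \match_i i_2$, together with a trace $t_2$ such that $\ftrace{\initf_2(i_2)}{t_2}$; the goal is to produce a trace $t_1$ with $\ftrace{\initf_1(i_1)}{t_1}$ and $t_1 \Matchu{e} t_2$.

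The key step is to instantiate the compatibility hypothesis at $i_1$ and $i_2$: from $i_1 \match_i i_2$ it gives $\initf_1(i_1) \match_s \initf_2(i_2)$. Now I would apply the assumption that $(\match_s, \match_e)$ is a state refinement (\autoref{dfn:refStates}), taking $s_1 := \initf_1(i_1)$ and $s_2 := \initf_2(i_2)$. Since $s_1 \match_s s_2$ and $\ftrace{s_2}{t_2}$, the state-refinement property yields some trace $t_1$ with $\ftrace{s_1}{t_1}$, i.e.\ $\ftrace{\initf_1(i_1)}{t_1}$, and $t_1 \Matchu{e} t_2$. These are exactly the two conclusions required by the definition of refinement, so the pair $(\match_i, \match_e)$ is a refinement.

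There is no real obstacle here: the proof is a direct composition of the two definitions with no induction and no side conditions, and in Coq it should amount to a couple of $\mathtt{intros}$, an application of the hypothesis, and an application of the state-refinement witness. The only thing to be careful about is that the quantifier structure lines up — the compatibility hypothesis must be stated (as it is) for \emph{all} $i_1, i_2$, so that it can be specialized to the particular pair at hand before invoking state refinement — and that the matching relation $\Matchu{e}$ on traces is literally the same in both definitions, which it is by construction.
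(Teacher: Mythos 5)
Your proof is correct and is exactly the intended argument: the paper states this lemma without an explicit proof precisely because it is the direct unfolding you describe — specialize the compatibility hypothesis to obtain $\initf_1(i_1) \match_s \initf_2(i_2)$, then invoke the state refinement at those initial states. Nothing is missing.
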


\iffull
Our plan to derive a refinement between the abstract and concrete
machines via the symbolic rule machine requires composition of
refinements.

\begin{lemma}[Refinement Composition]\label{lem:refCompose}
Let $(\match_i^{12},\match_e^{12})$ be a refinement  between \(M_1\)
and \(M_2\), and $(\match_i^{23},\match_e^{23})$ a refinement
between \(M_2\) and \(M_3\). The pair
$(\match_i^{23} \circ \match_i^{12}, \match_e^{23} \circ \match_e^{12})$
that composes the matching relations for initial data and events on
each layer is a refinement between \(M_1\) and \(M_3\).
\fi
\iffull
This can be summarized in the following diagram:
  \begin{center}
    \begin{tikzpicture}[descr/.style={fill=white,inner sep=2.5pt}]
      \matrix (m) [matrix of math nodes, row sep=\iffull 2em\else .8em\fi,column sep=2em,text
      height=1.5ex, text depth=0.25ex]
      { i_1 & s_1 \\
        i_2 & s_2 \\
        i_3 & s_3 \\};
      \path[dashed,to*,] (m-1-1) edge node[above] (t1) {$t_1$} (m-1-2);
      \path[dashed,to*,] (m-2-1) edge node[above] (t2) {$t_2$} (m-2-2);
      \path[to*,] (m-3-1) edge node[below] (t3) {$t_3$} (m-3-2);
      \path[-,font=\small] (m-1-1) edge node[left] {$\match_i^{12}$} (m-2-1);
      \path[-,font=\small] (m-2-1) edge node[left] {$\match_i^{23}$} (m-3-1);
      \draw[rounded corners=1pt,dashed]
      (t1)
      -- ($(t1)+(2cm,0)$)
      -- node[right] {$\lift{\match_e^{23} \circ \match_e^{12}}$}
      ($(t3)+(2cm,0)$)
      -- (t3) ;

      \draw[rounded corners=1pt]
      (m-1-1)
      -- ($(m-1-1)-(1cm,0)$)
      -- node[left] {$\match_i^{23} \circ \match_i^{12}$}
      ($(m-3-1)-(1cm,0)$)
      -- (m-3-1) ;

    \end{tikzpicture}
  \end{center}
\label{lem:ref-composition}
\end{lemma}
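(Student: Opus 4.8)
The plan is to unfold the definition of refinement on both layers and chain the two simulations together, back to front. Assume $i_1 \mathrel{(\match_i^{23}\circ\match_i^{12})} i_3$; by the definition of relation composition there is some intermediate input $i_2$ with $i_1 \match_i^{12} i_2$ and $i_2 \match_i^{23} i_3$. Suppose we are given a trace of the third machine, $\ftrace{\initf_3(i_3)}{t_3}$. First I would apply the refinement of $M_2$ into $M_3$ to the pair $i_2 \match_i^{23} i_3$ and the trace $t_3$, obtaining a trace $t_2$ with $\ftrace{\initf_2(i_2)}{t_2}$ and $t_2 \lift{\match_e^{23}} t_3$. Then I would apply the refinement of $M_1$ into $M_2$ to $i_1 \match_i^{12} i_2$ and this $t_2$, obtaining a trace $t_1$ with $\ftrace{\initf_1(i_1)}{t_1}$ and $t_1 \lift{\match_e^{12}} t_2$. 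This $t_1$ is the witness required by the conclusion; all that remains is to show $t_1 \lift{(\match_e^{23}\circ\match_e^{12})} t_3$.

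That last obligation follows from a small auxiliary fact: for any $\match \subseteq E_1\times E_2$ and $\match' \subseteq E_2\times E_3$, the lifting to trace relations satisfies $\lift{\match'}\circ\lift{\match} \subseteq \lift{(\match'\circ\match)}$ (in fact equality holds, but only this inclusion is needed). This is immediate from the definition of trace matching: if $t_1$ and $t_2$ have equal length with $x_i \match y_i$ for every index $i$, and $t_2$ and $t_3$ have equal length with $y_i \match' z_i$ for every $i$, then $t_1$ and $t_3$ have equal length and, taking $y_i$ as the witness at index $i$, we get $x_i \mathrel{(\match'\circ\match)} z_i$ for all $i < \length{t_1}$; the action-level lifting composes the same way on the degenerate $\tau$ case. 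Instantiating this with $\match = \match_e^{12}$, $\match' = \match_e^{23}$, and the trace $t_2$ in the middle yields exactly $t_1 \lift{(\match_e^{23}\circ\match_e^{12})} t_3$, completing the argument.

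There is essentially no hard part here: the proof is a diagram chase whose only content is threading the intermediate input $i_2$ and the intermediate trace $t_2$ through the two given refinements in the correct order, together with the observation that pointwise event matching composes. The one place to be careful is orientation — the $M_2$-into-$M_3$ refinement must be invoked \emph{first}, since it is the one that consumes the given concrete trace $t_3$ and produces $t_2$, and only then the $M_1$-into-$M_2$ refinement, which consumes $t_2$ and produces $t_1$ — and to keep the composition order in $\match_e^{23}\circ\match_e^{12}$ and $\match_i^{23}\circ\match_i^{12}$ consistent with the relation-composition convention so that the auxiliary lifting lemma lines up. I would also note that the same argument, read on states rather than inputs, gives the analogous composition law for state refinements via \autoref{dfn:refStates}, which is what \autoref{sec:qa-c-ref} will actually use when composing through the symbolic rule machine.
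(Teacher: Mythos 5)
Your proof is correct and is exactly the argument the paper intends: the paper offers only the commuting diagram for this lemma, and your back-to-front chase (obtain $i_2$ from the composed input relation, pull $t_2$ back from $t_3$ via the $M_2$/$M_3$ refinement, then pull $t_1$ back from $t_2$ via the $M_1$/$M_2$ refinement, and conclude with the pointwise composition property of $\lift{\cdot}$) is precisely what that diagram depicts. Your closing observations about the orientation of the two applications and about the analogous composition law for state refinements are also accurate.
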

\fi

\section{Refinements Between Concrete and Abstract}
\label{sec:qa-c-ref}

In this section, we show that (1) the concrete machine refines the symbolic
rule machine, and (2) vice versa.  Using (1) we will be able to show in
\autoref{sec:ni} that the concrete machine is noninterfering.  From (2) we
know that the concrete machine faithfully implements the abstract one,
exactly reflecting its execution traces.

\iffull
\subsection{Abstract and symbolic rule machines}
\else
\paragraph*{Abstract and symbolic rule machines}
\fi

The symbolic rule machine (with the rule table $\rulesabsifc$) is a
simple reformulation of the abstract machine.  Their step relations
are (extensionally) equal, and started from the same input data they
emit the same traces.

\begin{defn}[Abstract and symbolic rule machines as generic machines]\label{def:a-qam}
For both abstract and symbolic rule machines, input
data is a 4-tuple $(p,\mathit{args},n,\ottnt{L})$ where $p$ is a program,
$\mathit{args}$ is a list of atoms (the initial stack), and $n$ is the
size of the memory, initialized with $n$ copies of $\ottsym{0}  \mathord{\scriptstyle @}  \ottnt{L}$. The initial
PC is  $\ottsym{0}  \mathord{\scriptstyle @}  \ottnt{L}$.
\end{defn}

\begin{lemma}\label{lem:refAbsQuasi}
  The symbolic rule machine instantiated with the rule table
  $\rulesabsifc$ refines the abstract machine
  through \((=,=)\).
\end{lemma}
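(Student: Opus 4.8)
The plan is to prove this by exhibiting the identity relation as a state refinement and then invoking \autoref{lem:refStates}. Concretely, I would instantiate \autoref{dfn:refStates} with $\match_s = {=}$ on machine states and $\match_e = {=}$ on events (both machines have the same set of states and the same set of events — indeed the \emph{same} syntactic categories $\mu$, $\sigma$, $pc$, and actions $\alpha$). Since \autoref{def:a-qam} gives the two machines literally the same notion of input data and the same $\initf$, the compatibility hypothesis $i_1 = i_2 \Rightarrow \initf_1(i_1) = \initf_2(i_2)$ of \autoref{lem:refStates} holds trivially, so it suffices to establish the state-refinement obligation: whenever $s_1 = s_2$ and $\ftrace{s_2}{t_2}$ in the symbolic rule machine, then $\ftrace{s_1}{t_1}$ in the abstract machine with $t_1 \lift{=} t_2$. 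Taking $t_1 = t_2$, this reduces to showing that the multistep trace relation of the symbolic rule machine (with table $\rulesabsifc$) is contained in that of the abstract machine.

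The heart of the argument is therefore a single-step simulation lemma: for every state $s$, action $\alpha$, and state $s'$, if $s \stepr s'$ with action $\alpha$ in the symbolic rule machine instantiated with $\rulesabsifc$, then $s \stepr s'$ with action $\alpha$ in the abstract machine — and conversely (we actually only need the one direction used for refinement, but both are just as easy and the converse gives the matching statement for the opposite refinement later). I would prove this by case analysis on the instruction at the current PC, i.e.\ on which stepping rule fires. For each opcode, I unfold the symbolic-machine rule from \autoref{fig:qa_step}, which contains the premise $\vdash_{\rulesabsifc} \ruleeval{(L_{pc}, \ell_1, \ell_2, \ell_3)}{op}{L_{rpc}}{L_r}$, and then compute what this IFC enforcement judgment evaluates to by inlining the definition of $\mathit{Rule}_{\rulesabsifc}$ from \autoref{fig:rule_table} together with the evaluation rules \textsc{eval\_rule}, \textsc{lbot}, \textsc{lpc}, \textsc{var1}--\textsc{var3}, \textsc{join}, \textsc{true}, and \textsc{flows}. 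For instance, for $\ottkw{add}$ the table entry $\langle \mathtt{TRUE}, \mathtt{LAB}_\mathit{pc}, \mathtt{LAB}_1 \sqcup \mathtt{LAB}_2 \rangle$ forces $L_{rpc} = L_{pc}$ and $L_r = L_1 \vee L_2$, which is exactly the result label and PC label in rule \textsc{Step\_Add}; for $\ottkw{store}$ the $\mathit{allow}$ component $\mathtt{LAB}_1 \sqcup \mathtt{LAB}_\mathit{pc} \sqsubseteq \mathtt{LAB}_3$ evaluates (via \textsc{flows}) exactly to the side condition $L_1 \vee L_{pc} \le L_3$ of \textsc{Step\_Store}, and $e_r$ evaluates to $L_1 \vee (L_2 \vee L_{pc})$ matching the stored label. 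One does this once per opcode; each case is a short, mechanical unfolding.

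Having the step-relation equality, the trace inclusion follows by a routine induction on the derivation of $\multistep{s}{t}{s'}$ (the three rules in \eqref{eqn:traces}), and then refinement follows from \autoref{lem:refStates} as described. I do not expect any genuine obstacle here: the only thing requiring care is bookkeeping about the fixed-size 4-tuple of optional labels $(L_{pc}, \ell_1, \ell_2, \ell_3)$ and the $\dummytag$ placeholders — one must check that the components the abstract rule ignores really are the ones fed $\dummytag$ in the symbolic rule, and that the no-result opcodes ($\ottkw{jump}$, $\ottkw{bnz}$, $\ottkw{ret}$) correctly use the \textsc{eval\_rule\_NoRes} variant with $\ottsym{\_\_}$. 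This is the mildly tedious part, but it is finite case-checking, not a conceptual difficulty; the whole lemma is essentially a definitional unfolding dressed up as a refinement.
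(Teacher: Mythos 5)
Your proposal is correct and matches the paper's (largely implicit) argument: the paper justifies this lemma by observing that the two machines have identical input data and $\initf$, and that their step relations are extensionally equal, which is exactly the per-opcode unfolding of $\mathit{Rule}_{\rulesabsifc}$ and the DSL evaluation rules against \autoref{fig:abstractSteps} that you describe. Routing the trace equality through \autoref{lem:refStates} with the identity state relation is the natural formalization of that observation, so there is nothing further to add.
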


\iffull
\subsection{Concrete machine refines symbolic rule machine}
\else
\paragraph*{Concrete machine refines symbolic rule machine}
\fi
We prove this refinement using a fixed but arbitrary rule table,
$\mathcal{R}$, an abstract lattice of labels, and a concrete lattice of
tags.  The proof uses the correctness of the fault handler
(\autoref{sec:fault-handler-correct}), so we assume that the fault
handler of the concrete machine corresponds to the rule table of the
symbolic rule machine ($\phi = \phi_{\mathcal{R}}$) and that the
encoding of abstract labels as integer tags is correct.

\begin{defn}[Concrete machine as generic machine]
The input data of the concrete machine is a 4-tuple
$(p,\mathit{args},n,\mathtt{T})$ where $p$ is a program, $\mathit{args}$ is a
list of concrete atoms (the initial stack), and the initial memory is $n$
copies of $\ottsym{0}  \mathord{\scriptstyle @}  \mathtt{T}$. The initial PC is $\ottsym{0}  \mathord{\scriptstyle @}  \mathtt{T}$. The machine starts in user
mode, the cache is initialized with an illegal opcode so that the first
instruction always faults\iffull{} (giving the fault handler a chance to run and
install a correct rule without requiring the initialization process to
invent one)\fi, and the fault handler code parameterizing the
machine is installed in the initial privileged instruction memory $\phi$.
\end{defn}

The input data and events of the symbolic rule and concrete machines
are of different kinds; they are matched using relations ($\match_i^{c}$ and
$\match_e^{c}$ respectively) stipulating that
payload values should be equal and that labels should correspond to tags
modulo the function $ \mathsf{Tag} $ of the concrete lattice.
\begin{gather*}
\xrule{
 \mathit{args'}=\mathsf{map~(\lambda(\ottnt{n}  \mathord{\scriptstyle @}  \ottnt{L}).\:\ottnt{n}  \mathord{\scriptstyle @}   \mathsf{Tag} ( \ottnt{L} ) )}~\mathit{args}
 } { (p,\mathit{args},n,\ottnt{L}) \match_i^{c}
 (p,\mathit{args'},n, \mathsf{Tag} ( \ottnt{L} ) )
 }\quad
\xrule{\strut}{\ottnt{n}  \mathord{\scriptstyle @}  \ottnt{L} \match_e^{c} \ottnt{n}  \mathord{\scriptstyle @}   \mathsf{Tag} ( \ottnt{L} ) }
\end{gather*}

\begin{thm}\label{thm:refConcQuasi}
The concrete IFC machine refines the symbolic rule machine,
 through \((\match_i^{c},\match_e^{c})\).
\label{thm:qa-c-ref}
\end{thm}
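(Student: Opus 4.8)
The plan is to derive this theorem from \autoref{lem:refStates}. I will exhibit a state-matching relation $\match_s^{c}$ between symbolic-rule-machine states and concrete-machine states, check that it is compatible with $\match_i^{c}$ (related inputs yield related initial states), and then establish state refinement for $(\match_s^{c},\match_e^{c})$ in the sense of \autoref{dfn:refStates}. The relation $\match_s^{c}$ will require that: the user data memory and the value stack of the two machines agree on payloads, with labels/tags related by $ \mathsf{Tag} $ (using $ \mathsf{Lab} \circ  \mathsf{Tag}  = id$ we can also decode labels from tags, and stack shapes — including the distinction between return frames and ordinary atoms — correspond); the PC payloads agree and its label/tag are $ \mathsf{Tag} $-related; the concrete machine is in user mode with privileged instruction memory $\phi = \phi_{\mathcal{R}}$; and a \emph{cache well-formedness invariant} holds, namely that whenever the one-line rule cache currently ``hits'' for some opcode and input tags, its output tags are exactly $ \mathsf{Tag} $ of the labels dictated by $ \vdash_{ \mathcal{R} } $ for that opcode and the corresponding decoded input labels. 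For the compatibility check, the concrete initial state carries an illegal opcode in the cache, so it hits on nothing and the invariant holds vacuously; the remaining conjuncts follow from the definition of $\match_i^{c}$ and the fact that $\initf$ produces a stack with no return frames.

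The heart of the proof is a forward-simulation lemma, proved by case analysis on a single user-mode concrete transition. Suppose $s_1 \match_s^{c} s_2$ with $s_2$ a user-mode state that can step. Then either (i) the concrete machine, possibly after running the fault handler entirely in kernel mode, reaches a user-mode state $s_2'$, and the symbolic rule machine makes a single matching step $\step{s_1}{\action}{s_1'}$ with $s_1' \match_s^{c} s_2'$; or (ii) neither machine can make visible progress — the symbolic rule machine is stuck at $s_1$ and the concrete machine stops (halting at $\ottnt{pc}=-1$ or getting stuck for a non-IFC reason) — so both executions end with the same trace. Which case applies is decided by whether the symbolic IFC enforcement judgment $ \vdash_{ \mathcal{R} } $ is derivable for the faulting opcode and the (decoded) operand labels. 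If it is: on a cache hit, the well-formedness invariant tells us the cache output already encodes $L_{rpc}$ and $L_r$, so the concrete cache-hit rule for this opcode mirrors the corresponding symbolic-rule-machine rule, and $\match_s^{c}$ is re-established (the cache is untouched, so the invariant persists, and $\match_e^{c}$ relates the emitted actions since payloads agree and labels are $ \mathsf{Tag} $-related); on a cache miss, the concrete machine pushes a return frame, enters kernel mode at $0$, and by \autoref{lem:fhdl-succ} the handler runs silently to completion, overwrites the cache output with $( \mathsf{Tag} ( L_{rpc} ), \mathsf{Tag} ( L_r ))$, and returns to user mode re-executing the faulting instruction, which now hits in the (correctly updated) cache — reducing to the previous sub-case, with the pushed-and-popped return frame plus the restart amounting to exactly one symbolic step. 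If $ \vdash_{ \mathcal{R} } $ is \emph{not} derivable, then no cache-hit rule can fire (again by the invariant, a hit would force the judgment derivable), so the machine misses, and by \autoref{lem:fhdl-fail} the handler halts at $\ottnt{pc}=-1$; symmetrically, the symbolic rule machine has no applicable step, because its stepping rule for this opcode carries the same $ \vdash_{ \mathcal{R} } $ premise together with the same non-IFC side conditions (instruction lookup in the shared $\iota$, stack shape, pointer dereference), all preserved by $\match_s^{c}$.

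Chaining this lemma along a finite concrete execution producing $t_2$ yields a symbolic execution producing some $t_1$: each user-mode concrete step — together with any silent kernel-mode handler steps it drags along, which emit nothing since $ \text{\sf Output} $ is disallowed in kernel mode — contributes at most one event, matched by the event of the corresponding single symbolic step, so $t_1 \lift{\match_e^{c}} t_2$ (and an execution ending mid-handler produces the same trace as at the start of that handler call, already matched). Together with the degenerate case where both machines stop, this gives state refinement for $(\match_s^{c},\match_e^{c})$, and \autoref{lem:refStates}, whose hypothesis we verified above, upgrades it to the desired refinement $(\match_i^{c},\match_e^{c})$.

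I expect the main obstacle to be engineering $\match_s^{c}$ so that all three of the following hold at once: it is implied by $\match_i^{c}$ on initial states (forcing it to tolerate an arbitrary — here illegal — initial cache); it is preserved across the ``miss, run handler, hit'' round trip (where \autoref{lem:fhdl-succ}, the disjointness of user and kernel data memories, and the return-frame bookkeeping must all line up so that the round trip is exactly one symbolic step); and it is still strong enough to read off a correct result on a plain hit. The disallowed case is conceptually easy but rests on the bookkeeping observation that the concrete and symbolic stepping rules share precisely the same non-IFC premises, so neither can progress exactly when the other cannot.
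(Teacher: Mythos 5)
Your proposal is correct and follows essentially the same route as the paper: reduce to state refinement via \autoref{lem:refStates}, carry a cache-consistency invariant ($\mathcal{R} \vdash \kappa$) in the state-matching relation, match user-mode cache-hit steps with single symbolic steps, dispatch kernel excursions with \autoref{lem:fhdl-succ}/\autoref{lem:fhdl-fail}, and observe that an execution ending inside the (silent) handler adds nothing to the trace. The only cosmetic difference is that you bundle the miss--handler--restart round trip together with the subsequent hit into one symbolic step, whereas the paper counts the kernel excursion as zero symbolic steps (\autoref{lemma:ref-fault}) and the restarted hit as one (\autoref{lemma:ref-non-fault}); likewise, your claim that the symbolic machine is stuck in the disallowed case is true but not needed for this direction of refinement.
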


We prove this theorem by a refinement via states
(\autoref{lem:ref-state-qa-c}); this, in turn, relies on two technical
lemmas (\ref{lemma:ref-non-fault} and~\ref{lemma:ref-fault}).

\iffull
We begin by defining a matching relation $\match_s^c$ between the states of the concrete
and symbolic rule machines such that
\begin{enumerate}
\item \(i_q\match_i^c i_c \Rightarrow \initf_q(i_q) \match_s^c \initf_c(i_c)\),
\item $(\match_s^c, \match_e^c)$ is a state refinement of the symbolic rule machine into the concrete machine.
\end{enumerate}
We define $\match_s^c$ as
\else
The matching relation $\match_s^c$ between the states of the concrete
and symbolic rule machines is defined as\fi
\iffull\begin{equation}\else\begin{equation*}\fi\label{def:concStateMatch}
\xrule{
        \begin{array}{c}
        \mathcal{R} \vdash \kappa \quad
        \sigma_q \match_\sigma \concretesymbol{\sigma}_c \quad \mu_q \match_m \concretesymbol{\mu}_c
        \end{array}
      }
      { \mu_q,[ \sigma_q ],\ottnt{n}  \mathord{\scriptstyle @}  \ottnt{L} \; \match_s^c \;
         \text{\sf u} ,\kappa,\concretesymbol{\mu}_c,[ \concretesymbol{\sigma}_c ],\ottnt{n}  \mathord{\scriptstyle @}   \mathsf{Tag} ( \ottnt{L} )  }
\iffull\end{equation}\else\end{equation*}\fi
where the new notations are defined as follows.  The relation
$\match_m$ demands that the memories be equal up to the conversion of
labels to concrete tags. The relation $\match_\sigma$ on stacks is
similar, but additionally requires that return frames in the concrete
stack have their privilege bit set to $ \text{\sf u} $. The basic idea is
to match, in $\match_s^{c}$, only concrete states that are in user
mode.  We also need to track an extra invariant,
$\mathcal{R} \vdash \kappa$,
which means that the cache $\kappa$ is consistent with the
table $\mathcal{R}$---i.e., $\kappa$ never lies.  More
precisely, the output part of $\kappa$ represents the result of
applying the symbolic rule judgment of $\mathcal{R}$ to
the opcode and labels represented in the input part of $\kappa$.
\[
\begin{array}{rl}
\mathcal{R} \vdash  [  {\kappa_i}  ,  \kappa_o  ]   \triangleq \; & \forall \ottnt{opcode}\; \ottnt{L_{{\mathrm{1}}}}\; \ottnt{L_{{\mathrm{2}}}}\; \ottnt{L_{{\mathrm{3}}}}\; L_{pc},\\[.2em]
& \quad {\kappa_i} =  \begin{array}{|@{\;}l@{\;}|@{\;}l@{\;}|@{\;}l@{\;}|@{\;}l@{\;}|@{\;}l@{\;}|}
                       \hline
                           \ottnt{opcode}  &   \mathsf{Tag} ( L_{pc} )   &   \mathsf{Tag} ( \ottnt{L_{{\mathrm{1}}}} )   &   \mathsf{Tag} ( \ottnt{L_{{\mathrm{2}}}} )   &   \mathsf{Tag} ( \ottnt{L_{{\mathrm{3}}}} )   \\
                       \hline
                       \end{array}  \Rightarrow \\[.2em]
& \quad
  \exists L_{rpc}\; L_r,  \vdash_{ \mathcal{R} } \ruleeval{ \ottsym{(}  L_{pc}  \mathord{,}\,  \ottnt{L_{{\mathrm{1}}}}  \mathord{,}\,  \ottnt{L_{{\mathrm{2}}}}  \mathord{,}\,  \ottnt{L_{{\mathrm{3}}}}  \ottsym{)} }{ \ottnt{opcode} }{ L_{rpc} }{ L_r }  ~ \land ~
                             \kappa_o=\ottsym{(}  \mathsf{Tag} \, \ottsym{(}  L_{rpc}  \ottsym{)}  \mathord{,}\,  \mathsf{Tag} \, \ottsym{(}  L_r  \ottsym{)}  \ottsym{)}
\end{array}
\]

To prove refinement via states, we must account for two situations.
First, suppose the concrete machine can take a user step. In this case, we
match that step with a single symbolic rule machine step. We write
$\privstate{cs}{\pi}$ to denote a concrete state $cs$ whose privilege
bit is $\pi$.

\begin{lemma}[Refinement, non-faulting concrete step]
\label{lemma:ref-non-fault}
Let $\privstate{cs_1}{ \text{\sf u} }$ be a concrete state and suppose that
$\step{\privstate{cs_1}{ \text{\sf u} }}{\alpha_c}{\privstate{cs_2}{ \text{\sf u} }}$. Let
$qs_1$ be a symbolic rule machine state with
$qs_1 \match_s^{c} \privstate{cs_1}{ \text{\sf u} }$.  Then there exist
$qs_2$ and $\alpha_a$ such that
$\step{qs_1}{\alpha_a}{qs_2}$, with
$qs_2 \match_s^{c} \privstate{cs_2}{ \text{\sf u} }$, and
$\alpha_a \lift{\match_e^c} \alpha_c$.
\iffull Graphically:
\begin{center}
  \begin{tikzpicture}[descr/.style={fill=white,inner sep=2.5pt}]
    \matrix (m) [matrix of math nodes, row sep=\iffull 2em\else .8em\fi,column sep=2em,text
    height=1.5ex, text depth=0.25ex]
    { qs_1                        & & qs_2 \\
      \privstate{cs_1}{ \text{\sf u} } & & \privstate{cs_2}{ \text{\sf u} } \\ };
    \path[->,font=\small] (m-2-1) edge node[below] (ec) {\small  $\alpha_c$} (m-2-3);
    \path[-,font=\small] (m-2-1) edge node[left] {$\match_s^{c}$} (m-1-1);
    \path[->,densely dashed] (m-1-1) edge node [above] (ea) {\small  $\alpha_a$} (m-1-3);
    \path[densely dashed] (m-2-3) edge node [right] {$\match_s^{c}$}
    (m-1-3);
    \draw[rounded corners=1pt,dashed]
    (ec)
    -- ($(ec)+(0,-0.3cm)$)
    -- ($(ec)+(2cm,-0.3cm)$)
    -- node[right] {$\lift{\match_e^c}$}
    ($(ea)+(2cm,+0.3cm)$)
    -- ($(ea)+(0,+0.3cm)$)
    -- (ea) ;
  \end{tikzpicture}
\end{center}
\fi
\end{lemma}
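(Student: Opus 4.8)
The plan is to prove the lemma by inverting the concrete step, observing that a user-mode step that lands back in user mode must be a \emph{cache-hit} transition, and then transferring the hardware's cached result back to the symbolic rule machine using the cache-consistency invariant $\mathcal{R}\vdash\kappa$ that is baked into $\match_s^c$. First I would invert the hypothesis $\step{cs_1}{\alpha_c}{cs_2}$. Since $cs_1$ has privilege bit $\text{\sf u}$, only the user-mode rules can fire, and every cache-miss rule (\textsc{CAdd\_F}, \textsc{CLd\_F}, \ldots) sets the privilege bit to $\text{\sf k}$, contradicting the assumption that $cs_2$ is again in user mode. Hence we are in one of the cache-hit rules (\textsc{COut}, \textsc{CAdd}, \textsc{CPsh}, \textsc{CLd}, \textsc{CSt}, \textsc{CJmp}, \textsc{CBnz}, \textsc{CCll}, \textsc{CRet}), and for the right instruction (if the cache held a stale opcode the corresponding miss rule would fire instead, again leaving user mode). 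In each such case the side conditions pin down the opcode $\iota(n)$, force the cache's input part $\kappa_i$ to list that opcode together with the PC tag and the relevant argument tags (unused slots being $\mathtt{T}_\mathtt{D}$), and determine $cs_2$ from the cache's output part $\kappa_o$ plus the ordinary, tag-oblivious effect of the instruction.

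Next, from $qs_1\match_s^c cs_1$ I obtain that the stacks are related by $\match_\sigma$ and the memories by $\match_m$, so the tags occurring in $\kappa_i$ are exactly $\mathsf{Tag}(L_{pc}),\mathsf{Tag}(\ell_1),\ldots$ for the labels $L_{pc},\ell_1,\ldots$ carried by the matching atoms on the symbolic stack and PC. Applying the invariant $\mathcal{R}\vdash\kappa$ to this $\kappa_i$ yields labels $L_{rpc},L_r$ with $\vdash_{\mathcal{R}}\ruleeval{(L_{pc},\ell_1,\ell_2,\ell_3)}{\mathit{opcode}}{L_{rpc}}{L_r}$ and $\kappa_o=(\mathsf{Tag}(L_{rpc}),\mathsf{Tag}(L_r))$. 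In particular the symbolic IFC judgment holds --- so in the \textsc{Store} case the $\mathit{allow}$ premise (which the concrete hit rule never re-checks) is discharged for free --- and I can fire the matching rule of the symbolic rule machine from \autoref{fig:qa_step} to produce $qs_2$.

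Then I would re-establish $qs_2\match_s^c cs_2$: the memories still match (for \textsc{Store}, both sides write the same payload tagged $L_r$ resp.\ $\mathsf{Tag}(L_r)$, and $\match_m$ is stable under corresponding pointwise updates); the stacks still match, where for \textsc{CCll}/\textsc{CRet} the return frame pushed/popped carries privilege bit $\text{\sf u}$, as $\match_\sigma$ demands; the new PC is $n'@L_{rpc}$ resp.\ $n'@\mathsf{Tag}(L_{rpc})$; and since a cache-hit step leaves $\kappa$ untouched, the invariant $\mathcal{R}\vdash\kappa$ is preserved. For the action, in the \textsc{Output} case $\alpha_c=m@\mathsf{Tag}(L_r)$ and $\alpha_a=m@L_r$, related by $\match_e^c$; in every other case both actions are $\tau$. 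Hence $\alpha_a\lift{\match_e^c}\alpha_c$, completing the square.

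The difficulty here is organizational rather than conceptual: there is one subcase per instruction, and in each the matching data ($\match_\sigma$, $\match_m$, and the label/tag correspondence on the PC) must be threaded through both the concrete hit semantics and the symbolic step and then reconstructed on the successor states. The two cases that need genuine care are \textsc{Store} --- where the soundness side condition of the symbolic rule is recovered only indirectly, via the cache-consistency invariant rather than from the concrete transition itself --- and \textsc{Call}/\textsc{Ret}, where the privilege-bit discipline on return frames must be tracked to keep $\match_\sigma$ intact. In the Coq development these subcases are largely dispatched by uniform tactics once the inversion and the invariant application have been set up.
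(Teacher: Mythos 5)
Your proof is correct and follows essentially the same route as the paper's: a user-to-user step must be a cache hit, the matching relation identifies the cache's input tags with the labels of the symbolic state, and the invariant $\mathcal{R}\vdash\kappa$ converts the cached output into an instance of the IFC enforcement judgment, which licenses the corresponding symbolic step and re-establishes the match. The paper's own argument is just a terser version of this; your additional observations about the \textsc{Store} allow-condition and the privilege bits on return frames are accurate elaborations of details the paper leaves implicit.
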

\iffull \begin{proof}
We know that $qs_1 \match_s^{c} \privstate{cs_1}{ \text{\sf u} }$.  By
definition of $\match_s^{c}$ in~\eqnref{def:concStateMatch}, $qs_1$ and
$\privstate{cs_1}{ \text{\sf u} }$ are at the same opcode with the same
stack and memory (up to translation between labels and tags), and
$\mathcal{R} \vdash \kappa(\privstate{cs_1}{ \text{\sf u} })$.  Thus
$\kappa(\privstate{cs_1}{ \text{\sf u} })$ matches a line of the symbolic
IFC rule table, and since the concrete machine performs a user step
from $\privstate{cs_1}{ \text{\sf u} }$ to $\privstate{cs_2}{ \text{\sf u} }$,
it is a line that allows a step to be taken.  We conclude that the
symbolic rule machine is able to perform the step to $qs_2$ as
required.
\end{proof}
\else
Since the concrete machine is able to make a user step, the input part
of the cache must match the opcode and data of the current state. But
the invariant $\mathcal{R} \vdash \kappa$ says that the corresponding
symbolic rule judgment holds.  Hence the symbolic rule machine can
also make a step from $qs_2$, as required.
\fi

The second case is when the concrete machine faults into kernel mode
and returns to user mode after some number of steps.

\begin{lemma}[Refinement, faulting concrete step]
\label{lemma:ref-fault}
Let $\privstate{cs_0}{ \text{\sf u} }$ be a concrete state, and suppose
that the concrete machine does a faulting step to
$\privstate{cs_1}{ \text{\sf k} }$, stays in kernel mode until
$\privstate{cs_n}{ \text{\sf k} }$, and then exits kernel mode by stepping to
$\privstate{cs_{n+1}}{ \text{\sf u} }$.  Let $qs_0$ be a state of the
symbolic rule machine that matches $\privstate{cs_0}{ \text{\sf u} }$.
Then $qs_0 \match_s^{c} \privstate{cs_{n+1}}{ \text{\sf u} }$.
\iffull
Graphically:
\begin{center}
  \begin{tikzpicture}[descr/.style={fill=white,inner sep=2.3pt}]
      \matrix (n) [matrix of math nodes, row sep=3em, column sep=1.1em,text
      height=1ex, text depth=0.25ex]
      {  qs_0  \\
        \privstate{cs_0}{ \text{\sf u} } & \privstate{cs_1}{ \text{\sf k} } & \privstate{cs_n}{ \text{\sf k} } & \privstate{cs_{n+1}}{ \text{\sf u} }  \\};
      \path[->,font=\small] (n-2-1) edge node[below] {$\esilent$} (n-2-2);
      \path[tok*,font=\small] (n-2-2) edge node[below] {} (n-2-3);
      \path[->,font=\small] (n-2-3) edge node[below] {$\esilent$} (n-2-4);
      \path[-,font=\small] (n-2-1) edge node[left] {$\match_s^{c}$} (n-1-1);
      \path[densely dashed] (n-1-1) edge node [above] {$\match_s^{c}$} (n-2-4);
    \end{tikzpicture}
    \end{center}
\fi
\end{lemma}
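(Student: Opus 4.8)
The plan is to reduce this lemma to the fault-handler correctness results of \autoref{sec:fault-handler-correct}, together with the determinism of the concrete machine and the injectivity of the tag-encoding function $\mathsf{Tag}$ (which follows from $\mathsf{Lab}\circ\mathsf{Tag}=id$). The guiding intuition is that a fault leaves the concrete machine ``where it was'': when the handler returns, the user memory, stack and PC of $cs_{n+1}$ coincide with those of $cs_0$, and only the cache has changed --- in a way consistent with $\mathcal{R}$ --- so $qs_0$ still matches.

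First I would analyse the single faulting step $\privstate{cs_0}{ \text{\sf u} }\to\privstate{cs_1}{ \text{\sf k} }$. From $qs_0\match_s^{c}\privstate{cs_0}{ \text{\sf u} }$ we know that $cs_0$ is in user mode, that its PC has payload $n$ (the PC payload of $qs_0$) and tag $\mathsf{Tag}(L_{pc})$, that its stack and memory are the $\mathsf{Tag}$-images of those of $qs_0$, and that $\mathcal{R}\vdash\kappa_0$ holds for $cs_0$'s cache. A case analysis on the faulting instruction $\iota(n)$ against the cache-miss rules (CAdd\_F, CPsh\_F, \ldots) then shows that $cs_1$ is in kernel mode with PC $0@\mathtt{T}_\mathtt{D}$, unchanged user memory, cache $[\kappa_j,\kappa_{\mathtt{D}}]$ where $\kappa_j$ is exactly the five-slot tag encoding of $(\ottnt{opcode},L_{pc},\ell_1,\ell_2,\ell_3)$ for the inputs $\ell_i$ relevant to that opcode (irrelevant slots being $\mathtt{T}_\mathtt{D}=\mathsf{Tag}(\dummytag)$, matching the ``don't care'' positions of the symbolic IFC judgment), and stack obtained by pushing the return frame $(n@\mathsf{Tag}(L_{pc}),\, \text{\sf u} )$ onto the stack of $cs_0$.

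Next I would establish that the symbolic IFC judgment $\vdash_{\mathcal{R}}\ruleeval{(L_{pc},\ell_1,\ell_2,\ell_3)}{\ottnt{opcode}}{L_{rpc}}{L_r}$ holds for some $L_{rpc},L_r$. There is a genuine dichotomy here, since the judgment holds iff the looked-up $\ottkw{allow}$ expression evaluates to true. If it failed, \autoref{lem:fhdl-fail} would say that the unique (by determinism) run out of $cs_1$ stays in kernel mode and gets stuck at PC $-1$ --- contradicting the hypothesis that it reaches the user-mode state $cs_{n+1}$. Hence the judgment holds, and \autoref{lem:fhdl-succ} applies to the run starting at $cs_1$, which is exactly the run the lemma describes: $cs_{n+1}$ is the user-mode state whose memory and stack are those of $cs_0$ (return frame popped, its $ \text{\sf u} $ privilege discarded), whose PC is that of $cs_0$, and whose cache is $\kappa'=[\kappa_j,\kappa_o']$ with $\kappa_o'=(\mathsf{Tag}(L_{rpc}),\mathsf{Tag}(L_r))$.

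Finally I would verify $qs_0\match_s^{c}\privstate{cs_{n+1}}{ \text{\sf u} }$ straight from the definition of $\match_s^{c}$. The stack, memory and PC parts transfer verbatim from $qs_0\match_s^{c}cs_0$, because $cs_{n+1}$ agrees with $cs_0$ on all three (including keeping every return-frame privilege equal to $ \text{\sf u} $); the only fresh obligation is the cache invariant $\mathcal{R}\vdash\kappa'$. Since $\mathsf{Tag}$ is injective, the unique label tuple whose tag encoding is $\kappa_j$ is $(\ottnt{opcode},L_{pc},\ell_1,\ell_2,\ell_3)$ itself, and for it we have precisely the judgment just obtained together with $\kappa_o'=(\mathsf{Tag}(L_{rpc}),\mathsf{Tag}(L_r))$ --- exactly what $\mathcal{R}\vdash\kappa'$ demands. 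The main obstacle I anticipate is the uniform bookkeeping in the faulting-step analysis: one must check, over every opcode, that the cache contents $\kappa_j$ and the stack shape handed to \autoref{lem:fhdl-succ} line up with the arguments the symbolic rule judgment expects --- in particular the treatment of the default tag and of the ignored ``don't care'' slot --- and that the hypothesised kernel run genuinely is the unique one those two lemmas describe; once this is aligned, re-establishing the cache invariant is a one-line injectivity argument.
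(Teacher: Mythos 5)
Your proposal is correct and follows essentially the same route as the paper's proof: analyse the faulting step to see that the cache input now encodes the opcode and relevant tags, then case-split on whether the symbolic IFC judgment holds, using Lemma~\ref{lem:fhdl-fail} (plus determinism) to rule out the disallowed case by contradiction with the hypothesised return to user mode, and Lemma~\ref{lem:fhdl-succ} to pin down $cs_{n+1}$ in the allowed case. The extra detail you supply on re-establishing $\mathcal{R}\vdash\kappa'$ via injectivity of $\mathsf{Tag}$ is exactly the bookkeeping the paper's one-paragraph proof leaves implicit.
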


\iffull{}\else
Lemmas \ref{lemma:ref-non-fault} and \ref{lemma:ref-fault} can be summarized graphically by:
\begin{center}
  \begin{tikzpicture}[descr/.style={fill=white,inner sep=2.3pt}]
    \matrix (m) [matrix of math nodes, row sep=\iffull 2em\else .8em\fi,column sep=2em,text
    height=1.5ex, text depth=0.25ex]
    { qs_1                        & & qs_2 \\
      \privstate{cs_1}{ \text{\sf u} } & & \privstate{cs_2}{ \text{\sf u} } \\ };
    \path[->,font=\small] (m-2-1) edge node[below] (ec) {\small  $\alpha_c$} (m-2-3);
    \path[-,font=\small] (m-2-1) edge node[left] {$\match_s^{c}$} (m-1-1);
    \path[->,densely dashed] (m-1-1) edge node [above] (ea) {\small  $\alpha_a$} (m-1-3);
    \path[densely dashed] (m-2-3) edge node [right] {$\match_s^{c}$}
    (m-1-3);
    \draw[rounded corners=1pt,dashed]
    (ec)
    -- ($(ec)+(0,-0.3cm)$)
    -- ($(ec)+(1.7cm,-0.3cm)$)
    -- node[right] {$\lift{\match_e^c}$}
    ($(ea)+(1.7cm,+0.3cm)$)
    -- ($(ea)+(0,+0.3cm)$)
    -- (ea) ;
    \end{tikzpicture}
    \hfill
  \begin{tikzpicture}[descr/.style={fill=white,inner sep=2.3pt}]
      \matrix (n) [matrix of math nodes, row sep=3em, column sep=1.1em,text
      height=1ex, text depth=0.25ex]
      {  qs_0  \\
        \privstate{cs_0}{ \text{\sf u} } & \privstate{cs_1}{ \text{\sf k} } & \privstate{cs_n}{ \text{\sf k} } & \privstate{cs_{n+1}}{ \text{\sf u} }  \\};
      \path[->,font=\small] (n-2-1) edge node[below] {$\esilent$} (n-2-2);
      \path[tok*,font=\small] (n-2-2) edge node[below] {} (n-2-3);
      \path[->,font=\small] (n-2-3) edge node[below] {$\esilent$} (n-2-4);
      \path[-,font=\small] (n-2-1) edge node[left] {$\match_s^{c}$} (n-1-1);
      \path[densely dashed] (n-1-1) edge node [above] {$\match_s^{c}$} (n-2-4);
    \end{tikzpicture}
    \end{center}
\fi
\iffull
\begin{proof}
Since the concrete machine performs a faulting step from
$\privstate{cs_0}{ \text{\sf u} }$ to $\privstate{cs_1}{ \text{\sf k} }$, we
know that the current cache input,
${\kappa_i}(\privstate{cs_1}{ \text{\sf k} })$, corresponds to the current
instruction and the tags it manipulates (they have been put there when
entering kernel mode).  Now, there are two cases.  If evaluating the
corresponding IFC rule at the symbolic rule level succeeds, then we
apply~\autoref{lem:fhdl-succ} to conclude directly.  Otherwise, we
apply~\autoref{lem:fhdl-fail} and derive that the fault handler ends
up in a failing state in kernel mode.  This contradicts our initial
hypothesis saying that the concrete machine performed a sequence of
steps returning to user-mode.
\end{proof}
\else
To prove this lemma, we must consider two cases.  If the corresponding
symbolic rule judgment holds, then we apply~\autoref{lem:fhdl-succ} to
conclude directly---\IE the machine exits kernel code into user mode.
Otherwise, we apply~\autoref{lem:fhdl-fail} and derive a contradiction
that the fault handler ends in a failing state in kernel mode.
\fi

Given two matching states of the concrete and symbolic rule machines,
and a concrete execution starting at that concrete state, these two
lemmas can be applied repeatedly to build a matching execution of the
symbolic rule machine. There is just one last case to consider, namely
when the execution ends with a fault into kernel mode and never
returns to user mode. However, no output is produced in this case,
guaranteeing that the full trace is matched. We thus derive the
following refinement via states, of which \autoref{thm:qa-c-ref} is a
corollary.

\begin{lemma}
\label{lem:ref-state-qa-c}
The pair $(\match_s^{c},\match_e^{c})$ defines a refinement via states
between the symbolic rule machine and the concrete machine.
\end{lemma}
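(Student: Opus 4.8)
The plan is to establish \autoref{dfn:refStates} directly for $M_1$ the symbolic rule machine and $M_2$ the concrete machine: given a symbolic state $qs$, a concrete state $cs$ with $qs \match_s^c cs$, and a concrete execution $\ftrace{cs}{t_2}$, exhibit a symbolic execution $\ftrace{qs}{t_1}$ with $t_1 \lift{\match_e^c} t_2$. Observe first that, by construction of $\match_s^c$, the matched concrete state is necessarily in user mode with an $\mathcal{R}$-consistent cache; write it $\privstate{cs}{\text{\sf u}}$. I would then proceed by strong induction on the number of concrete steps in the (finite) execution witnessing $\ftrace{cs}{t_2}$, inspecting the first concrete step out of $\privstate{cs}{\text{\sf u}}$ (if there is none, $t_2 = \tnil$ and we take $t_1 = \tnil$). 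By inspection of the concrete step relation in user mode, there are exactly two cases: an ordinary step that stays in user mode, or a cache-miss step that faults into kernel mode.

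In the first case, $\step{\privstate{cs}{\text{\sf u}}}{\alpha_c}{\privstate{cs'}{\text{\sf u}}}$, and \autoref{lemma:ref-non-fault} supplies a symbolic step $\step{qs}{\alpha_a}{qs'}$ with $qs' \match_s^c \privstate{cs'}{\text{\sf u}}$ and $\alpha_a \lift{\match_e^c} \alpha_c$. The residual concrete execution from $\privstate{cs'}{\text{\sf u}}$ is strictly shorter, so the induction hypothesis yields a matching symbolic trace for it; prepending the event of $\alpha_a$ (if any) gives the desired $t_1$.

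In the faulting case, the concrete machine enters kernel mode; since $\text{\sf Output}$ is disallowed in kernel mode, every kernel step is silent and contributes no event. If the witnessing execution eventually returns to user mode, at some $\privstate{cs_{n+1}}{\text{\sf u}}$, then \autoref{lemma:ref-fault} (itself proved by a case split on whether the symbolic IFC rule judgment succeeds, invoking \autoref{lem:fhdl-succ} or \autoref{lem:fhdl-fail}) gives $qs \match_s^c \privstate{cs_{n+1}}{\text{\sf u}}$; the residual execution from $\privstate{cs_{n+1}}{\text{\sf u}}$ is strictly shorter and produces the whole of $t_2$, so the induction hypothesis finishes this subcase. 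If instead the execution stays in kernel mode to its end, no event is produced at all, so $t_2 = \tnil$ and $t_1 = \tnil$ works. This closes the induction and proves the lemma; \autoref{thm:qa-c-ref} then follows as a corollary, since $\match_i^c$ maps matching inputs to $\match_s^c$-related initial states (the initial cache holds an illegal opcode, making $\mathcal{R} \vdash \kappa$ vacuously true, and the initial memories and stacks match by construction), so \autoref{lem:refStates} upgrades the state refinement into a refinement.

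The main obstacle is the bookkeeping around the kernel-mode interlude: \autoref{lemma:ref-fault} is phrased for a complete fault-to-return round trip, whereas the witnessing concrete execution may be an arbitrary finite prefix that is cut off inside the handler, so one must argue separately that such a prefix emits no observable events and that the remaining cases reduce to it. A secondary concern is choosing a measure that strictly decreases both across ordinary user steps and across an entire fault-handling episode; counting concrete steps suffices, but one must check that consuming the faulting step together with its handler episode always leaves a strictly shorter residual execution.
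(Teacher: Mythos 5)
Your proposal is correct and follows essentially the same route as the paper: repeatedly applying \autoref{lemma:ref-non-fault} for non-faulting user steps and \autoref{lemma:ref-fault} for complete fault-handling episodes, with the one residual case being an execution that faults into kernel mode and never returns, which emits no events (since $\text{\sf Output}$ is forbidden in kernel mode) and is therefore trivially matched by the empty symbolic trace. Your additional remarks about the decreasing measure and the truncated-in-kernel prefix are exactly the bookkeeping the paper's terse argument leaves implicit.
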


\iffull
\subsection{Concrete machine refines abstract machine}
\else
\paragraph*{Concrete machine refines abstract machine}
\fi
By composing the refinement of \autoref{lem:refAbsQuasi} and the refinement
of \autoref{thm:qa-c-ref} instantiated to the concrete machine running
$\tiniFH$, we can conclude that the concrete machine refines the abstract one:

\iffull
\begin{thm}
  \label{thm:concRefinesAbs}
  The concrete IFC machine refines the abstract IFC machine via
  $(\match_s^{c},\match_e^{c})$.
\end{thm}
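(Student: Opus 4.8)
The plan is to obtain this refinement purely by transitivity through the symbolic rule machine, reusing the two refinements already established together with the composition lemma; essentially all of the difficult work has been done, so the proof is largely bookkeeping.

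First I would pin down the intermediate layer. Instantiate the symbolic rule machine with the rule table $\rulesabsifc$ of \autoref{fig:rule_table}, and instantiate the concrete machine so that its privileged instruction memory holds the generated handler $\tiniFH = \ottkw{genFaultHandler}~\rulesabsifc$; fix moreover a concrete lattice whose encoding/decoding functions $\mathsf{Tag}$ and $\mathsf{Lab}$ satisfy $\mathsf{Lab}\circ\mathsf{Tag} = id$ (for example the two-point lattice of \autoref{sec:conc}, for which the specifications of $\ottkw{genBot}$, $\ottkw{genJoin}$, and $\ottkw{genFlows}$ were verified in \autoref{sec:fault-handler-correct}). These are exactly the hypotheses under which \autoref{thm:refConcQuasi} was proved, so that theorem applies off the shelf and gives a refinement of the symbolic rule machine into the concrete machine through $(\match_i^{c},\match_e^{c})$.

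Second, \autoref{lem:refAbsQuasi} already provides a refinement of the abstract machine into the symbolic rule machine (with table $\rulesabsifc$) through the pair of equality relations $(=,=)$: the two machines have extensionally the same step relation and the same initialization function (\autoref{def:a-qam}), so any trace of one is literally a trace of the other. Third, I would invoke the refinement composition lemma (\autoref{lem:refCompose}) on these two refinements, in the order abstract $\to$ symbolic rule $\to$ concrete. Composing the matching relations yields $\match_i^{c}\circ (=)$ on initial data and $\match_e^{c}\circ (=)$ on events; since composition with the identity relation is the identity, these collapse to $\match_i^{c}$ and $\match_e^{c}$ themselves (reassociated so the left component ranges over abstract inputs/events and the right over concrete ones), which is precisely the claimed refinement of the abstract IFC machine into the concrete IFC machine.

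I do not expect a serious obstacle here. The only points that need care are: (i) checking that the instantiations genuinely coincide — the \emph{same} rule table $\rulesabsifc$ drives both the symbolic rule machine and the concrete machine's handler, and the chosen concrete lattice is correct — so that \autoref{thm:refConcQuasi} is applicable without reproof; and (ii) the easy-to-botch but routine manipulation $R\circ(=) = R$ for the composed relations, together with keeping straight which side of each relation is ``abstract'' and which is ``concrete'' when chaining three machines. Everything else (fault-handler correctness in \autoref{lem:fhdl-succ} and \autoref{lem:fhdl-fail}, the state-refinement Lemmas~\ref{lemma:ref-non-fault} and~\ref{lemma:ref-fault}, and the lattice specifications) is already in hand.
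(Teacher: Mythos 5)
Your proposal is correct and follows exactly the paper's route: the paper proves this theorem by composing the refinement of \autoref{lem:refAbsQuasi} (abstract into symbolic rule machine via $(=,=)$) with that of \autoref{thm:refConcQuasi} instantiated to the concrete machine running $\tiniFH$, using the composition lemma, just as you describe. Your additional remarks about checking that the same table $\rulesabsifc$ and a correct concrete lattice are used, and about collapsing the composite relations with the identity, are the right (and only) bookkeeping points.
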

\fi

\iffull
\subsection{Abstract machine refines concrete machine}
\else
\paragraph*{Abstract machine refines concrete machine}
\fi
The previous refinement, $(\match_s^{c},\match_e^{c})$, would also
hold if the fault handler never returned when called. So, to ensure
the concrete machine reflects the behaviors of the abstract machine,
we next prove an inverse refinement:
\begin{thm}
  \label{cor:absRefinesConc}
  The abstract IFC machine refines the concrete IFC machine via
  \((\match_i^{-c},\match_e^{-c})\), where
  \(\match_i^{-c}\) and \(\match_e^{-c}\) are the relational inverses
  of \(\match_i^c\) and \(\match_e^{c}\).
\end{thm}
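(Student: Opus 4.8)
The plan is to reduce \autoref{cor:absRefinesConc} to a single \emph{reverse} simulation between the symbolic rule machine and the concrete machine, and then compose. The abstract and symbolic rule machines have extensionally equal step relations and the same initial states (cf.\ \autoref{lem:refAbsQuasi} and \autoref{def:a-qam}), so the abstract machine refines the symbolic rule machine via $(=,=)$ in both directions, trivially. It therefore suffices to prove that the symbolic rule machine refines the concrete machine via $(\match_i^{-c},\match_e^{-c})$ --- i.e.\ that every trace of the symbolic rule machine is matched (through $\match_e^{-c}$) by a trace of the concrete machine running $\tiniFH$ --- and then to compose this with the abstract/symbolic equivalence using refinement composition (\autoref{lem:refCompose}). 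So the real content is a forward simulation of the symbolic rule machine by the concrete one, which is the mirror image of \autoref{lem:ref-state-qa-c}.

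I would establish it as a refinement via states, using \autoref{lem:refStates}. Take the state relation $\match_s^{-c}$ to be the relational inverse of the relation $\match_s^{c}$ used in the forward proof (so it relates a symbolic state to a \emph{user-mode} concrete state with equal payloads, labels matching tags via $\mathsf{Tag}$, concrete return frames carrying privilege bit $ \text{\sf u} $, and cache invariant $\mathcal{R}\vdash\kappa$). First I would check that matching input data yields matching initial states: payloads and label/tag correspondence hold by construction of $\initf$; the initial stacks contain no return frames (the stipulation after \autoref{def:gen-mach}); and the garbage initial cache satisfies $\mathcal{R}\vdash\kappa$ \emph{vacuously}, since no legal opcode matches its (illegal-opcode) input part. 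Then the bulk of the work is a single-step simulation lemma: if $qs_1 \match_s^{-c} cs_1$ (so $cs_1$ is in user mode) and $\step{qs_1}{\action}{qs_2}$, then $cs_1$ reaches some user-mode $cs_2$ with $qs_2 \match_s^{-c} cs_2$, emitting a trace that matches $\action$ through $\match_e^{-c}$. Iterating this over the finite symbolic execution that produced a given trace $t_2$ (by induction on its length; when the symbolic machine gets stuck the trace simply ends) assembles the required concrete trace $t_1$ with $t_1 \Matchu{e} t_2$.

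For the single-step lemma there are two cases, matching the structure of the concrete user-mode semantics. In the \emph{cache-hit} case the cache already holds the relevant input tuple; by the invariant $\mathcal{R}\vdash\kappa$ its output is the tag encoding of the result labels produced by $\vdash_{\mathcal{R}}$ (which holds because $qs_1$ stepped), so the concrete machine takes exactly one user step matching $\action$, and $\match_s^{-c}$ is preserved (the cache is unchanged). In the \emph{cache-miss} case the concrete machine faults into kernel mode, installing the relevant input tuple; since $qs_1$ took a symbolic step, the judgment $\vdash_{\mathcal{R}}$ succeeds, so \autoref{lem:fhdl-succ} applies and the fault handler runs to completion in kernel mode, updates the output part of the cache to $(\mathsf{Tag}(L_{rpc}),\mathsf{Tag}(L_r))$, and returns to user mode at the same PC/stack/memory --- and this kernel phase emits no events, since $ \text{\sf Output} $ is forbidden in kernel mode. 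The faulting instruction is then re-executed, now hitting the updated cache, and the concrete machine takes the user step matching $\action$. Re-establishing $\mathcal{R}\vdash\kappa$ for $cs_2$ is immediate: the one cache line now holds exactly the right entry with the right output, and for every other input tuple the defining implication of $\mathcal{R}\vdash\kappa$ is vacuous. (Determinism of the concrete step relation is what makes the ``miss then fault then restart as a hit'' reasoning forced, though it is not strictly needed for the existential construction.)

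I expect the main obstacle to be exactly this cache-miss case: splicing the fault handler's multi-step run, available only as the opaque $\privsteps$-style execution given by \autoref{lem:fhdl-succ}, into the surrounding user-mode steps, verifying that the restarted instruction is the same one and now hits the cache, and confirming that no observable action escapes during the kernel excursion. Maintaining $\mathcal{R}\vdash\kappa$ across cache updates and through the vacuous initial case also needs care. A secondary, more bookkeeping-flavored obstacle is getting the relational-inverse plumbing right so that \autoref{lem:refStates} and \autoref{lem:refCompose} apply verbatim in the reversed direction --- in particular, relying on the fact that $\match_s^{c}$ only ever relates user-mode concrete states, which is precisely what lets the reverse simulation restrict its attention to user mode between observable events. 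Note that, unlike the forward direction, this proof never needs \autoref{lem:fhdl-fail}: starting from a symbolic step that succeeded, the fault handler is always in the ``allowed'' case.
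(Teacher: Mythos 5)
Your proposal is correct and follows essentially the same route as the paper: it goes through the symbolic rule machine as an intermediate step, establishes a state refinement via the inverse matching relations, and proves a per-step forward simulation lemma (the paper's Lemma~\ref{lem:fwd-sim}) with exactly the same case split — cache hit gives a single concrete step by the invariant $\mathcal{R}\vdash\kappa$, and cache miss invokes Lemma~\ref{lem:fhdl-succ} to run the fault handler silently and restart the instruction as a hit. The additional details you supply (vacuous satisfaction of $\mathcal{R}\vdash\kappa$ by the illegal-opcode initial cache, and the observation that Lemma~\ref{lem:fhdl-fail} is never needed) are consistent with the paper's argument.
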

This guarantees that traces of the abstract machine are also emitted
by the concrete machine. As above we use the symbolic rule machine as
an intermediate step and show a state refinement of the concrete
 into the symbolic rule machine. We rely on the following
lemma.

\begin{lemma}[Forward refinement]
\label{lem:fwd-sim}
Let \(qs_0\) and \(cs_0\) be two states with
\(cs_0 \match_s^{-c} qs_0\). Suppose that the symbolic rule
machine takes a step \(\step{qs_0}{\alpha_a}{qs_1}\). Then there
exist concrete state \(cs_1\) and action $\alpha_c$ such that
$\multistep{cs_0}{\alpha_c}{cs_1}$, with \(cs_1 \match_s^{-c} qs_1 \) and
$\alpha_c \lift{\match_e^{-c}} \alpha_a $.
\iffull
  \begin{center}
  \begin{tikzpicture}[descr/.style={fill=white,inner sep=2.5pt}]
   \matrix (m) [matrix of math nodes, row sep=\iffull 2em\else .8em\fi,column sep=2em,text height=1.5ex, text depth=0.25ex]
   {  cs_0 & \bullet & \cdots & \bullet & cs_1 \\
       qs_0 & & & & qs_1 \\ };
   \path[->,dashed,font=\small] (m-1-1) edge node[above] {$\esilent$} (m-1-2);
   \path[->,dashed,font=\small] (m-1-2) edge node[above] {$\esilent$} (m-1-3);
   \path[->,dashed,font=\small] (m-1-3) edge node[above] {$\esilent$} (m-1-4);
   \path[->,dashed,font=\small] (m-1-4) edge node[above] (ec) {$\alpha_c$} (m-1-5);
   \path[-,font=\small] (m-2-1) edge node[left] {$\match_s^{-c}$} (m-1-1);
   \path[densely dashed] (m-1-5) edge node [right] {$\match_s^{-c}$}
   (m-2-5);
   \path[->,font=\small] (m-2-1) edge node[below] (ea) {$\alpha_a$}
   (m-2-5);
   \draw[rounded corners=1pt,dashed]
   let \p1 = ($(ec)+(2cm,+0.3cm)$) in
   let \p2 = ($(ea)+(0,-0.3cm)$) in
   (ec)
   -- ($(ec)+(0,+0.3cm)$)
   -- ($(ec)+(2cm,+0.3cm)$)
   -- node[right] {$\lift{\match_e^{-c}}$}
   (\x1,\y2)
   -- ($(ea)+(0,-0.3cm)$)
   -- (ea) ;
 \end{tikzpicture}
 \end{center}
 where \(\match_s^{-c}\) and \(\match_e^{-c}\) denote the inverses of
 \(\match_s^c\) and \(\match_e^c\), respectively.
\fi
\end{lemma}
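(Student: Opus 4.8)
The plan is to prove \autoref{lem:fwd-sim} as the mirror image of Lemmas~\ref{lemma:ref-non-fault} and~\ref{lemma:ref-fault}: each single step of the symbolic rule machine is matched by a short concrete execution whose length depends on whether the concrete rule cache currently holds the relevant entry. Unfolding $cs_0 \match_s^{-c} qs_0$ (i.e.\ $qs_0 \match_s^{c} cs_0$) gives: $cs_0$ is in user mode; its stack and memory agree with those of $qs_0$ up to the encoding $ \mathsf{Tag} $; the two PCs have equal payloads with corresponding label/tag; and the cache invariant $\mathcal{R} \vdash \kappa_0$ holds. From the hypothesis $\step{qs_0}{\alpha_a}{qs_1}$ we read off the faulting opcode $\ottnt{op}$, the relevant input labels $\rho = (L_{pc},\ell_1,\ell_2,\ell_3)$, and --- crucially --- that the symbolic IFC enforcement judgment $ \vdash_{ \mathcal{R} } \ruleeval{ \rho }{ \ottnt{op} }{ L_{rpc} }{ L_r } $ succeeds for some outputs $L_{rpc},L_r$ (this is what enabled the step, the \emph{allow} clause of the rule for $\ottnt{op}$ having evaluated to true).

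In the \emph{cache-hit} case, the input part of $\kappa_0$ is the canonical tuple $(\ottnt{op}, \mathsf{Tag} ( L_{pc} ),\ldots)$ for the current instruction, so the concrete machine takes exactly one user-mode step via the cache-hit rule for $\ottnt{op}$. By $\mathcal{R} \vdash \kappa_0$ the output part of $\kappa_0$ is $( \mathsf{Tag} ( L_{rpc} ), \mathsf{Tag} ( L_r ))$, the tag encoding of precisely the labels produced by the symbolic judgment; hence the resulting state $cs_1$ tags the new PC with $ \mathsf{Tag} ( L_{rpc} )$ and any result value with $ \mathsf{Tag} ( L_r )$, while its memory and stack undergo the same operations as on the symbolic side, so $cs_1 \match_s^{-c} qs_1$ (the cache, and hence the invariant, is unchanged, and $cs_1$ is again in user mode). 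The action is $ \tau $ for every opcode except $ \text{\sf Output} $, where the concrete machine emits $m  \mathord{\scriptstyle @}   \mathsf{Tag} ( L_r )$ against the symbolic $m  \mathord{\scriptstyle @}  L_r$; in both cases $\alpha_c \lift{\match_e^{-c}} \alpha_a$.

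In the \emph{cache-miss} case, the concrete machine first takes a faulting step: it writes the canonical input tuple into the cache (with $ \mathtt{T}_\mathtt{D} $ in unused slots), pushes a return frame recording the user PC, switches to kernel mode, and jumps to kernel location $0$. Since the symbolic judgment $ \vdash_{ \mathcal{R} } \ruleeval{ \rho }{ \ottnt{op} }{ L_{rpc} }{ L_r } $ holds, \autoref{lem:fhdl-succ} applies to $\phi_{\mathcal{R}}$: the handler runs entirely in kernel mode (emitting no events, as $ \text{\sf Output} $ is forbidden there), sets the output part of the cache to $( \mathsf{Tag} ( L_{rpc} ), \mathsf{Tag} ( L_r ))$, leaves the user memory and underlying stack untouched, and returns to user mode at the original PC. The resulting configuration is exactly $cs_0$ except that the cache now holds the canonical, $\mathcal{R}$-consistent entry for this instruction --- in particular it is still a user-mode state matching $qs_0$, and it now \emph{hits}. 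Applying the cache-hit analysis to it produces one further user step to $cs_1$ with $cs_1 \match_s^{-c} qs_1$; concatenating the faulting step, the silent kernel run, and this last step yields $\multistep{cs_0}{\alpha_c}{cs_1}$ with $\alpha_c$ the (possibly silent) action of the final step, so again $\alpha_c \lift{\match_e^{-c}} \alpha_a$. Note that \autoref{lem:fhdl-fail} is never invoked, since $qs_0$ took a step and so the disallowed case cannot arise.

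The main obstacle is the bookkeeping of the cache-miss case, and specifically discharging the hypothesis of \autoref{lem:fhdl-succ}: one must verify, opcode by opcode, that the cache input part installed by the concrete faulting rule is the canonical tuple built from $\ottnt{op}$ and the tag encodings of exactly the labels $\rho$ selected by the symbolic step (getting the layout of argument slots and default tags right), and that the handler's post-state re-establishes the full relation $\match_s^{c}$, including the cache-consistency invariant for the freshly installed entry --- which is exactly the conclusion of \autoref{lem:fhdl-succ} and is what lets the hit-case argument be reused. Everything else reduces to a routine instruction-by-instruction comparison of the concrete user-mode steps with the symbolic rule steps.
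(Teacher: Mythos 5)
Your proposal is correct and follows essentially the same route as the paper's own proof: a case split on whether the rule cache hits, using the consistency invariant $\mathcal{R} \vdash \kappa$ in the hit case and \autoref{lem:fhdl-succ} (never \autoref{lem:fhdl-fail}, since the symbolic step guarantees the IFC judgment succeeds) in the miss case, followed by the hit-case argument on the restarted instruction. Your version just spells out more of the bookkeeping that the paper leaves implicit.
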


\begin{proof}
Because $cs_0\match_s^{-c} qs_0$, the cache is consistent with the
symbolic rule table $\mathcal{R}$.  If the cache input matches the opcode
and data of $cs_0$, then (because
\(\step{qs_0}{\alpha_a}{qs_1}\)) the cache output must allow a step
$\step{cs_0}{\alpha_c}{cs_1}$ as required.  On the other hand, if the
cache input does not match the opcode and data of $cs_0$, then a
cache fault occurs, loading the cache input and calling the fault
handler.  By \autoref{lem:fhdl-succ} and the fact
that \(\step{qs_0}{\alpha_a}{qs_1}\), the cache output is computed to
be consistent with $\mathcal{R}$, and this allows the concrete step as
claimed.
\end{proof}

\iffull
\subsection{Discussion}
\else
\paragraph*{Discussion}
\fi
The two top-level refinement properties (\autoref{thm:refConcQuasi} and
\autoref{cor:absRefinesConc}) share the same notion of matching relations
but they have been proved independently in our Coq development.  In
the context of compiler verification~\cite{Leroy09,SevcikVNJS11},
another proof methodology has been favored: a backward simulation
proof can be obtained from a proof of forward simulation under the
assumption that the lower-level machine is deterministic.
(CompCertTSO~\cite{SevcikVNJS11} also requires a \emph{receptiveness}
hypothesis that trivially holds in our context.)
Since our concrete machine is deterministic, we could apply a similar
technique.  However, unlike in compiler verification where it is
common to assume that the source program has a well-defined semantics
(i.e. it does not get stuck), we would have to consider the
possibility that the high-level semantics (the symbolic rule machine)
might block and prove that in this case either the IFC enforcement
judgment is stuck (and \autoref{lemma:ref-fault} applies) or the
current symbolic rule machine state and matching concrete state are
both ill-formed.

\section{Noninterference}
\label{sec:ni}

In this section we define TINI~\cite{askarov08:TINI_leaks_more_than_1_bit,
  HedinS11} for generic machines\iffull{} (recall \autoref{def:gen-mach}),
and present a set of {\em unwinding conditions}~\cite{GoguenM84} sufficient
to guarantee TINI for a generic machine (\autoref{thm:TINI}); we \else, \fi
show that the abstract machine of~\autoref{sec:abstract} \iffull satisfies
these unwinding conditions and thus \fi satisfies TINI
(\autoref{thm:abstractIFCtini}), that TINI is preserved by refinement
(\autoref{thm:tininPreserved}), and finally, using the fact that the concrete
IFC machine refines the abstract one (\autoref{thm:refConcQuasi}), that the
concrete machine satisfies TINI (\autoref{thm:concTINI}).

\paragraph*{Termination-insensitive noninterference (TINI)}
\label{para:tini-def}

To define noninterference, we need to talk about what can be observed
about the output trace produced by a run of a machine.
\begin{defn}[Observation]
A \emph{notion of observation} for a generic machine is a 3-tuple
$(\Omega,\filter{\cdot}{\cdot},\cdot\!\!\approx^i_{\cdot}\!\!\cdot)$.
$\Omega$ is a set of \emph{observers} (i.e., different degrees of
power to observe), ranged over by $o$.  For each $o\in\Omega$,
\mbox{$\filter{o}{\cdot}\subseteq E$} is a predicate of
\emph{observability of events for observer $o$}, and
\mbox{$\cdot \approx^i_o \cdot \subseteq I\times I$} is a relation of
\emph{indistinguishability of input data for observer $o$}.
\end{defn}
\iffull
The predicate $\filter{o}{e}$ is used to filter unobservable
events from traces (written $\filter{o}{t}$):\ch{one line}
\begin{align*}
\filter{o}{\tnil} &= \tnil\\
\filter{o}{\tcons{e}{t}} &= \left\{
  \begin{array}{ll}
  \tcons{e}{\filter{o}{t}}
    & \text{if $\filter{o}{e}$}
  \\[0.5em]
  \filter{o}{t}
    & \text{otherwise}
  \end{array}
\right.
\end{align*}
\else
We write $\filter{o}{t}$ for the trace in which all unobservable
events in $t$ are filtered out using $\filter{o}{\cdot}$.
\fi
\iffull
Also a notion of \emph{indistinguishability of traces}
(written $t_1 \approx^t t_2$) is defined inductively:
\begin{equation}\label{eqn:indist-traces}
\xrule{}{\tnil \approx^t t}
\qquad
\xrule{}{t \approx^t \tnil}
\qquad
\xrule{t_1 \approx^t t_2}{
       \tcons{e}{t_1} \approx^t \tcons{e}{t_2}}
\end{equation}
This definition
\else
We write $t_1 \approx^t t_2$ to say that traces $t_1$ and $t_2$ are
{\em indistinguishable}; this
\fi
truncates the longer trace to the same length as the
shorter and then demands that the remaining elements be pairwise
identical.

\begin{defn}[TINI]\label{defn:TINI}
A machine $(S,E,I,\cdot\! \stepr\! \cdot,\initf)$ with a notion of observation
$(\Omega,\filter{\cdot}{\cdot},\cdot\!\approx^i_{\cdot}\!\cdot)$ satisfies
TINI if, for any
observer $o\in\Omega$, pair of indistinguishable initial data
\mbox{$i_1 \approx^i_o i_2$}, and pair of executions
$\ftrace{\Initf{i_1}}{t_1}$ and $\ftrace{\Initf{i_2}}{t_2}$, we have
$\filter{o}{t_1} \approx^t \filter{o}{t_2}$.
\end{defn}
Notice that the input data for our machines includes the program to be
executed; hence, we can apply the definition above to the execution of
different programs.
The reason for calling this notion ``termination insensitive'' is
that, because of truncated traces in \eqnref{eqn:indist-traces}, we
only model the case where we distinguish two runs of the same program
by observing two distinguishable events that occur on the same
position. Hence, this definition does not attempt to protect against
attackers that try to learn a secret by seeing whether a program
terminates or not: our observers cannot distinguish between successful
termination, failure with an error, or entering an infinite loop with
no observable output.
This TINI property is standard for a machine with
output~\cite{askarov08:TINI_leaks_more_than_1_bit,HedinS11}.%
\footnote{It is called ``progress-insensitive noninterference'' in a recent
  survey~\cite{HedinS11}.
\iffull
  We have stated it for inductively defined
  executions and traces\iffull~\eqnref{eqn:traces}\fi, which is all we need
  in this paper, but it can easily be lifted to coinductive executions and traces:
  not only successfully terminating and finitely failing executions, but
  also infinite executions.  This holds because TINI is a 2-safety
  hyperproperty\iffull~\cite{ClarksonS10}\fi{}; a formal proof of this can be found in
  our Coq development.
\fi
}

\ifever
\ch{Try to relate our definition (at least informally) to ``standard''
  TINI definitions: \EG knowledge-based TINI definition of
  Askarov~\ETAL\cite{askarov08:TINI_leaks_more_than_1_bit};
  are there any other definitions for machines with output?}
\fi

\iffull
\paragraph*{Unwinding conditions}
\label{para:unwinding}

\ifmuchlater
\ch{It would help to (also?) draw the unwinding conditions as
  diagrams, as we did in the many talks we had on the topic.}
\bcp{Yes, this would be useful, but I worry that it's going to take a lot of
space that we'll be reluctant to give up, at least in the conference
version.  Let's wait till later and see how things look...}
\fi

Having defined TINI for generic notions of machine and observation, we
now explain a sufficient set of conditions for such a machine to have
the TINI property and sketch a proof of TINI from these conditions.
The proof technique is standard~\cite{GoguenM84}.

A silent action cannot be observed, so we extend the given predicate
$\filter{o}{e}$ to actions by stating that $\filter{o}{\esilent}$
never holds.  From this we inductively define a notion
of \emph{indistinguishability of actions to observer $o$} (written
$\action_1 \approx^a_o \action_2$):
\begin{equation}\label{eqn:indist-events}
\xrule{}{\action \approx^a_o \action}
\qquad
\xrule{\neg\filter{o}{\action_1}\quad \neg\filter{o}{\action_2}}{\action_1 \approx^a_o \action_2}.
\end{equation}
Two actions are indistinguishable to $o$ if either they are equal, or
if neither can be observed by $o$.

\begin{thm}
\label{thm:TINI}
A machine $(S,E,I,\cdot\! \stepr\! \cdot,\initf)$ with notion of
observation $(\Omega,\filter{\cdot}{\cdot},\cdot\!\approx^i_{\cdot}\!\cdot)$
satisfies TINI if, for each $o\in\Omega$, there exist two
relations, \emph{indistinguishability of states to observer $o$}
(written \mbox{$s_1 \approx^s_o s_2$}) and \emph{observability of states
  to observer $o$} (written \mbox{$\filter{o}{s}$}), satisfying four
\emph{sanity conditions}
\begin{gather}
\label{eqn:sanity1}
\Implies{i_1 \approx^i_o i_2}{\Initf{i_1} \approx^s_o \Initf{i_2}}\\
\label{eqn:sanity2}
\Implies{s_1 \approx^s_o s_2}{s_2 \approx^s_o s_1}\\
\label{eqn:sanity3}
\Implies{s_1 \approx^s_o s_2}
        {(\filter{o}{s_1} \Leftrightarrow \filter{o}{s_2})}\\
\label{eqn:sanity4}
\Implies{(\filter{o}{\alpha} \Andip \step{s}{\alpha}{})}{\filter{o}{s}}
\end{gather}
and three \emph{unwinding conditions},
assuming \mbox{$s_1 \approx^s_o s_2$} and \mbox{$\step{s_1}{\alpha_1}{s_1'}$}:
\begin{gather}
\label{eqn:unwind1}
\Implies{(\filter{o}{s_1} \Andip \step{s_2}{\action_2}{s_2'})}
        {(\alpha_1 \approx^a_o \alpha_2 \Andip s_1' \approx^s_o s_2')}\\
\label{eqn:unwind2}
\Implies{(\neg\filter{o}{s_1} \Andip \neg\filter{o}{s_1'})}
        {s_1' \approx^s_o s_2}\\
\label{eqn:unwind3}
\Implies{(\neg\filter{o}{s_1} \Andip \filter{o}{s_1'} \Andip \filter{o}{s_2'}
          \Andip \step{s_2}{\alpha_2}{s_2'})}
        {s_1' \approx^s_o s_2'}
\end{gather}
\label{thm:unwinding}
\end{thm}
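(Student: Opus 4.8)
The plan is to derive TINI from the sanity and unwinding conditions by a standard unwinding argument~\cite{GoguenM84}, factored through a single \emph{unwinding lemma} about arbitrary, not necessarily initial, indistinguishable states. Fix an observer $o$. The lemma states: for all states $s_1,s_2$ with $s_1 \approx^s_o s_2$ and all finite executions $\multistep{s_1}{t_1}{s_1'}$ and $\multistep{s_2}{t_2}{s_2'}$, we have $\filter{o}{t_1} \approx^t \filter{o}{t_2}$. The theorem is then immediate: given indistinguishable inputs $i_1 \approx^i_o i_2$, sanity condition~\eqnref{eqn:sanity1} yields $\Initf{i_1} \approx^s_o \Initf{i_2}$, and applying the lemma to the two runs gives exactly the conclusion required by~\autoref{defn:TINI}.

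I would prove the unwinding lemma by well-founded induction on the combined number of steps of the two derivations of $\multistep{s_1}{t_1}{s_1'}$ and $\multistep{s_2}{t_2}{s_2'}$ (these are finite, since executions are defined inductively). If either step sequence is empty its filtered trace is $\tnil$, and $\tnil \approx^t t$ for every $t$ by~\eqnref{eqn:indist-traces}, so such cases are discharged directly. Otherwise both take a first step, $\step{s_1}{\alpha_1}{s_1''}$ and $\step{s_2}{\alpha_2}{s_2''}$, and I case-split on whether $s_1$ is observable; by sanity condition~\eqnref{eqn:sanity3}, $s_2$ then has the same observability status, which is what keeps the two cases symmetric.

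In the observable case I invoke unwinding condition~\eqnref{eqn:unwind1} to obtain $\alpha_1 \approx^a_o \alpha_2$ and $s_1'' \approx^s_o s_2''$. By the definition of action indistinguishability~\eqnref{eqn:indist-events}, the two actions are either both unobservable---so filtering discards both from the heads of the traces---or literally equal, and if that common action is an observable event $e$ then $\filter{o}{t_1}$ and $\filter{o}{t_2}$ both begin with $e$. In every sub-case the induction hypothesis applied to $s_1'' \approx^s_o s_2''$ and the two strictly shorter tails closes the goal, re-prepending the shared head event via the third rule of~\eqnref{eqn:indist-traces} when present. In the unobservable case, sanity condition~\eqnref{eqn:sanity4} tells us (contrapositively) that neither $\alpha_1$ nor $\alpha_2$ can be observable, so filtering discards both; the move is then to ``stutter'' on whichever side reaches an unobservable successor, using unwinding condition~\eqnref{eqn:unwind2} to re-establish state indistinguishability and recursing on the shorter side (appealing to symmetry~\eqnref{eqn:sanity2} when it is the second machine that stutters), and, when both successors $s_1''$ and $s_2''$ are observable, using unwinding condition~\eqnref{eqn:unwind3} to get $s_1'' \approx^s_o s_2''$ and recursing on both tails. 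The measure strictly decreases in each of these sub-cases, so the induction goes through.

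The main obstacle I expect is not any individual step but getting the case structure right so that the induction measure provably decreases in the stuttering cases: one must sometimes advance the first machine and sometimes the second, and the key realization is that sanity condition~\eqnref{eqn:sanity4} is precisely what makes silently consuming a step from an unobservable state safe---such a step cannot have emitted an observable event, so it contributes nothing to the filtered trace. A secondary, fussier point is the trace bookkeeping, namely relating $\filter{o}{\tcons{\alpha}{t}}$ to $\filter{o}{t}$ uniformly whether $\alpha$ is the silent action $\esilent$ or a non-silent but unobservable event, so that the observable/unobservable dichotomy on actions lines up cleanly with the inductive rules of~\eqnref{eqn:indist-traces}.
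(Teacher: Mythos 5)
Your proposal is correct and takes essentially the same route as the paper's proof: the paper likewise advances the two executions in lock-step or by stuttering, maintaining indistinguishability of the reached states and of the filtered traces emitted so far, with exactly your case analysis on the observability of $s_1$, then $s_1'$, then $s_2'$, and the same assignment of roles to the sanity and unwinding conditions. Your explicit unwinding lemma over arbitrary indistinguishable states, proved by well-founded induction on the combined derivation length, is just a more formal packaging of the paper's invariant-maintenance sketch.
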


We outline the proof, which motivates each of the sanity and unwinding
conditions.  To prove TINI we must consider pairs of traces of machine
evaluations starting from initial states $\Initf{i_1}$ and
$\Initf{i_2}$ and show that, after filtering for observability, these
pairs of traces are indistinguishable.  For the proof, we also
maintain the invariant that the pairs of states reached by the two
evaluations are indistinguishable.  We are given
that \mbox{$i_1\approx^i_o i_2$}, so by \eqnref{eqn:sanity1} the initial
states are indistinguishable, as are the traces emitted so far (namely
$\tnil$).

Now suppose the two evaluations have arrived at two indistinguishable
states, \mbox{$s_1\approx^s_o s_2$}, and that the filtered traces emitted so
far are indistinguishable.  If $s_1$ can take a step,
\mbox{$\step{s_1}{\action_1}{s_1'}$}, what is possible for steps from $s_2$?
(We may assume that $\step{s_2}{\alpha_2}{s_2'}$: if no step is
possible from $s_2$ then we are already done because
\eqnref{eqn:indist-traces}, used in the definition of TINI, truncates
the trace from $s_1$ at this point.)  Proceed by cases on
observability of $s_1$.

Condition \eqnref{eqn:unwind1} says that, if $\filter{o}{s_1}$, then
the new states, $s_1'$ and $s_2'$, and the emitted traces remain
indistinguishable.

On the other hand, suppose $\neg\filter{o}{s_1}$; proceed by cases on
observability of $s_1'$.  \eqnref{eqn:unwind2} says that, if
$\neg\filter{o}{s_1'}$, then \mbox{$s_1'\approx^s_o s_2$}; and
by \eqnref{eqn:sanity4}, since $s_1$ is unobservable, $\action_1$ must be
unobservable, so the filtered emitted traces remain indistinguishable.

Finally, the case where $\neg\filter{o}{s_1}$ and $\filter{o}{s_1'}$.
Then $\neg\filter{o}{s_2}$ (by \eqnref{eqn:sanity3}), and $\alpha_1$ and
$\alpha_2$ are both unobservable by \eqnref{eqn:sanity4}.  Consider cases
on observability of $s_2'$. The filtered traces emitted up to $s_1'$
and $s_2'$ are indistinguishable, and if $\filter{o}{s_2'}$ we are
done by \eqnref{eqn:unwind3}. If $\neg\filter{o}{s_2'}$, we are in a
case symmetric to the paragraph above; by \eqnref{eqn:sanity2}
and \eqnref{eqn:unwind2} we have \mbox{$s_1\approx^s_o s_2'$}, and again
the filtered traces emitted up to these points are
indistinguishable.  \qed

\fi 

\paragraph*{TINI for abstract \iffull IFC \fi machine}
\label{para:abstract-tini}

\iffull
We now instantiate~\autoref{thm:TINI} with the
abstract machine defined in \autoref{sec:abstract}, showing it
satisfies TINI for the following notion of observation:
\fi
\begin{defn}[Observation for abstract machine]\label{defn:obs-abstr}
Let $\mathcal{L}$ be a lattice, with partial order $ \le $.
\iffull
For the abstract machine, events $\ottnt{n}  \mathord{\scriptstyle @}  \ottnt{L}$ are atoms; we define
indistinguishability of atoms, \mbox{$a_1\approx_o^{aa} a_2$}, as
in \eqnref{eqn:indist-events} above.
\else
Define indistinguishability of atoms, \mbox{$a_1\approx_o^{aa} a_2$} by
\ifmuchlater\bcp{this could be typeset in-line if needed for space}\fi
\begin{equation}\label{eqn:indist-atoms}
\xrule{}{a \approx_o^{aa} a}
\qquad
\xrule{\neg\filter{o}{a_1}\quad \neg\filter{o}{a_2}}{a_1 \approx_o^{aa} a_2}.
\end{equation}
\fi
The notion of observation \iffull for the abstract machine \fi is
$(\mathcal{L},\filter{\cdot}{\cdot}^{a},\cdot\approx_\cdot^{ia}\cdot)$,
where
\[
\begin{array}{rcl}
\filter{o}{\ottnt{n}  \mathord{\scriptstyle @}  \ottnt{L}}^{a} & \triangleq & \ottnt{L}   \le  o
\\
(p,\mathit{args_1},n,\ottnt{L}) \approx_o^{ia}
(p,\mathit{args_2},n,\ottnt{L}) & \triangleq &
\mathit{args_1}\lift{\approx_o^{aa}}\mathit{args_2}.
\end{array}
\]
\label{def:a-observation}%
(On the right-hand side of the second equation, $\lift{\approx_o^{aa}}$ is
indistinguishability of atoms, lifted to lists\iffull{} as in
\autoref{def:match-traces}\fi.)
\end{defn}

\iffull

To instantiate~\autoref{thm:TINI} we must exhibit relations of
observability and indistinguishability on states.  We outline these
definitions and the proofs of the sanity and unwinding conditions
here.

A state $s = \langle \mu\;[ \sigma ]\;\ottnt{pc}\rangle$ of the abstract
machine is observable by observer $o\in\mathcal{L}$, written
$\filter{o}{s}$, whenever $pc = \ottnt{n}  \mathord{\scriptstyle @}  L_{pc}$ is itself observable,
\IE $L_{pc}  \le  o$.

Indistinguishability of states is defined by two clauses: the first
for observable states (left), and the other for non-observable ones
(right).
\[
\xrule{\begin{array}{c}
        \filter{o}{\ottnt{pc}} \\
        \sigma_{{\mathrm{1}}} \lift{\approx_o^{aa}} \sigma_{{\mathrm{2}}} \qquad \mu_{{\mathrm{1}}} \lift{\approx_o^{aa}} \mu_{{\mathrm{2}}}
        \end{array}
        }
        {\mu_{{\mathrm{1}}} \; [ \sigma_{{\mathrm{1}}} ] \; \ottnt{pc}
         \approx_o^{sa}
        \mu_{{\mathrm{2}}} \; [ \sigma_{{\mathrm{2}}} ] \; \ottnt{pc}}
\quad
\xrule{\begin{array}{c}
        \lnot \filter{o}{\ottnt{pc_{{\mathrm{1}}}}} \quad \lnot\filter{o}{\ottnt{pc_{{\mathrm{2}}}}} \\
        \sigma_{{\mathrm{1}}} \sim_o^{a} \sigma_{{\mathrm{2}}}
         \qquad \mu_{{\mathrm{1}}} \lift{\approx_o^{aa}} \mu_{{\mathrm{2}}}
         \end{array}
}
        {\mu_{{\mathrm{1}}} \; [ \sigma_{{\mathrm{1}}} ] \; \ottnt{pc_{{\mathrm{1}}}}
        \approx_o^{sa}
        \mu_{{\mathrm{2}}} \; [ \sigma_{{\mathrm{2}}} ] \; \ottnt{pc_{{\mathrm{2}}}}}
\]
Here we abuse the notation of lifting, $\lift{\approx_o^{aa}}$, using
it for memories and stacks (two stack elements are indistinguishable
if they are indistinguishable atoms, or are both return stack frames,
with indistinguishable return addresses).

Let's have a more detailed look at the definition of state
indistinguishability. For observable states, we simply require that
all the state components be indistinguishable. For non-observable
ones, however, we must make the relation more permissive. Indeed, the
abstract IFC machine steps from an observable state to a
non-observable state when, e.g., branching on the value of a secret.
When that happens, the tight correspondence on states no longer
holds. Depending on the value of a secret, the machine could, e.g.,
jump to different instruction addresses, put different numbers of
values on its stack, perform more or fewer function calls,
etc. Because of that, we must allow states with different $\ottnt{pc}$
values to be related, and adopt a weaker indistinguishability relation
on stacks. This new relation, noted $\sim_o^a$, only is used when
relating unobservable states, and intuitively says that the stacks of
such states only need to be related up to the most recent return frame
to an observable one. Formally, $\sigma_{{\mathrm{1}}} \sim_o^{a} \sigma_{{\mathrm{2}}}$ is defined
as $\filter{o}{\sigma_{{\mathrm{1}}}} \lift{\approx_o^{aa}} \filter{o}{\sigma_{{\mathrm{2}}}}$,
where:
\begin{align*}
  \filter{o}{[\,]} & \triangleq [\,] \\
  \filter{o}{\ottnt{n}  \mathord{\scriptstyle @}  \ottnt{L}  \mathord{,}\,  \sigma} & \triangleq \filter{o}{\sigma} \\
  \filter{o}{\ottnt{n}  \mathord{\scriptstyle @}  \ottnt{L}  \mathord{;}\,  \sigma}
  & \triangleq
  \begin{cases}
    \ottnt{n}  \mathord{\scriptstyle @}  \ottnt{L}  \mathord{;}\,  \sigma & \text{if $\ottnt{L}  \le  o$} \\
    \filter{o}{\sigma} & \text{otherwise}
  \end{cases}
\end{align*}
In this way we relax the correspondence between call stacks of two
machines, while at the same time keeping the invariant that holds on
the ``observable'' part of the stacks, which we will need when proving
\autoref{eqn:unwind3} for the abstract machine.

\begin{thm}\label{thm:abstractIFCtini}
The relations $\filter{\cdot}{\cdot}^{a}$ and
$\cdot\approx_\cdot^{sa}\cdot$ satisfy the sanity and unwinding
conditions of~\autoref{thm:TINI}; thus, the abstract IFC machine has
TINI.
\end{thm}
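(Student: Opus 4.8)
The plan is to instantiate the generic TINI theorem (\autoref{thm:TINI}) with the state-observability predicate $\filter{o}{s}$ (``the program counter of $s$ is labelled below $o$'') and the state-indistinguishability relation $\approx_o^{sa}$ defined above, discharge the four sanity conditions \eqnref{eqn:sanity1}--\eqnref{eqn:sanity4} together with the three unwinding conditions \eqnref{eqn:unwind1}--\eqnref{eqn:unwind3}, and then invoke \autoref{thm:TINI}. The sanity conditions are routine. For \eqnref{eqn:sanity1}, indistinguishable inputs share the same program, the same initial memory ($n$ copies of $\ottsym{0}  \mathord{\scriptstyle @}  \ottnt{L}$), and the same initial PC ($\ottsym{0}  \mathord{\scriptstyle @}  \ottnt{L}$), with $\lift{\approx_o^{aa}}$-related initial stacks, so the relevant clause of $\approx_o^{sa}$ holds according to whether $\ottnt{L}  \le  o$, using that $\lift{\approx_o^{aa}}$ on stacks implies $\sim_o^a$. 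Condition \eqnref{eqn:sanity2} holds because $\approx_o^{aa}$ and $\sim_o^a$ are symmetric and both clauses of $\approx_o^{sa}$ are symmetric; \eqnref{eqn:sanity3} because in each clause the two PCs have the same observability status; and \eqnref{eqn:sanity4} because $ \text{\sf Output} $ is the only action-emitting rule and it labels the emitted atom with a label $\ge L_{pc}$, so an observable emitted action forces an observable source state.

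For the unwinding conditions I would first establish two reusable lemmas, each by case analysis on the executed instruction $\iota(\ottnt{n})$. (1) A \emph{high-step} lemma: if $\neg\filter{o}{s}$ and $\step{s}{\alpha}{s'}$, then $\alpha$ is silent (by \eqnref{eqn:sanity4}), the stack-filter is unchanged (so $\filter{o}{s'}$'s stack is still $\sim_o^a$ to $\filter{o}{s}$'s stack), and the memory agrees with that of $s$ on every $o$-observable location. The delicate case is $ \text{\sf Store} $: since $L_{pc}  \not\le  o$, the no-sensitive-upgrade side condition $ \ottnt{L_{{\mathrm{1}}}}  \mathord{\vee}  L_{pc}  \le  \ottnt{L_{{\mathrm{3}}}} $ forces the target location to carry a label $\not\le o$ \emph{before} the write, while the new label $ \ottnt{L_{{\mathrm{1}}}}  \mathord{\vee}  \ottnt{L_{{\mathrm{2}}}}  \mathord{\vee}  L_{pc} $ is $\not\le o$ \emph{after} it, so no observable location is touched; for $ \text{\sf Ret} $, popping an observable frame leaves the lemma's hypothesis, and popping a high frame leaves $\filter{o}{\cdot}$ alone. (2) A \emph{lockstep} lemma: if $\filter{o}{s_1}$, $s_1 \approx_o^{sa} s_2$ (necessarily via the first clause, so $s_1$ and $s_2$ have the same PC and execute the same instruction), $\step{s_1}{\alpha_1}{s_1'}$ and $\step{s_2}{\alpha_2}{s_2'}$, then $\alpha_1 \approx^a_o \alpha_2$ and $s_1' \approx_o^{sa} s_2'$, the successor PC deciding which clause of $\approx_o^{sa}$ applies (branching on a high value makes both successor PCs unobservable, while the stacks stay $\lift{\approx_o^{aa}}$-, hence $\sim_o^a$-, related). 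Both lemmas also use that $\approx_o^{aa}$, its lifting to memories and stacks, and $\sim_o^a$ are equivalence relations, which is a short check (for a fixed $o$, two atoms are related iff both are high, or both are low and equal).

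Given these two lemmas, \eqnref{eqn:unwind1} is exactly the lockstep lemma. For \eqnref{eqn:unwind2}, since $\neg\filter{o}{s_1}$ the pair $s_1 \approx_o^{sa} s_2$ holds via the second clause; applying the high-step lemma to $\step{s_1}{\alpha_1}{s_1'}$ with $\neg\filter{o}{s_1'}$ shows $s_1'$ has the same observable memory and the same filtered stack as $s_1$, so $s_1' \approx_o^{sa} s_2$ follows by transitivity of the component relations. The main obstacle is \eqnref{eqn:unwind3}, the control-flow join: here $s_1$ and $s_2$ are both unobservable, yet a single step lands in \emph{observable} states $s_1'$ and $s_2'$. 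By the high-step analysis the only instruction that can raise observability is $ \text{\sf Ret} $ returning to a frame with an observable saved PC. From $\sigma_1 \sim_o^a \sigma_2$ (the second clause), the first observable return frame of each stack is precisely the one being popped, and because these frames are $\approx_o^{aa}$-related and observable they must be \emph{equal}, so $s_1'$ and $s_2'$ have identical PCs; the remainders of the two stacks are $\lift{\approx_o^{aa}}$-related, the (unchanged) memories are $\lift{\approx_o^{aa}}$-related, and the common PC is observable, so $s_1' \approx_o^{sa} s_2'$ by the first clause. Throughout, the genuine effort is in the bookkeeping around the stack-filtering function that defines $\sim_o^a$ and in the systematic per-instruction case analyses (above all the several $ \text{\sf Store} $ subcases); once these are carried out, \autoref{thm:TINI} yields TINI for the abstract IFC machine.
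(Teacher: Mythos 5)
Your proposal is correct and follows essentially the same route as the paper's proof: instantiating the unwinding framework with the PC-label observability predicate and the two-clause state indistinguishability, and resting on the same key observations — that $ \text{\sf Output} $ joins the PC label so unobservable states emit unobservable actions, that observable indistinguishable states run in lockstep, that the no-sensitive-upgrade check on $ \text{\sf Store} $ protects observable memory in high contexts, and that $ \text{\sf Ret} $ is the only observability-restoring instruction so the filtered-stack invariant resolves the control-flow join in the third unwinding condition. Packaging the per-instruction case analyses into explicit ``high-step'' and ``lockstep'' lemmas is a presentational refinement rather than a different argument.
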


\begin{proof}
  Most sanity conditions are easy consequences of the definitions, and
  do not require detailed explanation. We give an overview of the most
  interesting aspects of the proof; a more detailed account can be
  found in the formal development.

  The $ \text{\sf Output} $ instruction plays an important role for condition
  \eqnref{eqn:sanity4} and for the first conclusion of
  \eqnref{eqn:unwind1}. Crucially, since that instruction joins the
  label of the current $\ottnt{pc}$ to the output atom, an unobservable
  state necessarily produces an unobservable action. Further, when two
  low states are indistinguishable and step (\IE when they satisfy the
  preconditions of \eqnref{eqn:unwind1}), the atoms on top of the
  stack must be indistinguishable, leading to indistinguishable output
  actions.

  As for the second conclusion of \eqnref{eqn:unwind1}, since
  indistinguishable low states have equal $\ottnt{pc}$ values, they
  execute the same instructions. Thus, showing that the states remain
  indistinguishable after stepping is just a matter of reasoning about
  the values that are used by each instruction on both states. These
  values must be indistinguishable, and it is easy to show that
  storing them at the same locations in indistinguishable stacks and
  memories leads to stacks and memories that are still
  indistinguishable.

  Most of the cases of condition \eqnref{eqn:unwind2}---stepping from
  an unobservable state to another unobservable state---are trivial,
  since they only manipulate values or unobservable return frames on
  top of the stack (which by construction are irrelevant when checking
  whether the corresponding stacks are indistinguishable). The only
  exception is the $ \text{\sf Store} $ instruction, which also modifies the
  memory. Since the label on the $\ottnt{pc}$ is assumed to be above the
  level of the observer, the side condition of that instruction
  ensures that the same holds of the memory position being
  updated. This ensures that both memories remain indistinguishable,
  since the other positions are not affected.

  Finally, the precondition of \eqnref{eqn:unwind3} (stepping from
  unobservable to observable states) only applies when both states
  execute matching $ \text{\sf Ret} $ instructions. Since we assume that the
  resulting states are both observable, we conclude that the top of
  the original stacks contained the same observable return frame. The
  definition of indistinguishability says that the portions of the
  stacks below that frame are indistinguishable. Since those are
  exactly the values of the new stacks, and the returning $\ottnt{pc}$ is
  the same on both states, we conclude that the resulting observable
  states are indistinguishable.
\end{proof}

\else

We prove TINI for the abstract machine using a set of standard
{\em unwinding conditions}~\cite{GoguenM84,TestingNI}.
For this we need to define indistinguishability on states, and thus also
indistinguishability of stacks; this is where we encounter one subtlety.
Indistinguishability of stacks is defined pointwise when the label of
the PC is observable ($L_{pc}  \le  o$).
When the PC label is not observable, however, we only require that the
stacks are pointwise related below the most recent $ \text{\sf Call} $
from an observable state.
This is necessary because the two machines run in lock step only when
their PC labels are observable; they can execute completely different
instructions otherwise.

\begin{thm}\label{thm:abstractIFCtini}
The abstract IFC machine enjoys TINI.
\end{thm}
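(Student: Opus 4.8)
The plan is to instantiate the generic noninterference theorem established via sanity and unwinding conditions (\autoref{thm:TINI}) with the abstract IFC machine of \autoref{sec:abstract} and the notion of observation of \autoref{def:a-observation}. Thus, for each observer $o\in\mathcal{L}$ I must exhibit a predicate of \emph{observability of states} $\filter{o}{\cdot}$ and a relation of \emph{indistinguishability of states} $\approx^s_o$, and then discharge the four sanity conditions and the three unwinding conditions.

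For observability I declare a state $\langle\mu\;[\sigma]\;\ottnt{pc}\rangle$ observable at $o$ exactly when its program-counter atom $\ottnt{n}  \mathord{\scriptstyle @}  L_{pc}$ is observable, i.e. $L_{pc}  \le  o$. For indistinguishability I use two clauses. On a pair of \emph{observable} states I require componentwise agreement: equal $\ottnt{pc}$, memories related by the pointwise lift of atom indistinguishability $\approx^{aa}_o$, and stacks related pointwise as well (with two return frames related when their return atoms are). On a pair of \emph{unobservable} states, however, I only require the memories to be pointwise $\approx^{aa}_o$-related and the stacks to agree \emph{below the most recent return frame whose return atom is observable} --- a strictly weaker relation. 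This relaxation is unavoidable: once control has branched on a secret the two runs are no longer in lock step (they may run different opcodes, push different numbers of atoms, perform different $ \text{\sf Call} $s), so the only invariant that survives the ``high'' segment of an execution is the shape of the observable suffix of the call stack --- which is precisely what the strong relation reads off once a matching $ \text{\sf Ret} $ brings control back down.

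Discharging the conditions is then mostly bookkeeping. The sanity conditions follow from the definitions: initial states of two indistinguishable inputs share the observable PC $\ottsym{0}  \mathord{\scriptstyle @}  \ottnt{L}$ and have $\approx^{aa}_o$-related stacks; the two indistinguishability clauses are symmetric; observability of a state is determined by its PC, which equal or unobservable-paired states agree on; and the only event-emitting instruction, $ \text{\sf Output} $, joins $L_{pc}$ into the emitted atom's label, so an event observable at $o$ can only be produced from a state already observable at $o$. For the unwinding conditions: in the observable-source case, indistinguishable observable states have equal PC, hence execute the \emph{same} instruction on $\approx^{aa}_o$-related operands, so results land at the same stack and memory positions and indistinguishability is preserved, while an $ \text{\sf Output} $ emits $\approx^{aa}_o$-related atoms; in the unobservable-to-unobservable case every opcode that only rearranges the top of the stack trivially preserves the weak relation (and a $ \text{\sf Ret} $ cannot pop past the observable boundary without making the target state observable), the sole genuine case being $ \text{\sf Store} $, where $L_{pc}\not\le o$ together with the no-sensitive-upgrade side condition $\ottnt{L_{{\mathrm{1}}}}  \mathord{\vee}  L_{pc}  \le  \ottnt{L_{{\mathrm{3}}}}$ forces the updated location to be above $o$ both before and after, so the memories stay indistinguishable; and in the unobservable-to-observable case the step can only be a $ \text{\sf Ret} $ (the unique instruction that can lower the PC label) popping a return frame with observable return atom --- and since both resulting states are assumed observable, that same frame sat on top of both source stacks, so the weak stack relation, which demanded agreement exactly below that frame, now yields full componentwise indistinguishability of the two resulting states, re-establishing the strong relation.

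The hard part is designing the state-indistinguishability relation --- and within it the weak stack relation --- so that it is simultaneously \emph{loose} enough to be an invariant throughout the entire high segment of an execution started at a branch on a secret, yet \emph{tight} enough that, when a $ \text{\sf Ret} $ returns control to an observable PC, it collapses back to the strong componentwise relation needed to re-enter the observable-source unwinding condition. Once that relation is pinned down the remainder is a routine case analysis over the nine opcodes of \autoref{fig:abstractSteps}; the only instruction-level subtlety is $ \text{\sf Store} $ in the unobservable-to-unobservable condition, where the soundness of the no-sensitive-upgrade check is exactly what makes the memory argument go through.
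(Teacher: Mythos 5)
Your proposal is correct and follows essentially the same route as the paper's proof: the same PC-label-based observability predicate, the same two-clause state indistinguishability with the weakened stack relation (agreement below the most recent observable return frame) for unobservable states, and the same identification of the key cases ($ \text{\sf Output} $ for the sanity/action conditions, $ \text{\sf Store} $ with no-sensitive-upgrade for the unobservable-to-unobservable condition, and matching $ \text{\sf Ret} $s for the unobservable-to-observable condition). No substantive differences to report.
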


\fi

\paragraph*{TINI preserved by refinement}
\label{para:tini-refine}

\newcommand{\RARR}{\iffull\;\fi\Rightarrow\iffull\;\fi}

\begin{thm}[TINI preservation]\label{thm:tininPreserved}
Suppose that the generic machine \(M_2\) refines \(M_1\) by
refinement \((\match_i, \match_e)\) and that each machine is equipped
with a notion of observation. Suppose that, for all observers \(o_2\)
of \(M_2\), there exists an observer \(o_1\) of \(M_1\) such that the
following compatibility conditions hold \iffull: \else for all $e_1,
e_1'\in E_1$, all $e_2, e_2'\in E_2$, and all $i_2, i_2'\in I_2$.\fi
\iffull \begin{enumerate} \else \begin{inparaenum}[(i)] \fi
\item \label{thm:tininPreserved1}
      \iffull for all $e_1\in E_1$ and $e_2 \in E_2$,
      \fi
        $e_1 \match_e e_2 \RARR
        {(\filter{o_1}{e_1} \Leftrightarrow \filter{o_2}{e_2})}$\iffull\else; \fi
\item \label{thm:tininPreserved2}
      \iffull for all $i_2, i_2'\in I_2,\;
i_2 \approx^i_{o_2} i_2' \RARR
\exists\,i_1, i_1'\in I_1.\;
               i_1 \mathord{\approx^i_{o_1}} i'_1 \;\wedge\;
               i_1 \match_i i_2\;\wedge\;i'_1 \match_i i'_2$
\else
 $i_2 \mathord{\approx^i_{o_2}} i'_2 \RARR
\exists\, i_1 \mathord{\approx^i_{o_1}} i'_1 .\;
   (i_1 \match_i i_2\;\wedge\;i'_1 \match_i i'_2)$\fi
\iffull\else;\fi
\item \label{thm:tininPreserved3}
      \iffull for all $e_1,e_1'\in E_1$, and all $e_2,e'_2\in E_2$, \fi
 $(e_1\approx^a_{o_1} e'_1\;\wedge\; e_1 \match_e e_2 \;
   \wedge\; e'_1 \match_e e'_2)
     \RARR e_2 \approx^a_{o_2} e'_2$\iffull\else.\fi
\iffull \end{enumerate} \else \end{inparaenum} \fi{}
  Then, if \(M_1\) has TINI, \(M_2\) also has TINI.
\label{thm:tini-ref}
\end{thm}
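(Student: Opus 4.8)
The plan is to chase the definition of TINI for $M_2$ all the way back to TINI for $M_1$ along the refinement, using the three compatibility conditions only to translate between the two notions of observation. So, fix an observer $o_2$ of $M_2$, indistinguishable inputs $i_2 \approx^i_{o_2} i'_2$, and two executions $\ftrace{\Initf{i_2}}{t_2}$ and $\ftrace{\Initf{i'_2}}{t'_2}$; the goal is $\filter{o_2}{t_2} \approx^t \filter{o_2}{t'_2}$. Let $o_1$ be the observer of $M_1$ furnished by the hypothesis. By condition~(\ref{thm:tininPreserved2}) there are inputs $i_1 \approx^i_{o_1} i'_1$ with $i_1 \match_i i_2$ and $i'_1 \match_i i'_2$. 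Applying the refinement $(\match_i,\match_e)$ twice gives traces $t_1, t'_1$ of $M_1$ with $\ftrace{\Initf{i_1}}{t_1}$, $\ftrace{\Initf{i'_1}}{t'_1}$, $t_1 \lift{\match_e} t_2$, and $t'_1 \lift{\match_e} t'_2$. Since $M_1$ has TINI, $\filter{o_1}{t_1} \approx^t \filter{o_1}{t'_1}$. What remains is to transport this indistinguishability across $\lift{\match_e}$.

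For that I would prove two small lemmas on lists of events. \emph{Lemma A}: if $t \lift{\match_e} u$ then $\filter{o_1}{t} \lift{\match_e} \filter{o_2}{u}$ --- matched traces have matched observable projections. This is a routine induction on the (equal-length) lists: by condition~(\ref{thm:tininPreserved1}), a position survives $\filter{o_1}{\cdot}$ exactly when its match survives $\filter{o_2}{\cdot}$, so filtering preserves both equal length and the pointwise matching. Applying Lemma~A to $t_1 \lift{\match_e} t_2$ and $t'_1 \lift{\match_e} t'_2$ yields $\filter{o_1}{t_1} \lift{\match_e} \filter{o_2}{t_2}$ and $\filter{o_1}{t'_1} \lift{\match_e} \filter{o_2}{t'_2}$, where every event occurring in any of these four traces is observable for the relevant observer.

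\emph{Lemma B} is the crux: if $a \lift{\match_e} b$, $a' \lift{\match_e} b'$, all events in $a,a'$ are $o_1$-observable, all events in $b,b'$ are $o_2$-observable, and $a \approx^t a'$, then $b \approx^t b'$. I would argue by induction on the derivation of trace indistinguishability (equivalently, on $b, b'$). If either $b$ or $b'$ is $\tnil$, the conclusion is immediate from the defining rules of $\approx^t$. Otherwise write $b = \tcons{e_2}{v_2}$ and $b' = \tcons{e'_2}{v'_2}$; pulling back through $\lift{\match_e}$ (whose equal-length constraint forces $a, a'$ nonempty too), the heads of $a, a'$ are some $e_1, e'_1$ with $e_1 \match_e e_2$ and $e'_1 \match_e e'_2$, and the tails are again $\lift{\match_e}$-related. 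From $a \approx^t a'$ with both sides nonempty, the only applicable rule of $\approx^t$ gives $e_1 = e'_1$ and tail-indistinguishability. Hence $e_1 \approx^a_{o_1} e'_1$ by reflexivity of $\approx^a$, so condition~(\ref{thm:tininPreserved3}) yields $e_2 \approx^a_{o_2} e'_2$; since $e_2$ is $o_2$-observable, this collapses to $e_2 = e'_2$. The induction hypothesis applied to the tails gives $v_2 \approx^t v'_2$, whence $b \approx^t b'$. Instantiating Lemma~B with $a = \filter{o_1}{t_1}$, $b = \filter{o_2}{t_2}$, $a' = \filter{o_1}{t'_1}$, $b' = \filter{o_2}{t'_2}$ finishes the proof.

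The main obstacle I expect is Lemma~B --- not because it is deep, but because it is the one place where all three requirements have to line up at once: the truncating, equality-demanding nature of $\approx^t$ (so the ``either side $\tnil$'' cases must be handled correctly under matching), the fact that $\approx^a_{o_2}$ degenerates to real equality only on observable events (so the observability invariant threaded through Lemma~A is genuinely needed), and the reflexivity of $\approx^a_{o_1}$ needed to trigger condition~(\ref{thm:tininPreserved3}). Everything else --- obtaining $o_1$, pulling back inputs, invoking the refinement, and invoking TINI of $M_1$ --- is straightforward unfolding of the refinement and observation definitions.
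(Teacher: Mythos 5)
Your proposal is correct and follows essentially the same route as the paper's own proof sketch: condition~(ii) to pull back the inputs, the refinement to obtain matching traces, TINI of $M_1$, then condition~(i) to show filtering preserves trace matching (your Lemma~A) and condition~(iii) to drive the induction transferring indistinguishability across the matching (your Lemma~B). Your elaboration of Lemma~B---in particular the observation that the $o_2$-observability invariant from Lemma~A is what lets $\approx^a_{o_2}$ collapse to equality---is exactly the content the paper leaves implicit in its ``simple induction on the traces.''
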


\begin{proof}
  We include a brief proof sketch to convey the meaning of the theorem
  and the role of the compatibility conditions; intuitively, they say
  that $o_1$ does not have more observation power than $o_2$. Suppose
  that $o_2$ observes two traces $t_2$ and $t_2'$, starting from
  initial states $i_2$ and $i_2'$. We want to show that both traces
  are indistinguishable whenever the initial states are. By
  condition~\ref{thm:tininPreserved2}, we can find related initial
  states $i_1$ and $i_1'$ of $M_1$ that are indistinguishable. Since
  $M_2$ refines $M_1$, we know that these initial states produce
  traces $t_1$ and $t_1'$ that match $t_2$ and $t_2'$; furthermore,
  since $M_1$ has TINI, $t_1$ and $t_1'$ are indistinguishable. By
  condition~\ref{thm:tininPreserved1}, filtering related traces
  results in related traces; that is, $\filter{o_1}{t_1}
  \lift{\match_e} \filter{o_2}{t_2}$, and similarly for the other two
  traces. This implies, thanks to condition~\ref{thm:tininPreserved3},
  that we can use the indistinguishability of $\filter{o_1}{t_1}$ and
  $\filter{o_1}{t_1'}$ to argue that $\filter{o_2}{t_2}$ and
  $\filter{o_2'}{t_2'}$ are also indistinguishable, by a simple
  induction on the traces.
\end{proof}

Some formulations of noninterference are subject to the
\emph{refinement paradox}~\cite{Jacob89}, in which refinements of a
noninterferent system may violate noninterference.
We avoid this issue by employing a strong notion of noninterference
that restricts the amount of non-determinism in the system and is
thus preserved by {\em any} refinement (\autoref{thm:tininPreserved}).%
\footnote{The recent noninterference proof for the seL4
  microkernel~\cite{seL4:Oakland2013, MurrayMBGK12} works similarly
  (see \autoref{sec:relwork}).}
Since our abstract machine is deterministic, it is easy to show this
strong notion of noninterference for it.
In \autoref{sec:concl} we discuss a possible technique for generalizing to the
concurrent setting while preserving a high degree of determinism.

\paragraph*{TINI for concrete machine with IFC fault handler}
\label{para:tini-concrete}

It remains to define a notion of observation on the concrete machine,
instantiating the definition of TINI for this machine.  This
definition refers to a concrete lattice $\ottnt{CL}$, which must be a
correct encoding of an abstract lattice $\mathcal{L}$: the lattice
operators $\ottkw{genBot}$, $\ottkw{genJoin}$, and $\ottkw{genFlows}$
must satisfy the specifications in
\autoref{def:wf-clatt}.

\begin{defn}[Observation for the concrete machine]
Let $\mathcal{L}$ be an abstract lattice, and $\ottnt{CL}$ be correct
with respect to $\mathcal{L}$.  The observation for the concrete
machine is
$(\mathcal{L},\filter{\cdot}{\cdot}^{c},\cdot\approx_\cdot^{ic}\cdot)$,
where
\begin{gather*}
\filter{o}{\ottnt{n}  \mathord{\scriptstyle @}  \mathtt{T}}^{c} \;\triangleq\;
      \mathsf{Lab} ( \mathtt{T} )   \le  o,
\\
(p,\mathit{args_1'},n,\mathtt{T})
        \approx_o^{ic}
(p,\mathit{args_2'},n,\mathtt{T}) \;\triangleq\;
   \mathit{args_1}\lift{\approx_o^{aa}}\mathit{args_2},
\end{gather*}
and
$\mathit{args_i'}=
\mathsf{map~(\lambda (\ottnt{n}  \mathord{\scriptstyle @}  \ottnt{L}).~\ottnt{n}  \mathord{\scriptstyle @}   \mathsf{Tag} ( \ottnt{L} ) )}~\mathit{args_i}$.
\label{def:c-observation}
\end{defn}

Finally, we prove that the backward refinement proved in
\autoref{sec:qa-c-ref} \iffull(\autoref{thm:concRefinesAbs}) \fi satisfies the compatibility constraints
of~\autoref{thm:tininPreserved}, so we derive\iffull{} the main
result\fi:
\begin{thm}\label{thm:concTINI}
The concrete IFC machine running the fault handler
$\tiniFH$ satisfies TINI.
\end{thm}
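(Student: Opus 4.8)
The plan is to obtain \autoref{thm:concTINI} as an immediate corollary of three results already established in the excerpt: TINI for the abstract machine (\autoref{thm:abstractIFCtini}), the fact that the concrete IFC machine running $\tiniFH$ refines the abstract machine (\autoref{thm:refConcQuasi}, or equivalently \autoref{thm:concRefinesAbs}), and the general TINI-preservation theorem under refinement (\autoref{thm:tininPreserved}). The only real work is to check that the hypotheses of \autoref{thm:tininPreserved} are met, namely the three compatibility conditions relating the observation structures on the two machines, with $M_1$ the abstract machine and $M_2$ the concrete machine and refinement pair $(\match_i^{c},\match_e^{c})$.

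First I would fix, for each concrete observer $o_2 \in \mathcal{L}$, the \emph{same} label $o_1 = o_2$ as the corresponding abstract observer; this is legitimate because both the abstract and concrete observation structures (\autoref{def:a-observation} and \autoref{def:c-observation}) use the abstract lattice $\mathcal{L}$ as their set of observers. Then I would discharge the three conditions in turn. For condition~\ref{thm:tininPreserved1}, given $e_1 \match_e^{c} e_2$, by the definition of $\match_e^{c}$ we have $e_1 = \ottnt{n}  \mathord{\scriptstyle @}  \ottnt{L}$ and $e_2 = \ottnt{n}  \mathord{\scriptstyle @}   \mathsf{Tag} ( \ottnt{L} ) $, so $\filter{o}{e_2}^{c}$ unfolds to $ \mathsf{Lab} ( \mathsf{Tag} ( \ottnt{L} ) )   \le  o$, which by the concrete-lattice correctness condition $ \mathsf{Lab} \circ \mathsf{Tag}  = id$ is exactly $\ottnt{L}  \le  o$, i.e.\ $\filter{o}{e_1}^{a}$. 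For condition~\ref{thm:tininPreserved2}, given indistinguishable concrete inputs $(p,\mathit{args_1'},n,\mathtt{T}) \approx_{o}^{ic} (p,\mathit{args_2'},n,\mathtt{T})$, the definition of $\approx_{o}^{ic}$ already provides underlying abstract argument lists $\mathit{args_1},\mathit{args_2}$ with $\mathit{args_1}\lift{\approx_o^{aa}}\mathit{args_2}$ and $\mathit{args_i'} = \mathsf{map~(\lambda (\ottnt{n}  \mathord{\scriptstyle @}  \ottnt{L}).~\ottnt{n}  \mathord{\scriptstyle @}   \mathsf{Tag} ( \ottnt{L} ) )}~\mathit{args_i}$; taking $i_1 = (p,\mathit{args_1},n,\ottnt{L})$ and $i_1' = (p,\mathit{args_2},n,\ottnt{L})$ for any $\ottnt{L}$ with $ \mathsf{Tag} ( \ottnt{L} ) =\mathtt{T}$ gives abstract inputs that are $\approx_o^{ia}$-indistinguishable and that $\match_i^{c}$-relate to the given concrete inputs (this is precisely the shape of the rule defining $\match_i^{c}$ in the excerpt). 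For condition~\ref{thm:tininPreserved3}, suppose $e_1 \approx_{o}^{a} e_1'$, $e_1 \match_e^{c} e_2$, $e_1' \match_e^{c} e_2'$; by \eqnref{eqn:indist-events} either $e_1 = e_1'$, in which case $e_2 = e_2'$ (since $\match_e^{c}$ is functional on the payload–label pair, both map to $\ottnt{n}  \mathord{\scriptstyle @}   \mathsf{Tag} ( \ottnt{L} ) $) and hence $e_2 \approx_{o}^{a} e_2'$ by reflexivity, or neither $e_1$ nor $e_1'$ is observable, in which case condition~\ref{thm:tininPreserved1} transports unobservability to $e_2$ and $e_2'$, giving $e_2 \approx_{o}^{a} e_2'$ by the second rule of \eqnref{eqn:indist-events}.

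Having verified all three compatibility conditions, I would invoke \autoref{thm:tininPreserved} with $M_1$ the abstract IFC machine (which has TINI by \autoref{thm:abstractIFCtini}), $M_2$ the concrete IFC machine running $\tiniFH$, and the refinement $(\match_i^{c},\match_e^{c})$ guaranteed by \autoref{thm:refConcQuasi}; this yields that $M_2$ has TINI, which is exactly the statement of \autoref{thm:concTINI}. The only point demanding genuine care — and hence the main obstacle — is lining up the observation structures across the abstraction gap: one must be sure that the decoding/encoding discipline $ \mathsf{Lab} \circ \mathsf{Tag}  = id$ is invoked at exactly the right places so that the observability predicates $\filter{\cdot}{\cdot}^{a}$ and $\filter{\cdot}{\cdot}^{c}$ agree on $\match_e^{c}$-related events, and that the existential witnesses for abstract inputs in condition~\ref{thm:tininPreserved2} respect both $\approx^i$ and $\match_i$. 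Everything else is bookkeeping that follows directly from the definitions in \autoref{def:a-observation}, \autoref{def:c-observation}, and the definitions of $\match_i^{c}$ and $\match_e^{c}$.
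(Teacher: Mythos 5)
Your proposal is correct and follows exactly the route the paper takes: it derives \autoref{thm:concTINI} by instantiating the TINI-preservation theorem (\autoref{thm:tininPreserved}) with the abstract machine (noninterfering by \autoref{thm:abstractIFCtini}) and the concrete-refines-abstract result, checking the three compatibility conditions for the observation structures of \autoref{def:a-observation} and \autoref{def:c-observation} via $ \mathsf{Lab} \circ  \mathsf{Tag}  = id$. The only cosmetic remark is that the refinement you need is the composed one (concrete refines abstract, i.e.\ \autoref{thm:refConcQuasi} composed with \autoref{lem:refAbsQuasi}), which you already acknowledge, so there is nothing to fix.
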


\section{An Extended System}
\label{sec:extensions}

\begin{figure}[tbp!]
\ottgrammartabular{
\ottnewinstr\ottinterrule
}
\vspace*{-3ex}
\caption{Additional instructions for extensions}
\label{fig:newinstrs}
\end{figure}

\begin{figure}
\ottusedrule{\ottdruleEStepAlloc{}}
\ottusedrule{\ottdruleEStepSizeOf{}}
\ottusedrule{\ottdruleEStepGetOff{}}
\ottusedrule{\ottdruleEStepEq{}}
\ottusedrule{\ottdruleEStepSysCall{}}
\caption{Semantics of selected new abstract machine instructions}
\label{fig:extAbstractSteps}
\end{figure}

\begin{figure}
\ottusedrule{\ottdruleCEAlloc{}}
\ottusedrule{\ottdruleCEAllocP{}}
\ottusedrule{\ottdruleCEPushCachePtr{}}
\ottusedrule{\ottdruleCEUnpack{}}
\ottusedrule{\ottdruleCEPack{}}
\ottusedrule{\ottdruleCESysCall{}}
\caption{Semantics of selected new concrete machine instructions}
\label{fig:extConcreteSteps}
\end{figure}

Thus far we have presented our methodology in the context of a simple
machine architecture and IFC discipline. We now show how it can be
scaled up to a significantly more sophisticated setting, where the
basic machine is extended with a \emph{frame-based} memory model
supporting \emph{dynamic allocation} and a \emph{system call}
mechanism for adding special-purpose primitives. Building on these
features, we define an abstract IFC machine that uses \emph{sets of
  principals} as its labels and a corresponding concrete machine
implementation where tags are pointers to dynamically allocated
representations of these sets.  While still much less complex than the
real SAFE system, this extended model serves as good evidence of the
robustness our approach, and how it might apply to more realistic
designs: The new features were added by incrementally adapting the Coq
formalization of the basic system, without requiring any major changes
to the initial proof architecture.

\autoref{fig:newinstrs} shows the new instructions supported by the
extended model. Instruction $\ottkw{PushCachePtr}$, $\ottkw{Unpack}$, and
$\ottkw{Pack}$ are used only by the concrete machine, for the compiled
fault handler (hence they only have a kernel-mode stepping rule; they
simply get stuck if executed outside kernel mode, or on an abstract
machine).  We also add two stack-manipulation instructions, $ \text{\sf Dup} $
and $ \text{\sf Swap} $, to make programming the kernel routines more
convenient. It remains true that any program for the abstract machine
makes sense to run on the abstract rule machine and the concrete
machine.  For brevity, we detail stepping rules only for the extended
abstract IFC machine (\autoref{fig:extAbstractSteps}) and concrete
machine (\autoref{fig:extConcreteSteps}); corresponding extensions to
the symbolic IFC rule machine are straightforward (we also omit rules
for $ \text{\sf Dup} $ and $ \text{\sf Swap} $). Individual rules are explained below.

\subsection{Dynamic memory allocation}
High-level programming languages usually assume a structured memory model,
in which independently allocated \emph{frames} are disjoint by
construction and programs cannot depend on the relative placement of
frames in memory.
The SAFE hardware enforces this abstraction by attaching explicit
runtime types to all values, distinguishing pointers from other
data. Only data marked as pointers can be used to access memory. To
obtain a pointer, one must either call the (privileged) memory manager
to allocate a fresh \emph{frame} or else offset an existing
pointer. In particular, it is not possible to ``forge'' a pointer from
an integer.  Each pointer also carries information about its base and
bounds, and the hardware prevents it from being used to access memory
outside of its frame.

\paragraph{Frame-based memory model} In our extended system,
we model the user-level view of SAFE's memory
system by adding a frame-structured memory%
\iffull{} (similar to~\cite{LeroyB08})\fi%
, distinguished pointers (so \emph{values}, the payload field of atoms
and the tag field of concrete atoms, can now either be an integer
$(\ottkw{Int} \, \ottnt{n})$ or a pointer $(\ottkw{Ptr} \, p)$), and an allocation
instruction to our basic machines.  We do this (nearly) uniformly at
all levels of abstraction.\footnote{It would be interesting to
describe an \emph{implementation} of the memory manager in a
still-lower-level concrete machine with no built-in $\ottkw{Alloc}$
instruction, but we leave this as future work.}  A \emph{pointer} is a
pair \(p=(id,o\)) of a frame identifier \(id\) and an offset \(o\)
into that frame.  In the machine state, the data memory $\mu$ is a
partial function from pointers to individual storage cells that is
undefined on out-of-frame pointers.  By abuse of notation, $\mu$ is
also a partial function from frame identifiers to frames, which are
just lists of atoms.

The most important new rule of the extended abstract machine is
$\ottkw{Alloc}$ (\autoref{fig:extAbstractSteps}).  In this machine there
is a separate memory region (assumed infinite) corresponding to each
label.  The auxiliary function $ \text{\sf alloc} $ in the rule for
$\ottkw{Alloc}$ takes a size \(k\), the label (region) at which to
allocate, and a default atom $a$; it extends $\mu$ with a fresh
frame of size $k$, initializing its contents to $a$. It returns
the id of the new frame and the extended memory $\mu'$.

\paragraph{IFC and memory allocation} We require that the frame
identifiers produced by allocation at one label not be affected by
allocations at other labels; \EG $ \text{\sf alloc} $ might allocate
sequentially in each region.  Thus, indistinguishability of low atoms
is just syntactic equality, preserving \autoref{defn:obs-abstr} from
the simple abstract machine, which is convenient for proving
noninterference, as we explain below.  We allow a program to observe
frame sizes using a new $\ottkw{SizeOf}$ instruction, which requires
tainting the result of $\ottkw{Alloc}$ with $\ottnt{L}$, the label of the size
argument.  There are also new instructions $\ottkw{Eq}$, for comparing two
values (including pointers) for equality, and $\ottkw{GetOff}$, for
extracting the offset field of a pointer into an integer.  However,
frame ids are intuitively \emph{abstract}: the concrete representation
of frame ids is not accessible, and pointers cannot be forged or
output.  The extended concrete machine stepping rules for these new
instructions are analogous to the abstract machine rules, with the
important exception of $\ottkw{Alloc}$, which is discussed below.

A few small modifications to existing instructions in the basic machine
(\autoref{fig:abstractSteps}) are needed to handle pointers properly.  In
particular:
\begin{inparaenum}[(i)]
\item $ \text{\sf Load} $ and $ \text{\sf Store} $ require pointer arguments
and get stuck if the pointer's offset is out of range for its frame.
\item $ \text{\sf Add} $ takes either two integers or an integer and a pointer,
  where $\ottkw{Int} \, \ottnt{n}+\ottkw{Int} \, \ottnt{m}=\ottkw{Int} \, \ottsym{(}  \ottnt{n}  \mathord{+}  \ottnt{m}  \ottsym{)}$ and \(\ottkw{Ptr} \, \ottsym{(}  \ottnt{id}  \mathord{,}\,  o_{{\mathrm{1}}}  \ottsym{)}+\ottkw{Int} \, o_{{\mathrm{2}}}=\ottkw{Ptr} \, \ottsym{(}  \ottnt{id}  \mathord{,}\,  o_{{\mathrm{1}}}  \mathord{+}  o_{{\mathrm{2}}}  \ottsym{)}\).
\item $ \text{\sf Output} $ works only on integers, not pointers.
\end{inparaenum}
Analogous modifications are needed in the concrete machine semantic
rules.

\paragraph{Concrete allocator} The extended concrete machine's
semantics for $\ottkw{Alloc}$ differ from those of the abstract machine in
one key respect.  Using one region per tag would not be a realistic
strategy for a concrete implementation; e.g., the number of different
tags might be extremely large.
Instead, we use a single region for all user-mode allocations at the
concrete level. We also collapse the separate user and kernel memories
from the basic concrete machine into a single memory. Since we still
want to be able to distinguish user and kernel frames, we mark each
frame with a privilege mode (i.e., we use two allocation regions).
\autoref{fig:extConcreteSteps} shows the corresponding concrete
stepping rule for $\ottkw{Alloc}$ for two cases: non-faulting user mode
and kernel mode.
The rule cache is now just a distinguished kernel frame $ cache $;
to access it, the fault handler uses the (privileged)
$\ottkw{PushCachePtr}$ instruction.
The concrete $ \text{\sf Load} $ and $ \text{\sf Store} $ rules are modified to prevent
dereferencing kernel pointers in user mode. These checks are only
needed if we want to allow user-level code to manipulate kernel
pointers directly while protecting the data structures they point
to. For instance, we could allow certain operations on pointers
representing labels, such as taking the join of two labels, while
preserving noninterference. If kernel pointers cannot be ``leaked''
into user data (as in~\autoref{para:sets-of-prins}), these checks can
be safely omitted, since user-level code won't be able to tamper with
kernel data.

\paragraph{Proof by refinement} As before, we prove noninterference
for the concrete machine by combining a proof of noninterference of
the abstract machine with a two-stage proof that the concrete machine
refines the abstract machine. By using this approach we avoid some
well-known difficulties in proving noninterference directly for the
concrete machine. In particular, when frames allocated in low and high
contexts share the same region, allocations in high contexts can cause
variations in the precise pointer values returned for allocations in
low contexts, and these variations must be taken into account when
defining the indistinguishability relation. For example, Banerjee and
Naumann~\cite{BanerjeeN05} prove noninterference by parameterizing
their indistinguishability relation with a partial bijection that
keeps track of indistinguishable memory addresses.  Our approach, by
contrast, defines pointer indistinguishability only at the abstract
level, where indistinguishable low pointers are identical.  This proof
strategy still requires relating memory addresses when showing
refinement, but this relation does not appear in the noninterference
proof at the abstract level. The refinement proof itself uses a
simplified form of \emph{memory injections}~\cite{LeroyB08}.  The
differences in the memory region structure of both machines are
significant, but invisible to programs, since no information about
frame ids is revealed to programs beyond what can be obtained by
comparing pointers for equality.  This restriction allows the
refinement proof to go through straightforwardly.

\subsection{System calls}
To support the implementation of policy-specific primitives on top of the
concrete machine, we provide a new \emph{system call} instruction.
The $\text{\sf SysCall} \, \ottnt{id}$ instruction is parameterized by a system call
identifier.  The step relation of each machine is now parameterized by
a table $T$ that maps system call identifiers to their
implementations.

In the abstract and symbolic rule machines, a system call
implementation is an arbitrary Coq function that removes a list of
atoms from the top of the stack and either puts a result on top of the
stack or fails, halting the machine. The system call implementation is
responsible for computing the label of the result and performing any
checks that are needed to ensure noninterference.

In the concrete machine, system calls are implemented by kernel
routines and the call table contains the entry points of these
routines in the kernel instruction memory.  Executing a system call
involves inserting the return address on the stack (underneath the
call arguments) and jumping to the corresponding entry point. The
kernel code terminates either by returning a result to the user
program or by halting the machine.

This feature has no major impact on the proofs of noninterference and
refinement. For noninterference, we must show that all the abstract system
calls preserve indistinguishability of abstract machine states; for
refinement, we show that each concrete system call correctly
implements the abstract one using the machinery
of~\autoref{sec:fault-handler-correct}.

\subsection{Labeling with sets of principals}
\label{para:sets-of-prins}
The full SAFE machine supports dynamic creation of security
principals.  In the extended model, we make a first step toward
dynamic principal creation by taking principals to be integers and
instantiating the (parametric) lattice of labels with the lattice of
finite sets of integers. This lattice is statically known, but models
dynamic creation by supporting unbounded labels and having no top
element. In this lattice, $ \bot $ is $\emptyset$, $ \mathord{\vee} $ is
$\cup$, and $ \le $ is $\subseteq$.  We enrich our IFC model by
adding a new \emph{classification} primitive $ \text{\sf joinP} $ that adds a
principal to an atom's label, encoded using the system call mechanism
described above.  The operation of $ \text{\sf joinP} $ is given by the
following derived rule, which is an instance of the $ \text{\sf SysCall} $ rule
from \autoref{fig:extAbstractSteps}.
\ottusedrule{\ottdruleEStepjoinP{}}

At the concrete level, a tag is now a pointer to an array of
principals (integers) stored in kernel memory.  To keep the fault
handler code simple, we do not maintain canonical representations of
sets: one set may be represented by different arrays, and a given
array may have duplicate elements. (As a consequence, the mapping from
abstract labels to tags is no longer a function; we return to this
point below.)  Since the fault handler generator in the basic system
is parametric in the underlying lattice, it doesn't
require any modification. All we must do is provide concrete
implementations for the appropriate lattice operations: $\ottkw{genJoin}$
just allocates a fresh array and concatenates both argument arrays
into it; $\ottkw{genFlows}$ checks for array inclusion by iterating
through one array and testing whether each element appears in the
other; and $\ottkw{genBot}$ allocates a new empty array.  Finally,
we provide kernel code to implement $ \text{\sf joinP} $, which
requires two new privileged instructions, $\ottkw{Pack}$ and $\ottkw{Unpack}$
(\autoref{fig:extConcreteSteps}), to manipulate the payload and tag
fields of atoms; otherwise, the implementation is similar to that of
$\ottkw{genJoin}$.

A more realistic system would keep canonical representations of sets
and avoid unnecessary allocation in order to improve its memory
footprint and tag cache usage.  But even with the present simplistic
approach, both the code for the lattice operations and their proofs of
correctness are significantly more elaborate than for the trivial
two-point lattice. In particular, we need an additional code generator
to build counted loops, e.g., for computing the join of two tags.
\[
\begin{array}{rcl}
\ottkw{genFor}\ c\ & = &
[ \ottkw{Dup} ]\  {\scriptstyle \mathord{+\!+} } \ \ottkw{genIf}\ (\ottkw{genLoop} (c\  {\scriptstyle \mathord{+\!+} } \ [ \text{\sf Push} \, \ottsym{(}  \ottsym{-}  \ottsym{1}  \ottsym{)},\ottkw{Add} ]))\ [] \\
& & \mbox{\rm where}\ \ottkw{genLoop}\ c\ =\ c\  {\scriptstyle \mathord{+\!+} } \
[ \ottkw{Dup},\ottkw{Bnz}\ (-(\ottkw{length}\ c + 1)) ]
\end{array}
\]
Here, $c$ is a code sequence representing the loop body, which is
expected to preserve an index value on top of the stack; the generator
builds code to execute that body repeatedly, decrementing the index
each time until it reaches 0.  The corresponding specification is
\ottusedrule{\ottdruleRepeat{}}

To avoid reasoning about memory updates as far as possible, we code in
a style where all local context is stored on the stack and manipulated
using $ \text{\sf Dup} $ and $ \text{\sf Swap} $. Although the resulting code is
lengthy, it is relatively easy to automate the corresponding proofs.

\paragraph{Stateful encoding of labels}
Changing the representation of tags from integers to pointers requires
modifying one small part of the basic system proof.  Recall that
in \autoref{sec:fault-handler} we described the encoding of labels
into tags as a \emph{pure} function $ \mathsf{Lab} $.  To deal with the
memory-dependent and non-canonical representation of sets described
above, the extended system instead uses a \emph{relation} between an
abstract label, a concrete tag that encodes it, and a memory in which
this tag should be interpreted.

If tags are pointers to data structures, it is crucial that these data
structures remain intact as long as the tags appear in the machine
state.  We guarantee this by maintaining the very strong invariant
that each execution of the fault handler only allocates new frames,
and never modifies the contents of existing ones, except for the
$ cache $ frame (which tags never point into).  A more realistic
implementation might use mutable kernel memory for other purposes and
garbage collect unused tags; this would require a more complicated
memory invariant.

The TINI formulation is similar in essence to the one in
\autoref{sec:ni}, but some subtleties arise for concrete output
events, since their tags cannot be interpreted on their own
anymore. We wish to
\begin{inparaenum}[(i)]
\item keep the semantics of the concrete machine independent of high-level policies such as IFC
  and
\item give a statement of noninterference that does not refer to
  pointers.
\end{inparaenum}
To achieve these seemingly contradictory aims, we model an event of
the concrete machine as a pair of a concrete atom plus the whole
state of the kernel memory.
This memory is not visible to observers in the formulation of TINI,
but instead determines which events' payloads they are able to
observe. This is done by extending our notion of observation with a
function that interprets every concrete event present in the output
trace in higher-level terms. This interpretation abstracts away from
low-level representation issues, such as the layout of data structures
in memory, and allows us to give a more natural definition of event
indistinguishability in the formulation of TINI. For instance, in the
extended system described above, the interpretation of a pointer tag
is the set of principals that that pointer represents in kernel
memory---that is, the contents of the array it points to. This allows
us to define the event indistinguishability relation by simple
equality.

Our model of observation in terms of an interpretation function is an
idealization of what happens in the real SAFE machine, where
communication of labeled data with the outside world involves
cryptography\ch{wishful thinking}.  Extending this model this is left
as future work.

\section{Related Work}
\label{sec:relwork}

The SAFE design spans a number of research areas, and a comprehensive overview
of related work would be huge.  We focus here on a small set of especially
relevant points of comparison.

\paragraph{Language-based IFC}

Static approaches to IFC have generally dominated language-based security
research~\citeEtcShort{
  sabelfeld03:lang_based_security%
\iffull,%
volpano96:IF,%
myers:decentr_label_model,%
pottier03:flowcaml%
\fi}; however,
statically enforcing IFC at the lowest level of a real system is
challenging.
Soundly analyzing native binaries with reasonable
precision is hard\iffull{} (static IFC for low-level code
  usually stops at the bytecode level~\citeEtcShort{BarthePR07%
  \iffull,Grabowski12,MannS12,HammerS09\fi})\fi,
even more so without the compiler's
cooperation (\EG for stripped or obfuscated binaries).
Proof-carrying
code~\citeEtcShort{BarthePR07\iffull,Grabowski12,BartheCGJP07\fi} and
typed assembly language~\citeEtcShort{MedelCB05\iffull,YuI06,Yu07\fi}
have been used for enforcing IFC on low-level code
without low-level analysis or adding the compiler to the TCB.
\ifmuchlater
\ch{Does \cite{BartheRN06} fit in the story above?
  Any other important references for PCC/TAL/bytecode?}
\fi
In SAFE~\cite{SAFEPLOS11, interlocks_ahns2012} we follow a
different approach, enforcing noninterference using purely dynamic checks,
for arbitrary binaries in a custom-designed instruction set.
The mechanisms we use for this are similar to those found in recent
work on purely dynamic IFC for high-level languages~\citeEtcShort{AustinF09,
  RussoS10, StefanRMM11, HedinS12, Exceptional%
\iffull,%
  Guernic07,%
  Pistoia:2007,%
  Shinnar:2009,%
  GuernicBJS06,%
  MooreC11,%
  AustinF:2010,%
  AustinF12,%
  AustinFA12,%
  SabelfeldR09,%
  AskarovS09b%
\fi}; however, as far as we
know, we are the first to push these ideas to the lowest level.
%

\paragraph{seL4}

Murray~\ETAL\cite{seL4:Oakland2013} recently demonstrated a
machine-checked noninterference proof for the implementation of the
seL4 microkernel.
This proof is carried out by refinement and reuses
the specification and most of the
existing functional correctness proof of
seL4~\cite{Klein09sel4:formal}.
Like the TINI property in this paper, the variant of intransitive
noninterference used by Murray~\ETAL is preserved by refinement
because it implies a high degree of determinism~\cite{MurrayMBGK12}.
This organization of their proof was responsible
for a significant saving in effort, even when factoring in the
additional work required to remove all observable non-determinism from
the seL4 specification.
Beyond these similarities, SAFE and seL4 rely on completely different
mechanisms to achieve different notions of noninterference (seL4
admits intransitive IFC policies, capturing the ``where'' dimension of
declassification~\cite{sabelfeld05:_dimensions_declass}, while we
consider transitive ones).
Whereas, in SAFE, each word of data has an IFC label and labels are
propagated on each instruction, the seL4 kernel maintains separation between several
large partitions (\EG one partition can run an unmodified version of
Linux) and ensures that information is conveyed between such
partitions only in accordance with a fixed access control policy.

\paragraph{PROSPER}
In parallel work,
Dam~\ETAL\citeEtcShort{DamGKNS13,KhakpourSD13\iffull,DamGN13\fi}
verified information flow security for a tiny proof-of-concept
separation kernel running on ARMv7 and using a Memory Management Unit
for physical protection of memory regions belonging to different
partitions.
The authors argue that noninterference is not well suited for systems
in which components are supposed to communicate with each other.
Instead, they use the bisimulation proof method to show trace
equivalence between the real system and an ideal top-level
specification that is secure by construction.
As in seL4~\cite{seL4:Oakland2013}, the proof methodology precludes an
abstract treatment of scheduling, but the authors contend this is
to be expected when information flow is to be taken into
account.
In more recent work, \citet{Balliu14} propose a symbolic
execution-based information flow analysis for machine code, and use
this technique to verify a separation kernel system call handler, a
UART device driver, and a crypto service modular exponentiation
routine.


\paragraph{TIARA and ARIES}

The SAFE architecture embodies a number of innovations from earlier
paper designs.  In particular, the TIARA
design~\cite{shrobe09:tiara_nicecap_report} first proposed the idea of
a zero-kernel operating system and sketched a concrete architecture,
while the ARIES project proposed using a hardware rule cache to speed
up information-flow tracking~\cite{brown01:aries_ifc}.
In TIARA and ARIES, tags had a fixed set of fields
and were of limited length, whereas,
in SAFE, tags are pointers to arbitrary data
structures, allowing them to represent complex IFC labels
encoding sophisticated security policies~\cite{GenLabels}\iffull, for instance
decentralized ones~\cite{myers:decentr_label_model\iffull,StefanRMM11dc\fi}\fi.
Moreover, unlike TIARA and ARIES, which made no formal soundness claims, SAFE
proposes a set of IFC rules aimed at
achieving noninterference; the proof we present in this paper, though for
a simplified model, provides evidence that this goal is feasible.

\paragraph{RIFLE and other binary-rewriting-based IFC systems}

RIFLE \cite{rifle_micro2004} enforces user-specified
information-flow policies for x86 binaries using binary rewriting,
static analysis, and augmented hardware.
Binary rewriting is used to make implicit flows explicit; it heavily
relies on static analysis for reconstructing the program's
control-flow graph and performing reaching-definitions and alias
analysis.
The augmented hardware architecture associates labels with registers
and memory and updates these labels on each instruction to track
explicit flows.
Additional security registers are
used by the binary translation mechanism to help track implicit flows.
Beringer~\cite{Beringer2012} recently proved (in Coq) that the main
ideas in RIFLE can be used to achieve noninterference for a simple
While language.
Unlike RIFLE, SAFE achieves noninterference purely dynamically and
does not rely on binary rewriting or heroic static analysis of
binaries.
Moreover, the SAFE hardware is generic, simply caching instances
of software-managed rules.

While many other information flow tracking systems based on binary
rewriting have been proposed, few are concerned with
soundly handling implicit flows~\cite{MasriPL04,ClauseLO07}, and even
these do so only to the extent they can statically analyze binaries.
Since, unlike RIFLE (and SAFE), these systems use unmodified hardware,
the overhead for tracking implicit flows can be large.
To reduce this overhead, recent systems track implicit flows
selectively~\cite{KangMPS11} or not at
all\iffull~\cite{JeePKGAK2012,lift_micro2006}\fi---arguably a reasonable tradeoff
in settings such as malware analysis or attack detection,
where speed and precision are more important than soundness.

\paragraph{Hardware taint tracking}

The last decade has seen significant progress in specialized
hardware for accelerating taint
tracking%
~\citeEtcShort{secure_flow_track_asplos2004, raksha_isca2007, flexitaint,
deng_dsn2012%
\iffull,%
  deng_micro2010,%
  CrandallC04,%
  pointer_taintedness_dsn2005%
\fi}.
Most commonly, a single tag bit is associated with each word to
specify if it is tainted or not.
Initially aimed at mitigating low-level memory corruption attacks by
preventing the use of tainted pointers and the execution of tainted
instructions~\citeEtcShort{secure_flow_track_asplos2004%
\iffull,CrandallC04,pointer_taintedness_dsn2005\fi},
hardware-based taint tracking has also
been used to prevent high-level attacks such as SQL injection and
cross-site scripting~\cite{raksha_isca2007}.
In contrast to SAFE, these systems prioritize efficiency and overall
helpfulness over the soundness of the analysis, striking a heuristic
balance between false positives and false negatives (missed attacks).
As a consequence, these systems ignore implicit flows and
often do not even track all explicit flows.
While early systems supported a single hard-coded taint propagation
policy, recent ones allow the policy to be defined in
software~\cite{raksha_isca2007,flexitaint, deng_dsn2012} and support
monitoring policies that go beyond taint
tracking~\citeEtcShort{deng_dsn2012,ChenKSFGMRRRV08\iffull,deng_micro2010,RuwaseGMRCKR08\fi}.
Harmoni~\cite{deng_dsn2012}, for example,
provides a pair of caches that are quite similar to the SAFE rule
cache.  Possibly these could even be adapted to enforcing
noninterference, in which case we expect the proof methodology
introduced here to apply.

\iffull
\paragraph{Timing and termination}
Our TINI property ignores both termination and timing: a program that
diverges, fails, or takes varying amounts of time to run based on a
sensitive input is considered secure.
The full SAFE design includes a clearance-based access-control
mechanism~\cite{StefanRMM11} for addressing termination and timing
covert channels (\IE high-bandwidth channels through which malicious
code can exfiltrate secrets it directly has access to).
Stefan~\ETAL\cite{StefanRBLMM12} have also shown that in a concurrent
setting such leaks can be prevented by an adapted IFC mechanism,
at the risk of spawning very large numbers of threads.
We believe that this IFC mechanism could also be enforced
using the hardware mechanisms we describe here.
A recently proposed technique for instruction-based
scheduling~\cite{StefanBYLTRM13\iffull,BuirasLSRM13\fi}
is aimed at preventing leaks via the
internal timing side-channel (\EG malicious code sharing the same
processor inferring secrets through timing variations arising from
cache misses).
This could probably be adapted to SAFE, and since the SAFE processor
is very simple the mitigation could work well~\cite{CockGMH14}.
Finally, several mechanisms have been proposed for mitigating
the external timing side-channel (\IE leakage of secrets to an
attacker making timing observations over the network) and thus
reducing the rate at which bits can be
leaked~\citeEtcShort{ZhangAM11\iffull,AskarovZM10\fi}.
We do not consider any of these attacks or mitigations in this work.

\fi

\paragraph{Verification of low-level code}

The distinctive challenge in verifying machine code is coping with
unstructured control flow.  Our approach using structured generators
to build the fault handler is similar to the mechanisms used in
Chlipala's Bedrock system~\cite{Chlipala11,Chlipala2013} and by
Jensen~\ETAL\cite{JensenBK13}, but there are several points of
difference.  These systems each build macros on top of a powerful
low-level program logic for machine code (Ni and Shao's
XCAP~\cite{NiS06}, in the case of Bedrock), whereas we take a simpler,
ad-hoc approach, building directly on our stack machine's relatively
high-level semantics.  Both these systems are based on separation
logic, which we can do without since (at least in the present
simplified model) we have very few memory operations to reason about.
We have instead focused on developing a simple Hoare logic
specifically suited to verifying structured runtime-system code; e.g.,
we omit support for arbitrary code pointers, but add support for
reasoning about termination.  We use total-correctness Hoare triples
(similar to Myreen and Gordon~\cite{MyreenG07}) and weakest
preconditions to guarantee progress, not just safety, for our handler
code.  Finally, our level of automation is much more modest than
Bedrock's, though still adequate to discharge most verification
conditions on straight-line stack manipulation code rapidly and often
automatically.

\iffull
\paragraph{Work on testing noninterference}
The abstract machine in \autoref{sec:abstract} was proposed by
Hri\c{t}cu~\ETAL\cite{TestingNI}, extended in this work with dynamic
allocation and data classification (\autoref{sec:extensions}), and
recently further extended by Hri\c{t}cu~\ETAL to a sophisticated
machine featuring a highly permissive flow-sensitive dynamic
enforcement mechanism, public labels, and
registers~\cite{jfpsubmission2014}. While the focus of that work is on
verifying noninterference by random testing, it also shows how to use
invariants discovered during testing to formalize proofs of
noninterference in Coq.

Although the abstract machine and IFC mechanism considered here are
simpler than the most complex ones of \citet{jfpsubmission2014}, our
main concerns are the {\em concrete} machine, the IFC fault handler,
and the key properties of this combination, all of which are novel. We
believe nevertheless that our methodology could be extended to that
setting as well, verifying an implementation of this extended IFC
machine by a lower-level one. Depending on the hardware capabilities
at the lower level, some of the features of the machine could have to
be implemented in software, requiring further proofs. For instance,
this extended IFC machine still relies on a protected stack for
soundly performing function calls and returns: on a call, the entire
register file is stored on this stack, so that it can be restored upon
a return, thereby preventing data leakage. At the lowest level, this
protected stack could be implemented with a regular stack living in
kernel space, managed through special system calls.

\paragraph{Tagging hardware beyond IFC}
Although the tagging mechanism we discuss arose in the context of the
SAFE system, and was primarily designed for information-flow control,
it is sufficiently generic to be implemented in other architectures
and to enforce more security policies.

In follow-on work, Dhawan~\ETAL\cite{pump_asplos2015} adapt the
tagging mechanism to a more conventional RISC processor, using it to
implement policies such as memory safety and control-flow
integrity. They evaluate the performance of the mechanism on benchmark
simulations, which indicate a modest impact on speed (typically under
10\%) and power ceiling (less that 10\%), even when enforcing multiple
policies simultaneously.

Azevedo de Amorim~\ETAL\cite{micropolicies2015} use Coq to formalize a
generic version of the symbolic machine of~\autoref{sec:quasi}; that
machine is different from the one discussed here in that it is based
on a more conventional processor design (\EG with registers instead
of a protected stack), and serves as a high-level substrate for
programming many different security policies,
including compartmentalization and memory
safety. Finally, they formulate the intended effect of each policy as
a security property, using formal proofs to show that each policy
enforces the corresponding property.

A recent project at Draper Labs~\cite{Dover16} is working to extend the
RISC-V processor with tag propagation hardware in the style of the SAFE
processor.  As of March 2016, a prototype able to boot Linux is running on
FPGA boards.

\section{Conclusions and Future Work}
\label{sec:concl}

We have presented a formal model of the key IFC mechanisms of the SAFE
system: propagating and checking tags to enforce security, using a
hardware cache for common-case efficiency and a software fault handler
for maximum flexibility. To formalize and prove properties at such a
low level (including features such as dynamic memory allocation and
labels represented by pointers to in-memory data structures), we first
construct a high-level abstract specification of the system, then
refine it in two steps into a realistic concrete machine.  A
bidirectional refinement methodology allows us to prove (i) that the
concrete machine, loaded with the right fault handler (i.e. correctly
implementing the IFC enforcement of the abstract specification)
satisfies a traditional notion of termination-insensitive
noninterference, and (ii) that the concrete machine reflects all the
behaviors of the abstract specification. Our formalization reflects
the programmability of the fault handling mechanism, in that the fault
handler code is compiled from a rule table written in a small DSL. We
set up a custom Hoare logic to specify and verify the corresponding
machine code, following the structure of a simple compiler for this
DSL.

The development in this paper concerns three {\em deterministic}
machines and simplifies away {concurrency}.  While the lack of
concurrency is a significant current limitation that we would like to
remove by moving to a multithreading single-core
model, we still want to maintain the abstraction layers of a
proof-by-refinement architecture.
This requires some care so as not to run afoul of the refinement
paradox~\cite{Jacob89} since some standard notions of noninterference (for
example possibilistic noninterference) are not preserved by refinement in
the presence of non-determinism.
One promising path toward this objective is inspired by the recent
noninterference proof for seL4~\cite{seL4:Oakland2013,MurrayMBGK12}.
If we manage to share a common thread scheduler between the abstract
and concrete machines, we could still prove a strong double refinement
property (concrete refines abstract and vice versa) and hence preserve
a strong notion of noninterference (such as the TINI notion from this
work) or a possibilistic variation.

\ifmuchlater
\ch{We should probably mention the size of the Coq development
  [somewhere] as a proxy for effort (what other proxies are there?
  can we estimate how many person-months we have?)}  \ch{Also discuss
  about the extensions, and how easy/hard they were}
\fi

Although this paper focuses on IFC and noninterference, the tagging
facilities of the concrete machine are completely generic and have
been used since to enforce completely different properties like memory
safety, compartment isolation, and control-flow
integrity~\cite{micropolicies2015}.
Moreover, although the rule cache / fault handler design arose in the
context of SAFE, it has since been adapted to a conventional RISC
processor~\cite{pump_asplos2015}.
%

\paragraph*{Acknowledgments}
We are grateful to
Maxime D{\'e}n{\`e}s,
Deepak Garg,
Greg Morrisett,
Toby Murray,
Jeremy Planul,
Alejandro Russo,
Howie Shrobe,
Jonathan M. Smith,
Deian Stefan, and
Greg Sullivan
for useful discussions and helpful feedback on early drafts.
We also thank the anonymous reviewers for their insightful comments.
%
%
This material is based upon work supported by the DARPA CRASH and SOUND
programs through the US Air Force Research Laboratory (AFRL) under Contracts
No. FA8650-10-C-7090 and FA8650-11-C-7189.
This work was also partially supported by NSF award 1513854
  {\em Micro-Policies: A Framework for Tag-Based Security Monitors}.
The views expressed are those of the authors and do
not reflect the official policy or position of the Department of Defense or
the U.S. Government.

\bibliographystyle{abbrvnaturl} 
\bibliography{safe,local}

\ifapp
\appendix

\fi

\ifscr
\appendix
\section{Scratchpad}

\subsection{Friendly readers}

{\bf Note: this list can't contain people we propose as external
  reviewers, so we need to be careful here}

\begin{itemize}
\item (SENT) Rest of SAFE team + Howie
\item (SENT) SafeQC folks:
      Dimitrios Vytiniotis,
      John Hughes
\item (SENT) Idaho:
      James Alves-Foss (jimaf@uidaho.edu),
      Xia Yang (xyang@uidaho.edu)
\item (SENT) Robert Jakob (jakobro@informatik.uni-freiburg.de),
\item (SENT) RIFLE: Lennart Beringer
\item (SENT) Stanford/LIO: Deian Stefan, Alejandro Russo, Jeremy Planul
\item (SENT) seL4: Toby Murray
\item (SENT) Harvard folks: Steve Chong, Aslan Askarov, Scott Moore
  \ch{conflicted with DP and RAP, so they can't be our external
    reviewers anyway}
\item (SENT) Steve Zdancewic
\item (maybe) Heiko Mantel
\item (maybe) Michael R. Clarkson
\item (maybe) Dave Sands - Delphine? \dd{If we send it to Dave, then Andrei might have a look at it before the review, if we propose him as an external reviewer. Anyway, this is probably too late for sending the draft.}
\item Martin Abadi (abadi@microsoft.com),
\end{itemize}

\subsection{Submission}

\subsubsection{PC conflicts}
Select the PC members who have conflicts of interest with this
paper. This includes past advisors and students, people with the same
affiliation, and any recent (~2 years) coauthors and collaborators.

\begin{itemize}
\item Aaron Turon
\item Aditya Nori
\item Ahmed Bouajjani
\item Alan Jeffrey
\item Andrew Appel -- \ch{Conflict with Andrew Tolmach (advisor)}
\item Andrey Rybalchenko
\item Atsushi Igarashi -- \ch{Conflict with BCP???}\bcp{yes}
\item Dino Distefano
\item Elena Zucca
\item Gilles Barthe -- \ch{Conflict with DD (recent coauthors)}
\item James Cheney -- \rap{Conflict with RAP (affiliation Edinburgh)}
\item Jeremy Siek
\item Lars Birkedal
\item Mads Dam
\item Nate Foster -- \ch{Conflict with BCP (student)}
\item Neelakantan Krishnaswami
\item Nick Benton
\item Nikhil Swamy
\item Noam Rinetzky
\item Peter Sewell
\item Ross Tate
\item Sophia Drossopoulou
\item Stephanie Weirich -- \ch{Conflict with all Penn folks (affiliation)}
\item Tayssir Touili
\item Thomas Dillig
\item Thomas Wies
\item Viktor Vafeiadis
\item Xavier Rival -- \ch{conflicted with DP (same institution)}
\end{itemize}

\subsubsection{Other conflicts}
List other people and institutions with which the authors have
conflicts of interest. This will help us avoid conflicts when
assigning external reviews. No need to list people at the authors' own
institutions. List one conflict per line. For example: ``Jelena
Markovic (EPFL)'' or, for a whole institution, ``EPFL''.

\paragraph*{Obvious (no need to list these it seems):}
\begin{itemize}
\item University of Pennsylvania
\item Portland State University
\item INRIA
\item Harvard University
\end{itemize}

\paragraph*{Individual conflicts:}
\begin{itemize}
\item  Arthur Azevedo de Amorim:
None
\item  Nathan Collins:
Aaron Stump (University of Iowa), 
Harley Eades (University of Iowa), 
Peng Fu (University of Iowa), 
Garrin Kimmell (University of Iowa), 
\item  Andr\'e DeHon:
Olin Shivers (Northeastern University), 
Satnam Singh (Google),
Thomas F. Knight, Jr. (MIT) 
Howard Shrobe (MIT), 
\item  Delphine Demange:
Jade Alglave (UCL),
Jan Vitek (Purdue University),
Lei Zhao (Purdue University),
Suresh Jagannathan (Purdue University),
Dave Sands (Chalmers),
\item  C\u{a}t\u{a}lin Hri\c{t}cu:
Michael Backes (Saarland University),
Matteo Maffei (Saarland University),
Dimitrios Vytiniotis (MSR Cambridge),
Andy Gordon (MSR Cambridge),
Gavin Bierman (MSR Cambridge),
John Hughes (Chalmers),
\item  Benjamin C. Pierce:
Martin Hofmann (Ludwig-Maximilians University Munchen, Germany), 
Greg Morrisett (Harvard University, USA), 
Eijiro Sumii (Tohoku University, Japan), 
\item  David Pichardie:
Florent Kirchner (CEA, France)
\item  Randy Pollack:
Edinburgh University, 
Masahiko Sato (Kyoto University),
Helmut Schwichtenberg (University of Munich),
\item  Andrew Tolmach:
Andrew McCreight (Mozilla),
Tim Chevalier (Mozilla),
Michael Adams (UIUC),
Eelco Visser (Delft)
\end{itemize}

\subsection{Nomenclature}

\begin{itemize}
\item use the bare word ``rule'' only when context makes the meaning
  completely clear

\item a rule of the small-step semantics is called a ``small-step rule''

\item The previously called ``quasi-abstract IFC machine'' is now called
``symbolic IFC rule machine'' (shortened to ``symbolic rule machine'').
This machine is parameterized over a ``symbolic IFC rule table''
(shortened to ``rule table''), containing ``symbolic IFC rules''.


\item a judgment of this form
  $ \vdash_{ \mathcal{R} } \ruleeval{ \ottsym{(}  L_{pc}  \mathord{,}\,  \ell_{{\mathrm{1}}}  \mathord{,}\,  \ell_{{\mathrm{2}}}  \mathord{,}\,  \ell_{{\mathrm{3}}}  \ottsym{)} }{ \ottnt{opcode} }{ L_{rpc} }{ L_r } $
  is now called ``IFC enforcement judgment''

\item the concrete machine has a ``rule cache'' containing
   ``rule cache entries''  {\em or} ``concrete rules'' that are
   instances of the ``symbolic IFC rules''. We don't use the word
   TMU in this paper!




\item For the Hoare logic in \autoref{sec:fault-handler-correct} we use:
  ``Hoare triple'' only for something that really is a triple
  (a Hoare rule is not a triple);
   ``laws'' for generic things like composition;
   and ``specifications'' for specific pieces of code
  (``laws'' + ``specifications'' seems to cover everything
   that would otherwise be called a Hoare rule)
\end{itemize}

\fi

\end{document}